\newcommand{\sizecorr}[1]{\makebox[0cm]{\phantom{$\displaystyle #1$}}}
\DeclareMathOperator*{\argmin}{arg\,min}
\DeclareMathOperator*{\argmax}{arg\,max}
\newcommand{\vol}{\mathrm{vol}}
\newcommand{\Lsrs}{\Lambda_{\rm srs}}
\newcommand{\Lpo}{\Lambda_{\rm po}}
\newcommand{\Lso}{\Lambda_{\rm so}}
\newcommand{\Lposo}{\Lambda_{\rm po \& so}}
\newcommand{\Leo}{\Lambda_{\rm eo}}
\newcommand{\Lei}{\Lambda_{\rm ei}}
\newcommand{\Lsi}{\Lambda_{\rm si}}
\newcommand{\Lpi}{\Lambda_{\rm pi}}
\newtheorem{remark}{Remark}
\title{Properties of an Aloha-like stability region}
\author{
Nan Xie,~\IEEEmembership{Member,~IEEE,}
John MacLaren Walsh,~\IEEEmembership{Member,~IEEE,} and \\
Steven Weber,~\IEEEmembership{Senior~Member,~IEEE}
\thanks{
Manuscript received August 15, 2014; revised September 21, 2015 and August 31, 2016; accepted November 06, 2016. Date of current version November 21, 2016. This work was partially funded by a grant CCF-1016588 from the National Science Foundation and a grant from the Defense Advanced Research Projects Agency (DARPA) Information Theory for Mobile Ad Hoc Networks (IT-MANET) project \#W911NF-07-1-0028. Preliminary results were presented at ISIT 2010 \cite{XieWeb2010}, ITA 2010, and ITA 2011.}%
\thanks{N.~Xie, J.~Walsh, and S.~Weber are with the Department of Electrical and Computer Engineering, Drexel University, Philadelphia, PA, 19104 USA (e-mail: nx23@drexel.edu; jwalsh@coe.drexel.edu; sweber@coe.drexel.edu). }
\thanks{Communicated by C.\ Emre Koksal, Associate Editor for Communication Networks.}
\thanks{Color versions of one or more of the figures in this paper are available online at http://ieeexplore.ieee.org.}
\thanks{This paper has been accepted for publication by IEEE Transactions on Information Theory.  The DOI is 10.1109/TIT.2016.2640302, and copyright has been transferred to IEEE.  An (early access) version of this article is available from IEEE at http://ieeexplore.ieee.org/document/7784826/.}
}
\begin{document}

\maketitle

\begin{abstract}
A well-known inner bound on the stability region of the finite-user slotted Aloha protocol is the set of all arrival rates for which there exists some choice of the contention probabilities such that the associated worst-case service rate for each user exceeds the user's arrival rate, denoted $\Lambda$.  Although testing membership in $\Lambda$ of a given arrival rate can be posed as a convex program, it is nonetheless of interest to understand the properties of this set.  In this paper we develop new results of this nature, including $i)$ an equivalence between membership in $\Lambda$ and the existence of a positive root of a given polynomial, $ii)$ a method to construct a vector of contention probabilities to stabilize any stabilizable arrival rate vector, $iii)$ the volume of $\Lambda$, $iv)$ explicit polyhedral, spherical, and ellipsoid inner and outer bounds on $\Lambda$, and $v)$ characterization of the generalized convexity properties of a natural ``excess rate'' function associated with $\Lambda$, including the convexity of the set of contention probabilities that stabilize a given arrival rate vector.
\end{abstract}

\begin{IEEEkeywords}
Aloha, multiple access, random access, stability region, inner bounds, outer bounds.
\end{IEEEkeywords}

\section{Introduction} 
\label{sec:intro} 

This paper addresses membership testing and structural properties of a natural inner bound on the stability region of the finite-user slotted-time Aloha medium access control (MAC) protocol under the collision channel model, hereafter {\em the Aloha protocol} \cite{Abr1970}.  The Aloha protocol is specified by a tuple $(n,\xbf, \pbf)$ where $n \in \Nbb$ is the number of users wishing to communicate with a common base station, $\xbf \in \Rbb^n_+$ denotes the arrival rate of new packets at each user's queue (one queue per user, each queue assumed capable of holding an unlimited number of packets awaiting transmission), and $\pbf \in [0,1]^n$ denotes each user's chosen contention probability, i.e., the probability with which any user with a non-empty queue will contend for the channel.  User contention decisions are synchronized at the beginning of each time slot, and, conditioned on the queue lengths, the user contention decisions are independent across users and across time slots.  Each packet transmission requires exactly one time slot.  Under the assumed collision channel model, an attempted transmission succeeds in a given time slot if and only if it is the only attempt in that time slot.  Ternary channel feedback (success, collision, idle) from the base station to each user at the end of each time slot is assumed to be both instantaneous and error-free.  

The stability region (of a MAC protocol) is defined as the set of arrival rate vectors (with elements corresponding to exogenous arrival rates at each user's queue) such that by some appropriate choice of the parameter(s) no user will, as time tends to infinity, accumulate an infinite backlog of packets waiting to be transmitted. The stability region asks for \textit{necessary and sufficient} conditions in order for every user's queue to remain bounded.  Let $q_{i}(t)$ denote user $i$'s queue length at time $t$; queue $i$ is stable if $\lim_{L \to \infty} \lim_{t \to \infty} \Pbb (q_{i}(t) < L) = 1$, and the system is considered stable if every queue is stable. Since all the states of the underlying discrete time Markov chain (DTMC) of queue length vectors (defined on $\Zbb_{+}^{n}$) communicate, the stability of the system, or equivalently, the positive recurrence of the DTMC, amounts to the property that each queue has a non-zero probability of being empty, i.e., $\lim_{t \to \infty} \Pbb(q_{i}(t) = 0) > 0$ for all $i \in \left\{1, \ldots, n \right\}$.

There is a significant body of work that derives bounds on the stability region of the Aloha protocol (denoted $\Lambda_{\rm A}$) from a queueing-theoretic perspective, including \cite{RaoEph1988, LuoEph1999, KomMaz2013}. In contrast to this approach, in this work we develop bounds and properties for an important and natural inner bound on the Aloha stability region, namely, the set of arrival rates for which there exists a vector of contention probabilities with associated worst-case service rates component-wise exceeding each arrival rate, denoted below by $\Lambda$.  One motivation to study this inner bound $\Lambda$ is that testing membership of a candidate arrival rate vector $\xbf$ in $\Lambda$ is easier than (but nonetheless has certain challenges similar to those encountered in) testing membership in the Aloha stability region $\Lambda_{\rm A}$.  In either case one must, either implicitly or explicitly, identify $\pbf$, a vector of stabilizing contention probabilities, for which $\xbf$ can be shown to be in $\Lambda$ or $\Lambda_{\rm A}$.  The difficulty is that the set of potential controls $\pbf$ is uncountably infinite ($\pbf \in [0,1]^n$), and as such, given $\xbf$, it is not obvious whether or not such a $\pbf$ exists, i.e., whether or not $\xbf$ is stabilizable.  

Our results address this challenge in several ways.  First, we give a novel characterization of membership in $\Lambda$ in terms of whether or not a certain order-$n$ polynomial equation has a positive root.  Second, we give several equivalent formulations for $\Lambda$, each with its own advantages/interpretations.  Third, we give a means of constructing a suitable control $\pbf$ for any stabilizable rate vector $\xbf$.  Fourth, we give polyhedral, spherical, and ellipsoid inner and outer non-parametric (explicit) bounds on $\Lambda$, which constitute, variously, necessary or sufficient conditions on membership in $\Lambda$.  These explicit inner and outer bounds partially illuminate the shape and structure of $\Lambda$ as a function of $n$.  Finally, we present certain structural properties of certain functions and sets naturally associated with $\Lambda$, including the excess rate function and (an inner bound of) the set of contention probabilities that stabilize a given arrival rate vector.

The inner bound $\Lambda \subseteq \Lambda_{\rm A} \subseteq \Rbb_+^n$ studied in this paper is:
\begin{equation} 
\label{eq:Lambda}
\Lambda \equiv \left\{ \xbf \in \Rbb_{+}^{n} : \exists \pbf \in [0,1]^n :  x_i \leq p_i \prod_{j \neq i} (1-p_j), \forall i \in [n] \right\}.
\end{equation}
Here $[n] = \left\{1, \ldots, n\right\}$. The expression $p_i \prod_{j \neq i} (1-p_j)$ is the worst-case service rate for user $i$'s queue, namely the service rate assuming all users have non-empty queues and thus all users are eligible for channel contention.  In particular, user $i$'s transmission is successful in such a time slot if user $i$ elects to contend (with probability $p_i$) and each other user $j \neq i$ does not contend (each with independent probability $1-p_j$ for a non-empty queue).  Clearly $\Lambda \subseteq \Lambda_{\rm A}$, since an arrival rate that is stabilizable under the worst-case service rate is certainly stabilizable under a better service rate.  Our aim in this paper is to establish properties of and non-parametric bounds on $\Lambda$. We emphasize that we call sets such as $\Lambda$ ``parametric'' due to the observation that asserting membership in them requires explicitly or implicitly identifying another parameter, which may be viewed as auxiliary from the perspective of membership testing.

\subsection{Motivation}
\label{ssec:motivationAndproblemstatement}
To provide additional motivation for this investigation, we attempt to establish below that $i)$ in spite of its age and simplicity, Aloha is nonetheless still relevant in both the design and analysis of modern communication systems, and $ii)$ knowledge about the Aloha stability region, including in particular the stability region properties established in this paper, is important to both understanding how such systems perform, and how they should be operated.  

\noindent {\bf Relevance of Aloha to modern communication systems.}  Although by today's standards the idea of the Aloha protocol is very simple, at its inception the idea of allowing for random transmission attempts, and thereby random transmission collisions, was a revolutionary idea relative to the existing paradigm of avoiding collisions completely through scheduled resource allocation.  Many currently dominant wireless technologies do not use plain Aloha; e.g., WiFi's DCF sublayer uses carrier sense multiple access/collision avoidance (CSMA/CA), and this might lead one to believe Aloha is not relevant to modern communication systems.  A rebuttal to this view was asserted in a 2009 article \cite{Abr2009} by Norman Abramson, the inventor of Aloha, where he wrote ``Today Aloha channels are utilized in all major mobile networks and in almost all two-way satellite data networks.''  Examples include $i)$ GSM systems for sending control signals from mobile nodes to the base station using a random access channel (RACH), and $ii)$ very small aperture terminal (VSAT) satellite networks for sending channel reservation messages.  Regarding cellular, Abramson further opined in a 2012 editorial \cite{AbrSac2012} that with the increasing demand of high data rate and IP-based web traffic in developing 4G networks, a greater use of Aloha random access channels is expected, for both user packet data as well as signaling and control purposes. Regarding WiFi, Abramson wrote in that same article ``Ironically, recent chatter on the web dealing with full duplex WiFi hints at further development of WiFi in the direction of the original Aloha architecture.''

An unfortunate drawback of slotted Aloha is its low throughput (e.g., it is simple to establish that slotted Aloha with a large number of symmetric users on the collision channel can achieve a maximum throughput of $1/\erm \approx 36.8\%$).  Intuitively, this low maximum throughput appears to be due to the protocol's simplicity, specifically, the failure of users under Aloha to ``listen before they speak,'' i.e., carrier sensing.  Indeed, carrier sense multiple access (CSMA) with collision detection (CD) is the basis of the successful ethernet protocol.  The performance of CSMA in the wireless domain, however, is hampered by two key differences from the wired domain: $i)$ the half-duplex constraint, and $ii)$ the hidden and exposed terminal problems.  The former prevents each transmitter from sensing collisions while transmitting, and the latter prevents each transmitter from sensing collisions at its intended receiver or nearby nodes.  In summary, although CSMA offers certain advantages over Aloha, it faces its own performance challenges and limitations. 

In fact, underwater acoustic sensor networks (UW-ASN) \cite{AkyPom2004} are a noteworthy scenario where the Aloha protocol may outperform more sophisticated CSMA-based protocols.  A UW-ASN consists of unmanned or autonomous underwater vehicles/sensors, deployed to perform collaborative monitoring tasks over a given area, connected with acoustic links. Compared with terrestrial counterparts, underwater acoustic communications are mainly influenced by long, and highly variable, propagation delay \cite{AkyPom2004}, and moreover underwater acoustic channels are temporally and spatially variable. The peculiar characteristics of underwater acoustic channels (in particular limited bandwidth and high and variable delay) pose additional challenges to the design of suitable medium access control protocols.  

Resource sharing in a UW-ASN can be achieved by contention-free methods (static channelization) or by contention-based protocols. Contention-free methods include frequency, time, and code division multiple access (FDMA, TDMA, and CDMA, respectively).  FDMA-based approaches are vulnerable to fading \cite{SozSto2000}, not flexible (e.g., to accommodate varying transmission rates \cite{HeiSto2012}), and can be inefficient in the presence of bursty traffic.  In fact, underwater acoustic channels are doubly selective meaning their multipath profiles are both temporally long (substantial delay) and rapidly time-varying (Doppler spreads): the former entails prohibitive overhead while the latter impairs the orthogonality of frequency carriers. TDMA-based mechanisms require strict time synchronization that is ill-matched to the highly variable delay characteristics of underwater acoustic channels.  CDMA-based methods are more robust to multi-path fading than FDMA, and do not require the time synchronization of TDMA, but the hardware and computation power required are in conflict with the desire for UW-ASN nodes to be small in size, low in cost, and energy efficient.  In summary, contention-free methods are not ideal for UW-ASN networks.  Neither physical sensing (e.g., CSMA) nor virtual sensing (e.g., RTS/CTS) protocols, however, will perform well for UW-ASN networks, on account of the difficulty in carrier sensing caused by the long and variable propagation latencies.  It seems possible that the Aloha protocol may well be a suitable MAC protocol for UW-ASN, as its inherent design simplicity offers a natural performance robustness in the face of channel uncertainties.  

\noindent {\bf Importance of Aloha stability region properties.} The first reason for understanding the stability region of the Aloha protocol is the long-observed tantalizing contrast between the simplicity of the protocol itself and the (apparent) difficulty in obtaining its stability region.  Aloha is arguably the most basic of medium access protocols, and yet, in spite of extensive effort for over thirty five years by numerous researchers around the world, the stability region remains elusive.  This discrepancy makes investigation of this problem an important open question in the theory of communication systems.

The second reason is that (queue) stability is perhaps the most important property a random medium access protocol can have.  Unstable queues lead to unbounded delays, and, by extension, to system collapse.  Given a set of nodes vying for access, the first question to assess is {\em whether} or not the collection of arrival rates is stabilizable under the protocol.  If the answer is no, then the base station must intervene to reduce the arrival rates until they are stabilizable.  If the answer is yes, the second question to assess is {\em how} the system may be stabilized.  In the context of Aloha, this question is how to select the contention probabilities so as to stabilize the target arrival rates.  Finally, a third question to ask about a stabilizable (and stabilized) arrival rate vector is whether or not it is {\em throughput efficient}.  In the context of Aloha, this corresponds to selecting the arrival rates to be on or near the Pareto frontier of the stability region, i.e., the stable points not throughput dominated by any other stable point.  All three of these natural questions (stabilizability, how to stabilize, and how to find throughput efficient operating points) require knowledge of the stability region.    

In fact, many of the results we obtain have a natural application in the operation of an Aloha protocol.  If the operator wishes to know whether or not the given arrival rate vector is in $\Lambda$, then our polynomial root property may be leveraged to this end.  In addition, our inner and outer bounds may be applied.  If the rate vector is found to lie inside (outside) any of our inner (outer) bounds then the rate vector is known to be (not) stabilizable; these inner and outer bounds have the benefit of being extremely simple for checking membership.  If the operator wishes to find a suitable control (contention probability vector) for stabilizing a given arrival rate vector, then we offer two sets of relevant results.  First, our polynomial root property provides the ``critical stabilizing control,'' i.e., the contention probability vector with a corresponding service rate vector matching the given arrival rate vector.  Second, our excess rate function can be optimized to identify a control that maximizes some measure of distance (e.g., some norm) between the arrival rate and service rate vectors, over the set of controls that stabilize the given rate vector.  

Besides facilitating easy membership testing, another important value of establishing non-parametric inner and outer bounds on $\Lambda$ lies in improved geometric intuition.  Looking at the definition of $\Lambda$, it is difficult, in our opinion, to intuit its geometric properties, especially in high dimensions. Thus to gain geometric intuition of the set $\Lambda$ constitutes a key motivation of this paper: this influences both our choice of the families of bounds and, for each type of bounds, the construction of it. We choose polyhedra, spheres, and ellipsoids since they are some of the simplest geometric objects. Other geometric objects could yield better bounds, yet those objects may not be as intuitive (and/or analytically tractable), especially when it comes to higher dimensions. The construction of a bound is also attempted to be made simple and intuitive, such as the semi-symmetric displacement (on the coordinate axes) of the intersecting hyperplanes used in the polyhedral outer bound, and enforcing tangency/incidence in the construction of the ellipsoid bounds. The quality of the bounds, as measured by the volumes of them, gives insight into the extent to which $\Lambda$ can be understood to ``look like'' these various (simple, non-parametric) sets.  By analogy, there are many capacity regions in information theory characterized by the use of auxiliary random variables with an unspecified distribution; these auxiliary random variables often cloud one's ability to gain geometric intuition about these regions.

\subsection{Related work}
\label{ssec:relatedwork}
The throughput analysis of the Aloha packet system with and without slots can be found in Roberts \cite{Rob1975} and Abramson \cite{Abr1977}.  The Aloha stability region problem was posed in 1979 by Tsybakov and Mikhailov \cite{TsyMik1979} who also solved the $n=2$ and the homogeneous $n$-user case, for both of which they showed $\Lambda = \Lambda_{\rm A}$. Szpankowski \cite{Szp1994} studied this problem when $n > 2$, with result expressed in terms of the joint statistics of the queue lengths. The use of the so-called ``dominant system'' in Rao and Ephremides \cite{RaoEph1988}, as well as Luo and Ephremides \cite{LuoEph1999}, established some important bounds on the stability region. Anantharam \cite{Ana1991} showed $\Lambda=\Lambda_{\rm A}$ for a certain correlated arrival process by applying the Harris correlation inequality. Using mean field analysis, assuming each queue's evolution is independent, Bordenave et al.\ \cite{BorMcD2008} were able to show $\Lambda = \Lambda_{\rm A}$ holds asymptotically in $n$. Recently Kompalli and Mazumdar \cite{KomMaz2013} obtained bounds that are linear with respect to the users' arrival rates, based on a Foster-Lyapunov approach. To date, characterization of $\Lambda_{\rm A}$ remains open for the general $n$-user case with general arrival processes, although it's been conjectured (\cite[\S V]{RaoEph1988}, \cite[\S V Thm.\ 2]{LuoEph2006}) that $\Lambda$ coincides with the Aloha stability region $\Lambda_{\rm A}$. More recently, Subramanian and Leith \cite{SubLei2013} showed structural properties such as boundary and convexity properties of the rate region of CSMA/CA wireless local-area networks which includes Aloha and IEEE 802.11 as special cases. In a similar vein, Leith, Subramanian, and Duffy \cite{LeiSub2010} established the log-convexity of the rate region in 802.11 WLANs, which yields immediate implications for utility optimization based results to be applied to fair resource allocations. Gupta and Stolyar \cite{GupSto2012} considered a generalized model of slotted Aloha by allowing \textit{asymmetric} interference between concurrent transmission attempts on a collision channel/link and derived properties of the throughput region and its Pareto boundary (frontier) such as compactness, non-convexity, and the smoothness of the Pareto frontier.

Besides its intimate connection with the Aloha stability region $\Lambda_{\rm A}$, the set $\Lambda$ has also been featured in an information theoretic context.  Namely, in 1985 Massey and Mathys \cite{MasMat1985} proved $\Lambda$ is the capacity region of the  collision channel without feedback. In the same issue, Post \cite{Pos1985} established the convexity of the complement of $\Lambda$ in the non-negative orthant $\Rbb_{+}^{n}$.

The discussion here would be incomplete unless we mention that, instead of just analyzing the existing protocols, there exists a large body of work addressing the design of random access algorithms. The results can be grouped based on criteria such as whether it is distributed/decentralized, asynchronous, needs control messages, collision-free, etc.  For example, a recent paper by Ouyang and Teneketzis \cite{OuyTen2015} presented a common information based multiple access (CIMA) protocol that only needs local information, does not have the overhead for channel sensing, is collision free and achieves the full throughput region of the collision channel, which, compared to the polynomial back-off protocols proposed by H\aa stad, Leighton, and Rogoff \cite{HasLei1996}, has lower delay. For further pointers of this literature, we refer to the reader to the references in \cite{OuyTen2015} and \cite{GupSto2012}, which include, most notably, Jiang and Walrand \cite{JiaWal2011} and Jiang et al.\ \cite{JiaSha2010}.

\subsection{Summary of bounds on $\Lambda$}
In this paper we present a variety of inner and outer bounds on $\Lambda$, including the ``square-root-sum'' inner bound $\Lsrs$ (\S\ref{sec:volLambdaAndLsrs}, Prop.\ \ref{prop:srsib}), polyhedral inner $\Lpi$ and outer $\Lpo$ bounds (\S\ref{sec:piAndpo}, Props.\ \ref{prop:Lambdapi} and \ref{prop:Lambdapo}), spherical inner $\Lsi$ and outer $\Lso$ bounds (\S\ref{sec:siAndso}, Props.\ \ref{prop:Lambdasi} and \ref{prop:Lambdaso}), and ellipsoid inner $\Lei$ and outer $\Leo$ bounds (\S\ref{sec:eiAndeo}, Props.\ \ref{prop:Ei} and \ref{prop:Eo}).  The volumes of the aforementioned bounds as a function of the number of users, $n$, are collected in Fig.\ \ref{fig:vol} and Tables \ref{tab:voltable} and \ref{tab:voltableFractionOfSimplex}, where $\mathrm{i}$ ($\mathrm{o}$) refers to inner (outer) bound respectively, and $\mathrm{p}$, $\mathrm{s}$, $\mathrm{e}$ refers to polyhedral, spherical, ellipsoid, respectively.  We give the volumes themselves, as well as the volumes normalized by the volume of the (trivial) simplex outer bound of $1/n!$.  The volumes of $\Lsrs$, $\Lpi$, $\Lambda$ are computed exactly from closed-form expressions we derive in the paper. The (exact) volume of $\Lpo$ is obtained using the {\sf lrs} \cite{Avi2000} software. All other volumes are estimated using standard Monte-Carlo simulation.\footnote{As an aside, a very recent paper by Cousins and Vempala \cite{CouVem2016} provides results (and code) for computing the volume of a convex body defined as the intersection of an explicit set of linear inequalities and a set of ellipsoids, which allow the user to tradeoff the accuracy of the volume estimates and the computational overhead (e.g., speed).} $\Lpi^{*}$ is the optimal polyhedral inner bound among its family. For the spherical bounds $\Lsi$, $\Lso$ the center of spheres are chosen such that the induced bounds are optimal within their families (hence the notation $\Lsi^{*}$ and $\Lso^{*}$). For the ellipsoid bounds $\Lei$, $\Leo$ the center of ellipsoids are chosen by setting $c = 2$. $\Lposo$ is constructed by using $\Lpo$ and $\Lso^{*}$ in conjunction namely $\Lposo \equiv \Lpo \cap \Lso^{*}$.  It is clear from Fig.\ \ref{fig:vol} that the three inner bounds (polyhedral, spherical, ellipsoid) are tighter than are the four outer bounds.  

For the Monte-Carlo volume estimates we generate independent points over $[0,1]^{n}$ uniformly at random, and use the fraction of points that fall into the region defined by the bound as our volume estimate. As the volume of the unit box $[0,1]^{n}$ is $1$, the volume of any subset of the unit box can be viewed as the probability that a point uniformly distributed over the unit box falls into this subset, which equals the mean of a Bernoulli random variable, say $Z \sim \mathrm{Ber}(v)$, for $v$ the volume of the subset. This justifies the use of the sample mean, $\hat{v}_k = (Z_1+\cdots+Z_k)/k$, as our volume estimate. We also include confidence interval estimates in both tables, indicating the relative half-width, denoted $\delta$, in order for the probability that the true mean $v$ deviates from the sample mean $\hat{v}_k$ by a fraction of no more than $\delta$ is at least $1 - \alpha$. More precisely, let $n$ and the bound (with unknown volume $v$) be given, and let $k$ be the total number of trials for generating instances of i.i.d.\ random variables $Z \sim \mathrm{Ber}(v)$. We want to find $\delta$ such that $\Pbb\left( \hat{v}_{k} \left(1 - \delta \right) \leq v \leq \hat{v}_{k} \left(1 + \delta \right) \right) \geq 1 - \alpha$. Under a normal approximation we can derive $\delta \approx \sqrt{\frac{1-\hat{v}_k}{(k-1)\hat{v}_k}} \Phi^{-1}( 1 - \alpha/2)$, applying results from \cite[\S 9.1]{BerTsi2008}.  In our simulations we use $k = 10^{8}$ and $\alpha = 5\%$.  For those volumes estimated using Monte-Carlo, the corresponding entries in Table \ref{tab:voltable} are $\hat{v}_k$ (top) and $\delta$ (bottom), and in Table \ref{tab:voltableFractionOfSimplex} are $n! \hat{v}_k$ (top) and $\delta$ (bottom).

As will be shown (Remarks \ref{remark:LpiStarVersusLsrs}, \ref{remark:LsiStarBetterThanLpiStar} and \ref{remark:LeiLeoRecoverLsiLso}), the inner bounds are ordered by volume for all $n \geq 3$\footnote{Provided the inner bounding ellipsoid is such that its center $\cbf = c \mathbf{1}$ with $c \geq (1-n m^2)/(2(1-nm))$ where $m = m(n) \equiv \frac{1}{n} \left( 1 - \frac{1}{n} \right)^{n-1}$.}, i.e.,
\begin{equation}
\mathrm{vol}(\Lsrs) \leq \mathrm{vol}(\Lpi^{*}) \leq \mathrm{vol}(\Lsi^{*}) \leq \mathrm{vol}(\Lei) \leq \mathrm{vol}(\Lambda), ~ n \geq 3.
\end{equation}
Among the outer bounds ($\Leo$, $\Lposo$, $\Lpo$ and $\Lso^{*}$) there is no such complete ordering valid for all $n$, although $\Lposo$ outperforms both $\Lpo$ and $\Lso^{*}$ by construction, and the ellipsoid outer bound $\Leo$ outperforms the optimal spherical outer bound $\Lso^{*}$ provided the outer bounding ellipsoid is such that its center $\cbf = c \mathbf{1}$ with $c \geq 1$.

\begin{figure}[ht]
\centering
\includegraphics[width=\columnwidth]{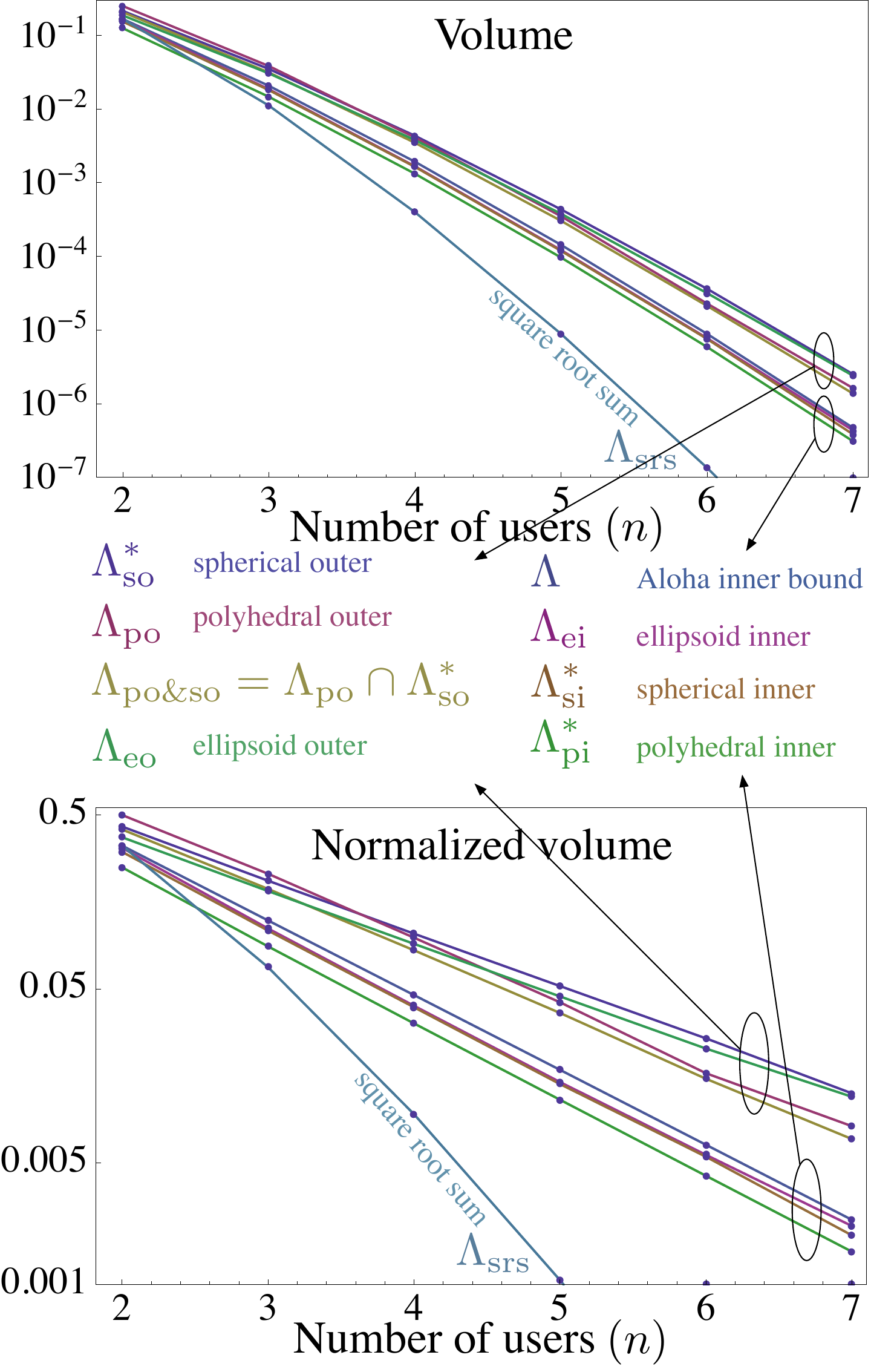}
\caption{The volumes (computed, or estimated from Monte-Carlo simulation) of the various inner and outer bounds $\Lsrs, \Lpi^{*}, \Lsi^{*}, \Lei, \Leo, \Lposo, \Lpo, \Lso^{*}$ on the Aloha stability region inner bound $\Lambda$ versus the number of users, $n$. The top figure shows the volumes and the bottom figure shows the volumes normalized by the volume of the (trivial) simplex outer bound ($1/n!$).  Each of the two ovals on each plot groups four curves, with the top oval indicating the four left labels and the bottom oval indicating the four right labels. Ellipsoids have parameter $c=2$.}
\label{fig:vol}
\end{figure}

\begin{table}
\centering
\caption{Volumes (computed or estimated) of the various bounds (estimates include normalized 95\% CI, $\delta$)}
\begin{tabular}{l|rrrrrrr}
\hline
Bounds (\S) & $n=2$ & $n=3$ & $n=4$ & $n=5$  & $n=6$ & $n=7$ \\

$\times$ & $10^{-1}$ & $10^{-2}$ & $10^{-3}$ & $10^{-4}$  & $10^{-6}$ & $10^{-7}$ \\
\hline\hline

$\Lso^{*}$  (\S \ref{sec:siAndso}) & $2.15$ & $3.49$ & $4.33$ & $4.33$ & $36.09$ & $24.90$  \\ 
& $0.00$ & $0.00$ & $0.00$ & $0.01$ & $0.03$ & $0.12$ \\ 
\cline{1-7}

$\Lpo$ (\S \ref{sec:piAndpo}) & $2.50$ & $3.82$ & $4.14$ & $3.47$ & $23.69$ & $13.65$  \\ 
\cline{1-7}

$\Lposo$ & $2.06$ & $3.13$ & $3.48$ & $3.03$ & $21.14$ & $13.60$  \\ 
& $0.00$ & $0.00$ & $0.00$ & $0.01$ & $0.04$ & $0.17$ \\ 
\cline{1-7}

$\Leo$ (\S \ref{sec:eiAndeo})  & $1.86$ & $3.04$ & $3.79$ & $3.77$ & $31.53$ & $23.80$  \\ 
& $0.00$ & $0.00$ & $0.00$ & $0.01$ & $0.04$ & $0.13$ \\ 
\cline{1-7}

$\mathbf{\Lambda}$ ~  (\S \ref{sec:volLambdaAndLsrs}) & $\mathbf{1.67}$ & $\mathbf{2.06}$ & $\mathbf{1.92}$ & $\mathbf{1.43}$ & $\mathbf{8.82}$ & $\mathbf{4.67}$  \\ 
\cline{1-7}

$\Lei$ (\S \ref{sec:eiAndeo}) & $1.62$ & $1.86$ & $1.67$ & $1.21$ & $7.72$ & $4.30$  \\ 
& $0.00$ & $0.00$ & $0.01$ & $0.02$ & $0.07$ & $0.30$ \\ 
\cline{1-7}

$\Lsi^{*}$  (\S \ref{sec:siAndso}) & $1.54$ & $1.81$ & $1.64$ & $1.18$ & $7.53$ & $3.80$  \\ 
& $0.00$ & $0.00$ & $0.01$ & $0.02$ & $0.07$ & $0.32$ \\ 
\cline{1-7}

$\Lpi^{*}$ (\S \ref{sec:piAndpo}) & $1.25$ & $1.46$ & $1.32$ & $0.96$ & $5.85$ & $3.06$  \\ 
\cline{1-7}

$\Lsrs$ (\S \ref{sec:volLambdaAndLsrs}) & $1.67$ & $1.11$ & $0.40$ & $0.09$ & $0.13$ & $0.02$  \\ 
\hline\hline
\end{tabular}
\label{tab:voltable}
\end{table}

\begin{table}
\centering
\caption{Normalized volumes (by $1/n!$) of the various bounds (estimates include normalized 95\% CI, $\delta$)}
\begin{tabular}{l|rrrrrrr}
\hline
Bounds (\S) & $n=2$ & $n=3$ & $n=4$ & $n=5$  & $n=6$ & $n=7$ \\
\hline\hline

$\Lso^{*}$  (\S \ref{sec:siAndso}) & $.429$ & $.210$ & $.104$ & $.052$ & $.026$ & $.013$  \\ 
 & $.000$ & $.001$ & $.003$ & $.009$ & $.033$ & $.124$ \\ 
\cline{1-7}

$\Lpo$ (\S \ref{sec:piAndpo}) & $.500$ & $.229$ & $.099$ & $.042$ & $.017$ & $.007$  \\ 
\cline{1-7}

$\Lposo$ & $.413$ & $.188$ & $.083$ & $.036$ & $.015$ & $.007$  \\ 
& $.004$ & $.001$ & $.003$ & $.011$ & $.043$ & $.168$ \\ 
\cline{1-7}

$\Leo$ (\S \ref{sec:eiAndeo})  & $.372$ & $.183$ & $.091$ & $.045$ & $.023$ & $.012$  \\ 
& $.000$ & $.001$ & $.003$ & $.010$ & $.035$ & $.127$ \\ 
\cline{1-7}

$\mathbf{\Lambda}$ ~  (\S \ref{sec:volLambdaAndLsrs}) & $\mathbf{.333}$ & $\mathbf{.124}$ & $\mathbf{.046}$ & $\mathbf{.017}$ & $\mathbf{.006}$ & $\mathbf{.002}$  \\ 
\cline{1-7}

$\Lei$ (\S \ref{sec:eiAndeo}) & $.324$ & $.111$ & $.040$ & $.015$ & $.006$ & $.002$  \\ 
& $.000$ & $.001$ & $.005$ & $.018$ & $.071$ & $.299$ \\ 
\cline{1-7}

$\Lsi^{*}$  (\S \ref{sec:siAndso}) & $.307$ & $.108$ & $.039$ & $.014$ & $.005$ & $.002$  \\ 
& $.001$ & $.001$ & $.005$ & $.018$ & $.071$ & $.318$ \\ 
\cline{1-7}

$\Lpi^{*}$ (\S \ref{sec:piAndpo}) & $.250$ & $.088$ & $.032$ & $.012$ & $.004$ & $.002$  \\ 
\cline{1-7}

$\Lsrs$ (\S \ref{sec:volLambdaAndLsrs}) & $.333$ & $.067$ & $.010$ & $.001$ & $.000$ & $.000$  \\ 
\hline\hline
\end{tabular}
\label{tab:voltableFractionOfSimplex}
\end{table}

\subsection{Organization and contributions}
We now describe the major sections of the paper, highlighting our main results in each section. In \S\ref{sec:rootTesting} we present a polynomial root condition for testing membership in $\Lambda$, and use this result to establish some equivalent forms of $\Lambda$.  Furthermore, the root testing can be augmented so that it allows us to exclusively find the critical stabilizing control(s).  In \S \ref{sec:volLambdaAndLsrs} we compute the volume of $\Lambda$ in closed-form, meaning it is expressed as a finite (albeit complicated) sum.  We then give a simple inner bound on $\Lambda$, exact for $n=2$, but quite weak for $n > 2$. The next three sections give explicit (non-parametric) inner and outer bounds on $\Lambda$.  Specifically, \S\ref{sec:piAndpo} gives the optimal polyhedral inner bound induced by a single hyperplane as well as a polyhedral outer bound in $\Rbb_{+}^{n}$ induced by $n+1$ hyperplanes, \S\ref{sec:siAndso} presents the optimal spherical inner and outer bounds each induced by a single sphere, and \S \ref{sec:eiAndeo} establishes ellipsoid inner and outer bounds each induced by an ellipsoid.  Our last technical section, \S \ref{sec:genconv}, shifts the focus to the generalized convexity properties of an ``excess rate'' function associated with $\Lambda$, and establishes the convexity of the set of stabilizing controls for a given rate vector assuming worst-case service rate.  A brief conclusion is given in \S\ref{sec:conclusion}, and a proof of Prop.\ \ref{prop:augumentedRootTesting} is placed in an appendix following the references.

In this paper all the vectors are column vectors and inequalities between two vectors are understood to hold component-wise. A list of general notation is given in Table \ref{tab:notationgeneral}. 
\begin{table}
\centering
\caption{General notation}
\begin{tabular}{ll}
Symbol & Meaning \\ \hline
$n$ & number of users, also the default length of a vector \\
$[n]$ & set of positive integers up to $n$ \\
$\xbf$ & vector of user's arrival rates\\
$\pbf$ & vector of user's (fixed) probabilities for channel contention\\
$\mathbf{1}$ & a vector with 1 in all its positions \\
$\ebf_{i}$ & unit vector with 1 in position $i \in [n]$ \\
$\mbf = m \mathbf{1}$ & ``all-rates-equal'' point $\mbf$ for $m \equiv \frac{1}{n} \left(1 - \frac{1}{n} \right)^{n-1}$ \\

$\pi (\pbf)$ \eqref{eq:piOfpbf} & product of $(1-p_{i})$'s \\
$\xbf (\pbf)$ \eqref{eq:xOfp} & $\xbf(\pbf)$ with components determined by $\pbf$ \\
$\pbf(\delta, \xbf)$ \eqref{eq:pOfdeltax} & $\pbf(\delta, \xbf)$ with components determined by $\xbf$ \\
&  and parameterized by $\delta > 0$ \\
\hline

$\mathrm{bd}$ & topological boundary of a set \\
$\mathrm{int}$ & interior of a set \\ 
$\mathrm{conv}$ & convex hull of a set \\ 
$\overline{A}$ & closure of set $A$ \\ 
$A^{c}$ & complement of set $A$ \\ 
$\| \cdot \| $ & $l_2$ norm \\
$d(x, y)$ & Euclidean distance between (geometric objects) $x$, $y$ \\
$\Ibb_{\rm S}$ & indicator function for boolean expression $\mathrm{S}$ \\
\hline
$\Smc$ & closed standard unit simplex \\
$\partial \Smc$ & the set of probability vectors \\
$\Hmc (\nbf, d)$ & hyperplane with normal vector $\nbf$ and displacement $d$ \\
$\Bmc(\cbf, r)$ & open ball centered at $\cbf$ with radius $r$ \\
$\Emc$ \eqref{eq:ellipsoidDef} & open ellipsoid centered at $\cbf$, in quadratic form \\
$\Emc(c, a_{1}, a_{2})$ & open ellipsoid centered at $c \mathbf{1}$ \\
& with semi-axis lengths $a_{1}$, $a_{2} = \cdots = a_{n}$ \\
\hline
$\Qbf$ (\S\ref{sec:eiAndeo}) & the rotation matrix used in Prop.\ \ref{prop:LambdaSymInRotatedSystem}, \\
& may also encode the direction of ellipsoid's axes \\
\hline
\end{tabular}
\label{tab:notationgeneral}
\end{table} 

\section{Polynomial membership testing, forms of $\Lambda$, and critical stabilizing controls} 
\label{sec:rootTesting}

This section introduces some seemingly distinct results which are presented together on account of the fact that their proofs rely upon closely related concepts.  First, Prop.\ \ref{prop:Lambdarootconditions} demonstrates that testing membership of a rate vector $\xbf$ in $\Lambda$ is equivalent to a certain polynomial equation having at least one positive root, the test of which can be performed very efficiently (Prop.\ \ref{prop:rootTestingBisectionSearch}).  Second, Prop.\ \ref{prop:Lambda2EqLambdaEqLambda4} establishes two set definitions similar to $\Lambda$ are in fact equivalent to $\Lambda$.  Finally, Prop.\ \ref{prop:augumentedRootTesting} identifies the ``critical'' stabilizing control(s) $\pbf(\xbf)$ (see Def.\ \ref{def:stabilizingcontrols}) for each $\xbf \in \Lambda$. Def.\ \ref{def:equivalentFormsLambda} gives three sets, related to $\Lambda$, that will be important for what follows.  

\begin{definition} 
\label{def:equivalentFormsLambda}
\begin{IEEEeqnarray}{rCl}
\Lambda_{\rm eq} & \equiv & \left\{ \sizecorr{x_i = p_i \prod_{j \neq i} (1-p_j),  \forall i \in [n]}  \xbf \in \Rbb_{+}^{n} : \exists \pbf \in [0,1]^n : \right. \nonumber \\
&& \qquad \qquad ~~  \left. x_i = p_i \prod_{j \neq i} (1-p_j),  \forall i \in [n] \right \} 
\label{eq:LambdaEq} \\
\Lambda_{\partial \Smc} & \equiv & \left\{ \sizecorr{x_i \leq p_i \prod_{j \neq i} (1-p_j), \forall i \in [n]} \xbf \in \Rbb_{+}^{n} : \exists \pbf \in [0,1]^n, \sum_{i}p_{i}=1 : \right. \nonumber \\
& & \qquad \qquad ~~  \left. x_i \leq p_i \prod_{j \neq i} (1-p_j), \forall i \in [n] \right\}
\label{eq:LambdaPartialS} \\
\partial \Lambda & \equiv & \left\{ \sizecorr{x_i \leq p_i \prod_{j \neq i} (1-p_j), \forall i \in [n]} \xbf \in \Rbb_{+}^{n} : \exists \pbf \in [0,1]^n, \sum_{i}p_{i}=1 : \right. \nonumber \\
& & \qquad \qquad ~~  \left. x_i = p_i \prod_{j \neq i} (1-p_j), \forall i \in [n] \right\}.
\label{eq:partialLambda}
\end{IEEEeqnarray}
\end{definition}

Comparison with $\Lambda$ in \eqref{eq:Lambda} makes clear that $\Lambda_{\rm eq}$ replaces all the inequalities in $\Lambda$ with equalities, $\Lambda_{\partial \Smc}$ adds to $\Lambda$ a restriction that the contention probabilities sum to one, and $\partial\Lambda$ adds both of these to $\Lambda$.  We denote the set of all sub-stochastic vectors as $\Smc \equiv \{ \zbf \geq \mathbf{0} : \sum_i z_i \leq 1\}$, and its facet in $\Rbb_{+}^{n}$, the set of all stochastic vectors (also called probability vectors) as $\partial \Smc \equiv \{ \zbf \geq \mathbf{0} : \sum_i z_i = 1\}$; this notation explains the label $\Lambda_{\partial \Smc}$.  The next definition introduces several quantities to be used. Let $\{\ebf_i\}_{i=1}^n$ denote the $n$ standard unit vectors in $\Rbb_+^n$.

\begin{definition} 
\label{def:fOfdeltaxAndpOfdeltaxAndxOfp}
The order-$n$ polynomial in $\delta \in \Rbb$ with coefficients determined by $\xbf \in [0,1]^{n} \setminus \{ \ebf_{i} \}_{i=1}^{n}$:
\begin{equation}
\label{eq:fdeltax}
f(\delta, \xbf)  \equiv  \prod_{i=1}^{n} (1 + x_{i} \delta) - \delta.
\end{equation}
The $n$-vector $\pbf(\delta,\xbf) \in [0,1]^{n}$ with components $p_{i} (\delta, \xbf)$ determined by $\xbf \in \Rbb_+^n$ and parameterized by $\delta > 0$:
\begin{equation} 
\label{eq:pOfdeltax}
p_{i} (\delta, \xbf) \equiv \frac{\delta x_{i}}{1 + \delta x_{i}}, i \in [n].
\end{equation}
The product of the component-wise complements of a given vector of contention probabilities $\pbf \in [0,1]^n$:
\begin{equation}
\label{eq:piOfpbf}
\pi(\pbf) \equiv \prod_{j} (1-p_{j}).
\end{equation}
The $n$-vector $\xbf(\pbf) \in [0,1]^n$ with components $x_i(\pbf)$ determined by $\pbf \in [0,1]^n$:
\begin{equation} 
\label{eq:xOfp}
x_{i} (\pbf) \equiv p_{i} \prod_{j \neq i} (1- p_{j}) = \frac{p_{i}}{1-p_{i}} \pi(\pbf), ~ i \in [n].
\end{equation}
\end{definition}

Note the equalities in $\Lambda_{\rm eq}$ in \eqref{eq:LambdaEq} are $\xbf = \xbf(\pbf)$ in \eqref{eq:xOfp}.  Our notation distinguishes between a generic vector of contention probabilities $\pbf \in [0,1]^{n}$ and a specific vector $\pbf(\delta,\xbf)$ determined by $\delta$ and $\xbf$, and likewise between a generic rate vector $\xbf \in \Lambda$ and a specific vector $\xbf(\pbf)$ determined by $\pbf$. 

\begin{definition}[stabilizability in the sense of $\Lambda$ or its equivalent forms]
\label{def:stabilizingcontrols}
A \textit{stabilizing control} for $\xbf \in \Lambda$ is a vector of contention probabilities $\pbf \in [0,1]^{n}$ that is ``compatible'' with $\xbf$, meaning the pair $(\xbf, \pbf)$ satisfies the definition of $\Lambda$. A {\em critical stabilizing control} is a stabilizing control $\pbf$ such that $(\xbf, \pbf)$ satisfies the definition of $\Lambda_{\rm eq}$ (or $\partial \Lambda$).
\end{definition}

A corollary of Prop.\ \ref{prop:Lambdarootconditions} below is that, given $\xbf \in [0,1]^n$, there exists a stabilizing control if and only if there exists a critical stabilizing control.  

Since $\Lambda \subseteq \Lambda_{\rm A}$, the non-existence of a stabilizing control for a given $\xbf$ in the sense of $\Lambda$ does not necessarily mean the non-existence of one for $\xbf$ in the sense of $\Lambda_{\rm A}$ (i.e., it does not necessarily mean $\xbf$ is not stabilizable under the Aloha protocol). Throughout this paper though, our usage of ``stabilizability'' and ``stability controls'' is tied to $\Lambda$ or its equivalent forms. 

The following proposition gives an alternative test for membership of a rate vector $\xbf$ in $\Lambda$ in terms of the existence of a positive root of the polynomial $f(\delta, \xbf)$ in \eqref{eq:fdeltax}, and furthermore establishes that in fact $\Lambda = \Lambda_{\rm eq}$.  The converse proof is constructive, meaning given a positive root $\delta$, one can construct a $\pbf(\delta,\xbf)$ compatible with $\xbf$.  In the forward direction, given $\xbf \in \Lambda$ and an associated compatible $\pbf$, we do not give an explicit expression for a positive root $\delta$ of $f(\delta, \xbf)$, although we can bound the interval containing $\delta$.  In the forward direction for $\xbf \in \Lambda_{\rm eq}$, however, given a compatible $\pbf$ such that $\xbf = \xbf(\pbf)$ as in \eqref{eq:xOfp}, we have that one of the positive roots of $f(\delta, \xbf)$ for $\xbf \in \Lambda_{\rm eq}$ will always equal $\delta = 1/\pi(\pbf)$.

\begin{proposition}[root testing] 
\label{prop:Lambdarootconditions}
Membership in $\Lambda$ (except $\{ \ebf_{i} \}_{i=1}^{n}$) is equivalent to the existence of a  positive root of the polynomial equation $f(\delta, \xbf) = 0$.
\begin{equation} 
\label{eq:Lambdarootconditions}
\xbf \in \Lambda \setminus \{ \ebf_{i} \}_{i=1}^{n} ~ \iff ~ \exists \delta > 0 ~ : ~  f(\delta, \xbf) = 0.
\end{equation}
\end{proposition}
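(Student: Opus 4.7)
The plan is to prove the root-testing equivalence in both directions; the claim $\Lambda=\Lambda_{\rm eq}$ will then follow as a corollary. For the ``if'' direction I would argue constructively. Given $\delta>0$ with $f(\delta,\xbf)=0$, i.e.\ $\prod_{i}(1+\delta x_{i})=\delta$, set $\pbf=\pbf(\delta,\xbf)$ with $p_{i}=\delta x_{i}/(1+\delta x_{i})$ as in \eqref{eq:pOfdeltax}. Since $\delta>0$ and $x_{i}\geq 0$, each $p_{i}\in[0,1)$, and $1-p_{i}=1/(1+\delta x_{i})$. A direct calculation gives
\[
p_{i}\prod_{j\neq i}(1-p_{j}) \;=\; \frac{\delta x_{i}}{\prod_{j}(1+\delta x_{j})} \;=\; \frac{\delta x_{i}}{\delta} \;=\; x_{i},
\]
invoking $f(\delta,\xbf)=0$ in the middle step. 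This exhibits $\pbf$ as a critical stabilizing control and shows $\xbf\in\Lambda_{\rm eq}\subseteq\Lambda$.

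For the ``only if'' direction, take $\xbf\in\Lambda\setminus\{\ebf_{i}\}_{i=1}^{n}$ with a witness $\pbf\in[0,1]^{n}$ and split on the value of $\pi(\pbf)=\prod_{j}(1-p_{j})$. When $\pi(\pbf)>0$, the natural candidate is $\delta^{*}\equiv 1/\pi(\pbf)$. Rewriting the feasibility constraints as $x_{i}\leq p_{i}\pi(\pbf)/(1-p_{i})$ and dividing by $\pi(\pbf)$ gives $1+\delta^{*}x_{i}\leq 1/(1-p_{i})$; multiplying over $i$ produces $\prod_{i}(1+\delta^{*}x_{i})\leq 1/\pi(\pbf)=\delta^{*}$, i.e.\ $f(\delta^{*},\xbf)\leq 0$. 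Combined with $f(0,\xbf)=1>0$ and continuity of $f$ in $\delta$, the intermediate value theorem delivers a root in $(0,\delta^{*}]$.

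The main obstacle is the degenerate case $\pi(\pbf)=0$, where $\delta^{*}$ is undefined and the IVT argument breaks. Here at least one $p_{j}=1$; inspecting the worst-case service-rate formula shows that if two or more indices saturate at $1$ then every rate vanishes, while if exactly one index $i$ has $p_{i}=1$ then every constraint with $k\neq i$ contains the zero factor $(1-p_{i})$, again forcing $x_{k}=0$. Thus $\xbf$ is supported on at most one coordinate, and the hypothesis $\xbf\neq\ebf_{i}$ rules out $x_{i}=1$. So either $\xbf=\mathbf{0}$, whence $f(\delta,\xbf)=1-\delta$ has the root $\delta=1$, or $\xbf=x_{i}\ebf_{i}$ with $x_{i}\in(0,1)$, whence $f(\delta,\xbf)=1-(1-x_{i})\delta$ has the positive root $\delta=1/(1-x_{i})$. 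Finally, $\Lambda=\Lambda_{\rm eq}$ is immediate: $\Lambda_{\rm eq}\subseteq\Lambda$ is trivial, while the ``if'' construction above produced $\xbf\in\Lambda_{\rm eq}$ from any positive root, so composing with ``only if'' also yields $\Lambda\subseteq\Lambda_{\rm eq}$.
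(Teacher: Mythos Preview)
Your argument is correct. The ``$\Leftarrow$'' direction is identical to the paper's: construct $\pbf(\delta,\xbf)$ and verify that the worst-case service rate equals $x_i$ exactly, landing in $\Lambda_{\rm eq}$.

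The ``$\Rightarrow$'' direction, however, follows a genuinely different route. The paper introduces the inverse stability rank $\Delta_i=p_i/(x_i(1-p_i))$, reduces to the uniform vector $\tilde{\Delta}=\min_i\Delta_i$, shows $f(\tilde{\Delta},\xbf)\leq 0$, and then applies the intermediate value theorem on $[\tilde{\Delta},\infty)$ using $f(\delta,\xbf)\to\infty$. Because $\Delta_i$ has $x_i$ in the denominator, the paper must separately dispatch the case $x_i=0$ by dimension reduction. Your pivot $\delta^{*}=1/\pi(\pbf)$ is instead built from the control $\pbf$ alone; the inequality $1+\delta^{*}x_i\leq 1/(1-p_i)$ holds even when $x_i=0$, so no dimension reduction is needed in your main case. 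You then invoke IVT on $(0,\delta^{*}]$ via $f(0,\xbf)=1>0$, which also sidesteps the need to check the growth of $f$ at infinity. The trade-off is that you must handle $\pi(\pbf)=0$ explicitly, which you do cleanly by observing that this forces $\xbf$ to be supported on a single coordinate and then writing down the root directly. One small omission: for $\Lambda=\Lambda_{\rm eq}$ you should note that each $\ebf_i$ itself lies in $\Lambda_{\rm eq}$ (witnessed by $\pbf=\ebf_i$), since your composition argument only covers $\Lambda\setminus\{\ebf_i\}$.
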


\begin{IEEEproof}
``$\Leftarrow$'': Fix $\xbf \not\in \left\{ \ebf_{i} \right\}_{i=1}^{n}$ and suppose $\delta > 0$ satisfies $f(\delta, \xbf) = 0$.  Construct $\pbf(\delta,\xbf)$ as in \eqref{eq:pOfdeltax}, and observe the worst-case service rate for user $i$ is 
\begin{equation}
p_i(\delta, \xbf) \prod_{j \neq i} \left(1 - p_j(\delta,\xbf)\right) = \frac{\delta x_i}{\prod_j (1 + \delta x_j)}, 
\end{equation}
and for this choice of $\pbf$ the requirement $x_i \leq p_i \prod_{j \neq i} (1-p_j)$  simplifies to $\prod_j (1+\delta x_j) \leq \delta$, which is true with equality by the assumption that $f(\delta, \xbf) = 0$.  As this is true for each $i \in [n]$ it follows that $\xbf \in \Lambda$.  

``$\Rightarrow$'': First observe that if $p_{i} = 1$ for some $i \in [n]$ then the only way for $\xbf \in \Lambda$ is to let $\xbf \leq \ebf_{i}$ which means $\xbf \in \Lambda$.  Similarly, if $x_i = 0$ for some $i \in [n]$, then we can work with a reduced-dimensional $\xbf$ (i.e., the original $\xbf$ with zero component(s) removed).  Consequently, we now assume $p_{i} < 1$ and $x_{i} > 0$ for each $i \in [n]$.  Suppose $\xbf \in \Lambda \setminus \left\{ \ebf_{i} \right\}_{i=1}^{n}$ and let $\pbf$ be compatible with $\xbf$.  Define the ``inverse stability rank'' vector $\boldsymbol\Delta$ (Luo and Ephremides \cite[Thm.\ 2]{LuoEph1999}) with elements
\begin{equation}
\Delta_i = \frac{p_i}{x_i(1-p_i)}, ~ i \in [n].
\end{equation}
Then $\xbf \in \Lambda$ may be equivalently expressed in terms of $\boldsymbol\Delta$ via:
\begin{IEEEeqnarray}{rCl}
\label{eqn:keyeqPreliminaryRootTestingProof}
\xbf \in \Lambda &\Leftrightarrow& \exists \pbf : x_i \leq p_i \prod_{j \neq i}(1-p_j), i \in [n] \nonumber \\
&\Leftrightarrow& \exists \pbf : \frac{p_i}{x_i(1-p_i)} \!\geq\! \prod_{j \in [n]}\left( 1+ x_{j} \frac{p_{j}}{x_{j}\left(1-p_{j}\right)} \right), i \in [n]  \nonumber \\
&\Leftrightarrow& \exists \boldsymbol\Delta : \Delta_i \geq \prod_j (1+x_j \Delta_j),  i \in [n] ~~~ (*)\IEEEeqnarraynumspace
\end{IEEEeqnarray}
Define $\tilde{\Delta} \equiv \min_j \Delta_j$, and let $\tilde{\boldsymbol\Delta} = \tilde{\Delta} \mathbf{1}$ be the $n$-vector with all components equal to $\tilde{\Delta}$.  If $\boldsymbol\Delta$ obeys ($*$) in \eqref{eqn:keyeqPreliminaryRootTestingProof} then $\tilde{\boldsymbol\Delta}$ also obeys ($*$), because 
\begin{equation}
\tilde{\Delta} = \tilde{\Delta}_i = \min_j \Delta_j  \geq \prod_k (1+x_k \Delta_k) \geq \prod_k (1+x_k \tilde{\Delta}),  i \in [n].
\end{equation}
It follows that $f(\tilde{\Delta},\xbf) \leq 0$.  If $f(\tilde{\Delta},\xbf) = 0$ then $\tilde{\Delta}$ is the required positive root in \eqref{eq:Lambdarootconditions}.  Otherwise, notice $\lim_{\tilde{\Delta} \to \infty} f(\tilde{\Delta},\xbf) = \infty$, so by the intermediate value theorem there must exist some $\delta \in (\tilde{\Delta},\infty)$ so that $f(\delta, \xbf) = 0$.  This proves the equivalence \eqref{eq:Lambdarootconditions}, namely the membership testing of $\Lambda$ can be cast as the problem of searching for a positive root of $f(\delta, \xbf)$. 
\end{IEEEproof}

Building upon Prop.\ \ref{prop:Lambdarootconditions} including its proof ideas, we can establish the following equivalences.
\begin{proposition} 
\label{prop:Lambda2EqLambdaEqLambda4}
There exist the set equivalence relationships: $\Lambda_{\rm eq} = \Lambda = \Lambda_{\partial \Smc}$.
\end{proposition}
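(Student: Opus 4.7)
The equality $\Lambda = \Lambda_{\rm eq}$ is already established in Proposition~\ref{prop:Lambdarootconditions}, so the task reduces to showing $\Lambda = \Lambda_{\partial\Smc}$. The inclusion $\Lambda_{\partial\Smc} \subseteq \Lambda$ is immediate since appending the constraint $\sum_i p_i = 1$ only restricts the admissible controls, so the plan is to prove the reverse inclusion by exhibiting, for every $\xbf \in \Lambda$, a \emph{stochastic} compatible control.

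I would reuse the parametric family $\pbf(\delta,\xbf)$ from Def.~\ref{def:fOfdeltaxAndpOfdeltaxAndxOfp}. As noted in the proof of Prop.~\ref{prop:Lambdarootconditions}, the full vector condition $x_i \leq p_i(\delta,\xbf)\prod_{j\neq i}(1-p_j(\delta,\xbf))$ for every $i$ collapses to the single scalar inequality $f(\delta,\xbf) \leq 0$. Introduce the auxiliary function
\begin{equation*}
g(\delta) \;\equiv\; \frac{f(\delta,\xbf)}{\delta} + 1 \;=\; \frac{1}{\delta}\prod_{i=1}^n(1+x_i\delta),
\end{equation*}
so that the required condition is $g(\delta)\leq 1$. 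A direct logarithmic differentiation yields
\begin{equation*}
(\log g)'(\delta) \;=\; \sum_i \frac{x_i}{1+x_i\delta} \;-\; \frac{1}{\delta} \;=\; \frac{1}{\delta}\left(\sum_i p_i(\delta,\xbf) - 1\right).
\end{equation*}
Since each $p_i(\delta,\xbf) = \delta x_i/(1+\delta x_i)$ is strictly increasing in $\delta$ and (assuming first that all $x_i > 0$) $\sum_i p_i(\delta,\xbf)$ sweeps continuously from $0$ to $n$ as $\delta$ ranges over $(0,\infty)$, there is a unique $\delta^* > 0$ with $\sum_i p_i(\delta^*,\xbf) = 1$; moreover $g$ is strictly decreasing on $(0,\delta^*)$ and strictly increasing on $(\delta^*,\infty)$, so $\delta^*$ is the \emph{unique global minimizer} of $g$.

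The argument then closes quickly. If $\xbf \in \Lambda$, Prop.~\ref{prop:Lambdarootconditions} supplies some $\delta > 0$ with $f(\delta,\xbf) \leq 0$, i.e., $g(\delta) \leq 1$; minimality of $\delta^*$ gives $g(\delta^*) \leq 1$, hence $f(\delta^*,\xbf) \leq 0$. Therefore $\pbf(\delta^*,\xbf)$ is compatible with $\xbf$ and satisfies $\sum_i p_i(\delta^*,\xbf) = 1$ by construction, placing $\xbf$ in $\Lambda_{\partial\Smc}$. Boundary cases ($\xbf = \ebf_i$; $\xbf = \mathbf 0$; rate vectors with some $x_j = 0$) are handled directly by explicit choices of $\pbf$ or by reducing to a lower-dimensional subproblem. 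The main obstacle I anticipate is establishing the unimodality of $g$ cleanly, but once $(\log g)'$ is rewritten in terms of $\sum_i p_i(\delta,\xbf) - 1$ this reduces to the strict monotonicity of $\delta\mapsto p_i(\delta,\xbf)$, which is transparent from \eqref{eq:pOfdeltax}.
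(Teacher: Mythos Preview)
Your argument is correct and takes a genuinely different route from the paper. The paper dispatches $\Lambda \subseteq \Lambda_{\partial\Smc}$ in one line by invoking Lemma~1 of Massey--Mathys \cite{MasMat1985}: every $\xbf \in \Lambda$ is dominated by a unique $\hat{\xbf} \in \partial\Lambda$, and the corresponding $\hat{\pbf}$ is automatically in $\partial\Smc$. Your approach instead stays entirely within the machinery of Def.~\ref{def:fOfdeltaxAndpOfdeltaxAndxOfp} and Prop.~\ref{prop:Lambdarootconditions}: by observing that the compatibility condition for $\pbf(\delta,\xbf)$ reduces to $g(\delta)\leq 1$ and that $(\log g)'(\delta) = \delta^{-1}\bigl(\sum_i p_i(\delta,\xbf)-1\bigr)$, you locate the stochastic control as the unique minimizer $\delta^*$ of $g$. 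This is more self-contained and in fact foreshadows the finer root analysis the paper carries out later in Prop.~\ref{prop:augumentedRootTesting} (and Lemma~\ref{lem:gtpi} in the Appendix), where essentially the same monotonicity of $\sum_i p_i(\delta,\xbf)$ drives the classification of positive roots of $f(\delta,\xbf)$. The paper's version is shorter but black-boxes the key geometric fact; yours exposes why the stochastic control always exists. One small point to tighten: when exactly one coordinate of $\xbf$ is positive, $\sum_i p_i(\delta,\xbf) \to 1$ only in the limit $\delta\to\infty$, so no finite $\delta^*$ exists; this case (like $\xbf = \ebf_i$) should be handled by the direct choice $\pbf = \ebf_i$ rather than by dimension reduction.
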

\begin{IEEEproof}
First we show $\Lambda_{\rm eq} = \Lambda$. Having established Prop.\ \ref{prop:Lambdarootconditions}, we only need to show a counterpart of \eqref{eq:Lambdarootconditions} for $\Lambda_{\rm eq}$, namely
\begin{equation} 
\label{eq:LambdaEQrootconditions}
\xbf \in \Lambda_{\rm eq} \setminus \{ \ebf_{i} \}_{i=1}^{n} ~ \iff ~ \exists \delta > 0 ~ : ~  f(\delta, \xbf) = 0.
\end{equation}

``$\Leftarrow$'': The same proof part used in Prop.\ \ref{prop:Lambdarootconditions} for membership testing for $\Lambda$ holds here.

``$\Rightarrow$'':  We must show that if $\xbf \in \Lambda_{\rm eq} \setminus \left\{ \ebf_{i} \right\}_{i=1}^{n}$ then there exists $\delta > 0$ such that $f(\delta, \xbf) = 0$.  But in the proof of Prop.\ \ref{prop:Lambdarootconditions} it has been shown such a $\delta$ always exists for each $\xbf \in \Lambda \setminus \left\{ \ebf_{i} \right\}_{i=1}^{n}$, and as $\Lambda_{\rm eq} \subseteq \Lambda$, a $\delta$ must likewise exist for each $\xbf \in \Lambda_{\rm eq} \setminus \left\{ \ebf_{i} \right\}_{i=1}^{n}$.  The fact that $f(1/\pi(\pbf),\xbf) = 0$ for $\pbf$ compatible with $\xbf \in \Lambda_{\rm eq} \setminus \left\{ \ebf_{i} \right\}_{i=1}^{n}$ follows by substitution.  This concludes the proof of the equivalence of root testing and membership testing for $\Lambda_{\rm eq}$ and hence establishes $\Lambda_{\rm eq} = \Lambda$. 

It then remains to show $\Lambda = \Lambda_{\partial \Smc}$.  As $\Lambda_{\partial \Smc} \subseteq \Lambda$, we only need to show $\Lambda \subseteq \Lambda_{\partial \Smc}$.  By Lem.\ 1 of \cite{MasMat1985}, 
given $\xbf \in \Lambda$ (with compatible $\pbf \in [0,1]^n$), there must exist a unique $\hat{\xbf} \in \partial \Lambda$ (with a unique compatible $\hat{\pbf} \in \partial \Smc$)  that ``dominates'' $\xbf$ in the sense that $\xbf \leq \hat{\xbf}$.  If in fact $\xbf = \hat{\xbf}$ then $\hat{\pbf}=\pbf$ as well \cite{MasMat1985}.  Since $\hat{\xbf} \in \partial \Lambda$ and $\xbf \leq \hat{\xbf}$, it follows that $\xbf \in\Lambda_{\partial \Smc}$, and thus $\Lambda \subseteq \Lambda_{\partial \Smc}$.
\end{IEEEproof}

We next present an augmented version of the root testing Prop.\ \ref{prop:Lambdarootconditions}, which makes clear how the roots of polynomial equation $f(\delta, \xbf) = 0$ map between compatible $\pbf$ and $\xbf \in \Lambda =\Lambda_{\rm eq}$. The proofs of the (critical) stabilizing controls $\pbf(\xbf)$ for a given $\xbf$ are constructive. 

\begin{proposition}[augmented root testing] 
\label{prop:augumentedRootTesting}
Fix $n \geq 2$ and let a rate vector $\xbf \in [0,1]^{n} \setminus \{ \ebf_{i} \}_{i=1}^{n}$ be given.
\begin{enumerate}
\item $\xbf \in \partial \Lambda$ if and only if there is a unique positive root $\delta$ of $f(\delta, \xbf) = 0$, denoted $\delta_{u}$. Furthermore given $\xbf \in \partial \Lambda$, then $\pbf_{u} = \pbf(\delta_{u}, \xbf)$ given by \eqref{eq:pOfdeltax} stabilizes $\xbf$.  Finally, $\pbf_{u} \in \partial \Smc$ and is the only (critical) stabilizing control for $\xbf$ among all $\pbf \in [0,1]^{n}$.

\item Let $\xbf \in \Lambda \setminus \partial \Lambda$ be given. Solving $f(\delta, \xbf) = 0$ on $(0, \infty)$ for $\delta$ yields exactly two positive roots denoted $\delta_{s}$, $\delta_{l}$.  Each root can be used to construct a vector of contention probabilities, $\pbf_{s} = \pbf(\delta_{s}, \xbf)$, $\pbf_{l} = \pbf(\delta_{l}, \xbf)$, according to \eqref{eq:pOfdeltax}, that stabilizes $\xbf$. Furthermore, $\pbf_{s}$ is such that $\sum_{i=1}^{n} p_{s,i} < 1$ (i.e., $\pbf_{s} \in \Smc \setminus \partial \Smc$) and $\pbf_{l}$ is such that $\sum_{i=1}^{n} p_{l,i} > 1$ (i.e., $\pbf_{l} \in [0,1]^{n} \setminus \Smc$).  Finally, $\pbf_{s}$, $\pbf_{l}$ are also the only two critical stabilizing controls for $\xbf$ among all $\pbf \in [0,1]^{n}$.
\end{enumerate}
\end{proposition}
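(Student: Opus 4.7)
The plan is to view everything through a bijection between positive roots of $f(\cdot, \xbf)$ and critical stabilizing controls for $\xbf$. The forward direction (root gives a critical control) is exactly the ``$\Leftarrow$'' proof of Prop.\ \ref{prop:Lambdarootconditions}: for each positive root $\delta$, the vector $\pbf(\delta, \xbf)$ from \eqref{eq:pOfdeltax} satisfies the equalities $x_i = p_i\prod_{j\neq i}(1-p_j)$, and in particular lies in $[0,1]^n$. For the reverse direction, I would take any critical stabilizing control $\pbf$, rewrite $x_i = (p_i/(1-p_i))\pi(\pbf)$ as $p_i/(1-p_i) = x_i/\pi(\pbf)$, and set $\delta := 1/\pi(\pbf)$; this recovers $\pbf = \pbf(\delta, \xbf)$ and the root equation $\delta = \prod_j(1+\delta x_j)$, i.e., $f(\delta, \xbf) = 0$. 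The map $\delta \mapsto \pbf(\delta, \xbf)$ is manifestly injective, so the number of critical stabilizing controls in $[0,1]^n$ equals the number of positive roots of $f(\cdot, \xbf)$; this is what eventually gives the ``only'' clauses in (1) and (2).

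Next I would count positive roots via convexity. After the reduction used in the proof of Prop.\ \ref{prop:Lambdarootconditions} (dropping any zero coordinate of $\xbf$), a direct computation of $f''(\delta, \xbf)$ (equivalently, $\log \prod_i(1+\delta x_i)$ is convex) shows $f(\cdot, \xbf)$ is strictly convex on $[0,\infty)$, with $f(0, \xbf) = 1 > 0$ and $f(\delta, \xbf) \to \infty$ as $\delta \to \infty$. Hence the number of positive roots of $f(\cdot, \xbf)$ is exactly $0$, $1$, or $2$, with the one-root case corresponding to a double root at the minimum of $f$, and the two-root case to two simple roots $\delta_s < \delta_l$ bracketing the minimum.

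The heart of the argument is the identity
\[
f'(\delta, \xbf) = \sum_{i} p_i(\delta, \xbf) - 1 \qquad \text{whenever } f(\delta, \xbf) = 0,
\]
which I would obtain by differentiating $\prod_i(1+\delta x_i) - \delta$, factoring out $\prod_i(1+\delta x_i)$ from the sum, and substituting $\prod_i(1+\delta x_i) = \delta$ from the root condition. Strict convexity then forces $f'(\delta_u) = 0$ at the double root, $f'(\delta_s) < 0$ at the smaller simple root, and $f'(\delta_l) > 0$ at the larger, which via the identity becomes $\sum_i p_{u,i} = 1$, $\sum_i p_{s,i} < 1$, and $\sum_i p_{l,i} > 1$, i.e., $\pbf_u \in \partial\Smc$, $\pbf_s \in \Smc \setminus \partial\Smc$, and $\pbf_l \in [0,1]^n \setminus \Smc$. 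This identity is where I expect the main obstacle to sit, since it is the link between the purely algebraic root condition and the geometric $\partial\Smc$ condition that distinguishes $\partial\Lambda$ from $\Lambda \setminus \partial\Lambda$.

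To close, I would appeal to Def.\ \ref{def:equivalentFormsLambda}: $\xbf \in \partial\Lambda$ iff it admits a critical stabilizing control in $\partial\Smc$, which by the bijection and the paragraph-three identity happens iff $f(\cdot, \xbf)$ has a double (hence unique) positive root; conversely two distinct positive roots give controls with $\sum_i p_i \neq 1$, forcing $\xbf \notin \partial\Lambda$. This matches the dichotomy in (1) and (2) and, combined with the bijection and the fact that each root yields a stabilizing (indeed critical) control via \eqref{eq:pOfdeltax}, delivers the uniqueness/count claims for $\pbf_u$ and $\{\pbf_s, \pbf_l\}$.
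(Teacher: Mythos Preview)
Your approach is correct and is genuinely different from---and more direct than---the paper's. The paper introduces an auxiliary polynomial $g(t_\pi,\pbf)=\prod_i(1+p_it_\pi)-(1+t_\pi)$, proves a lemma about its roots on $(-1,\infty)$, and then relates roots of $f$ and $g$ through the change of variable $\delta = (1+t_\pi)/\pi(\pbf)$; the ``only'' clauses are obtained via a somewhat involved back-and-forth through diagram-style constructions and a separate contradiction argument, and the uniqueness in part~1 leans on the Massey--Mathys bijection. You bypass all of this by working with $f$ directly: strict convexity of $f$ on $[0,\infty)$ plus $f(0)=1$, $f(\infty)=\infty$ gives the $0/1/2$ root count, and your key identity $f'(\delta,\xbf)=\sum_i p_i(\delta,\xbf)-1$ at roots immediately converts the sign of $f'$ at a root into the location of $\pbf(\delta,\xbf)$ relative to $\partial\Smc$. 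Together with your root~$\leftrightarrow$~critical-control bijection, this yields both directions of part~1, the two-root structure of part~2, and the ``only'' clauses, without invoking Massey--Mathys. What the paper's $g$-detour buys is an explicit recipe for passing from one critical control to the other (their diagram), but that is not needed for the proposition as stated.

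Two small points to tighten. First, the parenthetical ``equivalently, $\log\prod_i(1+\delta x_i)$ is convex'' is false: $\sum_i\log(1+\delta x_i)$ is concave in $\delta$. Your direct computation of $f''(\delta,\xbf)=\sum_{i\neq k}x_ix_k\prod_{j\neq i,k}(1+\delta x_j)\geq 0$ is what actually gives (strict) convexity when at least two $x_i$ are positive, so just drop the parenthetical. Second, your bijection from critical controls to roots uses $\delta=1/\pi(\pbf)$ and hence silently assumes $p_i<1$ for all $i$; you should note (as in the proof of Prop.~\ref{prop:Lambdarootconditions}) that a critical control with some $p_i=1$ forces $x_j=0$ for all $j\neq i$, so after the zero-coordinate reduction and the exclusion of $\{\ebf_i\}$ this case is vacuous for $\xbf$ with at least two positive components. (When only one component is positive the polynomial $f$ degenerates to a linear function and strict convexity fails; this edge case is delicate in the paper's argument as well.)
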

\begin{IEEEproof}
See \S \ref{ssec:augmentedRootTestingProof} in the Appendix.
\end{IEEEproof}

\begin{corollary} \label{cor:bijectionbetweenSandLambda}
There exist the following bijections: $i)$ $\partial \Smc \leftrightarrow \partial \Lambda$, $ii)$ $\Smc \setminus \partial \Smc \leftrightarrow \Lambda \setminus \partial \Lambda$, $iii)$ $[0,1]^{n} \setminus \Smc \leftrightarrow \Lambda \setminus \partial \Lambda$, $iv)$ $\Smc \setminus \partial \Smc \leftrightarrow [0,1]^{n} \setminus \Smc$, $v)$ $\Smc \leftrightarrow  \Lambda$.
\end{corollary}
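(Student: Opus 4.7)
The plan is to use Prop.\ \ref{prop:augumentedRootTesting} as the engine behind all five bijections. In one direction each map sends a rate vector $\xbf$ to its critical stabilizing control(s) as supplied by that proposition, and in the reverse direction it sends a control $\pbf$ to $\xbf(\pbf)$ defined in \eqref{eq:xOfp}. For each of the claims $(i)$--$(v)$ I would verify that the forward map lands in the stated codomain, that the reverse map lands in the stated domain, and that the two compositions are identities, with injectivity always flowing from the uniqueness clauses in Prop.\ \ref{prop:augumentedRootTesting}.

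For $(i)$, the forward map is $\xbf \mapsto \pbf_u$ from Prop.\ \ref{prop:augumentedRootTesting}(1); this lies in $\partial \Smc$ and is the \emph{only} critical stabilizing control, and conversely $\pbf \in \partial \Smc$ gives $\xbf(\pbf) \in \partial \Lambda$ directly from the definition \eqref{eq:partialLambda}. For $(ii)$ and $(iii)$, Prop.\ \ref{prop:augumentedRootTesting}(2) supplies the forward maps $\xbf \mapsto \pbf_s$ and $\xbf \mapsto \pbf_l$ respectively, with $\pbf_s \in \Smc \setminus \partial \Smc$ and $\pbf_l \in [0,1]^n \setminus \Smc$. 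In the reverse direction I would need to show that $\pbf \in \Smc \setminus \partial \Smc$ forces $\xbf(\pbf) \in \Lambda \setminus \partial \Lambda$: membership in $\Lambda = \Lambda_{\rm eq}$ is automatic since $\pbf$ is by construction a compatible control for $\xbf(\pbf)$, and if $\xbf(\pbf)$ were in $\partial \Lambda$ then the uniqueness part of Prop.\ \ref{prop:augumentedRootTesting}(1) would force $\pbf \in \partial \Smc$, contradicting the assumption; the argument for $[0,1]^n \setminus \Smc$ is identical. Bijection $(iv)$ then follows as the composition of the inverse of $(ii)$ with $(iii)$, and $(v)$ is obtained by gluing $(i)$ with $(ii)$ along the disjoint decompositions $\Smc = \partial \Smc \sqcup (\Smc \setminus \partial \Smc)$ and $\Lambda = \partial \Lambda \sqcup (\Lambda \setminus \partial \Lambda)$.

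The main obstacle is mostly bookkeeping: making sure each map lands in precisely the stated piece of the domain and not accidentally in the boundary of an adjacent piece, which is controlled by invoking the uniqueness statements in Prop.\ \ref{prop:augumentedRootTesting} to forbid the pathological cases. A minor loose end is the treatment of the degenerate corners of $[0,1]^n$ such as the unit vectors $\{\ebf_i\}$ and rate vectors with zero components, which Prop.\ \ref{prop:augumentedRootTesting} excludes from its hypotheses but which the corollary implicitly includes; for these I would verify directly from the definitions that $\ebf_i \in \partial \Lambda$ pairs with $\ebf_i \in \partial \Smc$, and that any $\xbf$ with some $x_i = 0$ reduces the problem to the lower-dimensional bijection on the surviving coordinates.
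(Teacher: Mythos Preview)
Your proposal is correct and follows essentially the same route as the paper: both use Prop.\ \ref{prop:augumentedRootTesting} to supply the forward map $\xbf \mapsto \pbf$ and \eqref{eq:xOfp} for the reverse map, verify injectivity via the uniqueness clauses, and obtain $(iv)$ and $(v)$ by composition and gluing. The only cosmetic difference is that the paper attributes $(i)$ directly to Massey--Mathys \cite{MasMat1985} rather than re-deriving it from Prop.\ \ref{prop:augumentedRootTesting}(1), and your explicit attention to the degenerate corners $\{\ebf_i\}$ is a detail the paper leaves implicit.
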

\begin{IEEEproof}
Massey and Mathys \cite{MasMat1985} showed $i)$. We now show $ii)$. From (the proof of) Prop.\ \ref{prop:augumentedRootTesting} there exists a function that maps from $\Lambda \setminus \partial \Lambda$ to $\Smc \setminus \partial \Smc$. We need to show this function mapping is one-to-one and onto. First, given two distinct points $\xbf, \ybf \in \Lambda \setminus \partial \Lambda$, the function maps to $\pbf_{s,x}$, $\pbf_{s,y}$ respectively, both in $\Smc \setminus \partial \Smc$. If $\pbf_{s,x} = \pbf_{s,y}$, since they are both critical stabilizing controls (according to Prop.\ \ref{prop:augumentedRootTesting}) meaning they determine the corresponding rate vectors $\xbf = \xbf(\pbf_{s,x})$, $\ybf = \ybf(\pbf_{s,y})$ according to \eqref{eq:xOfp}, this gives $\xbf = \ybf$, which contradicts the assumption $\xbf \neq \ybf$ and hence this function is one-to-one. Second, for any point $\pbf_{s} \in \Smc \setminus \partial \Smc$ it defines a rate vector $\xbf(\pbf_{s})$ according to \eqref{eq:xOfp}, which by definition is in $\Lambda_{\rm eq} = \Lambda$ and in fact is in $\Lambda \setminus \partial \Lambda$ (because of the bijection $i)$ \cite{MasMat1985}). Recall $\pbf_{s}$ is automatically a critical stabilizing control for $\xbf(\pbf_{s})$. That this function has to map $\xbf(\pbf_{s})$ back to $\pbf_{s}$ is due to the fact that a rate vector from $\Lambda \setminus \partial \Lambda$ has exactly two critical stabilizing controls (one in $\Smc \setminus \partial \Smc$, the other in $[0,1]^{n} \setminus \Smc$), as shown at the end of the proof of Prop.\ \ref{prop:augumentedRootTesting}. Therefore this function is onto. Thus we have shown the bijection $ii)$. The proof of $iii)$ is similar to that of $ii)$ and is omitted. The proof of $iv)$ follows from $ii)$ and $iii)$ due to transitivity. Finally $i)$ and $ii)$ together give $v)$.
\end{IEEEproof}

Fig.\ \ref{fig:RootTest} illustrates the three membership possibilities ($\xbf \in \Lambda \setminus \partial\Lambda$, $\xbf \in \partial \Lambda$, $\xbf \not\in \Lambda$) and the corresponding polynomials $f(\delta,\xbf)$ for the case $n=2$.  The case $n=2$ is the only (known) value of $n$ for which $\Lambda$ can be expressed explicitly (\cite{TsyMik1979}, \cite{RaoEph1988}, \cite{MasMat1985}), i.e., $\Lambda = \Lambda_{\rm A} = \left\{\xbf \in \mathbb{R}_{+}^{2}: \sqrt{x_{1}} + \sqrt{x_{2}} \leq 1 \right\}$.

\begin{figure}[!ht]
\centering
\includegraphics[width=0.49\textwidth]{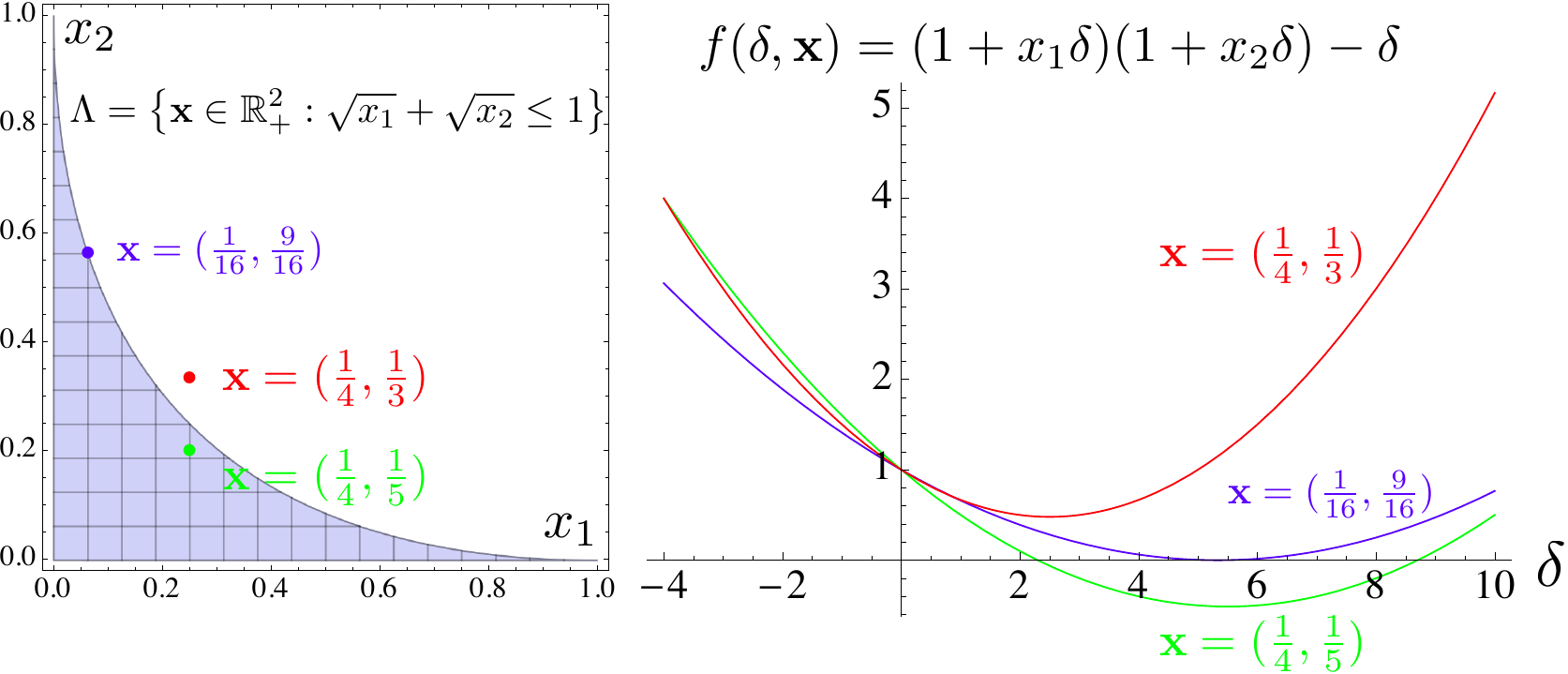}
\caption{Polynomial root membership testing when $n=2$. The green curve corresponds to the interior point $\xbf = (1/4, 1/5) \in \Lambda \setminus \partial \Lambda$, and has two positive roots: $\left(  (11-\sqrt{41})/2,  (11+\sqrt{41})/2 \right)  \approx (2.2984, 8.7016)$; the blue curve corresponds to the boundary point $\xbf = (1/16, 9/16) \in \partial \Lambda$ and has a unique positive root $16/3 \approx 5.3333$; the red curve corresponds to a point $\xbf = (1/4, 1/3) \not\in \Lambda$ and hence does not have any positive root.}
\label{fig:RootTest}
\end{figure} 

A natural concern is that root finding for an order-$n$ polynomial (especially for large $n$) could be non-trivial. The following proposition shows there is no such difficulty, since $f(\delta, \xbf)$ is convex in $\delta$. 

\begin{proposition}
\label{prop:rootTestingBisectionSearch}
The root testing (Prop.\ \ref{prop:Lambdarootconditions}) can be performed using Alg.\ \ref{alg:RootTesting}, which essentially requires simple bisection search within the interval $[1, 1/\min_{s \in [n]} x_{s}^{2}]$.
\end{proposition}
\begin{IEEEproof}
Simple algebra yields the first and second derivatives (w.r.t.\ $\delta$) of $f(\delta, \xbf)$
\begin{eqnarray} 
\label{eq:fdeltaxFirstAndSecondDerivatives}
\frac{\drm f}{\drm \delta} & = & \sum_{j=1}^{n} x_{j} \prod_{i \neq j} (1 + x_{i} \delta) - 1 \nonumber \\
\frac{\drm^{2} f}{\drm \delta^{2}} & = & \sum_{j=1}^{n} x_{j} \sum_{k \neq j} x_{k} \prod_{i \neq k, j} (1 + x_{i} \delta).
\end{eqnarray}
Observe for $\delta \in (0, \infty)$, $\frac{\drm^{2} f}{\drm \delta^{2}} \geq 0$ meaning $f(\delta, \xbf)$ is convex in $\delta$ and $\frac{\drm f}{\drm \delta}$ is increasing in $\delta$. Also, it is easy to prove by contradiction that any positive root(s) $f(\delta, \xbf)$ may have must be no less than $1$, and in fact, $f(\delta, \xbf)$ is always positive for $\delta \in (0,1)$ while at the boundary $f(0, \xbf) = 1$, $f(1, \xbf) = \prod_{i} (1+x_{i}) - 1$. We rewrite $\frac{\drm f}{\drm \delta}$ as
\begin{equation} 
\label{eq:fdeltaxFirstDerivativeRewritten}
\frac{\drm f}{\drm \delta} = \sum_{j = 1}^{n} \frac{x_{j}}{1 + x_{j} \delta} \prod_{i=1}^{n} (1 + x_{i} \delta) - 1,
\end{equation}
and compute $\left. \frac{\drm f}{\drm \delta} \right \vert_{\delta = 0} = \sum_{j} x_{j} - 1 < 0$ (for otherwise, unless $\xbf \in \{\ebf_{i}\}_{i = 1}^{n}$, we can immediate assert $\xbf \notin \Lambda$), and 
\begin{equation} 
\label{eq:fdeltaxFirstDerivativeEvaluAtOne}
\left. \frac{\drm f}{\drm \delta} \right \vert_{\delta = 1} = \sum_{j = 1}^{n} \frac{x_{j}}{1 + x_{j}} \prod_{i=1}^{n} (1 + x_{i}) - 1.
\end{equation}
Depending on the sign of the RHS of \eqref{eq:fdeltaxFirstDerivativeEvaluAtOne}, there are three cases. 

Case $1)$:$\left. \frac{\drm f}{\drm \delta} \right \vert_{\delta = 1} > 0$. Since $\frac{\drm f}{\drm \delta}$ is increasing (and continuous), its only real root, corresponding to the only stationary point (i.e., the global minimizer) of $f(\delta, \xbf)$, must lie in the interval $(0, 1)$. Recall from the above discussion that $f(\delta, \xbf) > 0$ for $\delta \in (0, 1)$ and is convex for $\delta \in (0, \infty)$, we can therefore conclude $f(\delta, \xbf)$ does not have any positive root.  

Case $2)$: $\left. \frac{\drm f}{\drm \delta} \right \vert_{\delta = 1} = 0$. In this case, $\delta = 1$ is the only real root of $\frac{\drm f}{\drm \delta}$, which is also the global minimizer of $f(\delta, \xbf)$, since $f(1, \xbf) \geq 0$ with equality only in the trivial case $\xbf = \mathbf{0}$.  We also find $f(\delta, \xbf)$ does not have any positive root (unless $\xbf = \mathbf{0}$).  

Case $3)$: $\left. \frac{\drm f}{\drm \delta} \right \vert_{\delta = 1} < 0$. In this case, the only real root of $\frac{\drm f}{\drm \delta}$ lies in $(1, \infty)$. This is the only case when $f(\delta, \xbf)$ could possibly have two positive roots. We now describe the root finding in more detail. The first step is to use bisection search on $[1, \delta_{M'}]$ to find the root (denoted $\delta^{*}$) of $\frac{\drm f}{\drm \delta}$, for $\delta_{M'}$ given below. Observe $f(\delta, \xbf)$ has two, one, and zero positive root(s), for $f(\delta^{*}, \xbf)$ less than, equal to, and greater than zero, respectively. Then, for the case when $f(\delta, \xbf)$ has two positive roots, the smaller ($\delta_{s}$) and larger ($\delta_{l}$) roots can be found by bisection search on $[1, \delta^{*}]$ and $[\delta^{*}, \delta_{M}]$ respectively, for $\delta_{M} \geq \delta_{M'}$ to be chosen. We claim it suffices to choose $\delta_{M} = \delta_{M'} = 1 / \min_{s \in [n]} x_{s}^{2}$. To see this, assuming w.l.o.g.\ $n \geq 2$, we have from \eqref{eq:fdeltaxFirstDerivativeRewritten}
\begin{IEEEeqnarray}{rCl}
\left. \frac{\drm f}{\drm \delta} \right \vert_{\delta = \delta_{M'}} & \geq & \sum_{j = 1}^{n} \frac{\min_{s} x_{s}}{1 + \min_{s} x_{s} \delta_{M'}} \prod_{i=1}^{n} (1 + \min_{s} x_{s} \delta_{M'}) - 1 \nonumber \\
& = & n \min_{s} x_{s} (1 + \min_{s} x_{s} \delta_{M'})^{n-1} - 1 \nonumber \\
& > & n \left( 1 / \min_{s} x_{s} \right)^{n-2} - 1 > 0,
\end{IEEEeqnarray}
which justifies the claim that $\delta_{M'} = 1 / \min_{s \in [n]} x_{s}^{2}$ can serve as an upper bound for the bisection search for $\delta^{*}$ (recall $\frac{\drm f}{\drm \delta} $ is increasing in $\delta$). Similarly we can verify from the definition of $f(\delta, \xbf)$ in \eqref{eq:fdeltax}
\begin{eqnarray} 
\left. f(\delta, \xbf) \right \vert_{\delta = \delta_{M}} & \geq & (1 + \min_{s} x_{s} \delta_{M})^{n} - \delta_{M} \nonumber \\
& > & (1 / \min_{s} x_{s})^{n} - 1 / \min_{s} x_{s}^{2} \geq 0,
\end{eqnarray}
which justifies the claim that $\delta_{M} = 1 / \min_{s \in [n]} x_{s}^{2}$ can serve as an upper bound for the bisection search for the larger root of $f(\delta, \xbf)$ (since $f(\delta, \xbf)$ is increasing for $\delta \geq \delta^{*}$). 
\end{IEEEproof}

The above discussion is distilled into Alg.\ \ref{alg:RootTesting} which shows how root testing can be implemented. Fig.\ \ref{fig:RootTestingBisectionSearch} illustrates the scenario when $f(\delta, \xbf)$ has two positive roots for case $3)$. First, observe that it suffices to just look at the sign of $f(\delta^{*}, \xbf)$ if one only wants to know whether $\xbf \in \Lambda$; further using bisection search for finding the positive roots of $f(\delta, \xbf)$ has the added benefits of constructing critical stabilizing controls (via \eqref{eq:pOfdeltax}). Second, observe that part of Prop.\ \ref{prop:augumentedRootTesting} also shows (albeit in a more complicated manner) $f(\delta, \xbf)$ can have no more than two positive roots, yet the above results do not completely supersede Prop.\ \ref{prop:augumentedRootTesting} since the latter provides more information and in particular it establishes further connections between critical stabilizing control(s) and the positive root(s) of $f(\delta, \xbf)$.

\begin{algorithm}
\caption{Root testing for $\xbf \in [0,1]^{n} \setminus \{\mathbf{0}, \ebf_{1}, \ldots, \ebf_{n} \}$}
\label{alg:RootTesting}
\begin{algorithmic}[1]
\If{ $\left. \frac{\drm f}{\drm \delta} \right \vert_{\delta = 1}  = \sum_{j = 1}^{n} \frac{x_{j}}{1 + x_{j}} \prod_{i=1}^{n} (1 + x_{i}) - 1 \geq 0$}
\State
\Return ``$\xbf \notin \Lambda$''
\Else
\State $\delta_{M}, \delta_{M'} \gets 1 / \min_{s \in [n]} x_{s}^{2}$
\State Bisection search for root $\delta^{*} \in [1, \delta_{M'}]$ of $\frac{\drm f}{\drm \delta}$ \eqref{eq:fdeltaxFirstAndSecondDerivatives}
\If{$f(\delta^{*}, \xbf) > 0$}
	\State \Return ``$\xbf \notin \Lambda$''
\ElsIf{$f(\delta^{*}, \xbf) = 0$}
	\State \Return ``Unique root $\delta^{*}$''
\Else
	\State Bisection search for root $\delta_{s}\! \in\! [1, \delta^{*}]$ of $f(\delta, \xbf)$\! \eqref{eq:fdeltax}
	\State Bisection search for root $\delta_{l}\! \in\! [\delta^{*}, \delta_{M}]$ \!of $f(\delta, \xbf)$\! \eqref{eq:fdeltax}
\EndIf
\EndIf
\end{algorithmic}
\end{algorithm}

\begin{figure}[!ht]
\centering
\includegraphics[width=0.35\textwidth]{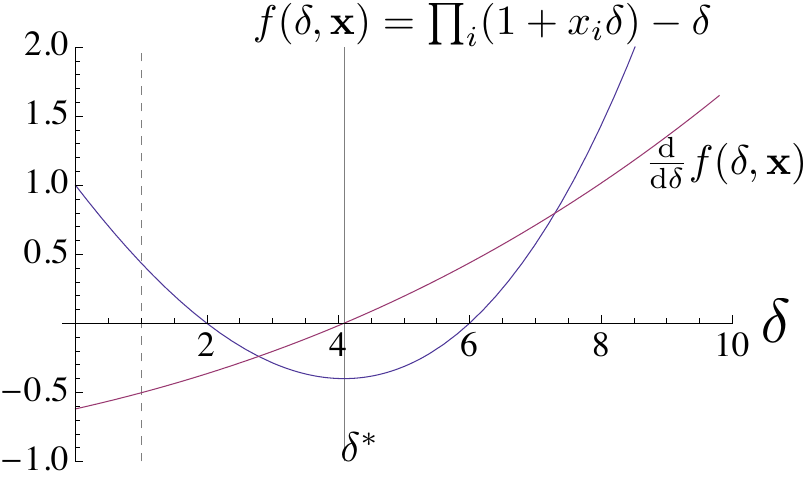}
\caption{Bisection search can be used for finding the positive root(s) of $\frac{\drm f}{\drm \delta}$ (red) and $f(\delta, \xbf)$ (blue). Here $n = 4$, $\xbf = (\frac{1}{8}, \frac{1}{10}, \frac{1}{12}, \frac{1}{14})$, and one may choose $\delta_{M} = \delta_{M'} = 196$. Any positive root this order-$4$ polynomial $f(\delta, \xbf)$ may have must lie between $\delta = 1$ (dashed gridline) and $\delta_{M}$ (not shown). First, the root ($\delta^{*} \approx 4.0890$) of $\frac{\drm f}{\drm \delta}$ is found on $[1, \delta_{M'}]$; since $f(\delta^{*}, \xbf) \approx -0.4008 < 0$ (which means $\xbf \in \Lambda$), then the smaller ($\delta_{s} = 2$) and larger ($\delta_{l} = 6$) positive roots of $f(\delta, \xbf)$ are found by bisection search on $[1, \delta^{*}]$ and $[\delta^{*}, \delta_{M}]$ respectively.}
\label{fig:RootTestingBisectionSearch}
\end{figure} 

\begin{remark}
In our example of underwater acoustic sensor networks (\S\ref{ssec:motivationAndproblemstatement}), sensors form clusters with a target sink, and may employ the slotted Aloha protocol for sending data to the sink.  Initially, each sensor must send to the sink its initial requested data transmission rate (determined by its rate of sensor data generation).  The sink may then perform the augmented root test (Prop.\ \ref{prop:augumentedRootTesting}).  If a positive root $\delta$ can be found (adjusting some proposed arrival rates if necessary), it will be used to compute a ``critical'' stabilizing control vector $\pbf$ to be sent back to the sensors. Changes in the arrival rate vector, e.g., due to an environmental change that affects the data generation rates, or the arrival or departure of one of the nodes in the cluster, will necessitate a new root test.
\end{remark}

As shown in Prop.\ \ref{prop:rootTestingBisectionSearch}, the polynomial root test is not hard in our setting, although arguably in extremely time-sensitive and/or energy-constrained scenarios (that may arise in a sensor network), one might prefer to use (approximate) membership testing based on simple non-parametric bounds (\S\ref{sec:piAndpo}, \S\ref{sec:siAndso}, and \S\ref{sec:eiAndeo}). Recall an important motivation for our investigation is to understand the set $\Lambda$ from a geometric perspective. Our attempts along this line are manifested in the next few sections where three sets of non-parametric bounds (all based on geometrically intuitive objects) are presented. In fact, in many cases we can show a family of bounds and if so we will optimize within this family. The optimization has the volume of a bound as the figure of merit. That is, the closer the volume of a bound to that of $\Lambda$, the better this bound is. Toward this end, the volume of $\Lambda$ itself, derived in the next section, is an indispensable result.

\section{Volume of $\Lambda$ and an inner bound on $\Lambda$}
\label{sec:volLambdaAndLsrs}

We first give a closed-form expression for the volume of $\Lambda$. Unfortunately its computation is a formidable task.

\begin{proposition}
\label{prop:volLambda}
The set $\Lambda$ defined in \eqref{eq:Lambda} has volume 
\begin{IEEEeqnarray}{rCl}
\IEEEeqnarraymulticol{3}{l}
{\vol(\Lambda) = 
}\nonumber \\* ~
& & \sum_{\kbf \in \Kmc_{2^n,n-2}}\!\!\!  \binom{n-2}{\kbf} (-1)^{\sum_{i=1}^{n} \alpha(\kbf)_{i}}  \frac{\prod_{i=1}^n \alpha(\kbf)_i !}{\left(n+1+\sum_{i=1}^n \alpha(\kbf)_i \right)!},\IEEEeqnarraynumspace
\label{eq:volLambda}
\end{IEEEeqnarray}
where $\binom{n-2}{\kbf}$ is a multinomial coefficient and $\Kmc_{r,s}$ $\equiv$ $\left\{ \kbf = (k_{1}, \ldots, k_{r}) \in \Zbb_{+}^{r}: \sum_{t=1}^{r} k_{t} = s \right\}$.  Furthermore, $\alpha(\kbf)_{i} = \sum_{t=1}^{2^n} \Vbf_{i,t} k_{t}$ for $\Vbf$ the $n \times 2^{n}$ matrix whose columns are the $2^n$ possible distinct length-$n$ binary vectors.\footnote{C.f., the binary Hamming matrix in coding theory.}
\end{proposition}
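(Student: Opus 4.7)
The plan is to exploit the bijection $\Smc \leftrightarrow \Lambda$ established in Cor.\ \ref{cor:bijectionbetweenSandLambda}$v)$, namely the map $\pbf \mapsto \xbf(\pbf)$ defined in \eqref{eq:xOfp}. Since this map is smooth (polynomial) and one-to-one from $\mathrm{int}(\Smc)$ onto $\mathrm{int}(\Lambda)$, the change-of-variables formula gives $\vol(\Lambda) = \int_{\Smc} |\det J(\pbf)|\, d\pbf$, where $J(\pbf)$ is the Jacobian of $\xbf(\pbf)$. The entire problem thus reduces to computing $\det J$ and integrating it over the standard simplex.

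I would first compute $J$. Direct differentiation of $x_i(\pbf) = \pi(\pbf)\,p_i/(1-p_i)$ yields
\begin{equation}
J_{ik} = \frac{\pi(\pbf)\,(\delta_{ik}-p_i)}{(1-p_i)(1-p_k)}, \quad i,k \in [n],
\end{equation}
which factors cleanly as $J(\pbf) = D(\pbf)(I - \pbf \mathbf{1}^{\Tsf}) D(\pbf)$, with $D(\pbf) = \mathrm{diag}\bigl(\sqrt{\pi(\pbf)}/(1-p_i)\bigr)$. The matrix determinant lemma applied to the rank-one perturbation gives $\det(I - \pbf\mathbf{1}^{\Tsf}) = 1 - \sum_i p_i$, hence
\begin{equation}
\det J(\pbf) = \Bigl(1 - \textstyle\sum_i p_i\Bigr) \prod_{i=1}^n (1-p_i)^{n-2},
\end{equation}
which is nonnegative on $\Smc$ (removing the absolute value).

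To integrate, I would rewrite $\prod_i(1-p_i)^{n-2} = \pi(\pbf)^{n-2}$, expand $\pi(\pbf) = \sum_{S \subseteq [n]} (-1)^{|S|} \prod_{i \in S} p_i$ as a $2^n$-term sum (one per subset, equivalently per column of $\Vbf$), and apply the multinomial theorem to raise it to the $(n-2)$th power. The resulting sum is indexed by $\kbf \in \Kmc_{2^n, n-2}$; the exponent of $p_i$ becomes $\alpha(\kbf)_i = \sum_t \Vbf_{i,t} k_t$, and the aggregate sign collapses to $(-1)^{\sum_S |S| k_S} = (-1)^{\sum_i \alpha(\kbf)_i}$ via $|S| = \sum_i \Vbf_{i,S}$. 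Swapping sum and integral and invoking the Dirichlet integral
\begin{equation}
\int_\Smc \prod_{i=1}^n p_i^{\alpha_i} \Bigl(1 - \textstyle\sum_i p_i\Bigr)\, d\pbf = \frac{\prod_i \alpha_i!}{\bigl(n+1+\sum_i \alpha_i\bigr)!}
\end{equation}
term-by-term produces \eqref{eq:volLambda}.

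The main obstacle is recognizing the tidy factorization of $J$ as a diagonal conjugation of the rank-one perturbation $I - \pbf\mathbf{1}^{\Tsf}$; once that is in hand the determinant is immediate, and the remaining work is just combinatorial bookkeeping paired with a standard beta integral. A quick sanity check at $n=2$: the index set $\Kmc_{4,0}$ contains only $\kbf = \mathbf{0}$, so \eqref{eq:volLambda} collapses to $1/(n+1)! = 1/6$, which agrees with the known area of $\Lambda = \{\xbf \geq \mathbf{0}: \sqrt{x_1}+\sqrt{x_2} \leq 1\}$.
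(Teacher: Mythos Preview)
Your proof is correct and follows essentially the same route as the paper: change of variables via the bijection $\Smc \leftrightarrow \Lambda$, the Jacobian determinant $\det J(\pbf) = \pi(\pbf)^{n-2}(1-\sum_i p_i)$, a multinomial expansion of $\pi(\pbf)^{n-2}$, and termwise Dirichlet integration over the simplex. The only difference is that the paper cites Abramson \cite{Abr1977} for the Jacobian determinant, whereas you derive it yourself via the factorization $J = D(I - \pbf\mathbf{1}^{\Tsf})D$ and the matrix determinant lemma, which is a clean and self-contained alternative.
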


\begin{IEEEproof}
Recall there is a bijection from $\Smc$ to $\Lambda$ (Cor.\ \ref{cor:bijectionbetweenSandLambda}). Let $\tilde{\Jbf}(\pbf) \equiv \Jbf(\pbf)/\pi(\pbf)$, where $\Jbf(\pbf)$ is the Jacobian of this mapping, namely the mapping $\pbf \mapsto \xbf$ given by \eqref{eq:xOfp} in Def.\ \ref{def:fOfdeltaxAndpOfdeltaxAndxOfp}:
\begin{equation}
x_i(\pbf) = p_i \prod_{j \neq i}(1-p_j), ~ \text{for all } i \in [n], ~ \pbf \in \Smc.
\end{equation}
The fact that $\det(\alpha A) = \alpha^n \det A$ for any scalar $\alpha$ and any $n \times n$ matrix $A$ yields $\det \Jbf(\pbf) = \pi(\pbf)^n \det \tilde{\Jbf}(\pbf)$.  Abramson \cite{Abr1977} showed that $\pi(\pbf)^2 \det \tilde{\Jbf}(\pbf) = 1 -  \pbf^{\Tsf} \mathbf{1}$, which gives $\det \Jbf(\pbf) = \pi(\pbf)^{n-2} \left(1 - \pbf^{\Tsf} \mathbf{1} \right)$.  Substituting this into the general expression for volume yields
\begin{equation} 
\label{eq:volLambdaBasedOnAbr1977AndGenericExpression}
\mathrm{vol}(\Lambda)\! =\!  \int_{\Smc} \det \Jbf(\pbf) \drm \pbf = \int_{\Smc} \prod_{i=1}^{n} (1-p_i)^{n-2} \left(1-\sum_{j=1}^{n} p_j \right) \drm \pbf.
\end{equation}
In order to get a better closed-form expression, we leverage results in Grundmann and M\"{o}ller \cite{GruMol1978}, in particular (2.3) on integration of certain functions over the solid standard unit simplex $\Smc$: 
\begin{equation}
\label{eq:GruMol1978}
\int_{\Smc} \pbf^{\boldsymbol\alpha} \left(1 - \sum_i p_i \right)^{\alpha_0} \drm \pbf = \frac{\prod_{i=0}^n \alpha_i !}{\left(n+\sum_{i=0}^n \alpha_i \right)!},
\end{equation}
where $\pbf = (p_1,\ldots,p_n)$, $\boldsymbol\alpha = (\alpha_1,\ldots,\alpha_n)$, and $\pbf^{\boldsymbol\alpha} = \prod_{i=1}^n p_i^{\alpha_i}$. 
To apply this general expression to our case \eqref{eq:volLambdaBasedOnAbr1977AndGenericExpression}, we want to put $\prod_{i=1}^{n} (1-p_i)^{n-2}$ into a weighted sum of terms of the form $\pbf^{\boldsymbol\alpha}$.  The multi-binomial theorem states, for arbitrary $n$-vectors $\abf,\bbf$, and positive $n$-vector $\cbf$:
\begin{IEEEeqnarray}{rCl}
\IEEEeqnarraymulticol{3}{l}
{\prod_{i=1}^n (a_i + y_i)^{c_i} =
}\nonumber \\* \quad
\sum_{k_1=0}^{c_1} \cdots \sum_{k_n=0}^{c_n} \binom{c_1}{k_1} a_1^{c_1-k_1} y_1^{k_1} \cdots \binom{c_n}{k_n} a_n^{c_n-k_n} y_n^{k_n}.\IEEEeqnarraynumspace
\end{IEEEeqnarray}
Specializing the above expression to the case $\abf = \mathbf{1}$ and $\cbf = \mathbf{1}$ and arbitrary $n$-vector $\ybf$ yields:
\begin{IEEEeqnarray}{rCl}
\prod_{i=1}^{n} (1+y_{i})^{1} & = & \sum_{k_{1} = 0}^{1} \cdots \sum_{k_{n} = 0}^{1} \binom{1}{k_{1}} 1^{1-k_{1}} y_{1}^{k_{1}} \cdots \binom{1}{k_{n}} 1^{1-k_{n}} y_{1}^{k_{n}}  \nonumber \\
& = & \sum_{\vbf \in \{0,1\}^n} \binom{\mathbf{1}}{\vbf} \mathbf{1}^{\mathbf{1}-\vbf}\ybf^{\vbf}
= \sum_{\vbf \in \{0,1\}^n} \ybf^{\vbf},
\end{IEEEeqnarray}
where we employ the multi-index notation $\binom{\abf}{\bbf} = \prod_{i=1}^n \binom{a_i}{b_i}$ and $\abf^{\bbf} = \prod_{i=1}^n a_i^{b_i}$, for two $n$-vectors $\abf,\bbf$.  Consequently, for $\ybf = -\pbf$,
\begin{IEEEeqnarray}{rCl}
\label{eq:afterMultibinomial}
\prod_i (1-p_i)^{n-2} & = & \left( \sum_{\vbf \in \{0,1\}^n} (-\pbf)^{\vbf} \right)^{n-2} 
= \left( \sum_{t=1}^{2^n} (-\pbf)^{\vbf_{t}} \right)^{n-2} \nonumber \\
& = & \left( \sum_{t=1}^{2^n} \prod_{i=1}^{n} (-p_{i})^{\Vbf_{i,t}} \right)^{n-2},
\end{IEEEeqnarray}
where $\vbf_{t}$ is the $t^{\rm th}$ column of $\Vbf$.  The multinomial theorem states, for arbitrary $r$-vector $\ybf$ and positive integer $s$,
\begin{equation}
(y_1 + \cdots + y_r)^s = \sum_{\kbf \in \Kmc_{r,s}} \binom{s}{\kbf} \ybf^{\kbf}, 
\end{equation}
for $\Kmc_{r,s}$ defined in the proposition.  We apply the multinomial theorem to the RHS of \eqref{eq:afterMultibinomial} and get
\begin{IEEEeqnarray}{rCl}
\IEEEeqnarraymulticol{3}{l}
{\prod_i (1-p_i)^{n-2} 
}\nonumber \\* \quad
& = & \sum_{\kbf \in \Kmc_{2^n,n-2}} \binom{n-2}{k_{1}, \ldots, k_{2^n}} \prod_{t=1}^{2^n} \left( \prod_{i=1}^{n} (- p_{i})^{\Vbf_{i,t}} \right)^{k_{t}} \nonumber \\
& = & \sum_{\kbf \in \Kmc_{2^n,n-2}} \binom{n-2}{k_{1}, \ldots, k_{2^n}} (-1)^{\sum_{i=1}^{n} \alpha_{i}} \prod_{i=1}^{n} p_{i}^{\alpha_{i}},\IEEEeqnarraynumspace
\end{IEEEeqnarray}
for $\alpha_i$ defined in the proposition.  Finally substitution of this expression of $\prod_i (1-p_i)^{n-2} $ into \eqref{eq:volLambdaBasedOnAbr1977AndGenericExpression} and application of \eqref{eq:GruMol1978} with $\alpha_{0} = 1$ yields the desired volume expression in \eqref{eq:volLambda}.
\end{IEEEproof}

The number of summands in \eqref{eq:volLambda} is the number of multinomial coefficients. Equivalently, it is the number of ways to write $n-2$ as an ordered sum of $2^{n}$ non-negative integers, and is given by $\binom{n-2 + 2^{n} - 1}{n-2}$ (see e.g., Wilf \cite{Wil1994} Ex.\ 3 in Chapter 2). 
Applying an easy lower bound on the binomial coefficient $\binom{n}{k} \geq \left( \frac{n}{k}\right)^{k}$, we have $\binom{n-2 + 2^{n} - 1}{n-2} \geq \left( 1 + \frac{2^{n} - 1}{n-2}\right)^{n-2}$, meaning it grows super-exponentially in $n$, and hence calculation of $\mathrm{vol}(\Lambda)$ using Prop.\ \ref{prop:volLambda} requires substantial computation for even moderate $n$.

We now initiate our pursuit of non-parametric bounds on $\Lambda$, which is the focus of the next three sections.  Recall it is already known that when $n = 2$, $\Lambda$ equals a non-parametric set $\left\{  \xbf \in \Rbb_{+}^{2}  : \sqrt{x_1} + \sqrt{x_2} \leq 1 \right\}$ (\cite{TsyMik1979, RaoEph1988, MasMat1985}) for which membership testing is simple. Naturally one might wonder how the natural extension of this sum relates to $\Lambda$ for higher values of $n$. This motivates the following definition of the ``square root sum'' set. The proposition below shows in general this set is only an inner bound on $\Lambda$.  In the subsequent proof and elsewhere throughout the paper, we use the fact that $\Lambda$ is {\em coordinate convex}, meaning if $\xbf \in \Lambda$ then $\xbf' \in \Lambda$ for all $\mathbf{0} \leq \xbf' \leq \xbf$. 

\begin{definition} 
\begin{equation}
\Lsrs \equiv \left\{  \xbf \in \Rbb_{+}^{n}: \sum_{i=1}^n \sqrt{x_i} \leq 1  \right\}.
\end{equation}
\end{definition}

\begin{proposition}[``square root sum'' inner bound]
\label{prop:srsib}
The set $\Lsrs$ is an inner bound on $\Lambda$ for $n \geq 2$.
\end{proposition}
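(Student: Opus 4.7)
The plan is to show that any $\xbf$ satisfying $\sum_{i=1}^{n}\sqrt{x_i} \leq 1$ admits an explicit compatible vector of contention probabilities, so membership in $\Lambda$ follows directly from the defining condition in \eqref{eq:Lambda} without needing the polynomial root machinery of Prop.\ \ref{prop:Lambdarootconditions}. Motivated by the known exact equality $\Lambda = \Lsrs$ when $n=2$ (where the natural stabilizing control is $p_i = \sqrt{x_i}$), I would simply guess the same assignment $p_i \equiv \sqrt{x_i}$ for general $n$ and verify it works.

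First I would check feasibility of $\pbf$: since $\sqrt{x_i} \leq \sum_{j}\sqrt{x_j} \leq 1$, we have $\pbf \in [0,1]^n$, and moreover $\pbf \in \partial \Smc$ when $\sum_j \sqrt{x_j} = 1$, consistent with Prop.\ \ref{prop:Lambda2EqLambdaEqLambda4}. Next I would verify the worst-case service rate inequality $x_i \leq p_i \prod_{j \neq i}(1-p_j)$ component-wise. After substituting $p_i = \sqrt{x_i}$ and dividing through by $\sqrt{x_i}$ (the $x_i = 0$ case being trivial), the inequality reduces to
\begin{equation}
\sqrt{x_i} \leq \prod_{j \neq i}\left(1-\sqrt{x_j}\right).
\end{equation}
This is the one step requiring an estimate. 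I would invoke the Weierstrass product inequality $\prod_{j}(1-a_j) \geq 1-\sum_j a_j$ (valid for $a_j \in [0,1]$, proved by a one-line induction), applied to $a_j = \sqrt{x_j}$ over the index set $j \neq i$. This yields
\begin{equation}
\prod_{j \neq i}\left(1-\sqrt{x_j}\right) \geq 1 - \sum_{j \neq i}\sqrt{x_j} \geq 1 - (1 - \sqrt{x_i}) = \sqrt{x_i},
\end{equation}
where the second inequality uses the hypothesis $\sum_{j=1}^{n}\sqrt{x_j} \leq 1$. Combining these two chained bounds produces the required $x_i \leq p_i \prod_{j \neq i}(1-p_j)$, so $\xbf \in \Lambda$.

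Honestly, there is no real obstacle here; the only ``creative'' moment is recognizing that the $n=2$ critical control $p_i = \sqrt{x_i}$ is still a valid (though no longer critical for $n \geq 3$) stabilizing control when one trades the exact product structure for the Weierstrass bound. The looseness of that bound for $n \geq 3$ foreshadows why $\Lsrs$ is properly contained in $\Lambda$ and why its volume degrades so rapidly with $n$ in Table \ref{tab:voltable}, as remarked after the statement.
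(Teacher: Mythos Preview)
Your proof is correct and is essentially the same as the paper's. The paper also sets $p_i = \sqrt{x_i}$ and reduces to the inequality $\prod_{j \neq i}(1-p_j) \geq 1 - \sum_{j \neq i} p_j$; the only cosmetic difference is that the paper phrases this bound probabilistically via the union bound (defining independent events with $\Pbb(A_j) = 1-p_j$) rather than citing the Weierstrass product inequality by name, and wraps the conclusion in the language of coordinate convexity rather than verifying $\xbf \in \Lambda$ directly.
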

\begin{IEEEproof}
Fix a point $\xbf' \in \Lsrs$.  Due to the coordinate convexity of $\Lambda$ and $\Lsrs$, it suffices to produce a point $\xbf \in \Lambda$ so that $\xbf \geq \xbf'$.  Set $\pbf$ with $p_i = \sqrt{x_i'}$ for each $i$ and set $\xbf = \xbf(\pbf)$ according to  \eqref{eq:xOfp} in Def.\ \ref{def:fOfdeltaxAndpOfdeltaxAndxOfp}.  Clearly $\xbf \in \Lambda$.  It remains to show $x_i \geq x_i'$ for each $i \in [n]$.  Note $\xbf' \in \Lsrs$ ensures $\sum_{i=1}^{n} p_i \leq 1$.  
Define independent events $A_1,\ldots,A_n$ with $\Pbb(A_i) = 1 - p_i$ for each $i \in [n]$. Denote the complement of event $A_{i}$ by $A_{i}^{c}$. It follows that
\begin{equation} \label{eq:LambdasrsKeyEquality}
1 - \Pbb \left( \bigcup_{j \neq i} A_j^{c} \right) = \Pbb \left( \bigcap_{j \neq i} A_j \right)  = \prod_{j \neq i} \Pbb \left( A_{j} \right) = \prod_{j \neq i} (1-p_j).
\end{equation}
Then for any $i$, reversely applying \eqref{eq:LambdasrsKeyEquality} to $x_{i}$ followed by the union bound and then the fact $\sum_{i=1}^{n} p_i \leq 1$, we have
\begin{IEEEeqnarray}{rCl}
x_i & = & p_i \left( 1 - \Pbb \left( \bigcup_{j \neq i} A_j^{c} \right) \right)   \nonumber \\
& \geq & p_i \left( 1 - \sum_{j \neq i} \mathbb{P} (A_j^{c})\right)
 =  p_i \left( 1 - \sum_{j \neq i} p_{j} \right)  \geq p_i^2 = x_{i}'.\IEEEeqnarraynumspace
\end{IEEEeqnarray}
\end{IEEEproof}

Below we compute the exact volume of $\Lsrs$. It has been seen from Fig.\ \ref{fig:vol} (in \S \ref{sec:intro}) that, although simple, $\Lsrs$ is a very poor inner bound.
\begin{proposition}
\label{prop:volLambdasrs}
The volume of the inner bound $\Lsrs$ is
\begin{equation}
\label{eq:volLambdasrs}
\vol(\Lsrs) = \frac{2^n}{(2n)!}.
\end{equation}
\end{proposition}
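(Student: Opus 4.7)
The plan is to reduce the computation to an integral over the standard simplex $\Smc$ via a simple change of variables, and then apply the Dirichlet-type integral formula \eqref{eq:GruMol1978} that was already invoked in the proof of Prop.\ \ref{prop:volLambda}.

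First I would substitute $y_i = \sqrt{x_i}$ for each $i \in [n]$, which is a smooth bijection from the non-negative orthant to itself. Under this change of variables the defining inequality $\sum_i \sqrt{x_i} \leq 1$ of $\Lsrs$ becomes $\sum_i y_i \leq 1$, so the image of $\Lsrs$ is precisely the solid standard unit simplex $\Smc$. Since $x_i = y_i^2$, the Jacobian is diagonal with entries $\partial x_i / \partial y_i = 2 y_i$, giving $|\det| = 2^n \prod_{i=1}^n y_i$. Hence
\begin{equation}
\vol(\Lsrs) = \int_{\Lsrs} \drm \xbf = 2^n \int_{\Smc} \prod_{i=1}^n y_i \, \drm \ybf.
\end{equation}

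Next I would apply \eqref{eq:GruMol1978} with $\boldsymbol\alpha = \mathbf{1}$ (so each $\alpha_i = 1$ for $i \in [n]$) and $\alpha_0 = 0$, yielding
\begin{equation}
\int_{\Smc} \prod_{i=1}^n y_i \, \drm \ybf = \frac{\prod_{i=0}^n \alpha_i!}{\left(n + \sum_{i=0}^n \alpha_i\right)!} = \frac{1}{(2n)!}.
\end{equation}
Combining the two displays gives $\vol(\Lsrs) = 2^n/(2n)!$, as claimed.

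There is essentially no obstacle here: the map $\xbf \mapsto \ybf$ is a standard diffeomorphism on the interior (and extends continuously to the boundary, which has measure zero), and \eqref{eq:GruMol1978} is stated in the excerpt and can be used as a black box. The only minor subtlety is noting that the change of variables is valid despite the Jacobian vanishing on the coordinate hyperplanes, but this is harmless since that set has Lebesgue measure zero.
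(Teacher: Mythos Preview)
Your proof is correct. Both you and the paper begin with the same change of variables $y_i=\sqrt{x_i}$, reducing the problem to evaluating $2^n\int_{\Smc}\prod_{i=1}^n y_i\,\drm\ybf$. The paper then computes this simplex integral by hand: it sets up the nested one-dimensional integrals, proves the identity $I(d,k)=\int_0^d y(d-y)^k\,\drm y=\frac{d^{k+2}}{(k+1)(k+2)}$, and applies it iteratively to obtain $\prod_{s=1}^n\frac{1}{(2s-1)(2s)}=\frac{1}{(2n)!}$. You instead invoke the Dirichlet-type formula \eqref{eq:GruMol1978} (already quoted in the proof of Prop.\ \ref{prop:volLambda}) with $\boldsymbol\alpha=\mathbf{1}$, $\alpha_0=0$, which gives the same value in one line. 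Your route is shorter and reuses existing machinery; the paper's route is self-contained for this particular integral and shows the telescoping structure explicitly.
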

\begin{IEEEproof}
Use the change of variable $y_{i} = \sqrt{x_{i}}$, $\forall i \in [n]$ so that the volume integration becomes
\begin{IEEEeqnarray}{rCl}
\IEEEeqnarraymulticol{3}{l}
{\vol(\Lsrs)
}\nonumber \\* ~
& = & \int_{[0,1]^{n}} \Ibb_{\sum_{i=1}^{n} \sqrt{x_{i}} \leq 1} ~ \drm x_{1} \cdots \drm x_{n}  \nonumber \\
& = & 2^{n} \int_{[0,1]^{n}} \Ibb_{\sum_{i=1}^{n} y_{i} \leq 1} ~ y_{1} \drm y_{1} \cdots y_{n} \drm y_{n} \nonumber \\
& = & 2^{n} \int_{0}^{1} y_{n} \int_{0}^{1-y_{n}} y_{n-1} \cdots \nonumber \\
& & \negmedspace{} \cdot \int_{0}^{1-y_{n} - \cdots -y_{3}} y_{2} \int_{0}^{1-y_{n} - \cdots -y_{2}}  y_{1} \drm y_{1} \drm y_{2} \cdots \drm y_{n-1} \drm y_{n}.\IEEEeqnarraynumspace
\label{eq:volLambdasrsStep1}
\end{IEEEeqnarray}
It will be useful to first compute an integral denoted $I(d,k) = \int_{0}^{d} y (d - y)^{k} \drm y$ for $d \geq 0$, $k \in \Zbb_{+}$; expansion of this integral would require using the binomial theorem and then handling the resulting alternating sum.  If instead we employ a change of variable $z = d - y$ and integrate with respect to $z$ we obtain directly
\begin{equation} \label{eq:Idk}
I(d,k) =  \frac{1}{(k+1)(k+2)} d^{k+2}.
\end{equation}
For $j \in [n]$ define $k_{j} = 2 (j-1)$ and for $j \in [n-1]$ define $d_{j} = 1 - y_{n} - \cdots - y_{j+1}$, and $d_{n} = 1$. Observe the recurrences $k_{j} + 2 = k_{j+1}$, $d_{j} = d_{j+1} - y_{j+1}$. Specializing \eqref{eq:Idk} with parameters $d_{j}$, $k_{j}$ and dummy integrating variable $y_{j}$, we have
\begin{equation} \label{eq:Idkforj}
I(d_{j}, k_{j}) = \int_{0}^{d_{j}} y_{j} (d_{j} - y_{j})^{k_{j}} \drm y_{j} = \frac{d_{j}^{k_{j}+2}}{(k_{j}+1)(k_{j}+2)}, \forall j \in [n].
\end{equation}
Now we are ready to resume the computation of $\vol(\Lsrs)$ in \eqref{eq:volLambdasrsStep1}. Using our new notation, we have:
\begin{IEEEeqnarray}{rCl}
\IEEEeqnarraymulticol{3}{l}
{\vol(\Lsrs)
}\nonumber \\* \quad
& = & 2^{n} \int_{0}^{d_{n}} y_{n} \int_{0}^{d_{n-1}} y_{n-1} \cdots \nonumber \\
& & \negmedspace{} \cdot \int_{0}^{d_{2}} y_{2} \int_{0}^{d_{1}} y_{1} (d_{1} - y_{1})^{k_{1}} \drm y_{1} \drm y_{2} \cdots \drm y_{n-1} \drm y_{n}. \IEEEeqnarraynumspace
\end{IEEEeqnarray}
We can then repeatedly apply \eqref{eq:Idkforj} with $j \in [n]$. To see this, observe after the $j^{\rm th}$ innermost integration, the new innermost integration is
\begin{IEEEeqnarray}{rCl}
& & \int_{0}^{d_{j+1}} \prod_{s=1}^{j} \frac{1}{(k_{s}+1)(k_{s}+2)} y_{j+1} d_{j}^{k_{j} + 2} \drm y_{j+1} = \nonumber \\
& & \prod_{s=1}^{j} \frac{1}{(k_{s}+1)(k_{s}+2)}  \int_{0}^{d_{j+1}}   y_{j+1}  \left( d_{j+1} - y_{j+1} \right)^{k_{j+1}} \drm y_{j+1},\IEEEeqnarraynumspace
\end{IEEEeqnarray}
which is $\prod_{s=1}^{j} \frac{1}{(k_{s}+1)(k_{s}+2)} I (d_{j+1}, k_{j+1})$. 

Therefore, after the $j = (n-1)^{\rm st}$ innermost integration, we have
\begin{IEEEeqnarray}{rCl}
\IEEEeqnarraymulticol{3}{l}
{\vol(\Lsrs) 
}\nonumber \\* \quad
& = & 2^{n} \prod_{s=1}^{n-1} \frac{1}{(k_{s}+1)(k_{s}+2)} \int_{0}^{d_{n}} y_{n} d_{n-1}^{k_{n-1}+2} \drm y_{n} \nonumber \\
& = & 2^{n} \prod_{s=1}^{n-1} \frac{1}{(k_{s}+1)(k_{s}+2)}  \int_{0}^{d_{n}} y_{n}  \left( d_{n} - y_{n} \right)^{k_{n}} \drm y_{n} \nonumber \\
& = & 2^{n} \prod_{s=1}^{n-1} \frac{1}{(k_{s}+1)(k_{s}+2)} I (d_{n}, k_{n}) \nonumber \\
& = & 2^{n} \prod_{s=1}^{n} \frac{1}{(k_{s}+1)(k_{s}+2)} d_{n}^{k_{n}+2} =  2^{n} \prod_{s=1}^{n} \frac{1}{(2s-1)(2s)} \nonumber \\
&=& \frac{2^{n}}{(2n)!}.
\end{IEEEeqnarray}
\end{IEEEproof}

As we have seen, the set $\Lambda$ is parametric, making it both algebraically cumbersome and geometrically unintuitive. The polynomial root test (\S\ref{sec:rootTesting}) can be leveraged for the purpose of testing membership in $\Lambda$ and finding stabilizing control(s), yet it fails to exhibit the geometric aspects of of the region defined by $\Lambda$.  In the following three sections we present various inner and outer bounds on $\Lambda$ based on ``simple'' polyhedra (\S \ref{sec:piAndpo}), spheres (\S \ref{sec:siAndso}), and ellipsoids (\S \ref{sec:eiAndeo}). Our approach to developing bounds is largely geometric, meaning we use simple geometric objects and their constructions are suggested by the shape of $\Lambda$ in low dimensions. Specifically, the hyperplane is one of the simplest objects and we use it to construct polyhedral bounds; the sphere is also a natural choice since the Pareto frontier of $\Lambda$ has a smooth symmetric curvature; and finally, the ellipsoid generalizes the sphere and is more versatile. In all cases, the positioning of these bound-inducing geometric objects is important and is guided by exploiting symmetries. Another factor is that the construction of bounds should be simple (e.g., by enforcing tangency/incidence with $\Lambda$ at some special points) to hopefully yield better analytical tractability in establishing the correctness of the bound for arbitrary dimensions.  We use the volume of a bound to measure its quality.  Observe from Fig.\ \ref{fig:vol} and Table \ref{tab:voltable} that the proposed inner bounds are in general better than the outer bounds.  The simple optimal polyhedral inner bound $\Lpi^{*}$ and spherical inner bound $\Lsi^{*}$ are both very tight; together they suggest the ``mass'' of $\Lambda$ is less concentrated towards the corner points $\ebf_{i}$'s. 

\section{Polyhedral inner and outer bounds on $\Lambda$}
\label{sec:piAndpo}

In this section we form inner and outer bounds on $\Lambda$ using polyhedra.  The inner bound is formed using a single hyperplane, i.e., a generalized simplex, while the outer bound is formed using the intersection of a collection of $n+1$ hyperplanes in $\Rbb_{+}^{n}$.

\begin{definition}
\begin{IEEEeqnarray}{rCl}
\label{eqn:Lambdapifaimly}
\Lpi (\pbf) \equiv \left\{  \xbf \in \Rbb_{+}^{n}: (\mathbf{1}-\pbf)^{\Tsf} \xbf \leq \prod_{i}(1-p_{i})\right\}, \pbf \in \partial \Smc.\IEEEeqnarraynumspace
\end{IEEEeqnarray}
\end{definition}

Geometrically, the set $\Lpi (\pbf)$ is a generalized simplex bounded by the $n$ coordinate hyperplanes and the hyperplane with normal vector $\mathbf{1}-\pbf$.  All such hyperplanes are tangent to $\partial\Lambda$ with a tangency point at $\xbf(\pbf)$ in \eqref{eq:xOfp} in Def.\ \ref{def:fOfdeltaxAndpOfdeltaxAndxOfp}. Recognizing this geometric property and the fact that the complement of $\Lambda $ in $\Rbb_{+}^{n}$ is convex (both shown in \cite{Pos1985}),  we can in principle construct an \textit{arbitrarily accurate} polyhedral outer (and inner) bound, as we briefly describe here: one can choose a collection of $M$ points ($\pbf_{s}$, $s \in [M]$) from $\partial \Smc$. Use the mapping $\pbf \mapsto \xbf$ (in the sense of $\Lambda_{\rm eq}$) namely \eqref{eq:xOfp} to construct the corresponding $\xbf_{s}(\pbf_{s}), s \in [M]$ which are points on $\partial \Lambda$. Form a polyhedron having these $M$ points $\xbf_{s}$ as extreme points and $n$ extreme rays along each of the coordinate axis (in the positive orthant). This polyhedron (denoted $\Psf_{o}$) is an outer bounding polyhedron for $\Lambda$ in that $\Lambda \subseteq \Smc \setminus \Psf_{o}$. By increasing the number of the chosen points, the induced outer bound can be made arbitrarily close to $\Lambda$. For the inner bound, note the intersection of the halfspaces (leaning toward $\Rbb_{+}^{n}$) associated with the tangent hyperplane of $\partial \Lambda$ at $\xbf_{s}(\pbf_{s})$, truncated by $\Rbb_{+}^{n}$, is an inner bounding polyhedron (denoted $\Psf_{i}$) for $\Lambda$ in that $\Smc \setminus \Psf_{i} \subseteq \Lambda$. This inner bound can also be made arbitrarily close to $\Lambda$. In fact, each outer bound so constructed can be thought of having a ``matching'' inner bound (and vice versa), in that the same set of $M$ points can be used to induce both an outer bound (these points treated as extreme points namely the vertices of $\Psf_{o}$) and an inner bound (these points treated as tangent points on $\partial \Lambda$ which in turn define the facets of $\Psf_{i}$). Algorithmically, a stopping criterion could be one simply measuring some type of ``gap'' between the matched outer and inner bounds, and once it is met we know both bounds track (the volume of) $\Lambda$ reasonably well even without knowing much about the properties (e.g., volume) of $\Lambda$ in advance.

As the number of users $n$ grows, however, these arbitrarily accurate polyhedral bounds will be computationally infeasible to obtain.  This is in contrast with the root test (for the purpose of membership testing) and non-parametric bounds (collected in Table \ref{tab:voltable}), with the latter scaling with $n$ very well and only requiring simple computations.  Moreover, it would be hard to gain geometric insight about $\Lambda$ from these bounds. 

All that said, as will be shown in the following proposition, using only a single ``best'' point on $\partial \Lambda$ induces a good inner bound. Specifically, Prop.\ \ref{prop:Lambdapi} given below asserts for each given $\pbf$ the set $\Lpi (\pbf)$ is an inner bound on $\Lambda$, indicates the $\pbf^{*}$ that achieves the largest volume bound over this family of inner bounds, and also computes the corresponding volume.

\begin{proposition}[polyhedral inner bound]
\label{prop:Lambdapi}
For each $\pbf \in \partial \Smc$, the set $\Lpi (\pbf)$ is an inner bound on $\Lambda$ for $n \geq 2$. 
Among these, the tightest is given when $\pbf = \pbf^{*} = \frac{1}{n} \mathbf{1}$, namely,
\begin{equation} \label{eq:Lambdapistar}
\Lpi^{*} = \Lpi(\pbf^{*}) = \left\{  \xbf \in \Rbb_{+}^{n}: ~ \sum_{i=1}^{n} x_{i} \leq \left(1-\frac{1}{n}\right)^{n-1} \right\}.
\end{equation}
and the corresponding volume of this set is $\vol (\Lpi^{*}) = \frac{1}{n!}\left( 1 - \frac{1}{n} \right)^{n(n-1)}$.
\end{proposition}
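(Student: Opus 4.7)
The plan is to split the proof into two independent pieces: the inclusion $\Lpi(\pbf) \subseteq \Lambda$ and the maximization of $\mathrm{vol}(\Lpi(\pbf))$ over $\pbf \in \partial \Smc$.

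For the inclusion, the natural approach is to show that the hyperplane $H \equiv \{\xbf : (\mathbf{1}-\pbf)^{\Tsf} \xbf = \pi(\pbf)\}$ is the supporting hyperplane of the (known to be convex, by Post \cite{Pos1985}) set $\Lambda^{c} \cap \Rbb_{+}^{n}$ at the boundary point $\xbf(\pbf) \in \partial \Lambda$. First I would verify that $H$ actually passes through $\xbf(\pbf)$ by direct substitution: using $x_{i}(\pbf) = p_{i}\prod_{j\neq i}(1-p_{j})$ and the constraint $\sum_{i} p_{i} = 1$, one gets $(\mathbf{1}-\pbf)^{\Tsf}\xbf(\pbf) = \sum_{i} p_{i}\prod_{j}(1-p_{j}) = \pi(\pbf)$. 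Next I would confirm that $\mathbf{1}-\pbf$ is indeed the normal to $\partial \Lambda$ at $\xbf(\pbf)$ by computing $\partial x_{k}/\partial p_{i}$ from the parametrization and checking that $(\mathbf{1}-\pbf)^{\Tsf}\,\partial \xbf/\partial p_{i} = \pi(\pbf)\bigl(1 - (1-p_{i})/(1-p_{i})\bigr) = 0$, where the collapse uses $\sum_{k\neq i} p_{k} = 1 - p_{i}$. Since $\Lambda^{c}\cap \Rbb_{+}^{n}$ is convex with smooth boundary at $\xbf(\pbf)$, the tangent hyperplane equals the supporting hyperplane, and because $\mathbf{0} \in \Lambda$ lies on the $(\mathbf{1}-\pbf)^{\Tsf}\xbf < \pi(\pbf)$ side of $H$, the set $\Lambda$ contains the entire closed half-space on that side intersected with $\Rbb_{+}^{n}$, which is precisely $\Lpi(\pbf)$.

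For the volume computation, I would observe that $\Lpi(\pbf)$ is an $n$-simplex with vertex at the origin and axial intercepts $\pi(\pbf)/(1-p_{i}) = \prod_{j\neq i}(1-p_{j})$, so
\begin{equation}
\mathrm{vol}(\Lpi(\pbf)) = \frac{1}{n!}\prod_{i=1}^{n}\prod_{j \neq i}(1-p_{j}) = \frac{1}{n!}\,\pi(\pbf)^{\,n-1}.
\end{equation}
Maximizing $\pi(\pbf) = \prod_{j}(1-p_{j})$ over $\pbf \in \partial \Smc$ is immediate from AM--GM applied to the nonnegative numbers $\{1-p_{j}\}$ whose sum is fixed at $n-1$: the product is maximized when all coordinates are equal, i.e.\ $p_{j}^{*} = 1/n$, giving $\pi(\pbf^{*}) = (1-1/n)^{n}$ and consequently $\mathrm{vol}(\Lpi^{*}) = \frac{1}{n!}(1-1/n)^{n(n-1)}$. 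Substituting $\pbf^{*}$ into the defining inequality and dividing by $1-1/n$ yields the simplified form $\sum_{i} x_{i} \leq (1-1/n)^{n-1}$ as in \eqref{eq:Lambdapistar}.

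The only subtle step is the inclusion argument: although $\Lambda$ itself is not convex, the needed inclusion rests on the convexity of $\Lambda^{c}\cap \Rbb_{+}^{n}$ together with the identification of the correct tangent direction. The gradient computation is the one calculation that requires care, but the algebraic identity $\sum_{k\neq i}p_{k} = 1-p_{i}$ makes the orthogonality drop out cleanly. Everything else (the volume of a simplex with given axial intercepts, and the AM--GM maximization) is routine.
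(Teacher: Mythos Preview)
Your proposal is correct and follows essentially the same route as the paper: both invoke Post's convexity of $\Lambda^{c}\cap\Rbb_{+}^{n}$ and the tangent hyperplane $(\mathbf{1}-\pbf)^{\Tsf}\xbf=\pi(\pbf)$ to get the inclusion, then compute the simplex volume and maximize $\pi(\pbf)$ over $\partial\Smc$. The only difference is that you verify the tangent-hyperplane identity by an explicit gradient computation whereas the paper simply cites Post \cite{Pos1985}, and you spell out the AM--GM step that the paper leaves as ``easily shown.''
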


\begin{IEEEproof}
Recall Post \cite{Pos1985} established that the complement of $\Lambda $ in $\Rbb_{+}^{n}$ is convex, and gave the tangent hyperplane at a point $\xbf(\pbf)$ on $\partial \Lambda$: $\left\{\xbf: (\mathbf{1}-\pbf)^{\Tsf} \xbf = \prod_{i}(1-p_{i}) \right\}$, where $\pbf \in \partial \Smc$ is the unique control associated with a point $\xbf \in \partial \Lambda$.
Since this hyperplane is a supporting hyperplane, this open convex set $\Lambda^{c} \cap \Rbb_{+}^{n}$ lies entirely on one ``side'', i.e., the open halfspace $\left\{  \xbf: (\mathbf{1}-\pbf)^{\Tsf} \xbf > \prod_{i}(1-p_{i})\right\}$, of the hyperplane. This means points on the other side of this hyperplane are not in $\Lambda^{c} \cap \Rbb_{+}^{n}$, and hence are in $\Lambda$, i.e., $\Lpi(\pbf) \subseteq \Lambda$.

Now notice $\Lpi(\pbf)$ is a generalized simplex, and its volume is given by (\cite{Ell1976}):
\begin{equation}
\mathrm{vol}(\Lpi(\pbf)) = \frac{1}{n!} \prod_{i=1}^{n} \prod_{j \neq i}(1-p_{j}) = \frac{1}{n!} \left( \prod_{i=1}^{n}(1-p_{i})\right)^{n-1}.
\end{equation}
It is easily shown that the function $\prod_{i=1}^{n}(1-p_{i})$ is maximized over $\pbf \in \partial \Smc$ at $\pbf =\frac{1}{n} \mathbf{1}$, and hence the best $\Lpi$ (in terms of achieving the largest volume) is given by \eqref{eq:Lambdapistar}.
\end{IEEEproof}

\begin{remark} \label{remark:LpiStarVersusLsrs}
Using lower and upper bounds on the factorial \cite{Bat2008}, one can show $\vol(\Lpi^{*}) \geq \vol(\Lsrs)$ for all $n \geq 3$.
\end{remark}

\begin{remark}
For such a simple bound, its quality seems better than one might previously think. Also, observe that if we ``duplicate'' $\Lambda$ and the corresponding inner bound $\Lpi^{*}$ in each of the $2^{n}$ orthants of $\Rbb^{n}$, the union of these $2^{n}$ inner bounds $\Lpi^{*}$'s is an inscribed maximum volume $L_{1}$-ball of the union of these $2^{n}$ sets $\Lambda$'s. This suggests the inscribed maximum volume $L_{1}$-ball contains a non-vanishing (in $n$) fraction of the volume of $\Lambda$.
\end{remark}

Next we construct a polyhedral outer bound.  If we restrict ourselves to only using a single halfspace, the best choice is the standard simplex, $\Smc$, which is a very loose outer bound.  Consequently, we consider a specific construction using $2n+1$ hyperplanes. The convex polytope given below is a subset of $\Smc$ (and in fact a subset of $\Smc \setminus \Lambda$), has $\partial\Smc$ as a facet, and an additional $n$ facets each defined by a hyperplane, $(\Hmc_1^{\alpha},\ldots,\Hmc_n^{\alpha})$, where $\Hmc_i^{\alpha}$ is the hyperplane passing through $\ebf_i,\mbf$, and $\alpha \ebf_j$ for all $j \neq i$, for $\alpha$ given below.  

\begin{definition}
The halfspace representation of the convex polytope $\mathsf{P}$ in $\Rbb^{n}$ consists of the following halfspaces:
\begin{eqnarray}
&& \Hmc_{i}^{\alpha+} \equiv \left\{\xbf \in \Rbb^{n}: ~ x_{i} + \frac{1}{\alpha(n)} \sum_{j \neq i} x_{j} \geq 1 \right\}, ~ i \in [n], \nonumber \\
&& \Hmc_{i}^{\rm c+} \equiv \left\{\xbf \in \Rbb^{n}: ~ x_{i} \geq 0 \right\}, ~ i \in [n], \nonumber \\
&& \Hmc^{\partial \Smc-} \equiv \left\{ \xbf \in \Rbb^{n}:  ~ \sum_{i=1}^{n} x_{i} \leq 1 \right\},
\end{eqnarray}
where 
\begin{equation}
\label{eq:alphan}
\alpha(n) \equiv \frac{n-1}{\frac{n}{\left( 1 - \frac{1}{n} \right)^{n-1}} - 1} = \frac{n-1}{\frac{1}{m(n)} - 1}
\end{equation}
and the superscript $^{+}$ indicates an ``upward'' halfspace and $^{-}$ indicates a ``downward'' halfspace. More compactly,
\begin{equation}
\mathsf{P} \equiv \left\{\xbf \in \Rbb^{n}: ~ \xbf \in \bigcap_{i \in [n]} \Hmc_{i}^{\alpha+}  \bigcap_{i \in [n]} \Hmc_{i}^{\rm c+} \cap \Hmc^{\partial \Smc-} \right\}.
\end{equation}
Furthermore, the corresponding hyperplane is denoted by dropping these superscripts, meaning the inequality in the definition holds with equality. For example, $\Hmc_{i}^{\rm c}$ denotes the \textit{coordinate hyperplane} $\left\{\xbf \in \Rbb^{n}: x_{i} = 0 \right\}$.
\end{definition}

\begin{proposition}[polyhedral outer bound]
\label{prop:Lambdapo}
The convex polytope $\mathsf{P}$ defined above induces an outer bound on $\Lambda$. More precisely, $\Lambda \subseteq \Lpo \equiv \Smc \setminus \mathsf{P}$.
\end{proposition}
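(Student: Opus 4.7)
The plan is to show the polytope $\mathsf{P}$ is contained in $\overline{\Lambda^{c} \cap \Rbb_{+}^{n}}$, which by Post's \cite{Pos1985} convexity of $\Lambda^{c} \cap \Rbb_{+}^{n}$ immediately implies $\mathrm{int}(\mathsf{P}) \cap \mathrm{int}(\Lambda) = \emptyset$; combined with $\Lambda \subseteq \Smc$ (from Prop.\ \ref{prop:Lambda2EqLambdaEqLambda4}) this yields the desired inclusion $\Lambda \subseteq \Smc \setminus \mathsf{P}$ up to the boundary overlap at $\mbf$ and the $\{\ebf_{i}\}_{i=1}^{n}$, which happen to be vertices of $\mathsf{P}$ but also lie on $\partial\Lambda$. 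Since $\mathsf{P}$ is a convex polytope and $\overline{\Lambda^{c} \cap \Rbb_{+}^{n}}$ is closed convex, it suffices to show every vertex of $\mathsf{P}$ lies in $\overline{\Lambda^{c}}$.

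I would first enumerate the vertices of $\mathsf{P}$. The key algebraic observation is that if two of the constraints $(C_{i})$ and $(C_{j})$ are simultaneously tight at a vertex, subtracting their equalities forces $(x_{i}-x_{j})(1 - 1/\alpha(n)) = 0$, i.e.\ $x_{i} = x_{j}$ (since $\alpha(n) \neq 1$). A short case analysis of which of the $2n+1$ defining halfspaces can be simultaneously active, noting that $(\Hmc^{\partial\Smc})$ tight together with any $(\Hmc_{j}^{\alpha})$ tight forces $\xbf = \ebf_{j}$, shows that vertices are indexed by proper subsets $S \subsetneq [n]$:
\begin{equation*}
v_{S} = y_{r} \sum_{j \notin S} \ebf_{j}, \qquad r \equiv n - |S|, \qquad y_{r} \equiv \frac{\alpha(n)}{\alpha(n) + r - 1},
\end{equation*}
with $v_{\emptyset} = \mbf$ and $v_{[n] \setminus \{j\}} = \ebf_{j}$, giving $2^{n}-1$ vertices in total.

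Next I would verify each vertex satisfies $v_{S} \in \overline{\Lambda^{c}}$. Because $v_{S}$ vanishes on $S$, setting $p_{i} = 0$ for $i \in S$ in the definition of $\Lambda$ is optimal and reduces the stability problem to the $(n-|S|)$-user case: $v_{S} \in \Lambda^{(n)}$ iff the restriction $y_{r}\mathbf{1} \in \Rbb^{r}$ lies in $\Lambda^{(r)}$. Since an all-equal point $c \mathbf{1}$ lies in $\Lambda^{(r)}$ iff $c \leq m(r)$, the condition $y_{r} \geq m(r)$ (which places $v_{S}$ outside $\mathrm{int}(\Lambda)$) rearranges to $\alpha(n) \geq \alpha(r)$. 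The cases $r \in \{1, n\}$ give $y_{r} = m(r)$ automatically (corresponding to $\ebf_{j}, \mbf$ on $\partial\Lambda$), so the vertex-check reduces to the monotonicity $\alpha(n) \geq \alpha(r)$ for $r \in \{2,\ldots,n-1\}$, which amounts to monotonicity of $k \mapsto \alpha(k)$.

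Finally I would establish this monotonicity. Direct manipulation of \eqref{eq:alphan} gives the compact form $1/\alpha(n) = (1 + 1/(n-1))^{n} - 1/(n-1)$, so $\alpha(n+1) \geq \alpha(n)$ is equivalent to
\begin{equation*}
\left( 1 + \tfrac{1}{n-1} \right)^{n} - \left( 1 + \tfrac{1}{n} \right)^{n+1} \geq \tfrac{1}{n(n-1)}.
\end{equation*}
This I expect to handle using the classical fact that the sequence $k \mapsto (1 + 1/k)^{k+1}$ decreases to $e$, together with the Taylor estimate $(1 + 1/k)^{k+1} = e \bigl(1 + \tfrac{1}{2k} + O(k^{-2}) \bigr)$: the left-hand side is then $e/(2n(n-1)) + O(n^{-3})$, which comfortably dominates the right-hand side because $e/2 > 1$, with small $n$ verified numerically. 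The elementary but delicate monotonicity of $\alpha$ is the main technical obstacle; once in hand, the convex-hull argument from the first paragraph delivers the outer bound immediately.
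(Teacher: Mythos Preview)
Your proposal is correct and follows essentially the same approach as the paper: enumerate the vertices of $\mathsf{P}$, show each lies in $\overline{\Lambda^{c}}$ by reducing to the lower-dimensional all-rates-equal comparison $y_{r} \geq m(r)$, and conclude via Post's convexity. The paper's execution differs only in details: it enumerates vertices by a more explicit case analysis on which $n$ of the $2n+1$ facets are active (your subtraction trick $(x_i - x_j)(1 - 1/\alpha) = 0$ is a cleaner shortcut), and it proves the monotonicity of $\alpha(\cdot)$ as a separate lemma (Lem.\ \ref{lem:alphamono}) by directly showing $\frac{\drm \alpha}{\drm n} > 0$ via the inequality $\log(1+x) \leq x - x^{2}/2 + x^{3}/3$ on $(-1,0]$, rather than your asymptotic-plus-numerical route.
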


To prove the correctness of this bound it will be essential to establish the  monotonicity of $\alpha(n)$ \eqref{eq:alphan}.
\begin{lemma} \label{lem:alphamono}
The function $\alpha(n)$ \eqref{eq:alphan} is monotone increasing for $n \geq 2$. In particular, $\alpha(2) = 1/3$, $\alpha(3) = 8/23$, $\alpha(\infty) = 1/\erm$.
\end{lemma}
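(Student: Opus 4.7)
\smallskip

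\noindent\textbf{Proof plan.} The cleanest reformulation comes from clearing fractions in the definition of $\alpha(n)$: using $1/m(n) = n^n/(n-1)^{n-1}$, one obtains
\[
\alpha(n) \;=\; \frac{(n-1)^n}{n^n - (n-1)^{n-1}}, \qquad \frac{1}{\alpha(n)} \;=\; \left(\frac{n}{n-1}\right)^n - \frac{1}{n-1}.
\]
Substituting $n=2$ and $n=3$ into either form gives $\alpha(2) = 1/(4-1) = 1/3$ and $\alpha(3) = 8/(27-4) = 8/23$. The limit $\alpha(\infty) = 1/\erm$ follows since $(1 + \tfrac{1}{n-1})^n \to \erm$ and $1/(n-1) \to 0$, so $1/\alpha(n) \to \erm$.

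\smallskip

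For monotonicity, I will show the reciprocal $\beta(n) := 1/\alpha(n) = (1+\tfrac{1}{n-1})^n - \tfrac{1}{n-1}$ is strictly decreasing on $\{n \geq 2\}$, which is equivalent to $\alpha$ being strictly increasing. Setting $m := n-1$ and passing to the smooth extension $\phi(m) := (1+\tfrac{1}{m})^{m+1} - \tfrac{1}{m}$ for $m \geq 1$, a direct differentiation gives
\[
\phi'(m) \;=\; \left(1+\tfrac{1}{m}\right)^{m+1}\!\left[\log\!\left(1+\tfrac{1}{m}\right) - \tfrac{1}{m}\right] + \tfrac{1}{m^2}.
\]
Substituting $t := 1/m \in (0,1]$, the requirement $\phi'(m) < 0$ reduces to the single analytic inequality
\[
(1+t)^{1+1/t}\,\bigl(t - \log(1+t)\bigr) \;>\; t^2 \quad \text{for } t \in (0,1], \qquad (\star)
\]
after which the strict decrease of $\phi$ (and hence of $\beta$, and hence the strict increase of $\alpha$) is immediate.

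\smallskip

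The hard part will be proving $(\star)$. The two crude factor-wise bounds $t - \log(1+t) \geq t^2/(2(1+t))$ and $(1+t)^{1+1/t} \geq 2(1+t)$ each hold, but both become simultaneously tight at $t=1$, so multiplying them only delivers $t^2 \geq t^2$ at the right endpoint. My plan is to prove $(\star)$ by checking both endpoints strictly (as $t \to 0^+$, Taylor expansion gives $(1+t)^{1+1/t}(t-\log(1+t)) = \erm\,t^2/2 + O(t^3)$ with $\erm/2 > 1$; at $t=1$ the LHS equals $4(1-\log 2) \approx 1.227 > 1$) and then interpolating by showing that $G(t) := (1+t)^{1+1/t}(t - \log(1+t)) - t^2$ has controlled sign on $(0,1]$ via a monotonicity argument on $G'$.

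\smallskip

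As a fallback that avoids any calculus subtlety, I would instead prove the discrete inequality $\phi(m) > \phi(m+1)$ for integer $m \geq 1$. Writing $A = (1+\tfrac{1}{m})^{m+1}$ and $B = (1+\tfrac{1}{m+1})^{m+2}$, one needs $A - B > 1/(m(m+1))$. A routine Taylor expansion yields $\log(A/B) = 1/(2m(m+1)) + O(1/m^3)$, so $A - B \geq \erm \cdot (1+o(1))/(2m^2)$, which strictly exceeds $1/(m(m+1))$ for all sufficiently large $m$ because $\erm/2 > 1$; the finitely many small-$m$ cases are then handled by direct verification, beginning with $\phi(1) - \phi(2) = 3 - 23/8 = 1/8 > 0$.
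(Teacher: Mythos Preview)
Your reformulation $1/\alpha(n) = (1+\tfrac{1}{n-1})^n - \tfrac{1}{n-1}$ and the reduction to the inequality $(\star)$ are correct, and this is a different (and arguably cleaner) path than the paper's. The paper differentiates $\alpha(n)$ directly and reduces to showing $g(n) \equiv 1 - \tfrac{1}{n}(1-\tfrac{1}{n})^{n-1} + (n-1)\log(1-\tfrac{1}{n}) > 0$; it then proves this by observing $g(\infty)=0$ and establishing $g'(n)<0$ via the cubic Taylor bound $\log(1+x)\le x - x^2/2 + x^3/3$ on $(-1,0]$ together with $(1-1/n)^{n-1}\le 1/2$. After unwinding, your condition $\phi'(m)<0$ is algebraically the same as the paper's $g(n)>0$, so the two proofs diverge only in how that inequality is attacked.

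The gap in your proposal is that you never prove $(\star)$: the ``monotonicity of $G'$'' plan is not carried out, and the discrete fallback is only asymptotic-plus-finite-check without the bookkeeping. More importantly, you dismissed the factor bounds prematurely. The product of your two lower bounds,
\[
2(1+t)\cdot \frac{t^2}{2(1+t)} = t^2,
\]
equals $t^2$ identically, but that does \emph{not} mean the argument fails: it suffices that for every $t\in(0,1]$ at least one of the two factor inequalities is strict. The bound $(1+t)^{1/t}\ge 2$ is strict on $(0,1)$ (since $t\mapsto \tfrac{1}{t}\log(1+t)$ is strictly decreasing) and tight only at $t=1$; the bound $t-\log(1+t)\ge t^2/(2(1+t))$ is strict for all $t>0$ (from $\int_0^t \tfrac{s}{1+s}\,ds > \int_0^t \tfrac{s}{1+t}\,ds$), so in particular at $t=1$ one has $1-\log 2 \approx 0.307 > 1/4$. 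Hence for $t\in(0,1)$ the first bound already gives $(1+t)^{1+1/t}(t-\log(1+t)) > 2(1+t)\cdot\tfrac{t^2}{2(1+t)} = t^2$, and at $t=1$ the second bound gives $4(1-\log 2) > 4\cdot\tfrac14 = 1$. This closes $(\star)$ in two lines and completes your proof without any appeal to $G'$ or to a discrete asymptotic argument.
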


\begin{IEEEproof}
The derivative of $\alpha(n)$ \eqref{eq:alphan} is
\begin{equation}
\frac{\drm \alpha(n)}{\drm n} = \frac{\left( n-1 \right)^{2} \frac{1}{n}\left( 1 - \frac{1}{n} \right)^{n-1}}{\left( \left( 1 - \frac{1}{n} \right)^{n} - \left( n-1\right) \right)^{2}} \cdot g(n),
\end{equation}
where $g(n) \equiv  \left( 1 - \frac{1}{n}\left( 1 - \frac{1}{n} \right)^{n-1} + \left( n - 1 \right) \log \left( 1 - \frac{1}{n} \right) \right)$.
The sign of $\frac{\drm \alpha(n)}{\drm n}$ is determined by that of $g(n)$. To show the positivity of $g(n)$ for all $n \geq 2$, first observe $\lim_{n \to \infty} g(n) = 0$, therefore it suffices to show $g(n)$ is itself monotone decreasing in $n$, which is shown below:
\begin{IEEEeqnarray}{rCl}
\IEEEeqnarraymulticol{3}{l}
{n \frac{\drm g(n)}{\drm n}  = 1 + n \left( 1 - \frac{1}{n} \left( 1 - \frac{1}{n} \right)^{n-1} \right) \log \left( 1 - \frac{1}{n} \right) 
}\nonumber \\* ~
& \stackrel{(a)}{\leq} & 1 + n \left( 1 - \frac{1}{n} \left( 1 - \frac{1}{n} \right)^{n-1} \right) \left( -\frac{1}{n} - \frac{1}{2n^{2}} - \frac{1}{3n^{3}} \right) \nonumber \\
& \stackrel	{(b)}{\leq} & 1 + n \left( 1 - \frac{1}{2n} \right) \left( -\frac{1}{n} - \frac{1}{2n^{2}} - \frac{1}{3n^{3}} \right) = - \frac{n-2}{12 n^{3}} \leq 0,\IEEEeqnarraynumspace
\end{IEEEeqnarray}
where we use in $(a)$ the inequality $\log(1+x) \leq x - \frac{x^{2}}{2} + \frac{x^{3}}{3}$ for all $x \in (-1,0]$ and $(b)$ the property that $\left( 1 - \frac{1}{n} \right)^{n-1}$ is monotone decreasing in $n$ from $1/2$ (when $n=2$) to $1 / \erm$ (when $n = \infty$). 
\end{IEEEproof}

\begin{IEEEproof}(of Prop.\ \ref{prop:Lambdapo})
Our approach is to show all the vertices of the convex polytope $\mathsf{P}$ are in $\overline{\Lambda^{c}}$, it then follows from the convexity of $\overline{\Lambda^{c}} \cap \Rbb_{+}^{n}$ that $\mathsf{P} \subseteq \overline{\Lambda^{c}} \cap \Rbb_{+}^{n}$ which implies $\Lpo \equiv \Smc \setminus \mathsf{P} \supseteq \Lambda$.

To find a vertex, we first choose $n$ out of the $2n+1$ hyperplanes defining $\mathsf{P}$.  If there is a solution to this linear system that is a single point that \textit{also} obeys the remaining $n+1$ halfspace constraints, then this solution is a valid vertex (indicated below as underlined cases). Furthermore, since our primary goal is to show all the vertices are in $\overline{\Lambda^{c}}$, rather than to list all the vertices, for simplicity we only consider the scenario where those $n$ selected hyperplanes do not include $\Hmc^{\partial \Smc}$. This is justified since the intersection of $\Hmc^{\partial \Smc}$ and $n$ halfspaces $\Hmc_{i}^{\rm c+}$, namely $\partial \Smc$, lies completely in $\overline{\Lambda^{c}}$, so if there exists a valid vertex on $\Hmc^{\partial \Smc}$ it is guaranteed to be in $\overline{\Lambda^{c}}$.

Consequently, we choose a set of hyperplanes from $\left\{ \Hmc_{i}^{\alpha} \right\}_{i=1}^{n}$ (denoted $S$) and a set of hyperplanes from $\left\{ \Hmc_{i}^{\rm c} \right\}_{i=1}^{n}$ (denoted $T$) so that their cardinalities $\vert S \vert$ and $\vert T \vert$ sum to $n$. We also assume we choose the first $\vert S \vert$-indexed hyperplanes from $\left\{ \Hmc_{i}^{\alpha} \right\}_{i=1}^{n}$; this holds with no loss of generality as we may always permute the indices of the hyperplanes, and the polytope $\Psf$ is symmetric with respect to such permutations. For notational convenience define $\Imc_{S}$ and $\Imc_{T}$ as the set of indices appearing as subscripts of the elements in the set $S$ and $T$ respectively. For example, if $S = \left\{\Hmc_{1}^{\alpha}, \Hmc_{2}^{\alpha} \right\}$, then $\Imc_{S} = \left\{ 1, 2\right\}$; if $T = \left\{\Hmc_{1}^{\rm c} \right\}$, then $\Imc_{T} = \left\{ 1 \right\}$.  Recall, $m(k) = \frac{1}{k}\left(1 - \frac{1}{k} \right)^{k-1}$ for $k \in [n]$ stands for the coordinate of the all-rates-equal point on $\partial\Lambda$ in a $k$-dimensional space.

We discuss cases based on the pair $\left( \vert \Imc_{S} \cap \Imc_{T} \vert, \vert S \vert \right)$
\begin{itemize}
\item \underline{case 1}: $\vert \Imc_{S} \cap \Imc_{T} \vert$ = 0, $\vert S \vert = n$. Namely we choose all $n$ $\Hmc_{i}^{\alpha}$'s, to which the only solution is the all-rates-equal point $\mbf = m \mathbf{1}$, which is in $\overline{\Lambda^{c}}$.

\item \underline{case 2}: $\vert \Imc_{S} \cap \Imc_{T} \vert$ = 0, $\vert S \vert = k$ for $1 \leq k < n$. Namely $S = \left\{ \Hmc_{1}^{\alpha}, \ldots, \Hmc_{k}^{\alpha} \right\}$, $T = \left\{ \Hmc_{k+1}^{\rm c}, \ldots, \Hmc_{n}^{\rm c} \right\}$. The only solution can be shown to be $\xbf = \left( 1 + \frac{k-1}{\alpha(n)} \right)^{-1} \sum_{j=1}^{k} \ebf_{j}$. To verify this point $\xbf$ is in $\mathsf{P}$, we first verify it satisfies the halfspace constraint $\Hmc_{k+1}^{\alpha+}$, i.e., $x_{k+1} + \frac{1}{\alpha\left(n\right)} \sum_{j \neq k+1} x_{j} \geq 1$, which applied to this point becomes $\alpha\left(n\right) \leq 1$. Similarly $\xbf$ also satisfies the halfspace constraints associated with $\Hmc_{k+2}^{\alpha}$, $\ldots$, $\Hmc_{n}^{\alpha}$. Next, for $\xbf$ to satisfy the halfspace constraint $\Hmc^{\partial \Smc-}$, we again only need $\alpha\left(n\right) \leq 1$. Finally the nonnegativity constraint for each coordinate axis $\Hmc_{i}^{\rm c+}$ is satisfied, so this solution is a valid vertex. We now need to show $\xbf \in \overline{\Lambda^{c}}$. Observe this vertex's effective length is $k$ so we need to check $\overline{\Lambda^{c}}$ in the corresponding $k$-dimensional space; furthermore, all the non-zero components of $\xbf$ are identical meaning $\xbf$ lies along the all-rates-equal ray in this $k$-dimensional space so we only need to show $\xbf$ extends beyond the corresponding all-rates-equal point $\mbf = m(k) \mathbf{1}$ for $\mathbf{1}$ a $k$-vector of all $1$'s. Applying Lem.\ \ref{lem:alphamono}, we have $\left( 1 + \frac{k-1}{\alpha(n)} \right)^{-1} \geq \left( 1 + \frac{k-1}{\alpha(k)} \right)^{-1} = m(k)$. Thus we've shown this case does produce a valid vertex in $\overline{\Lambda^{c}}$.

\item case 3: $\vert \Imc_{S} \cap \Imc_{T} \vert$ = 0, $\vert S \vert = 0$. Namely we choose all $n$ coordinate hyperplane $\Hmc_{i}^{c}$'s. The only solution is the origin $\obf$, which is not in $\mathsf{P}$, hence this is an invalid vertex.

\item case 4: $\vert \Imc_{S} \cap \Imc_{T} \vert  = 1$, $\vert S \vert = k$ for $1 \leq k < n$. In this case, in order to further satisfy $\vert S \vert + \vert T \vert = n$, there must exist some index $k' \geq k +1$ such that $\Hmc_{k'}^{c} \notin T$. In fact if we assume $\Hmc_{1}^{\rm c} \in T$, this determines $T = \left\{ \Hmc_{1}^{\rm c}\right.$, $\Hmc_{k+1}^{\rm c}$, $\ldots$, $\left. \Hmc_{n}^{\rm c} \right\} \setminus \left\{ \Hmc_{k'}^{c} \right\}$. The solution can be shown to be $\xbf = \alpha \ebf_{k'}$, which is not in $\mathsf{P}$, and hence is not a valid vertex. Note this conclusion does not depend on our choice of $\Hmc_{1}^{\rm c}$ to be included in $T$.

\item case 5: $\vert \Imc_{S} \cap \Imc_{T} \vert  > 1$, $\vert S \vert = k$ for $1 < k < n-1$. In this case, in order to further satisfy $\vert S \vert + \vert T \vert = n$, there must exist $ l = \vert \Imc_{S} \cap \Imc_{T} \vert$ indices such that the corresponding coordinate hyperplanes are not in $T$. Attempt to solve this system shows this is an underdetermined system because the solution is given by a hyperplane instead of a point. Furthermore, if we want to ensure the solution is in $\Psf$, we find there is no consistent solution. As $S = \left\{ \Hmc_{1}^{\alpha}\right.$, $\ldots$, $\left. \Hmc_{k}^{\alpha} \right\}$, suppose $T$ does not include, say, $\Hmc_{k+1}^{\rm c}$, \ldots, $\Hmc_{k+l}^{\rm c}$ (as well as $\Hmc_{l+1}^{\rm c}$, $\ldots$, $\Hmc_{k}^{\rm c}$), so $T = \left\{ \Hmc_{1}^{\rm c}\right.$, $\ldots$, $\Hmc_{l}^{\rm c}$, $\Hmc_{k+l+1}^{\rm c}$, $\ldots$, $\left. \Hmc_{n}^{\rm c} \right\}$. Solving these $n$ equations gives an $l$-dimensional hyperplane: $\left\{\xbf: x_{k+1} + \cdots + x_{k+l} = \alpha \right\}$. Satisfying the halfspace constraint $\Hmc_{k+1}^{\alpha+}$ as well as each nonnegativity component constraints $\Hmc_{i}^{\rm c+}$ requires $x_{k+1} = 0$.  Similarly, due to each other halfspace constraint $\Hmc_{k+2}^{\alpha+}$, $\ldots$, $\Hmc_{k+l}^{\alpha+}$, each other component $x_{k+2}$, $\ldots$, $x_{k+l}$ would also need to be set to zero. These together lead to no valid (vertex) solution.
\end{itemize}

To summarize, each valid vertex solution we have found is such that: $i)$ its non-zero components are all equal, and $ii)$ this non-zero component value is no smaller than the coordinate of all-rates-equal point in the corresponding possibly reduced-dimensional space, which means all those vertices are in $\overline{\Lambda^{c}}$. More precisely, each vertex extends beyond (or coincides with) the corresponding all-rates-equal point (which lies on the boundary of $\Lambda$), and can be written as (up to permutation of the indices) $\xbf = \left( 1 + \frac{k-1}{\alpha(n)}\right)^{-1} \sum_{j=1}^{k} \ebf_{j}$ for $k \in [n]$. In particular, when $k=1$, $\xbf = \ebf_{1}$; when $k=n$, $\xbf = \mbf$.  
\end{IEEEproof}

\begin{remark}
One can perform a similar analysis considering the scenario where $\Hmc^{\partial \Smc}$ is selected. The only valid vertex solutions consist of just $\ebf_{i}$'s for all $i \in [n]$.
\end{remark}

We give the vertex representation of the convex polytope $\Psf$ when $n = 2$ and $n=3$ in the following example. The bounds $\Lpi^{*}$ and $\Lpo$ together with $\partial \Lambda$ are illustrated in Fig.\ \ref{fig:LpiLpoCombined}.
\begin{example}
When $n = 2$, since $\alpha(2) = 1/3$, the vertices of $\Psf$ are $\{ \ebf_{1}, \ebf_{2}, \mbf \}$, here $\mbf = m(2) \mathbf{1}$ for $ m(2)= 1/4$.
When $n = 3$, since $\alpha(3) = 8/23$, the vertices of $\Psf$ are $\{ \ebf_{1}, \ebf_{2}, \ebf_{3}, (8/31)(\ebf_{1}+\ebf_{2}),
(8/31)(\ebf_{1}+\ebf_{3}), (8/31)(\ebf_{2}+\ebf_{3}), \mbf \}$, here $\mbf = m(3) \mathbf{1}$ for $ m(3)= 4/27$. Those vertices are also shown in Fig.\ \ref{fig:LpiLpoCombined}. Note $8/31 > 1/4$ thus each of the three green points in the bottom subfigure extends beyond the all-rates-equal point on the corresponding $2$-dimensional plane namely the orange point in the top subfigure.
\end{example}

\begin{figure}[!ht]
\centering
\includegraphics[width=0.4\textwidth]{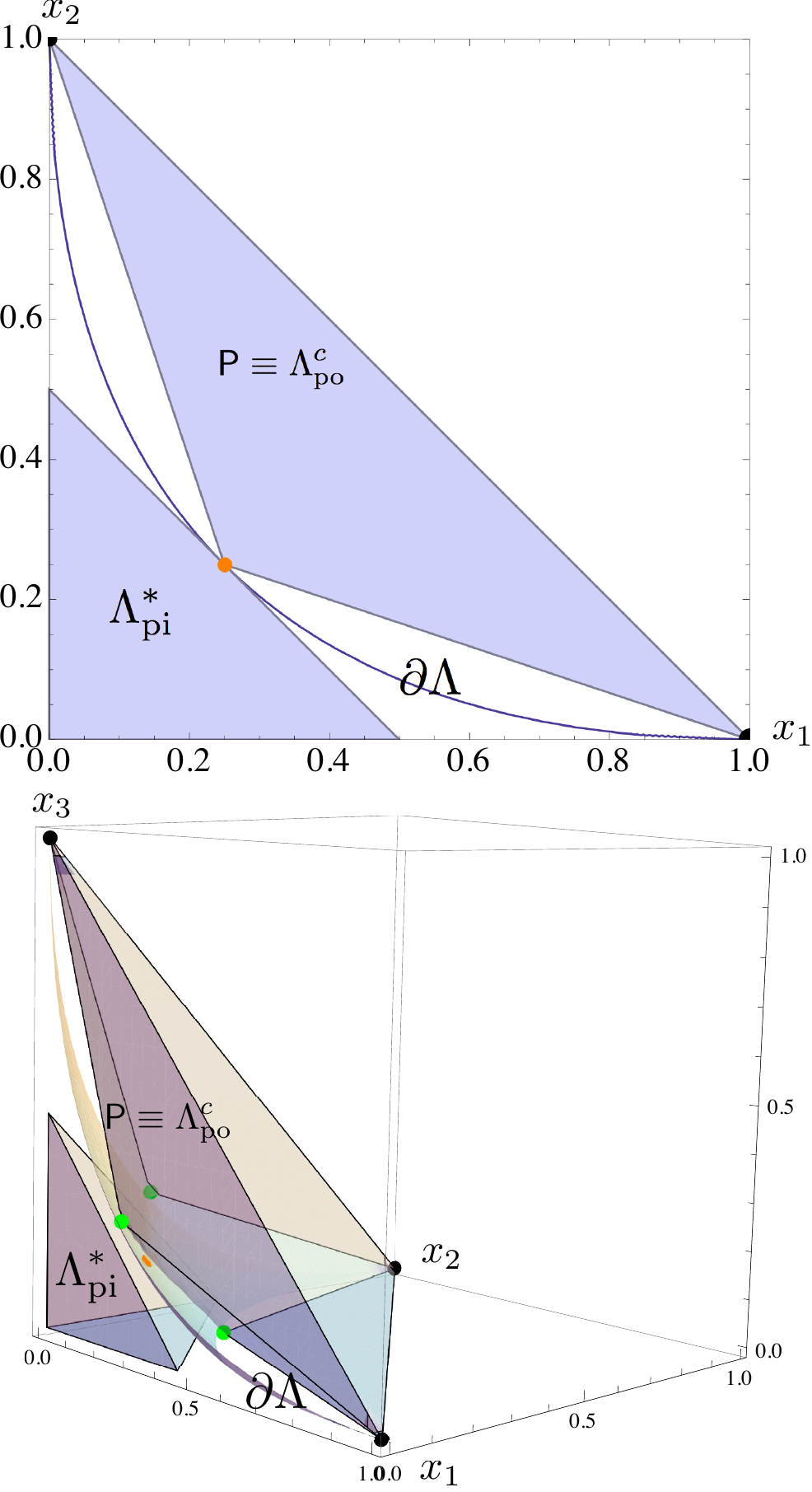}
\caption{Polyhedral bounds for $n=2$ (top) and $3$ (bottom).  The inner bound and the complement (w.r.t.\ $\Smc$) of the outer bound (namely $\Psf$, recall $\Lpo \equiv \Smc \setminus \Psf$) are shown in solid, and in between sits $\partial \Lambda$. Also shown are the vertices of the polytope used in $\Lpo$: black points are $\ebf_{i}$'s, the orange point is the all-rates-equal point $\mbf$, the green points on the right have all non-zero components equal $8/31$.}
\label{fig:LpiLpoCombined}
\end{figure} 

\section{Spherical inner and outer bounds on $\Lambda$}
\label{sec:siAndso}

In this section we consider bounds induced by spheres. More specifically we want $\Smc \setminus \Bmc(\cbf, r)$ to be included in (for inner bounding) or to include (for outer bounding) $\Lambda$, where $\Bmc(\cbf, r)$ denotes an open ball in $\Rbb^n$ with center $\cbf$ and radius $r$, and $\partial\Bmc(\cbf,r)$ is its boundary.  By symmetry, we restrict our attention to balls with centers on the all-rates-equal ray, i.e., $\cbf = c \mathbf{1}$ for some $c > 0$.  In the following, Prop.\ \ref{prop:Lambdasi} establishes a family of inner bounds induced by balls centered at $\cbf = c \mathbf{1}$ with radius $r(c) = d(\cbf, \mbf)$ (i.e., $\mbf \in \partial \Bmc(\cbf,r)$), and among them the best one (in the sense of giving the best approximation of the volume of $\Lambda$) is obtained by $c = (1-n m^2)/(2(1-nm))$, which is indeed the minimum $c$ in order to \textit{possibly} produce a valid spherical inner bound in this family. A parallel result, Prop.\ \ref{prop:Lambdaso}, establishes a family of outer bounds induced by balls centered at $\cbf = c \mathbf{1}$ with radius defined as $r(c) = d(\cbf, \ebf_{i})$ (i.e., $\ebf_i \in \partial \Bmc(\cbf,r)$ for $i \in [n]$), and among them the best one is given when $c = 1$, which is also the minimum $c$ in order to produce a valid spherical outer bound in this family.  

\begin{definition} \label{def:Lsi}
$\Lsi(c) \equiv \Smc \setminus \Bmc(\cbf, r_{\rm in}(c))$, where the center of the ball is $\cbf = c \mathbf{1}$ for all $c \geq c_{\rm in}^{*} \equiv (1-n m^2)/(2(1-nm))$, and its radius $r_{\rm in}(c) \equiv d(\cbf, \mbf) = \sqrt{n}(c-m)$.
\end{definition}

\begin{proposition}[spherical inner bound]
\label{prop:Lambdasi}
For each $c \geq c_{\rm in}^{*}$, the set $\Lsi(c)$ is an inner bound on $\Lambda$ for $n \geq 2$. Among these, the tightest is given when $c = c_{\rm in}^{*}$:
\begin{equation}
\label{eq:Lambdasi}
\Lsi^{*} =  \Lsi(c_{\rm in}^{*})  =  \left\{ \xbf \in \Smc : \|\xbf - c_{\rm in}^{*} \mathbf{1}\| \geq \sqrt{n}\left(c_{\rm in}^{*}-m\right) \right\}.
\end{equation}
\end{proposition}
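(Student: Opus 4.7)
The plan is to establish (i) $\Lsi(c) \subseteq \Lambda$ for every $c \geq c_{\rm in}^*$ and (ii) the tightest inner bound in this family is $\Lsi(c_{\rm in}^*)$. For (i), the key reformulation is $\Lsi(c) \subseteq \Lambda \iff \overline{\Lambda^c} \cap \Smc \subseteq \overline{\Bmc}(\cbf, r_{\rm in}(c))$. Post's result \cite{Pos1985} makes the set on the left a bounded convex set, and the closed ball on the right is likewise convex; since any bounded convex set equals the convex hull of its boundary, it suffices to verify $\partial(\overline{\Lambda^c} \cap \Smc) \subseteq \overline{\Bmc}$. I would decompose this boundary as $(\partial\Lambda \cap \Smc) \cup (\overline{\Lambda^c} \cap \partial\Smc)$ and handle the two pieces separately.

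The $\partial\Smc$-piece is easy: I would prove the stronger $\partial\Smc \subseteq \overline{\Bmc}$, which by convexity of $\overline{\Bmc}$ and $\partial\Smc = \mathrm{conv}\{\ebf_i\}$ reduces to $\ebf_i \in \overline{\Bmc}$ for each $i$. A direct computation gives $\|\ebf_i - c\mathbf{1}\|^2 = 1 - 2c + nc^2 \leq n(c-m)^2 = r_{\rm in}(c)^2$, which (using $nm < 1$) simplifies to exactly $c \geq (1-nm^2)/(2(1-nm)) = c_{\rm in}^*$. The $\partial\Lambda$-piece is the technical heart: parametrizing via $\pbf \in \partial\Smc$ using \eqref{eq:xOfp}, and defining $F_c(\pbf) \equiv \|\xbf(\pbf) - c\mathbf{1}\|^2 - r_{\rm in}(c)^2$, the task is to show $F_c \leq 0$ on $\partial\Smc$. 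My plan is to apply Lagrange multipliers under the constraint $\sum_i p_i = 1$; since both $F_c$ and the constraint are $S_n$-symmetric, every critical point can be taken to have $\pbf$ with components in at most two distinct values, namely $k$ copies of some $a$ and $n-k$ copies of $b = (1-ka)/(n-k)$, together with the fully symmetric case $\pbf = (1/n)\mathbf{1}$. For each $k \in \{1,\ldots,n-1\}$ the verification collapses to a single-variable inequality in $a$, with equality in $F_c \leq 0$ forced at the endpoints $\pbf = (1/n)\mathbf{1}$ (the tangency point $\xbf=\mbf$) and $\pbf = \ebf_i$ (the corners $\xbf=\ebf_i$). I expect the main obstacle to lie here: establishing $F_c \leq 0$ uniformly over all $k$ and all $n \geq 2$ requires careful algebraic bookkeeping since $\|\xbf(\pbf) - c\mathbf{1}\|^2$ has a complicated dependence on the two-value parametrization.

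For (ii), I would show $\vol(\Lsi(c))$ is non-increasing in $c$ on $[c_{\rm in}^*, \infty)$ using a clean geometric monotonicity. For any $c_1 \leq c_2$ in that range, the distance between the centers $c_1 \mathbf{1}$ and $c_2 \mathbf{1}$ is $\sqrt{n}(c_2 - c_1)$, which exactly equals the difference in radii $r_{\rm in}(c_2) - r_{\rm in}(c_1)$. The two balls are therefore internally tangent at the common point $\mbf$, giving $\Bmc(c_1 \mathbf{1}, r_{\rm in}(c_1)) \subseteq \Bmc(c_2 \mathbf{1}, r_{\rm in}(c_2))$ and hence $\Lsi(c_2) \subseteq \Lsi(c_1)$. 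The family is thus monotonically decreasing (in the set-containment sense) as $c$ grows, so the tightest inner bound is $\Lsi^* = \Lsi(c_{\rm in}^*)$, which yields the claimed expression.
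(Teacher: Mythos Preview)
Your plan for part (ii), the tightness ordering, is correct and essentially identical to the paper's Step~1 (their Lemma~\ref{lem:LambdasiSupportingLemma}): the balls are internally tangent at $\mbf$ and hence nested, so $\Lsi(c)$ shrinks monotonically in $c$ and $c=c_{\rm in}^*$ is optimal. Your computation showing $\ebf_i\in\overline{\Bmc}\iff c\ge c_{\rm in}^*$ is also exactly what the paper records in \eqref{eq:cinstarOneToOne}, and your convex-hull reduction of the $\partial\Smc$-piece is fine.

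The gap is in the $\partial\Lambda$-piece. You assert that because $F_c$ and the constraint are $S_n$-symmetric, ``every critical point can be taken to have $\pbf$ with components in at most two distinct values.'' Symmetry alone does not give this: it only says the critical set is $S_n$-invariant, not that each critical point has small support. The paper obtains the two-value property not from symmetry but from the \emph{specific} form of the KKT stationarity condition: equating $\partial f/\partial p_k=\partial f/\partial p_l$ for $0<p_k<p_l$ and interpreting a certain quantity as an expectation forces the condition \eqref{eq:condIfTwoDistinctNonzeroValues}, from which a three-distinct-value configuration leads to a contradiction. This is genuine problem-specific algebra, not a consequence of symmetry, and you would need to reproduce something equivalent.

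Second, your Lagrange treatment handles only the equality constraint $\sum_i p_i=1$ and does not address the inequality constraints $p_i\ge0$. The paper's KKT analysis (complementary slackness) is what separates the interior case from the boundary case where some $p_i=0$; the latter is handled in the paper's Step~4 by a dimension-reduction argument (the objective on an $s$-dimensional face decomposes as $(n-s)c^2$ plus an $s$-dimensional instance of the same problem), together with a nontrivial monotonicity $s\mapsto s\,m(s)(1-m(s))$ decreasing. You have not accounted for this boundary family at all beyond the corner points $\ebf_i$.

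Finally, even granting the two-value reduction, the ``single-variable inequality in $a$'' for each $k$ is where the real work lies, and you correctly flag this as the main obstacle without proposing a mechanism. The paper's route is to substitute \eqref{eq:condIfTwoDistinctNonzeroValues} into $f$ to obtain a closed-form $\tilde f(p_s,k)$, prove $\tilde f$ is increasing in $k$ and then decreasing in $p_s$ at $k=n-1$, yielding $\sup\tilde f=(n-1)c^2<f(\ebf_i)$. Without some analogous monotonicity argument your plan remains a sketch at the decisive step.
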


\begin{IEEEproof}
Here is an overview of the proof. First, we observe the correctness of the spherical inner bound with some $c$ implies the correctness of an inferior bound with a larger $c$ (Lem.\ \ref{lem:LambdasiSupportingLemma}), so we only need to show the correctness of the bound with the minimum $c$ namely $c_{\rm in}^{*}$. Second, by inspecting the Karush-Kuhn-Tucker (KKT) conditions, we argue a potential local extremizer can have at most two distinct non-zero component values, and also obtain a condition the components of this extremizer must satisfy \eqref{eq:condIfTwoDistinctNonzeroValues}.  Third, we address the case when a potential extremizer does not have zero component and has exactly two distinct non-zero component values, and we show such a point can be safely ruled out for the optimization problem set up in Step 2. Fourth, we consider the case when a potential extremizer has zero component(s), and show this point can be removed too (unless it reduces to $\ebf_{i}$). Finally, it is clear we only need to evaluate the objective function at $\mbf$ and $\ebf_{i}$.

\textbf{Step 1}: correctness of the bound with small $c$ implies correctness and inferiority of the bound with larger $c$.  Lem.\ \ref{lem:LambdasiSupportingLemma} below establishes that $c_{2} \geq c_{1} \geq m$ implies $\Bmc(\cbf_{2}, d(\cbf_{2}, \mbf)) \supseteq \Bmc(\cbf_{1}, d(\cbf_{1}, \mbf))$, i.e., the balls in this family are nested in $c$.  Since $\Lsi(c) = \Smc \setminus \Bmc(\cbf,r_{\rm in}(c))$, it follows that $c_{2} \geq c_{1} \geq m$ implies $\Lsi(c_1) \supseteq \Lsi(c_2)$, i.e., the induced bounds are likewise nested, and thus the optimal (largest) bound in this family is obtained by the smallest $c$ in the family.  Because of this, we need only establish that $\Lsi(c) \subseteq \Lambda$ for this smallest $c$ in the family.  To establish $c_{\rm in}^{*}$ is the minimum $c$, it suffices to verify the following:
\begin{equation}
\label{eq:cinstarOneToOne}
d(\cbf, \mbf) <, =, > d(\cbf, \ebf_{i}) \text{ iff } c <, =, > c_{\rm in}^{*} \text{ respectively.}
\end{equation}
This is straightforward to establish, and the proof is omitted.  Assuming \eqref{eq:cinstarOneToOne} to be true, it follows that if $c < c_{\rm in}^{*}$ then each $\ebf_{i}$ will not be included in the closed ball $\overline{\Bmc}(\cbf, r_{\rm in})$, implying the induced bound $\Smc \setminus \Bmc(\cbf, r_{\rm in})$ is invalid (since each $\ebf_i \in \Lambda$).  

\begin{lemma}
\label{lem:LambdasiSupportingLemma}
$\Bmc(\cbf_{2}, d(\cbf_{2}, \mbf)) \supseteq \Bmc(\cbf_{1}, d(\cbf_{1}, \mbf))$ for $c_{2} \geq c_{1} \geq m$.
\end{lemma}
$\indent$ {\em Proof of Lem.\ \ref{lem:LambdasiSupportingLemma}:}
For simplicity we shift the origin of coordinate system along the all-rates-equal ray $\mathbf{1}$ so that it overlaps with $\mbf$ in the original system. In the new system we have $c_{2}' = c_{2} - m$, $c_{1}' = c_{1} - m$, and $m' = 0$, and we need to show $\Bmc(\cbf_{2}', d(\cbf_{2}', \obf')) \supseteq \Bmc(\cbf_{1}', d(\cbf_{1}', \obf'))$ for $c_{2}' \geq c_{1}' \geq 0$, where $\obf'$ denotes the origin of the new system. Observe $d(\cbf_{j}', \obf') = \sqrt{n} c_{j}'$ for $j = 1,2$. So we need to verify for all $\xbf$ satisfying $\sum_{i} (x_{i} - c_{1}')^{2} \leq n c_{1}'^{2}$, it holds that $\sum_{i} (x_{i} - c_{2}')^{2} \leq n c_{2}'^{2}$. Toward this, we write
\begin{IEEEeqnarray}{rCl}
\IEEEeqnarraymulticol{3}{l}
{\sum_{i} (x_{i} - c_{2}')^{2}
}\nonumber \\* \quad
& = & \sum_{i} (x_{i} - c_{1}')^{2} + n (c_{2}' - c_{1}')^{2} - 2 (c_{2}' - c_{1}') \sum_{i} (x_{i} - c_{1}') \nonumber \\
& \leq & n c_{1}'^{2} + n (c_{2}' - c_{1}')^{2} - 2 (c_{2}' - c_{1}') \sum_{i} (x_{i} - c_{1}').
\end{IEEEeqnarray}
So it suffices to show the RHS is no larger than $n c_{2}'^{2}$, which is equivalent to showing $\sum_{i} x_{i} \geq 0$. We claim this is true, because the hyperplane $\{\xbf \in \Rbb^{n}: \sum_{i} x_{i} = 0 \}$ is tangent with $\Bmc(\cbf_{1}', d(\cbf_{1}', \obf'))$ at $\obf'$, and in fact it is a supporting hyperplane of the convex body $\Bmc(\cbf_{1}', d(\cbf_{1}', \obf'))$.
\hfill $\square$

Steps 2, 3, and 4 are actually valid for $\Lsi(c)$ for all $c$, not just $c = c_{\rm in}^*$, and so we consider an arbitrary $\Lsi(c)$ in what follows.

\textbf{Step 2}: properties of a potential extremizer for $\Lsi$.  By Prop.\ \ref{prop:innerboundReducedtoBoundaryChecking} in \S \ref{ssec:simplification}, it suffices to establish $\partial\Lambda \subseteq \overline{\Bmc}(\cbf, r_{\rm in})$, i.e., given any point $\xbf \in \partial \Lambda$, its distance to the center of the sphere is no larger than the sphere's radius:
\begin{equation}
\label{eq:si_opt_first}
\max_{\xbf \in \partial \Lambda}d(\cbf, \xbf)^{2} \leq r_{\rm in}^{2}. 
\end{equation}
Recall from Cor.\ \ref{cor:bijectionbetweenSandLambda} (\S \ref{sec:rootTesting}) the bijection between $\partial \Lambda$ and $\partial \Smc$, and write $\xbf(\pbf)$ to denote the unique $\xbf \in \partial\Lambda$ associated with each $\pbf \in \partial\Smc$.  Under this bijection, the LHS of \eqref{eq:si_opt_first} becomes
\begin{equation}
\label{eq:si_opt_second}
\max_{\pbf \in \partial\Smc} f (\pbf) \equiv d(\cbf, \xbf(\pbf))^{2} = \sum_{i=1}^{n} \left(c - p_{i}\prod_{j \neq i}^{n} \left(1-p_{j} \right) \right)^{2}.
\end{equation}
Introducing Lagrange multipliers $\mu$, $(\lambda_i, i \in [n])$ for the equality constraint and $n$ inequality constraints in $\partial\Smc$, respectively, the Lagrangian of this maximization problem becomes:
\begin{equation}
\Lmc (\pbf, \mu, \boldsymbol\lambda) = f (\pbf) + \mu \left(\sum_{i=1}^{n} p_{i} - 1\right) + \sum_{i=1}^{n} \lambda_{i} \left(-p_{i}\right).
\end{equation}
The first-order Karush-Kuhn-Tucker (KKT) necessary conditions for a local maximizer are:
\begin{IEEEeqnarray}{rCl}
\text{stationarity}~ & &  \frac{\drm \Lmc}{\drm p_{i}} = 0,  i \in [n]  \\
\text{primal  feasibility}~ & &   \sum_{i=1}^{n} p_{i} - 1 = 0, -p_{i} \leq 0,  i \in [n] \IEEEeqnarraynumspace \\
\text{dual  feasibility}~ & &  \lambda_{i} \leq 0,  i \in [n]  \\
\text{complementary slackness}~ & &   \lambda_{i} \left(-p_{i}\right) = 0,  i \in [n].
\end{IEEEeqnarray}
Note the regularity condition LICQ (linear independence constraint qualification) is satisfied.

Observe $\frac{\drm \Lmc}{\drm p_{i}} = \frac{\drm f}{\drm p_{i}} + \mu - \lambda_{i}$.  Therefore, if a potential local maximizer $\pbf$ has two distinct non-zero components $0 < p_{k} < p_{l}$, then by complementary slackness, stationarity of the Lagrangian reduces to the equality of derivatives of the objective function w.r.t. $p_{k}$ and $p_{l}$:
\begin{equation}
\frac{\drm \Lmc}{\drm p_{k}} = \frac{\drm \Lmc}{\drm p_{l}} = 0 ~ \iff ~ \frac{\drm f}{\drm p_{k}} = \frac{\drm f}{\drm p_{l}} = - \mu.
\end{equation}
The derivative of $f$ w.r.t. $p_k$ is
\begin{equation}
\frac{1}{2}\frac{\drm f}{\drm p_{k}} = -\frac{\pi_{k}}{1-p_{k}} \left(c - p_{k} \pi_{k} \right) + \frac{1}{1-p_{k}} \sum_{i=1}^{n} p_{i} \left(c - p_{i} \pi_{i}\right) \pi_{i},
\end{equation}
where $\pi_{i} = \pi_{i}(\pbf) \equiv \pi (\pbf) /(1-p_{i})$.
Similarly we can write out the derivative w.r.t. $p_{l}$. Equating the two by further multiplying both sides by $(1-p_{k})(1-p_{l})$ gives
\begin{equation}
(1-p_{l}) (\eta_{c} - (c-p_{k}\pi_{k}) \pi_{k}) = (1-p_{k}) (\eta_{c} - (c-p_{l}\pi_{l}) \pi_{l}),
\end{equation}
where $\eta_{c} \equiv \sum_{i=1}^{n} p_{i} (c-p_{i}\pi_{i}) \pi_{i}$ and hence $\eta_{c}$ can be viewed as the expectation of a discrete random variable $Z$ with support $\{(c-p_{i}\pi_{i}) \pi_{i}, i \in [n]\}$ and associated PMF $\Pbb(Z = (c-p_{i}\pi_{i}) \pi_{i}) = p_{i}$ for each $i \in [n]$. So the only way to satisfy the above equality (for all $k, l$ such that $0 < p_{k} < p_{l}$) is by requiring $(c-p_{i}\pi_{i}) \pi_{i}$ to be all equal for $i$'s such that $p_{i} \neq 0$ (because otherwise we can always choose $k'$, $l'$ so that $\eta_{c}$ lies between $(c-p_{k'}\pi_{k'}) \pi_{k'}$ and $(c-p_{l'}\pi_{l'}) \pi_{l'}$). In particular, $(c-p_{l}\pi_{l}) \pi_{l} = (c-p_{k}\pi_{k}) \pi_{k}$, which simplifies, after some algebra, to:
\begin{equation} 
\label{eq:condIfTwoDistinctNonzeroValues}
c = \frac{\pi (\pbf)}{(1-p_{k}) (1-p_{l})} (1-p_{k}p_{l}).
\end{equation}

Because of the constraint enforced by \eqref{eq:condIfTwoDistinctNonzeroValues}, we claim there are \textit{at most two} distinct values among all the non-zero components of a potential local extremizer. To see this, we prove by contradiction. Assume there exist $p_{j}, p_{k}, p_{l}$ such that $0 < p_{j} < p_{k} < p_{l} < 1$. Then \eqref{eq:condIfTwoDistinctNonzeroValues} must hold with indices $\{j,k\}$ replacing indices $\{k,l\}$.  Equating the two resulting expressions for $c$ gives $p_{j} = p_{l}$, a contradiction.

With the above claim, we only need to consider points that have at most two distinct non-zero component values.  Define $\Vmc(\pbf) = \{ a \in (0,1] : \exists i \in [n] : p_i = a\}$ as the set of non-zero values taken by a $\pbf \in \partial\Smc$, and $\Zmc(\pbf) = \{ i \in [n] : p_i = 0\}$ as the set of indices where $\pbf$ has a zero value.  The set of probability vectors taking at most two distinct non-zero component values is then denoted $\Pmc^{(2)} = \{ \pbf \in \partial\Smc : |\Vmc(\pbf)| \in \{1,2\}\}$.  We partition this set into two subsets, $\Pmc^{(2)} = \Pmc^{(2)}_a \cup \Pmc^{(2)}_b$, which are in turn each partitioned into two subsets, $\Pmc^{(2)}_a = \Pmc^{(2)}_{a,1} \cup \Pmc^{(2)}_{a,2}$ and $\Pmc^{(2)}_b = \Pmc^{(2)}_{b,1} \cup \Pmc^{(2)}_{b,2}$, where 
\begin{IEEEeqnarray}{rCl}
& & \Pmc^{(2)}_a = \{ \pbf \in \Pmc^{(2)} : \Zmc(\pbf) = \emptyset\}; \Pmc^{(2)}_b  = \{ \pbf \in \Pmc^{(2)} : \Zmc(\pbf) \neq \emptyset\} \nonumber \\
& & \Pmc^{(2)}_{a,1} = \{ \pbf \in \Pmc^{(2)}_a : |\Vmc(\pbf)| = 1\},  \nonumber \\
& & \Pmc^{(2)}_{a,2} = \{ \pbf \in \Pmc^{(2)}_a : |\Vmc(\pbf)| = 2\}; \nonumber  \\
& & \Pmc^{(2)}_{b,1} = \{ \pbf \in \Pmc_{b}^{(2)} : |\Zmc(\pbf)| = n-1\},  \nonumber \\
& & \Pmc^{(2)}_{b,2} = \{ \pbf \in \Pmc_{b}^{(2)} : |\Zmc(\pbf)| \in \{1,\ldots,n-2\}\}.
\end{IEEEeqnarray}
In words, $\Pmc^{(2)}_a$ holds $\pbf \in \partial\Smc$ with no component equal to zero and at most two distinct (non-zero) values, while $\Pmc^{(2)}_b$ holds those with at least one component equal to zero and at most two distinct non-zero values.  Likewise, $\Pmc^{(2)}_{a,1}$ holds $\pbf$ with no zero components and only one (non-zero) value, meaning $\Pmc^{(2)}_{a,1} = \{\frac{1}{n}\mathbf{1}\}$, and $\Pmc^{(2)}_{a,2}$ holds $\pbf$ with all components taking one of two non-zero values, and both values held by some component.  Finally, $\Pmc^{(2)}_{b,1}$ holds $\pbf$ with all but one of the $n$ entries holding value zero, meaning $\Pmc^{(2)}_{b,1} = \{\ebf_1,\ldots,\ebf_n\}$, and $\Pmc^{(2)}_{b,2}$ holds $\pbf$ with between one and $n-2$ components taking value zero, and all non-zero components taking at most two distinct (non-zero) values.   The next step (Step $3$) in the proof focuses on $\Pmc^{(2)}_{a,2}$, while Step $4$ focuses on $\Pmc^{(2)}_{b,2}$; the simpler cases $\Pmc^{(2)}_{a,1}$ and $\Pmc^{(2)}_{b,1}$ will be left until the end.

\textbf{Step 3}: any $\pbf \in \Pmc^{(2)}_{a,2}$ cannot be a global maximizer.  We define the subset $\Pmc^{(2),*}_{a,2} \subseteq \Pmc^{(2)}_{a,2}$ as the collection of points from $\Pmc^{(2)}_{a,2}$ that also satisfies \eqref{eq:condIfTwoDistinctNonzeroValues}, which is a necessary condition for any such $\pbf$ to be a potential extremizer. In order to rule out the possibility that a point from $\Pmc^{(2)}_{a,2}$ can be a global maximizer, based on the KKT condition analysis, we only need to show the original objective function $f$ maximized over $\pbf \in \Pmc^{(2),*}_{a,2}$ is no larger than say $f(\ebf_{i}) = d(\cbf, \ebf_{i})^{2}$, equivalently we show another function $\tilde{f}$ maximized over $\pbf \in \Pmc^{(2),*}_{a,2}$ is no larger than $f(\ebf_{i})$ where $\tilde{f} = f$ for all $\pbf \in \Pmc^{(2),*}_{a,2}$. It suffices to work with an enlarged feasible set, meaning we shall show $\tilde{f}$ maximized over $\pbf \in \Pmc^{(2)}_{a,2}$ is still smaller than $f(\ebf_{i})$.

As $\pbf \in \Pmc^{(2)}_{a,2}$ by assumption, there is no loss in generality in denoting the two non-zero values it takes by $\Vmc(\pbf) = \{p_s,p_l\}$ for $0 < p_s < p_l < 1$, where $s$ stands for small and $l$ for large (and do not denote indices).  Assume there are $k$ ($1 \leq k \leq n-1$) components that equal $p_{s}$ and hence $(n-k)$ components equal $p_{l}$.  Then $k p_{s} + (n-k) p_{l} = 1$, and it follows from these assumptions that $0 < p_{s} < \frac{1}{n} < p_{l} < 1$, where we emphasize the strictness of each of the above inequalities.  Because of the assumption of exactly two distinct non-zero values for $\pbf$, \eqref{eq:condIfTwoDistinctNonzeroValues} simplifies to
\begin{equation} 
\label{eq:eqvForc}
c = (1-p_{s})^{k-1} (1-p_{l})^{n-k-1} (1-p_{s}p_{l}).
\end{equation}

Recall all the points from the set $\Pmc^{(2)}_{a,2}$ satisfy $k p_{s} + (n-k) p_{l} = 1$, only points from the subset $\Pmc^{(2),*}_{a,2}$ also satisfy \eqref{eq:eqvForc}.  We now express the original objective function $f$ from \eqref{eq:si_opt_second} as another function, $\tilde{f}(p_s,k)$, where $f(\pbf) = \tilde{f}(p_s,k)$ for all $\pbf \in \Pmc^{(2),*}_{a,2}$, i.e., for all $\pbf$ for which both $k p_{s} + (n-k) p_{l} = 1$ and \eqref{eq:eqvForc} hold:
\begin{IEEEeqnarray}{rCl}
\label{eqn:fobj}
\tilde{f} (p_{s}, k) & = & k \left(c - p_{s}(1-p_{s})^{k-1} (1-p_{l})^{n-k} \right)^{2} \nonumber \\
& & \negmedspace{} + (n-k) \left(c - p_{l} (1-p_{s})^{k} (1-p_{l})^{n-k-1} \right)^{2} \nonumber \\
& = & c^{2} (n-k) \frac{(n-k)(n-2) + 1 - 2 k p_{s} + k n p_{s}^{2}}{(n-k-p_{s} + k p_{s}^{2})^{2}}.\IEEEeqnarraynumspace
\end{IEEEeqnarray}

Fixing $p_{s} \in (0, \frac{1}{n})$ temporarily, we now show $\tilde{f}$ is monotone increasing in $k$ for $k \in \{1,\ldots,n-1\}$. Denote $u = u(p_s,k) = (n-k)(n-2) + 1 - 2 k p_{s} + k n p_{s}^{2}$ and $v = v(p_s,k) = n-k-p_{s} + k p_{s}^{2}$ so that 
\begin{equation}
\tilde{f} (p_{s}, k) = c^2(n-k) \frac{u(p_s,k)}{v(p_s,k)^2}.  
\end{equation}
It is straightforward to establish that $u \geq 0$, $v \geq 0$ under the given assumptions.  Taking the derivative of $\tilde{f}$ w.r.t.\ $k$: 
\begin{IEEEeqnarray}{rCl}
\frac{\drm }{\drm k} \tilde{f}(p_s,k)
= \frac{c^{2}}{v(p_{s}, k)^{3}}  \cdot h(p_{s},k),
\end{IEEEeqnarray}
where
\begin{IEEEeqnarray}{rCl}
\IEEEeqnarraymulticol{3}{l}
{h(p_{s},k) \equiv - u v + (n-k) v (n p_{s}^{2} - 2 p_{s} - (n-2)) 
}\nonumber \\* \qquad \qquad ~~
& & \negmedspace{} - 2 (n-k) u (p_{s}^{2}-1).\IEEEeqnarraynumspace
\end{IEEEeqnarray}
Therefore showing $\frac{\drm \tilde{f}}{\drm k} > 0$ is equivalent to showing $h (p_{s},k) > 0$.  Toward this, observe the third summand in $h (p_{s},k)$ can be split evenly to be combined with the first and second summands, thus
\begin{equation}
h (p_{s},k) = (1 - n p_{s}) \left( u p_{s} + (n-k) (1-p_{s})^{2} \right) > 0.
\end{equation}
It follows that, for fixed $p_{s} \in (0, \frac{1}{n})$, $\tilde{f}(p_{s}, k)$ is maximized at $k = n-1$.  The global maximum of $\tilde{f}(p_s,k)$ is obtained by further optimizing $\tilde{f}(p_{s}, n-1)$ over $p_{s} \in (0, \frac{1}{n})$. Setting $k = n-1$ in \eqref{eqn:fobj} gives
\begin{equation} 
\label{eq:fpkn-1}
\tilde{f}(p_{s}, n-1) = c^{2} (n-1) \frac{n p_{s}^{2} - 2 p_{s} + 1}{\left( (n-1) p_{s}^{2} - p_{s} + 1 \right)^{2}},
\end{equation}
for which 
\begin{equation}
\left. \frac{\partial \tilde{f}(p_{s},k)}{\partial p_s} \right|_{k=n-1} = - 2 \frac{c^{2} (n-1)^{2}}{v} p_{s} \left( n p_{s}^{2} - 3 p_{s} + 1 \right) < 0.
\end{equation}
The inequality holds since the quadratic $n p_{s}^{2} - 3 p_{s} + 1$ can be verified to be positive for $n \geq 2$ and $p_{s} \in (0, \frac{1}{n})$. Therefore, the maximum (indeed supremum) of $\tilde{f}(p_{s}, n-1)$ is obtained when $p_{s} \to 0$ (meaning in the limit $\ebf_{i}$ is the maximizer although $\ebf_{i}$ itself does not satisfy \eqref{eq:condIfTwoDistinctNonzeroValues}), which according to \eqref{eq:fpkn-1} is $(n-1)c^{2}$. These monotonicity properties are illustrated in Fig.\ \ref{fig:monotonicitiesLsi-withLegend}.

\begin{figure}[!ht]
\centering
\includegraphics[width=0.45\textwidth]{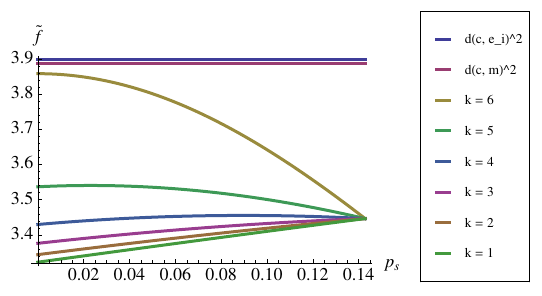}
\caption{Monotonicity of function $\tilde{f}(p_s,k)$ w.r.t.\ $k$ and $p_{s}$ for $\pbf \in\Pmc^{(2)}_{a,2}$ (Step 3) in the spherical inner bound $\Lsi$ proof. Horizontal axis denotes $p_{s} \in (0, 1/n)$ with $n=7$. Here $c = 0.99 c_{\rm in}^{*} < c_{\rm in}^{*}$ so $d(\cbf, \ebf_{i}) > d(\cbf, \mbf)$. Given $p_{s}$, $\tilde{f}$ is increasing in $k$; when $k = n-1$, $\tilde{f}$ is decreasing in $p_{s}$ so the supreme of $\tilde{f}$ over the set $\Pmc^{(2)}_{a,2}$ is achieved as $p_{s} \to 0$, which is $(n-1)c^{2} \approx 3.85807$.}
\label{fig:monotonicitiesLsi-withLegend}
\end{figure}
Observe when we maximize $\tilde{f}(p_s,k)$ we effectively enlarge the feasible set from $\Pmc^{(2),*}_{a,2}$ to $\Pmc^{(2)}_{a,2}$ because we do not check whether \eqref{eq:condIfTwoDistinctNonzeroValues} is satisfied. Recall $f$ is identically equal to $\tilde{f}$ only for $\pbf \in \Pmc^{(2),*}_{a,2}$ because $\tilde{f}$ is derived from $f$ by applying \eqref{eq:condIfTwoDistinctNonzeroValues}. Therefore we have 
\begin{IEEEeqnarray}{rCl}
f(\pbf') = \tilde{f}(\pbf') \leq \max_{\pbf \in \Pmc^{(2),*}_{a,2}} \tilde{f}(\pbf) & \leq & \max_{\pbf \in \Pmc^{(2)}_{a,2}} \tilde{f}(\pbf) \nonumber \\
& = & (n-1) c^{2}, \forall \pbf' \in \Pmc^{(2),*}_{a,2}.\IEEEeqnarraynumspace
\label{eq:si_ub}
\end{IEEEeqnarray}

Summarizing, so far we have shown, suppose there exists a potential extremizer $\pbf$ whose components are all non-zero but not all identical, then in order to satisfy the first-order KKT necessary conditions, the original objective function $f$ evaluated at such a point is \textit{upper bounded} by $\tilde{f}(0,n-1) = (n-1) c^{2}$. Now since $f(\ebf_{i}) = d(\cbf, \ebf_{i})^{2} = (n-1)c^{2} + (c-1)^{2}$, this means no $\pbf \in\Pmc^{(2)}_{a,2}$ can achieve a higher objective value than $\ebf_{i}$ does (i.e., case $\Pmc^{(2)}_{b,1}$) in terms of globally maximizing the original objective function. In fact, this property does not depend on choosing the thresholding $c_{\rm in}^{*} = (1 - n m^2)/(2(1 - n m))$. This property is useful in Step 4 below.

\textbf{Step 4}: Any $\pbf \in \Pmc^{(2)}_{b,2}$ cannot be a global maximizer.  Fix $\pbf \in \Pmc^{(2)}_{b,2}$ and let $s = n - |\Zmc(\pbf)| \in \{2,\ldots,n-1\}$ be the number of non-zero components.  Evaluating the original objective function, \eqref{eq:si_opt_second}, for such a point yields
\begin{equation}
f (\pbf) = (n-s) (c - 0)^{2} + \sum_{i=1}^{s} \left(c - p_{i} \prod_{j \neq i}^{s} (1-p_{j}) \right)^{2}.
\end{equation}
For each given $s$, $f(\pbf)$ is maximized if and only if the second summand above is maximized.  Maximizing the above second summand can be thought of as performing the same optimization problem in an $s$-dimensional space where the $s$-vector $\pbf$ duplicates all the $s$ non-zero components from the original $n$-vector $\pbf$.  Then, one may view this $s$-vector with no zero components as a member of $\Pmc^{(2)}_{a}$, but with the dimension reduced from $n$ to $s$. There are two possibilities: this point is either in $\Pmc^{(2)}_{a,2}$ or in $\Pmc^{(2)}_{a,1}$. 

Consider the first possibility, i.e., $\Pmc^{(2)}_{a,2}$.  Based on the analysis of this case in Step 3 (with the dimension reduced from $n$ to $s$), and the upper bound \eqref{eq:si_ub} in particular, it follows that
\begin{equation}
\sum_{i=1}^{s} \left(c - p_{i} \prod_{j \neq i}^{s} (1-p_{j}) \right)^{2} \leq (s-1)c^{2},
\end{equation}
and hence $f (\pbf) \leq (n-s) (c - 0)^{2} + (s-1)c^{2} = (n-1) c^{2}$, which is the same upper bound for candidates in case $\Pmc^{(2)}_{a,2}$ in the original $n$-dimensional space. It follows that, in this reduced dimensional space, points in $\Pmc^{(2)}_{a,2}$ cannot achieve a higher objective value than that achieved by the points $\ebf_{i}$ in the original space.

Consider the second possibility, i.e., $\Pmc^{(2)}_{a,1}$, namely the all-rates-equal point in this $s$-dimensional space.  There are two subcases: $i)$ $c \leq c_{\rm in}^{*}(s) = (1 - s m(s)^2)/(2(1 - s m(s)))$, and $ii)$ $c > c_{\rm in}^{*}(s)$. Note we write $c_{\rm in}^{*}(s)$ to highlight it is a function of $s$, the corresponding dimension. Case $i)$ can be skipped, due to \eqref{eq:cinstarOneToOne} and the observation that the $\ebf_{i}$ in this $s$-dimensional space is also the $\ebf_{i}$ in the original $n$-dimensional space. Recall, $\ebf_{i}$ (in the set $\Pmc_{b,1}^{(2)}$) will be addressed in the final step. For case $ii)$ we now directly show the all-rates-equal point in this $s$-dimensional space cannot achieve a higher objective value than the all-rates-equal point in the original space does.   First, it is straightforward to establish the inequality 
\begin{IEEEeqnarray}{rCl}
\IEEEeqnarraymulticol{3}{l}
{n \left( c - \frac{1}{n} \left(1 - \frac{1}{n} \right)^{n-1}\right)^{2} = f(\mbf(n)) 
}\nonumber \\* \quad
& \geq & f(\mbf(s)) = (n-s) c^{2} + s \left( c - \frac{1}{s} \left(1 - \frac{1}{s} \right)^{s-1}\right)^{2} \IEEEeqnarraynumspace
\label{eq:sinproof1}
\end{IEEEeqnarray}
holds if and only if 
\begin{equation}
\label{eq:sinproof2}
2 c (s m(s) - n m(n)) \geq s m^{2}(s) - n m^{2}(n).
\end{equation}
Since $c > c_{\rm in}^{*}(s)$, \eqref{eq:sinproof2} is equivalent to
\begin{IEEEeqnarray}{rCl}
\IEEEeqnarraymulticol{3}{l}
{\left[ s m(s) (1 - m(s)) - n m(n) (1 - m(n)) \right] 
}\nonumber \\* \qquad \quad ~~
& & \negmedspace{} +  s n m(s) m(n) (m(s) - m(n)) \geq 0.
\label{eq:desiredInequalityReducedDim}
\end{IEEEeqnarray}
Since $s n m(s) m(n) (m(s) - m(n)) \geq 0$, to show \eqref{eq:desiredInequalityReducedDim} it suffices (as the terms in the brackets would be non-negative) to show the function $g(n) \equiv n m(n) (1 - m(n))$ is monotone decreasing in $n$ for $n \geq 3$ (recall $2 \leq s \leq n-1$). Toward this we find the derivative of $g(n)$ as 
\begin{IEEEeqnarray}{rCl}
\frac{\drm g(n)}{\drm n}
 =  \frac{1}{2 n^{2}} \left(1-\frac{1}{n}\right)^{n-1} \cdot \tilde{g}(n),
\end{IEEEeqnarray}
where
\begin{IEEEeqnarray}{rCl}
\tilde{g}(n) & \equiv & -2 \left(1 - \frac{1}{n} \right)^{n-1}+ 2n \nonumber \\
& & \negmedspace{} + 2 n^{2} \left(1 - \frac{2}{n} \left(1-\frac{1}{n}\right)^{n-1}\right) \log\left(1 - \frac{1}{n}\right).\IEEEeqnarraynumspace
\end{IEEEeqnarray}
It now suffices to show $\tilde{g}(n) < 0$. For $n = 3, 4$ this can be verified; for $n \geq 5$ we apply the inequality $\log(1+x) \leq x - \frac{x^{2}}{2}$ for all $x \in (-1,0]$ and get
\begin{equation}
\label{eq:tildegOfnsiAndso}
 \tilde{g}(n) \leq 2 \left(1 - \frac{1}{n} \right)^{n-1} \left(1 + \frac{1}{n} \right) - 1 < 0,
\end{equation}
where the second inequality follows from the monotonicity in $n$ of the upper bound (in \eqref{eq:tildegOfnsiAndso}) on $\tilde{g}(n)$. Therefore we have shown the desired inequality \eqref{eq:desiredInequalityReducedDim}. This means that, even if a point is from $\Pmc^{(2)}_{a,1}$, it cannot be a global maximizer as it cannot achieve a higher objective value than $\ebf_{i}$ (case $i)$) and/or $\mbf$ (case $ii)$) in the original $n$-dimensional space does. This concludes Step 4.

\textbf{Finally}, we are left with only cases $\Pmc^{(2)}_{a,1}$ and $\Pmc^{(2)}_{b,1}$. We can verify by checking the KKT conditions that $\mbf$ is always eligible to be a local extremizer, while $\ebf_{i}$ is eligible to be a local maximizer if and only if $c \leq 1$. Therefore, we conclude the global maximum of the original optimization problem can be obtained by evaluating and comparing $f$ at two points $\mbf$, $\ebf_{i}$. Furthermore, recall the objective function is defined as $d(\cbf, \xbf)^{2}$ for $\xbf \in \partial \Lambda$. Then as a consequence of \eqref{eq:cinstarOneToOne}, we can actually conclude in a more general manner: the global maximum occurs at $i)$ any $\ebf_i$ when $c < c_{\rm in}^*$, $ii)$ any $\ebf_i$ and $\mbf$ when $c = c_{\rm in}^*$, and $iii)$ $\mbf$ when $c > c_{\rm in}^*$.  See Fig.\ \ref{fig:sphericalInnerBoundsUnderDifferentcs} for an illustration. For $\Lsi^{*}$, the global maximizers are both $\ebf_{i}$ and $\mbf$ giving the maximum of $f$ as $\left(n-1\right)c_{\rm in}^{*2} + \left(c_{\rm in}^{*} - 1\right)^{2} = $ $n\left(c_{\rm in}^{*}-m\right)^{2}$, as desired in \eqref{eq:Lambdasi}.
\end{IEEEproof}

\begin{figure}[!ht]
\centering
\includegraphics[width=0.3\textwidth]{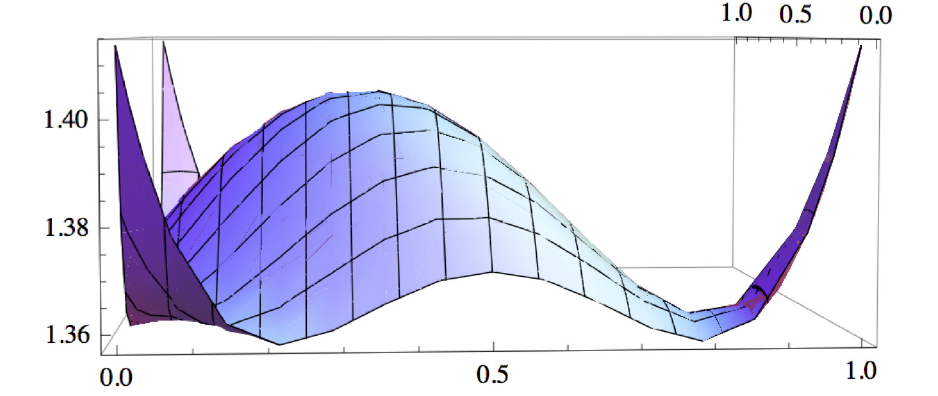}
\includegraphics[width=0.3\textwidth]{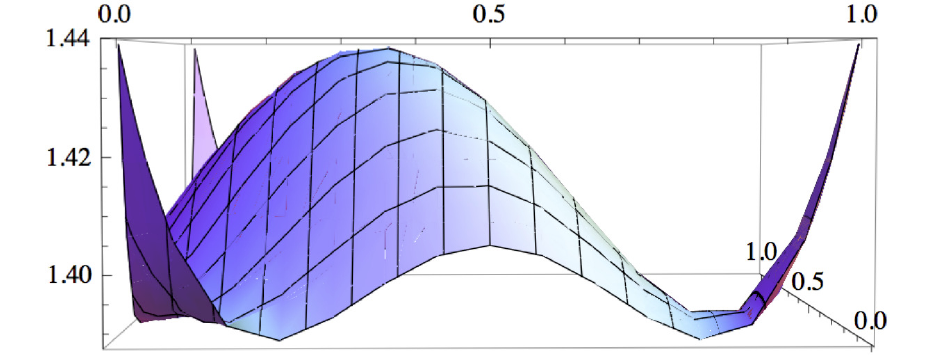}
\includegraphics[width=0.3\textwidth]{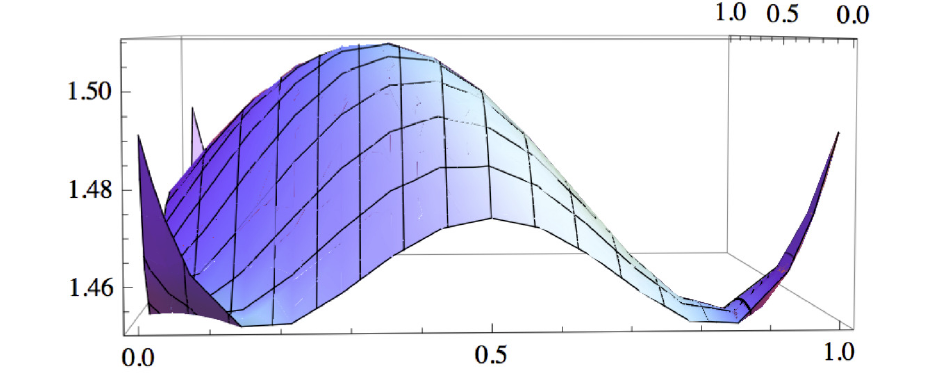}
\caption{Original objective function $f(\pbf)$ for $\Lsi$ when $n=3$ for various $c$.  Horizontal axes are $p_{2}, p_{3}$.  
{\bf Top:} $c = 0.99 c_{\rm in}^{*}$ with global maximizers: $\ebf_{i}$.
{\bf Middle:} $c = c_{\rm in}^{*}$, with global maximizers: $\ebf_{i}$ \& $\mbf$.
{\bf Bottom:} $c = 1.02 c_{\rm in}^{*}$, with global maximizer:$\mbf$.}
\label{fig:sphericalInnerBoundsUnderDifferentcs}
\end{figure}

\begin{remark} \label{remark:LsiStarBetterThanLpiStar}
Since the hyperplane inducing the optimal polyhedral inner bound $\Lpi^{*}$ is a supporting hyperplane of the convex body $\Bmc(\cbf, r_{\rm in}(c))$, due to the constructions of $\Lpi^{*}$ and $\Lsi$ it follows that $\Lsi$ is always tighter than $\Lpi^{*}$.
\end{remark}

We now proceed to the spherical outer bound. There are many similarities between the definitions, propositions and proof techniques for the spherical inner and outer bounds. In both cases there exists a set inclusion relationship which implies the optimal bound arises when $c$ is chosen to be the minimum possible.  

\begin{definition} 
\label{def:Lso}
$\Lso(c) \equiv \Smc \setminus \Bmc(\cbf, r_{\rm out}(c))$, where the center of the ball $\cbf = c \mathbf{1}$ for all $c \geq 1$, and its radius $r_{\rm out}(c) \equiv d(\cbf, \ebf_{i}) = \sqrt{(c-1)^{2} + (n-1)c^{2}}$.
\end{definition}

\begin{proposition}[spherical outer bound]
\label{prop:Lambdaso}
For each $c \geq 1$, the set $\Lso(c)$ is an outer bound on $\Lambda$ for $n \geq 2$. Among these, the tightest is given when $c = c_{\rm out}^{*} = 1$:
\begin{equation}
\Lso^{*} =  \Lso(c_{\rm out}^{*})  =  \left\{ \xbf \in \Smc : \|\xbf - \mathbf{1}\| \geq \sqrt{n-1} \right\}.
\end{equation}
\end{proposition}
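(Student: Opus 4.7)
The plan is to mirror the proof of Proposition \ref{prop:Lambdasi}, working with a minimization (rather than maximization) problem, since we now want $\min_{\xbf \in \partial \Lambda} d(\cbf,\xbf)^{2} \geq r_{\rm out}(c)^{2}$. By construction each $\ebf_{i}$ lies on $\partial \Bmc(\cbf, r_{\rm out}(c))$, so the claim amounts to showing that $\ebf_{i}$ is a global minimizer of $d(\cbf, \cdot)^{2}$ over $\partial\Lambda$. For Step 1 (optimality within the family), rather than ball nesting (which in fact fails here: the spheres all pass through the common points $\ebf_{1}, \ldots, \ebf_{n}$, so e.g.\ for $n=2$ they intersect transversally) we use a nesting relativized to $\Smc$. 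A direct calculation gives, for any $\xbf \in \Smc$,
\begin{equation*}
d(c \mathbf{1}, \xbf)^{2} - r_{\rm out}(c)^{2} = 2 c \left( 1 - \mathbf{1}^{\Tsf} \xbf \right) + \| \xbf \|^{2} - 1,
\end{equation*}
which is non-decreasing in $c$ since $\mathbf{1}^{\Tsf} \xbf \leq 1$ on $\Smc$. Hence $\Smc \cap \Bmc(c_{1} \mathbf{1}, r_{\rm out}(c_{1})) \supseteq \Smc \cap \Bmc(c_{2} \mathbf{1}, r_{\rm out}(c_{2}))$ for $c_{1} \leq c_{2}$, so $\Lso(c_{1}) \subseteq \Lso(c_{2})$ and the tightest bound in the family is attained at the smallest valid $c$. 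To identify this smallest valid $c$ as $c_{\rm out}^{*} = 1$, parametrize $\pbf \in \partial \Smc$ near $\ebf_{i}$ by $p_{i} = 1 - \epsilon_{i}$, $p_{j} = \epsilon_{j}$ for $j \neq i$ with $\sum_{j \neq i} \epsilon_{j} = \epsilon_{i}$; a first-order expansion of \eqref{eq:xOfp} gives $x_{i}(\pbf) = 1 - 2 \epsilon_{i} + O(\epsilon^{2})$ and $x_{k}(\pbf) = O(\epsilon^{2})$ for $k \neq i$, yielding $d(\cbf,\xbf(\pbf))^{2} - r_{\rm out}(c)^{2} \approx 4 \epsilon_{i} (c - 1)$, which is strictly negative for $c < 1$ and small $\epsilon_{i} > 0$, placing nearby points of $\partial\Lambda$ inside the ball.

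Having reduced to $c \geq 1$, the bijection $\partial \Smc \leftrightarrow \partial \Lambda$ of Corollary \ref{cor:bijectionbetweenSandLambda} converts the desired inequality into
\begin{equation*}
\min_{\pbf \in \partial \Smc} f(\pbf) \equiv \sum_{i=1}^{n} \left( c - p_{i} \prod_{j \neq i} (1 - p_{j}) \right)^{2} \geq (c - 1)^{2} + (n-1) c^{2},
\end{equation*}
with equality attained at each $\ebf_{i}$. The KKT analysis is structurally identical to Step 2 of the proof of Proposition \ref{prop:Lambdasi} (stationarity and complementary slackness are indifferent to max versus min): the quantities $(c - p_{i} \pi_{i}) \pi_{i}$ must coincide across all $i$ with $p_{i} > 0$, so any potential local minimizer has at most two distinct non-zero component values, and when exactly two such values $0 < p_{s} < p_{l}$ are present (with multiplicities $k$ and $n-k$) they satisfy the very same constraint \eqref{eq:condIfTwoDistinctNonzeroValues}. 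The candidate set decomposes into the four classes $\Pmc^{(2)}_{a,1}$, $\Pmc^{(2)}_{a,2}$, $\Pmc^{(2)}_{b,1}$, $\Pmc^{(2)}_{b,2}$ exactly as in the inner-bound proof.

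The main obstacle is the analogue of Steps 3 and 4 of the inner-bound proof, carried out with the inequality reversed. Substituting \eqref{eq:condIfTwoDistinctNonzeroValues} into $f$ yields the same auxiliary function $\tilde{f}(p_{s}, k)$ given by \eqref{eqn:fobj}, and one must establish $\tilde{f}(p_{s}, k) \geq (c-1)^{2} + (n-1) c^{2}$ for all admissible $(p_{s}, k)$ under $c \geq 1$; the monotonicity calculations of $\tilde{f}$ in $k$ and in $p_{s}$ go through with the sign reversals, and the hypothesis $c \geq 1$ is expected to be precisely what pins down the endpoint behaviour. A reduced-dimension argument analogous to Step 4 of the inner-bound proof handles $\Pmc^{(2)}_{b,2}$. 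What remains is the comparison of the two symmetric candidates $\ebf_{i}$ and $\mbf$, for which a direct calculation gives
\begin{equation*}
f(\mbf) - f(\ebf_{i}) = n (c - m)^{2} - (c - 1)^{2} - (n-1) c^{2} = 2 (1 - n m) \left( c - c_{\rm in}^{*} \right),
\end{equation*}
which is strictly positive for $c \geq 1$ since $1 - n m > 0$ and $c_{\rm in}^{*} < 1$ for all $n \geq 2$. Hence $\ebf_{i}$ is the global minimizer, $\min_{\pbf \in \partial \Smc} f(\pbf) = r_{\rm out}(c)^{2}$, and combining with Step 1 yields optimality of $c_{\rm out}^{*} = 1$.
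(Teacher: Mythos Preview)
Your Step 1 is correct and matches the paper's tightness argument (the identity $d(c\mathbf{1},\xbf)^{2}-r_{\rm out}(c)^{2}=2c(1-\mathbf{1}^{\Tsf}\xbf)+\|\xbf\|^{2}-1$ is exactly what the paper computes), your first-order expansion near $\ebf_{i}$ to rule out $c<1$ is a valid alternative to the paper's coordinate-axis argument, and the final comparison $f(\mbf)>f(\ebf_{i})$ for $c\geq 1$ is fine.

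The gap is in your analogue of Steps 3 and 4. The assertion that ``the monotonicity calculations of $\tilde{f}$ in $k$ and $p_{s}$ go through with the sign reversals'' is not correct: the sign of $\partial\tilde{f}/\partial k$ in the proof of Proposition~\ref{prop:Lambdasi} was established independently of $c$, so $\tilde{f}$ is increasing in $k$ for every $c$ and nothing reverses. Worse, the bound you want, $\tilde{f}(p_{s},k)\geq (c-1)^{2}+(n-1)c^{2}$ on the enlarged set $\Pmc^{(2)}_{a,2}$, is false: with $n=2$, $c=1$, $k=1$, $p_{s}=1/4$ one computes $\tilde{f}(1/4,1)=(5/8)/(13/16)^{2}=160/169<1=f(\ebf_{i})$. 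The inner-bound trick of enlarging the feasible set works for a maximum but runs the wrong way for a minimum.

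The paper bypasses the whole $\tilde{f}$ machinery with a one-line observation you have essentially already derived: any critical point in $\Pmc^{(2),*}_{a,2}$ must satisfy \eqref{eq:eqvForc}, i.e.\ $c=(1-p_{s})^{k-1}(1-p_{l})^{n-k-1}(1-p_{s}p_{l})$, but the right side is a product of factors in $(0,1)$ and hence strictly less than $1\leq c$. So $\Pmc^{(2),*}_{a,2}=\emptyset$ for $c\geq 1$, and Step 3 is vacuous. (The paper phrases this somewhat differently: it minimizes over all of $[0,1]^{n}$ at $c=1$, and notes that stationarity would force $\pi_{i}(1-p_{i}\pi_{i})$ to be constant across nonzero $p_{i}$, which Lemma~\ref{lem:monoLso} shows is impossible whenever two $p_{i}$'s differ.) Either route collapses the two-value case entirely. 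What genuinely remains is the quasi-uniform case, and here you need more than comparing $\mbf$ with $\ebf_{i}$: the points with $k$ components equal to $1/k$ and $n-k$ zeros, $2\leq k\leq n-1$, must also be checked, and this is the content of Lemma~\ref{lem:QULambdaso} in the paper.
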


\begin{IEEEproof}
By Def.\ \ref{def:Lso}, in order to induce an outer bound we must have $c > m$.  Moreover, $d (\cbf, \mbf) > d (\cbf, \ebf_{i}) = r_{\rm out}$, which is equivalent to $c > (1 - n m^2)/(2(1 - n m))$.  This means there remain two possible intervals for $c$: $i)$ $(1 - n m^2)/(2(1 - n m)) < c < 1$ and $ii)$ $c \geq1$. For each one we investigate whether a ball with parameter $c$ in that interval induces a valid outer bound on $\Lambda$.

For case $i)$, we can compute that $\ebf_1,\xbf_b \in \partial \Bmc(\cbf, r_{\rm out})$ for $\xbf_b \equiv (2c-1) \ebf_{1}$, i.e., the boundary of the ball intersects the first coordinate axis at these two points.  As $c > (1 - n m^2)/(2(1 - n m)) > \frac{1}{2}$, the line segment $\overline{\xbf_{b} \ebf_{1}} \subseteq \Lambda $ due to $\Lambda$'s coordinate convexity.  Furthermore, since the open line segment $\overline{\xbf_{b} \ebf_{1}} \subseteq \Bmc(\cbf, r_{\rm out})$, we find $\Lambda \not\subseteq \Smc \setminus \Bmc(\cbf, r_{\rm out})$ namely $\Bmc(\cbf, r_{\rm out})$ does not induce a valid outer bound.

It remains to investigate case $ii)$. In the rest of this proof we first show every $c$ in this category gives a valid outer bound and furthermore, $c = c_{\rm out}^{*} = 1$ yields the tightest bound.  We first show $c = c_{\rm out}^{*} = 1$ yields the smallest set, i.e., we show $\Lso^{*} \subseteq \Lso(c)$ for each $c \geq 1$.
Note the equivalence
\begin{IEEEeqnarray}{rCl}
\IEEEeqnarraymulticol{3}{l}
{\Lso^{*} \subseteq \Lso(c) 
}\nonumber \\* \quad
& \iff &  \Smc \setminus \left( \Bmc(\mathbf{1}, \sqrt{n-1}) \cap \Smc \right) \subseteq \Smc \setminus \left( \Bmc(\cbf, r_{\rm out}(c)) \cap \Smc \right) \nonumber \\
& \iff & \Bmc(\cbf, r_{\rm out}(c)) \cap \Smc \subseteq \Bmc(\mathbf{1}, \sqrt{n-1}) \cap \Smc.
\end{IEEEeqnarray}
Therefore we seek to prove: $\forall \xbf \in \Smc$, if $d(\cbf, \xbf)^{2} < r_{\rm out}(c)^{2}$ (i.e., $\xbf \in \Bmc(\cbf, r_{\rm out}(c)) \cap \Smc$), then $d(\mathbf{1}, \xbf)^{2} < \left(\sqrt{n-1}\right)^{2}$ (i.e., $\xbf \in \Bmc(\mathbf{1}, \sqrt{n-1}) \cap \Smc$). Suppose $\xbf \in \Smc$ is such that $d(\cbf, \xbf)^{2} < r_{\rm out}(c)^{2}$.  Then, we compute:
\begin{IEEEeqnarray}{rCl}
\IEEEeqnarraymulticol{3}{l}
{d(\mathbf{1}, \xbf)^{2} = \sum_{i=1}^{n} (1-x_{i})^{2} =  \sum_{i=1}^{n} \left( \left( c - x_{i} \right) - \left( c - 1 \right) \right)^{2}
}\nonumber \\* \quad
& = & \sum_{i=1}^{n} \left( c - x_{i} \right)^{2} + n (c-1)^{2} - 2 (c-1) \left( n c - \sum_{i=1}^{n} x_{i} \right) \nonumber \\
& < & (c-1)^{2} + (n-1) c^{2} + n (c-1)^{2} - 2 (c-1) (n c - 1) \nonumber \\
& = & n-1,
\end{IEEEeqnarray}
where the inequality follows from $d(\cbf, \xbf)^{2} < r_{\rm out}(c)^{2}$ and $\sum_{i=1}^{n} x_{i} \leq 1$. This shows the desired set inclusion, meaning $\Lso^{*}$ is the smallest set among $\{\Lso(c), c \geq 1\}$. 

It remains to show that $\Lso^*$ is a valid outer bound on $\Lambda$.  For any $\xbf \in \Lambda = \Lambda_{\rm eq}$ we must show $\xbf \in \Lso^{*}$, namely $\xbf$ is outside the open ball $\Bmc(\cbf,1)$, or equivalently $\min_{\xbf \in \Lambda} \|\mathbf{1} - \xbf \|^{2} \geq n-1$.  This latter expression may be cast as an optimization problem w.r.t. $\pbf$:
\begin{equation} 
\label{eq:LsostarObjFunc}
\min_{\pbf \in [0,1]^n} f(\xbf(\pbf)) \equiv \sum_{i=1}^{n} \left(1 - p_i \prod_{j \neq i}(1-p_j) \right)^2 \geq n-1.
\end{equation}
Observe if any component of $\pbf$ equals $1$ or if $\pbf = \mathbf{0}$ then $f \geq n-1$ immediately holds. So below we assume $\pbf < \mathbf{1}$ and $\pbf$ has non-zero component(s).  Recall we defined $\Vmc(\pbf) = \{ a \in (0,1] : \exists i \in [n] : p_i = a\}$ in the proof of Prop.\ \ref{prop:Lambdasi} as the set of non-zero values taken by a vector $\pbf$.  We now categorize based on how many distinct non-zero values the components of $\pbf$ assume: $a)$ $|\Vmc(\pbf)| > 1$ or $b)$ $|\Vmc(\pbf)| = 1$. 

Consider first case $a)$ ($|\Vmc(\pbf)| > 1$), i.e., $\pbf$ has two or more distinct non-zero component values, say $0 < p_k < p_l < 1$.  We will show that all such $\pbf$'s cannot be local extremizers due to the violation of Karush-Kuhn-Tucker (KKT) conditions required for optimality (note regularity is guaranteed in this case).  An equivalent form of the KKT stationarity condition is that 
\begin{equation}
\frac{1}{2} \left( \frac{\partial f}{\partial p_k} - \frac{\partial f}{\partial p_l}\right) \left( 1-p_{k}\right)\left( 1-p_{l}\right)  = 0, ~ 0< p_k < p_l < 1, 
\end{equation}
which, after some algebra, may be shown to be equivalent to:
\begin{equation}
\left(1-p_{l}\right) \left( \eta - \pi_{k} (1-p_{k}\pi_{k})\right) = \left(1-p_{k}\right) \left( \eta - \pi_{l} (1-p_{l}\pi_{l})\right),
\end{equation}
where $\eta \equiv \sum_{i} \pi_{i} (1-p_{i}\pi_{i}) p_{i}, ~\pi_{i} = \pi_{i}(\pbf) \equiv \pi (\pbf) /(1-p_{i})$. Note $\eta$ can be interpreted as the expectation of a discrete random variable $Z$ with support $\{\pi_{i}(1-p_{i}\pi_{i}), i \in [n]\}$ and associated PMF $\Pbb\left(Z = \pi_{i}(1-p_{i}\pi_{i}) \right) = p_{i}$ for each $i \in [n]$.  Therefore, stationarity will not be satisfied as long as we can choose indices $k', l'$ such that $0< p_{k'} < p_{l'} < 1$, and $\eta$ lies strictly between $\pi_{k'} (1-p_{k'}\pi_{k'})$ and $\pi_{l'} (1-p_{l'}\pi_{l'})$.  But, we {\em can} always find such indices since, following Lem.\ \ref{lem:monoLso}, we can show the ordering: $\pi_{k'} (1-p_{k'}\pi_{k'}) < \pi_{l'} (1-p_{l'}\pi_{l'})$. This rules out the possibility that an extremizer can come from case $a)$.

Consider next case $b)$ ($|\Vmc(\pbf)| = 1$), i.e., $\pbf$ has only one distinct non-zero component value (we call such $\pbf$ ``quasi-uniform'').  Lem.\ \ref{lem:QULambdaso} below states that for any such $\pbf$, the objective function $f(\pbf) \geq n-1$, the desired lower bound in \eqref{eq:LsostarObjFunc}.

These two cases establish the validity of the inequality \eqref{eq:LsostarObjFunc}, and thereby establish the fact that $\Lso^*$ is a valid outer bound for $\Lambda$.
\end{IEEEproof}

The following two lemmas are used in the preceding proof of Prop.\ \ref{prop:Lambdaso}. The proof of Lem.\ \ref{lem:monoLso} is straightforward and is omitted.

\begin{lemma} \label{lem:monoLso}
If two non-zero components of $\pbf$ satisfy $p_{l} > p_{k}$, then for $\xbf = \xbf (\pbf)$ as defined in \eqref{eq:xOfp}:
\begin{equation}
x_{l} > x_{k}, ~~~ \pi_{l} > \pi_{k}, ~~~ \frac{x_{l}}{x_{k}} > \frac{p_{l}}{p_{k}}, ~~~ x_{l}-x_{k} < p_{l} - p_{k}.
\end{equation}
\end{lemma}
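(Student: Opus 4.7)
The plan is to reduce all four inequalities to direct algebraic manipulations by factoring out the common product over indices other than $k$ and $l$. Define
\[
\pi_{kl} \equiv \prod_{j \neq k, l} (1-p_j),
\]
which is well-defined and in $(0,1]$ since every $p_j \in [0,1)$ (any $p_j = 1$ would force the other components to vanish, so the hypothesis of two non-zero components excludes this). Using $\pi_i(\pbf) = \pi(\pbf)/(1-p_i)$ from Def.\ \ref{def:fOfdeltaxAndpOfdeltaxAndxOfp}, one gets the clean factorizations $\pi_l = (1-p_k)\pi_{kl}$ and $\pi_k = (1-p_l)\pi_{kl}$, and consequently $x_l = p_l(1-p_k)\pi_{kl}$ and $x_k = p_k(1-p_l)\pi_{kl}$.

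From these, each of the four claims falls out immediately. First, $\pi_l/\pi_k = (1-p_k)/(1-p_l) > 1$ because $p_l > p_k$, yielding $\pi_l > \pi_k$. Second, the key algebraic identity
\[
x_l - x_k = \left[p_l(1-p_k) - p_k(1-p_l)\right]\pi_{kl} = (p_l - p_k)\pi_{kl}
\]
simultaneously yields $x_l > x_k$ (since $p_l - p_k > 0$ and $\pi_{kl} > 0$) and $x_l - x_k < p_l - p_k$ (since $\pi_{kl} < 1$; strictness uses that there is at least one additional index contributing a factor $(1-p_j) < 1$, which is the setting in which the lemma is invoked in the proof of Prop.\ \ref{prop:Lambdaso}). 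Third,
\[
\frac{x_l}{x_k} = \frac{p_l(1-p_k)}{p_k(1-p_l)} = \frac{p_l}{p_k} \cdot \frac{1-p_k}{1-p_l} > \frac{p_l}{p_k},
\]
again because $(1-p_k)/(1-p_l) > 1$.

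There is essentially no obstacle here: the entire lemma is a consequence of isolating the common factor $\pi_{kl}$. The only subtlety worth remarking on is the strict inequality in the fourth claim, which requires $\pi_{kl} < 1$; this is automatic as long as there exists some index $j \notin \{k,l\}$ with $p_j > 0$, which holds in the setting where the lemma is applied (case $a)$ of the proof of Prop.\ \ref{prop:Lambdaso} with $|\Vmc(\pbf)| > 1$ and the indices $k',l'$ chosen appropriately).
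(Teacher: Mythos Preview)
Your argument is correct and is the natural elementary approach; the paper itself omits the proof entirely, so there is no alternative method to compare against. The one point worth flagging is the fourth inequality $x_l - x_k < p_l - p_k$: as you observe, your identity $x_l - x_k = (p_l - p_k)\pi_{kl}$ gives strictness only when $\pi_{kl} < 1$, i.e., when some $p_j > 0$ for $j \notin \{k,l\}$. You correctly note this caveat and its harmlessness in the application to Prop.~\ref{prop:Lambdaso}, but strictly speaking the lemma as stated (without that hypothesis) would only yield $x_l - x_k \leq p_l - p_k$ in general.
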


\begin{lemma}
\label{lem:QULambdaso}
Fix $t \in (0,1]$ and $k \in [n]$.  Suppose $\pbf < \mathbf{1}$ ($\pbf \neq \mathbf{0}$) takes only one non-zero value (i.e., $|\Vmc(\pbf)|=1$, and $p_i \in \{0,t\}$ for $i \in [n]$), and this value is taken by $k$ components of $\pbf$.  Then $f(\pbf) \geq n-1$ (for $f$ in \eqref{eq:LsostarObjFunc}), with equality if and only if $k=t=1$, i.e., $f(\pbf) = n-1$ if and only if $\pbf \in \{\ebf_i\}_{i=1}^n$.
\end{lemma}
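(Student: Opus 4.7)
The plan is to evaluate $f(\pbf)$ explicitly for a $\pbf$ with exactly $k$ components equal to $t$ and the remaining $n-k$ equal to $0$. For each zero-valued component the service rate $p_i \prod_{j\neq i}(1-p_j)$ is $0$, contributing $(1-0)^2 = 1$ to $f$. For each component equal to $t$, the product over the other $k-1$ components that equal $t$ is $(1-t)^{k-1}$ (the zeros contribute factors of $1$), so the service rate equals $a \equiv t(1-t)^{k-1}$, and each such term contributes $(1-a)^2$. Hence
\begin{equation}
f(\pbf) = k(1-a)^2 + (n-k),
\end{equation}
and proving $f(\pbf)\geq n-1$ reduces to showing $k(1-a)^2 \geq k-1$, with equality iff $k=t=1$.

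I would then handle the two cases. For $k=1$, one has $a=t \in (0,1]$ and the inequality is $(1-t)^2 \geq 0$, trivially true, with equality iff $t=1$; this is exactly the case $\pbf = \ebf_i$ claimed in the lemma. For $k\geq 2$, the strategy is to upper-bound $a$. Elementary calculus on $t\mapsto t(1-t)^{k-1}$ over $(0,1]$ gives $a \leq m(k) \equiv \tfrac{1}{k}\bigl(1-\tfrac{1}{k}\bigr)^{k-1}$, achieved at $t=1/k$. It then suffices to show the strict inequality $k(1-m(k))^2 > k-1$, which rearranges to $2m(k) - m(k)^2 < 1/k$.

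The crux is this last scalar inequality. Substituting $y \equiv k\,m(k) = (1-1/k)^{k-1}$, it becomes $k(2y-1) < y^2$. One checks directly that $y=(1-1/k)^{k-1}$ is decreasing in $k$ on $k\geq 2$, with $y(2) = 1/2$ and $y(\infty)=1/e$, so $y\leq 1/2$ for all $k\geq 2$. Consequently $2y-1 \leq 0$, and $k(2y-1) \leq 0 < y^2$, giving the required strict inequality. Combined with the $k=1$ analysis this yields $f(\pbf)\geq n-1$ with equality iff $\pbf\in\{\ebf_i\}_{i=1}^n$.

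The main obstacle is the final scalar inequality $2m(k)-m(k)^2 < 1/k$: proving it without the $y\leq 1/2$ observation would require more delicate estimates of $m(k)$ (a bound as weak as $a\leq 1/k$ fails, since $1/k > 1-\sqrt{1-1/k}$). Once one recognizes that $km(k) = (1-1/k)^{k-1}\leq 1/2$ for $k\geq 2$, the inequality reduces to a sign check and the proof is immediate.
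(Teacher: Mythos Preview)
Your proof is correct and follows essentially the same route as the paper: both reduce $f(\pbf)\geq n-1$ to the scalar inequality $a\equiv t(1-t)^{k-1}\leq 1-\sqrt{1-1/k}$, maximize $a$ over $t$ at $t=1/k$, and finish using $(1-1/k)^{k-1}\leq 1/2$ for $k\geq 2$. The only difference is cosmetic: the paper bounds the two sides of $(1-1/k)^{k-1}\leq k(1-\sqrt{1-1/k})$ separately by $1/2$ (invoking AM--GM for the right side), whereas your rearrangement $k(2y-1)<y^2$ needs only $y\leq 1/2$, so the AM--GM step becomes unnecessary.
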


\begin{IEEEproof}
W.l.o.g.\ let $\pbf = t \cdot \sum_{i=1}^{k} \ebf_{i}$ for $k \in [n]$, $t \in (0,1]$. Substitution of such a $\pbf$ into \eqref{eq:LsostarObjFunc} yields the following equivalent inequality
\begin{equation} 
\label{eq:LsostarObjFuncVer2}
t (1-t)^{k-1} \leq 1 - \sqrt{1 - 1 / k}, 
\end{equation}
meaning the lemma will be established if we can show \eqref{eq:LsostarObjFuncVer2} holds for all valid $(t,k)$, and holds with equality if and only if $k=t=1$.  The inequality \eqref{eq:LsostarObjFuncVer2} is easily verified to hold strictly for $a)$ $k = 1 \neq t$ and $b)$ $t = 1 \neq k$.  If $k=t=1$ (namely $\pbf = \ebf_{1}$) the original objective function in \eqref{eq:LsostarObjFunc} evaluates to $n-1$, the desired minimum.  It remains to study the case $k \in \{2,\ldots,n\}$ and $0 < t < 1$.  Define $g(t) \equiv t (1-t)^{k-1}$. The only stationary point of $g$ on $t \in [0,1)$ is $t^{*} = 1/k$, at which the second derivative can be verified to be strictly negative, meaning $t^{*}$ is the unique maximizer. And hence we need to show \eqref{eq:LsostarObjFuncVer2} when $t = 1/k$, namely $\left(1 - 1/k\right)^{k-1} \leq  k \left(1-\sqrt{1- 1/k} \right)$. The derivative of its LHS can be shown to be negative using the inequality $\log (1+x) \leq x$ for $x > -1$. Thus the sequence $\{(1-1/k)^{k-1}\}$ is upper bounded by $ \left(1- 1/ 2\right)^{2-1} = 1 / 2$. On the other hand, using AM-GM inequality $\sqrt{1-1/k} < (1-1/k + 1)/2$, one can see the RHS is strictly lower bounded by $1 / 2$. This shows the desired inequality \eqref{eq:LsostarObjFuncVer2}, thus proving this lemma.
\end{IEEEproof}

\begin{remark}
The Cauchy-Schwarz inequality gets close to proving the desired inequality \eqref{eq:LsostarObjFunc}, but is insufficient by itself (note $\sum_{i} x_{i} \leq 1$ as $\Lambda \subseteq \Smc$):
\begin{IEEEeqnarray}{rCl}
\sum_{i=1}^{n} \left(1-x_i\right)^2 & \geq & \frac{\left(\sum_{i=1}^{n} 1 \cdot (1-x_i)\right)^2}{\sum_{i=1}^{n} 1^{2}} \nonumber \\
& = & \frac{\left(n - \sum_{i=1}^{n} x_i\right)^2}{n} \geq \frac{(n-1)^2}{n},
\end{IEEEeqnarray}
which is slightly weaker than the bound of $n-1$ required to show \eqref{eq:LsostarObjFunc}.
\end{remark}

The optimal spherical inner and outer bounds $\Lsi^{*}$, $\Lso^{*}$ together with $\partial \Lambda$ are shown in Fig.\ \ref{fig:LsiLso-Combined-Marked-ver20140802} for $n=2$ and $3$.

It seems hard to obtain the volume of these spherical bounds in closed-form for arbitrary $n$. Essentially, the problem is one of integrating over the intersection between a (solid) hypersphere and $[0,1]^n$.  It is natural to attempt to bound the volume. Below, we illustrate such an attempt using $\Lso^{*}$ as an example.

We take a probabilistic approach.  As the volume of the unit box $[0,1]^{n}$ always equals $1$, and as $\Lso^{*} \subseteq [0,1]^{n}$, it follows that its volume can be interpreted as the probability that a point uniformly distributed over $[0,1]^{n}$ falls into the set $\Lso^{*}$.  More precisely, for i.i.d.\ $\mathrm{Unif}[0,1]$ random variables (RV) $y_{1}, \ldots, y_{n}$,
\begin{IEEEeqnarray}{rCl}
\IEEEeqnarraymulticol{3}{l}
{\vol (\Lso^{*})
}\nonumber \\* \quad\!
& = & \Pbb\left( \ybf \in \Smc, ~ \ybf \notin \Bmc(\mathbf{1}, \sqrt{n-1}) \right)
\stackrel{(a)}{=} \Pbb\left( \ybf \notin \Bmc(\mathbf{1}, \sqrt{n-1}) \right) \nonumber \\
& = & \Pbb\left( \sum_{i} \left( 1 - y_{i} \right)^{2} \geq n-1 \right) 
\stackrel{(b)}{=} \Pbb\left( \sum_{i} y_{i}^{2} \geq n-1 \right),\IEEEeqnarraynumspace
\label{eqn:volLsoProb}
\end{IEEEeqnarray}
where $(a)$ follows from Lem.\ \ref{lem:boundingVolLsoUsingChernoff} given below and $(b)$ is due to the observation that $1 - y_{i}$ are i.i.d. $\mathrm{Unif}[0,1]$ RV's too. We note the \textit{uniform sum distribution} (also known as \textit{Irwin-Hall distribution}), $\sum_i y_i$, has a known closed-form density function, yet this does not seem to be the case for $\sum_i y_i^2$. Then one natural thing to do is to bound this tail probability.  A typical form of the Chernoff bound states that for a random variable $Z$ (usually expressed as a sum of independent RVs), an upper bound on the (upper) tail probability is $\Pbb\left(Z \geq t \right) \leq \inf_{s \geq 0} \erm^{-s t} \Ebb\left[\erm^{s Z}\right]$.  Substituting $\sum_{i} y_{i}^{2}$ for $Z$ yields:
\begin{IEEEeqnarray}{rCl}
\vol (\Lso^{*})  & \leq & \inf_{s \geq 0} \erm^{-s (n-1)} \Ebb\left[ \erm^{s \sum_{i} y_{i}^{2} }\right] \nonumber \\
& = & \inf_{s \geq 0} \erm^{-s (n-1)} \prod_{i} \Ebb\left[\erm^{s y_{i}^{2}}\right] \nonumber \\
& = & \inf_{s \geq 0} \erm^{-s (n-1)} \left( \int_{0}^{1} \erm^{s y_{i}^{2}} \drm y_{i} \right)^{n} .
\end{IEEEeqnarray}
The minimizer $s^{*}$ is hard to be obtained in closed-form.  Worse still, the numerically optimized upper bound is not close to the actual tail probability (i.e., the volume of $\Lso^{*}$).\footnote{For $n = 2$ through $7$, the ratios between the optimized upper bound and the true volume $\vol (\Lso^{*})$ are $3.4924$, $5.1904$, $6.0964$, $6.6436$, $7.0569$, and $7.6899$ respectively.}

\begin{lemma} \label{lem:boundingVolLsoUsingChernoff}
$[0,1]^{n} \setminus \Smc \subseteq \Bmc\left(\mathbf{1}, \sqrt{n-1}\right)$. In words this says the unit box with the unit simplex subtracted lies completely inside the ball $\Bmc\left(\mathbf{1}, \sqrt{n-1}\right)$.
\end{lemma}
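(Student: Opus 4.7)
The plan is to reduce the lemma to a one-line inequality via the substitution $y_i = 1 - x_i$. Since $\xbf \in [0,1]^n$ is equivalent to $\ybf \in [0,1]^n$, and $\|\xbf - \mathbf{1}\|^2 = \sum_i y_i^2$, what needs to be proved becomes: if $\ybf \in [0,1]^n$ satisfies $\sum_i y_i < n-1$ (the translated form of $\xbf \notin \Smc$, i.e.\ $\sum_i x_i > 1$), then $\sum_i y_i^2 < n-1$.

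First I would fix $\xbf \in [0,1]^n \setminus \Smc$ and write out the definition $\Smc = \{\xbf \geq \mathbf{0} : \sum_i x_i \leq 1\}$, noting that $\xbf \notin \Smc$ together with $\xbf \geq \mathbf{0}$ forces $\sum_i x_i > 1$, equivalently $\sum_i (1-x_i) < n - 1$. Next I would perform the coordinate change $y_i = 1 - x_i$, which maps $[0,1]^n$ to itself, and observe that $\|\xbf - \mathbf{1}\|^2 = \sum_{i=1}^n y_i^2$.

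The key (and only) nontrivial step is the pointwise inequality $y_i^2 \leq y_i$ valid for all $y_i \in [0,1]$, which I would apply coordinatewise to conclude
\begin{equation}
\|\xbf - \mathbf{1}\|^2 \;=\; \sum_{i=1}^n y_i^2 \;\leq\; \sum_{i=1}^n y_i \;<\; n-1,
\end{equation}
so $\xbf \in \Bmc(\mathbf{1}, \sqrt{n-1})$ as required. There is essentially no obstacle here; the geometric content is just that on the unit cube the $\ell^2$ distance to $\mathbf{1}$ is dominated by the $\ell^1$ distance to $\mathbf{1}$, and the hypothesis $\xbf \notin \Smc$ gives exactly the $\ell^1$ bound needed. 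The strict inequality is automatic since the hypothesis $\sum_i x_i > 1$ is strict, which matches the openness of $\Bmc(\mathbf{1},\sqrt{n-1})$.
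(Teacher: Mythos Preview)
Your proof is correct and essentially identical to the paper's own argument. The paper expands $\sum_i (1-x_i)^2 = n - 2\sum_i x_i + \sum_i x_i^2$ and uses $x_i^2 \leq x_i$ on $[0,1]$ to bound this by $n - \sum_i x_i < n-1$; your substitution $y_i = 1-x_i$ followed by $y_i^2 \leq y_i$ is the same computation in different notation.
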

\begin{IEEEproof}
Given $\xbf \in [0,1]^{n}, \sum_{i} x_{i} > 1$, we need to show $\sum_{i} \left( 1 - x_{i} \right)^{2} < n-1$, which is easily verifiable since
\begin{IEEEeqnarray}{rCl}
\IEEEeqnarraymulticol{3}{l}
{\sum_{i} \left( 1 - x_{i} \right)^{2} 
}\nonumber \\* \quad
& = & n - 2 \sum_{i} x_{i} + \sum_{i} x_{i}^{2} \leq n - 2 \sum_{i} x_{i} + \sum_{i} x_{i} \nonumber \\
& = & n - \sum_{i} x_{i} < n-1.
\end{IEEEeqnarray}
Note this lemma can be equivalently stated as $[0,1]^{n} \setminus \Bmc\left(\mathbf{1}, \sqrt{n-1}\right) \subseteq \Smc$ which implies if a point is from $[0,1]^{n}$ but not in $\Bmc\left(\mathbf{1}, \sqrt{n-1}\right)$ then it's guaranteed to be in $\Smc$. This observation is used step $(a)$ in \eqref{eqn:volLsoProb}.
\end{IEEEproof}

\begin{figure}[!h]
\centering
\includegraphics[width=0.4\textwidth]{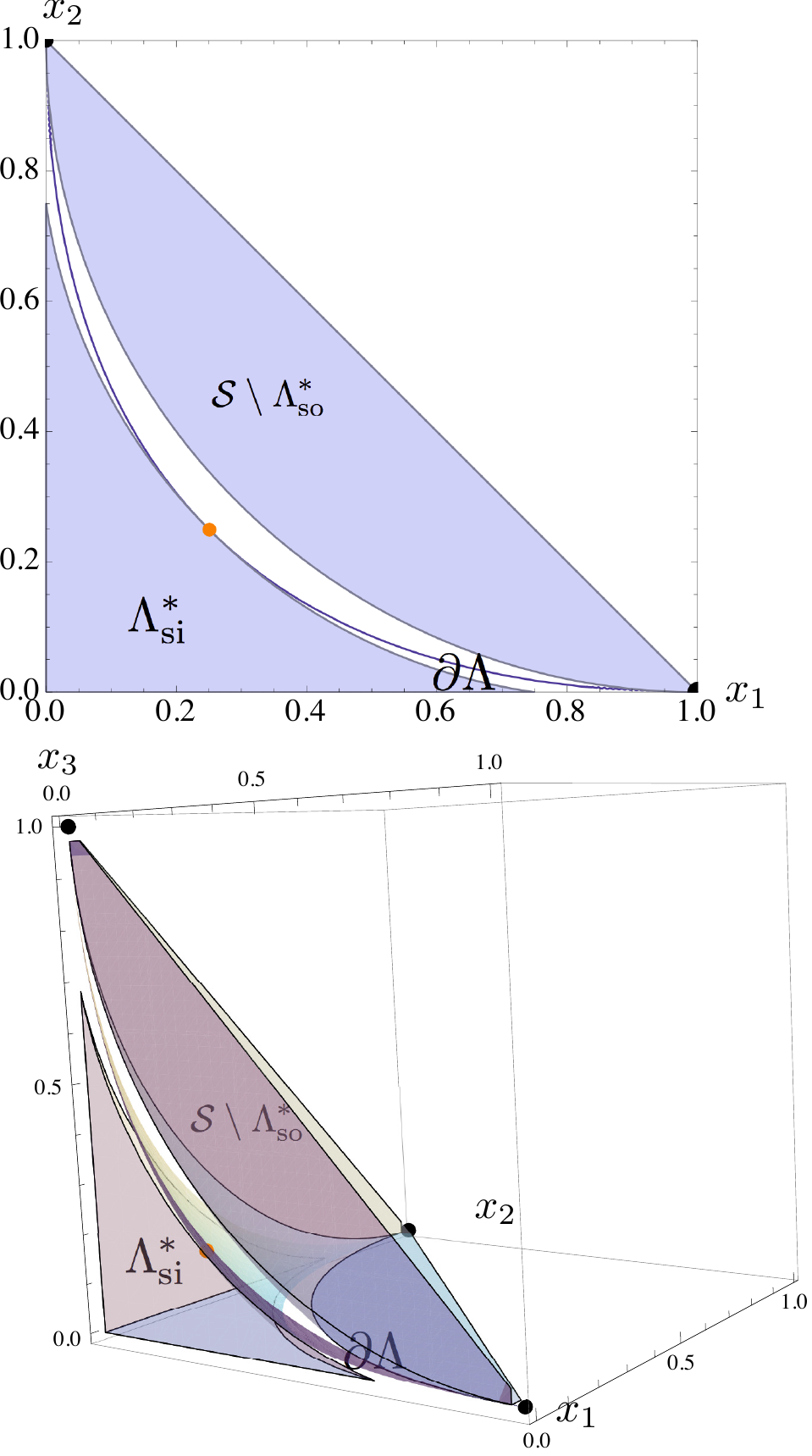}
\caption{Optimal spherical bounds for $n=2$ (top) and $3$ (bottom): inner bound and the complement (w.r.t.\ $\Smc$) of outer bound are shown in solid; in between sits $\partial \Lambda$. Also shown are the all-rates-equal point $\mbf$ (orange) and corner points $\ebf_{i}$'s (black).}
\label{fig:LsiLso-Combined-Marked-ver20140802}
\end{figure} 

\section{Ellipsoid inner and outer bounds on $\Lambda$} 
\label{sec:eiAndeo}

We now turn to the third, and final, class of bounds on $\Lambda$.   In this section we establish inner and outer bounds, each induced by a parameterized family of ellipsoids.  This section is organized into three subsections.  First, in \S \ref{ssec:simplification} we prove three results: $i)$ the set of ellipsoids that inherit all the permutation symmetries of $\Lambda$ are characterized by three scalars $(c, a_{1}, a_{2})$ (Prop.\ \ref{prop:InvariantEllipsoid}), $ii)$ the sufficiency of working only with $\partial \Lambda$ for the purpose of proving the correctness of the induced bound (Props.\ \ref{prop:outerboundReducedtoBoundaryChecking} and \ref{prop:innerboundReducedtoBoundaryChecking}), and $iii)$ a property of a local extremizer from $\partial \Lambda$ (Prop.\ \ref{prop:atMostTwoComponentValues}). Next, in \S \ref{ssec:derivationAndproofEllipsoidalBounds} we present the parameterized families of ellipsoid inner (Prop.\ \ref{prop:Ei}) and outer (Prop.\ \ref{prop:Eo}) bounds.  The derivation is based on the Karush-Kuhn-Tucker (KKT) optimality conditions.  Finally, in \S \ref{ssec:alternateProofEllipsoidalOuterBound}, we provide an alternative proof of the  ellipsoid outer bound by working in a transformed space and leveraging Schur-convexity. Although the ellipsoid bounds include the spherical bounds as special cases (Remark \ref{remark:LeiLeoRecoverLsiLso}) and are harder to prove, many of the results, and more importantly, the proof techniques in this section are similar to those already presented in \S\ref{sec:siAndso}, and to avoid redundancy we have left out all the proofs in this section; they are available in their entirety in \cite[Chapter 2.6]{Xie2014}.

\subsection{Simplification of parameter space}
\label{ssec:simplification}
We consider open ellipsoids $\Emc$ of the form (\cite{BoyVan2004}):
\begin{equation}
\label{eq:ellipsoidDef}
\Emc = \left\{ \xbf : (\xbf - \cbf)^{\Tsf} \Rbf^{-1} (\xbf - \cbf) < 1 \right\}.
\end{equation}
Here $\cbf$ is the center of the ellipsoid and the $n \times n$ symmetric and positive definite matrix $\Rbf$ has the spectral decomposition $\Rbf = \Qbf \Dbf \Qbf^{\Tsf}$ where $\Qbf = [\qbf_1 \cdots \qbf_n]$ is orthonormal and holds the eigenvectors of $\Rbf$ (which are the directions of the $n$ axes of the ellipsoid), and $\Dbf = \mathrm{diag}(\lambda_{1},\ldots,\lambda_{n})$ holds the eigenvalues of $\Rbf$.  Each $a_i = \sqrt{\lambda_{i}}$ is the semi-axis length  in the direction $\qbf_i$.  Denote the boundary of $\Emc$ by $\partial \Emc = \left\{ \xbf : (\xbf - \cbf)^{\Tsf} \Rbf^{-1} (\xbf - \cbf) = 1 \right\}$.

Our approach is to approximate the surface $\partial \Lambda$ with part of the surface of an ellipsoid, and then form inner and outer bounds on $\Lambda$ by subtracting these ellipsoids from the unit simplex $\Smc$.  This is the same approach that was used in constructing the spherical bounds (\S \ref{sec:siAndso}).  More concretely, we want to find inner and outer bounding ellipsoids $\Emc_{\rm in}, \Emc_{\rm out}$ such that $\Lei \subseteq \Lambda \subseteq \Leo$, where $\Lei \equiv \Smc \setminus \Emc_{\rm in}, ~\Leo \equiv \Smc \setminus \Emc_{\rm out}$.  

Although we are not able to characterize them further, we define the optimal inner and outer bounding ellipsoids:
\begin{IEEEeqnarray}{rCl}
 \Emc_{\rm in}^{*} 
 & \equiv & \argmin_{\Emc_{\rm in} : \overline{\Emc}_{\rm in} \cap \Smc \supseteq \overline{\Lambda^{c}} \cap \Smc} \text{vol}(\Emc_{\rm in} \cap \Smc) \nonumber \\
& = & \argmin_{\Emc_{\rm in} : \overline{\Emc}_{\rm in} \cap \Smc \supseteq \partial \Lambda} \text{vol}(\Emc_{\rm in} \cap \Smc) 
\label{eqn:Eistar} \\
\Emc_{\rm out}^{*} 
& \equiv & \argmin_{\Emc_{\rm out} : \Emc_{\rm out} \cap \Lambda = \emptyset} \text{vol}(\Smc \setminus \Emc_{\rm out}) \nonumber \\
& = & \argmax_{\Emc_{\rm out} : \Emc_{\rm out} \cap \Lambda = \emptyset} \text{vol}(\Emc_{\rm out} \cap \Smc), 
\label{eqn:Eostar}
\end{IEEEeqnarray}
where the second equality in \eqref{eqn:Eistar} follows from a lemma used in the proof of Prop.\ \ref{prop:innerboundReducedtoBoundaryChecking}.

A result in convex geometry states that for any convex body there exists a unique maximum (resp. minimum) volume inscribed (resp. circumscribing) ellipsoid, called the \textit{L\"{o}wner-John ellipsoid}.  Although we have a convex body $\Lambda^{c} \cap \Smc$, our objective is {\em not} to identify an inscribed/circumscribing ellipsoid with extremized volume for this set.  Rather, our figure of merit (in \eqref{eqn:Eistar} and \eqref{eqn:Eostar}) is to extremize the volume of the \textit{intersection} between the ellipsoid and the simplex.  For example, our $\Emc_{\rm out}$ need not lie \textit{entirely} within the convex body.

In general, analytical characterization of the L\"{o}wner-John ellipsoid is hard (see e.g., \cite{GulGur2007}  \cite[\S 8.4]{BoyVan2004}). One constructive result, though, is that the L\"{o}wner-John ellipsoid is an \textit{invariant ellipsoid}, meaning it inherits \textit{all} the symmetries of the convex body \cite{GulGur2007}. The intuition is that if there were some symmetry that the volume optimal ellipsoid is not endowed with, then using that particular symmetry one can construct another distinct volume optimal ellipsoid, hence contradicting the uniqueness of the L\"{o}wner-John ellipsoid.

In the spirit of the above result, we restrict our attention to ellipsoids that inherit all the symmetries of the convex body $\overline{\Lambda^{c}} \cap \Smc$. Note $\overline{\Lambda^{c}} \cap \Smc$, $\Lambda$ and $\Smc$ all have full permutation symmetry. Prop.\ \ref{prop:InvariantEllipsoid} below states some consequences of inheriting this permutation symmetry. Its proof uses the following three lemmas. 

\begin{lemma} \label{lem:EllipsoidReflectingHyperplane}
Fix an ellipsoid $\Emc = \left\{ \xbf: (\xbf-\cbf)^{\Tsf}\Rbf^{-1}(\xbf - \cbf) < 1 \right\}$ in $\Rbb^{n}$ with center $\cbf$.  A hyperplane passing through $\cbf$ with its normal vector being one of the axes/eigenvectors of $\Rbf$ is a reflecting hyperplane for $\Emc$, i.e.,   $\Emc$ is symmetric w.r.t.\ this hyperplane.  Conversely, the normal vector of any reflecting hyperplane of $\Emc$ can be considered as an axis/eigenvector of $\Rbf$.
\end{lemma}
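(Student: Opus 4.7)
My plan is to translate coordinates so that $\cbf = \mathbf{0}$ (both the ellipsoid and the hyperplanes through $\cbf$ are preserved under translation), and then work with the orthogonal reflection across a hyperplane through the origin with unit normal $\nbf$, which is the Householder matrix $T_{\nbf} \equiv I - 2\nbf\nbf^{\Tsf}$. The key observation I will use throughout is that the open ellipsoid $\Emc$ is symmetric across such a hyperplane if and only if the quadratic form $\xbf^{\Tsf}\Rbf^{-1}\xbf$ is invariant under $T_{\nbf}$, i.e., $T_{\nbf}^{\Tsf}\Rbf^{-1} T_{\nbf} = \Rbf^{-1}$. The ``only if'' direction here uses the fact that the ellipsoid's boundary $\partial\Emc$ is the level set $\{\xbf : \xbf^{\Tsf}\Rbf^{-1}\xbf = 1\}$, so set-invariance of $\Emc$ under the bijection $T_{\nbf}$ forces level-set invariance of the defining quadratic form.

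For the forward direction, I will take $\nbf = \qbf_i$, a unit eigenvector of $\Rbf$ with eigenvalue $\lambda_i$, so that $\Rbf^{-1}\qbf_i = (1/\lambda_i)\qbf_i$. Expanding
\begin{equation}
T_{\qbf_i}^{\Tsf}\Rbf^{-1} T_{\qbf_i} = \Rbf^{-1} - 2\qbf_i\qbf_i^{\Tsf}\Rbf^{-1} - 2\Rbf^{-1}\qbf_i\qbf_i^{\Tsf} + 4(\qbf_i^{\Tsf}\Rbf^{-1}\qbf_i)\qbf_i\qbf_i^{\Tsf},
\end{equation}
and substituting $\Rbf^{-1}\qbf_i = (1/\lambda_i)\qbf_i$, every correction term collapses to $(2/\lambda_i)\qbf_i\qbf_i^{\Tsf}$, and the four terms cancel to leave $\Rbf^{-1}$. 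Hence $T_{\qbf_i}$ preserves the quadratic form and therefore $\Emc$.

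For the converse, suppose $T_{\nbf}$ preserves $\Emc$ (equivalently, $T_{\nbf}^{\Tsf}\Rbf^{-1}T_{\nbf} = \Rbf^{-1}$). Rearranging the same expansion yields the identity
\begin{equation}
\nbf\nbf^{\Tsf}\Rbf^{-1} + \Rbf^{-1}\nbf\nbf^{\Tsf} = 2(\nbf^{\Tsf}\Rbf^{-1}\nbf)\,\nbf\nbf^{\Tsf}.
\end{equation}
I will then apply both sides to an arbitrary vector $\vbf$ with $\nbf^{\Tsf}\vbf = 0$. The right-hand side and the second term on the left both vanish, leaving $(\nbf^{\Tsf}\Rbf^{-1}\vbf)\,\nbf = \mathbf{0}$, hence $\nbf^{\Tsf}\Rbf^{-1}\vbf = 0$. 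Thus $\Rbf^{-1}$ maps $\nbf^{\perp}$ into $\nbf^{\perp}$; by symmetry of $\Rbf^{-1}$, its orthogonal complement $\mathrm{span}(\nbf)$ is likewise invariant, so $\Rbf^{-1}\nbf = \mu\nbf$ for some $\mu > 0$. Therefore $\nbf$ is an eigenvector of $\Rbf$, and can be chosen as one of the axis directions in the spectral decomposition.

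The proof is mostly routine linear algebra; the only subtle point is the ``set-invariance $\Rightarrow$ quadratic-form-invariance'' equivalence in the converse, where I must argue that two positive-definite quadratic forms with the same unit sublevel set are identical (which follows from the fact that $\Rbf^{-1}$ is uniquely determined by $\partial\Emc$ together with the normalization). The degenerate case of repeated eigenvalues deserves a brief remark: when $\lambda_i = \lambda_j$, the corresponding eigenspace is higher-dimensional, so $\nbf$ need not coincide with a preselected eigenvector but can always be absorbed into a valid eigen-basis, which is consistent with the lemma's wording ``can be considered as an axis/eigenvector.''
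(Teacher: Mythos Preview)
Your proof is correct, and the forward direction is essentially identical to the paper's: both translate to the origin and verify that reflection through the hyperplane with normal $\qbf_i$ preserves the quadratic form, the only difference being that you phrase it as the matrix identity $T_{\qbf_i}^{\Tsf}\Rbf^{-1}T_{\qbf_i}=\Rbf^{-1}$ while the paper checks it pointwise on $\partial\Emc$.

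The converse is where you genuinely diverge. The paper argues by contradiction: assuming $\qbf$ is not an eigenvector, it views the reflection condition as a linear functional in the ``variable'' $\Rbf^{-1}\qbf$ and asserts that substituting a non-eigenvector value cannot yield zero. That step is a bit loose as written. Your route is cleaner and more direct: from $T_{\nbf}^{\Tsf}\Rbf^{-1}T_{\nbf}=\Rbf^{-1}$ you extract the commutation identity, apply it to $\vbf\in\nbf^{\perp}$ to conclude $\Rbf^{-1}(\nbf^{\perp})\subseteq\nbf^{\perp}$, and then invoke symmetry of $\Rbf^{-1}$ to get invariance of $\mathrm{span}(\nbf)$. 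This invariant-subspace argument is both shorter and more transparent than the paper's contradiction, and it automatically handles the repeated-eigenvalue case you flag at the end. The only minor point you could tighten is the ``set-invariance $\Rightarrow$ quadratic-form-invariance'' step: since $T_{\nbf}$ is a bijection mapping $\Emc$ onto itself it also maps $\partial\Emc$ onto itself, and a positive-definite quadratic form is uniquely recovered from its unit level set (e.g., because $\Rbf^{-1}$ is determined by $\partial\Emc$ via $\xbf^{\Tsf}\Rbf^{-1}\xbf=1$ on a spanning set), so $T_{\nbf}^{\Tsf}\Rbf^{-1}T_{\nbf}=\Rbf^{-1}$ follows; you state this but might make the one-line justification explicit.
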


Note that in the context of a full-dimensional ellipsoid, the axis (direction), eigenvector, and reflecting hyperplane's normal vector are all essentially the same thing.

\begin{lemma} 
\label{lem:distinctEigenvalues}
For a symmetric matrix $\Rbf$, if the two eigenvalues associated with two of the  eigenvectors of $\Rbf$ are distinct, then these two eigenvectors must necessarily be orthogonal (not just linearly independent).
\end{lemma}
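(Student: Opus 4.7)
The plan is to prove this via the standard bilinear-form trick: exploit the symmetry of $\Rbf$ to evaluate the scalar $\vbf_1^{\Tsf} \Rbf \vbf_2$ in two different ways, where $\vbf_1, \vbf_2$ are the two eigenvectors with distinct eigenvalues $\lambda_1 \neq \lambda_2$, respectively.

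First, using the eigenrelation $\Rbf \vbf_2 = \lambda_2 \vbf_2$, I would compute $\vbf_1^{\Tsf} \Rbf \vbf_2 = \lambda_2 \, \vbf_1^{\Tsf} \vbf_2$. Second, invoking the symmetry of $\Rbf$ (so that $\Rbf^{\Tsf} = \Rbf$), I would rewrite $\vbf_1^{\Tsf} \Rbf \vbf_2 = (\Rbf \vbf_1)^{\Tsf} \vbf_2 = \lambda_1 \, \vbf_1^{\Tsf} \vbf_2$. Equating these two expressions gives $(\lambda_1 - \lambda_2) \, \vbf_1^{\Tsf} \vbf_2 = 0$, and dividing through by the assumed non-zero scalar $\lambda_1 - \lambda_2$ yields $\vbf_1^{\Tsf} \vbf_2 = 0$, i.e., the two eigenvectors are orthogonal rather than merely linearly independent.

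There is essentially no obstacle here; the statement is a textbook consequence of $\Rbf$ being self-adjoint with respect to the Euclidean inner product. The only subtlety worth flagging in the write-up is that symmetry of $\Rbf$ is used crucially in the second evaluation: without it, one can only conclude $\vbf_1^{\Tsf} \Rbf^{\Tsf} \vbf_2 = \lambda_1 \vbf_1^{\Tsf} \vbf_2$, and the argument would not close. Hence the proof is a short three-line calculation that I would present directly after restating the hypothesis.
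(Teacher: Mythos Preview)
Your proposal is correct and matches the paper's proof essentially line for line: both evaluate the scalar $\vbf_1^{\Tsf}\Rbf\vbf_2$ two ways using the eigenrelations and the symmetry of $\Rbf$, then divide by the nonzero factor $\lambda_1-\lambda_2$ to conclude $\vbf_1^{\Tsf}\vbf_2=0$.
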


\begin{lemma} \label{lem:eigensubspace}
The linear combination of some eigenvectors associated with the same eigenvalue is also an eigenvector (with the same eigenvalue).  In fact, all such eigenvectors are in the same eigen-subspace.
\end{lemma}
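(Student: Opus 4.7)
The plan is to prove both claims directly from the defining eigenvalue equation using linearity of the matrix-vector product. The statement is quite elementary and follows the standard argument that an eigenspace is a vector subspace; no serious obstacle is anticipated.

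First, I would verify the linear combination claim. Let $\vbf_1, \ldots, \vbf_k$ be eigenvectors of $\Rbf$ all associated with the same eigenvalue $\lambda$, meaning $\Rbf \vbf_i = \lambda \vbf_i$ for each $i$. Given any scalars $\alpha_1, \ldots, \alpha_k$, define $\wbf = \sum_{i=1}^{k} \alpha_i \vbf_i$. Then by linearity,
\begin{equation}
\Rbf \wbf = \Rbf \sum_{i=1}^{k} \alpha_i \vbf_i = \sum_{i=1}^{k} \alpha_i (\Rbf \vbf_i) = \sum_{i=1}^{k} \alpha_i \lambda \vbf_i = \lambda \wbf,
\end{equation}
so $\wbf$ is again an eigenvector with eigenvalue $\lambda$ (provided $\wbf \neq \mathbf{0}$, which is the usual convention for excluding the trivial ``eigenvector'').

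Second, I would formalize the eigen-subspace claim. Define the eigen-subspace associated with $\lambda$ as $V_\lambda \equiv \{\vbf \in \Rbb^n : \Rbf \vbf = \lambda \vbf\}$, which equals the null space of $\Rbf - \lambda \Ibf$ and is therefore a linear subspace of $\Rbb^n$. Every eigenvector associated with $\lambda$ belongs to $V_\lambda$ by definition, and conversely the computation above shows $V_\lambda$ is closed under linear combinations, confirming it is indeed a subspace containing all such eigenvectors. This establishes the second sentence of the lemma.

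No step here poses a genuine obstacle; the only subtlety worth flagging is the conventional exclusion of the zero vector from the set of ``eigenvectors.'' It is cleaner to state the conclusion in terms of the eigen-subspace $V_\lambda$, which is closed under arbitrary linear combinations (including those producing $\mathbf{0}$), and then note that the nonzero elements of $V_\lambda$ are precisely the eigenvectors with eigenvalue $\lambda$.
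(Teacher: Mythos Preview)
Your proposal is correct; it is the standard linearity argument. The paper itself omits the proof of this lemma entirely, remarking only that it ``is easy to verify,'' so your write-up is in fact more detailed than what appears there.
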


\begin{proposition} \label{prop:InvariantEllipsoid}
The class of ellipsoids invariant under permutations of the coordinate axes is the set of ellipsoids parameterized by $(c,a_1,a_2) \in \Rbb_+^3$ with the properties that
\begin{enumerate}
\itemsep=-2pt
\item the center is at $\cbf = c \mathbf{1}$
\item one axis is along the all-rates-equal ray with direction $\mathbf{1}$ and has semi-axis length $a_1$ \item the $n-1$ remaining axes are arbitrary (provided they, together with the axis aligned with $\mathbf{1}$, form an orthonormal set) and have common semi-axis lengths $a_2$. 
\end{enumerate} 
\end{proposition}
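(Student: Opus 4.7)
My plan is to prove both directions of the characterization, treating the ellipsoid's center and its axes/semi-axis lengths separately, and leveraging the two supporting lemmas on reflecting hyperplanes and eigenvectors.

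First, I would establish the center must satisfy $\cbf = c\mathbf{1}$ for some $c \in \Rbb_{+}$. Since $\Emc$ has a unique center of symmetry (namely $\cbf$), and any permutation $\Pbf$ of the coordinate axes is an isometry of $\Rbb^n$, invariance $\Pbf \Emc = \Emc$ forces $\Pbf \cbf = \cbf$ for every permutation matrix $\Pbf$. The only vectors in $\Rbb^n$ fixed by the full symmetric group are scalar multiples of $\mathbf{1}$, giving $\cbf = c\mathbf{1}$.

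Next I would pin down the axis structure of $\Rbf$. For each pair $i \neq j$, swapping coordinates $i$ and $j$ is precisely reflection across the hyperplane through $\cbf$ with normal $\ebf_i - \ebf_j$ (the center $c\mathbf{1}$ lies on this hyperplane since $(\ebf_i - \ebf_j)^{\Tsf} (c\mathbf{1}) = 0$). Because $\Emc$ is invariant under this swap, the hyperplane is a reflecting hyperplane of $\Emc$, so Lem.\ \ref{lem:EllipsoidReflectingHyperplane} yields that every $\ebf_i - \ebf_j$ is an eigenvector of $\Rbf$. Now, e.g., $\ebf_1 - \ebf_2$ and $\ebf_1 - \ebf_3$ are two eigenvectors that are \emph{not} orthogonal; by the contrapositive of Lem.\ \ref{lem:distinctEigenvalues} they must share a common eigenvalue, call it $\lambda_{\perp}$. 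Iterating this argument over all pairs shows all vectors $\{\ebf_i - \ebf_j\}$ share the eigenvalue $\lambda_{\perp}$; since these span the $(n-1)$-dimensional hyperplane $\{\vbf : \mathbf{1}^{\Tsf} \vbf = 0\}$, Lem.\ \ref{lem:eigensubspace} upgrades this to the conclusion that the entire subspace orthogonal to $\mathbf{1}$ is an eigen-subspace with eigenvalue $\lambda_{\perp}$. The remaining eigen-direction (completing an orthonormal basis) must then be orthogonal to this subspace, i.e., along $\mathbf{1}$, with some eigenvalue $\lambda_{\parallel}$. Setting $a_1 = \sqrt{\lambda_{\parallel}}$ and $a_2 = \sqrt{\lambda_{\perp}}$ and noting that any orthonormal basis of the orthogonal-to-$\mathbf{1}$ eigenspace serves as a legitimate choice of the remaining $n-1$ axes gives the claimed parameterization.

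Finally I would verify the converse: any ellipsoid with center $c\mathbf{1}$, one axis along $\mathbf{1}$ with semi-axis length $a_1$, and the orthogonal $(n-1)$-dimensional eigen-subspace equipped with the constant semi-axis length $a_2$, is invariant under every coordinate permutation. For such an ellipsoid, $\Rbf^{-1} = \frac{1}{a_1^2}\frac{\mathbf{1}\mathbf{1}^{\Tsf}}{n} + \frac{1}{a_2^2}\bigl(\Ibf - \frac{\mathbf{1}\mathbf{1}^{\Tsf}}{n}\bigr)$, which commutes with every permutation matrix $\Pbf$ (since $\Pbf \mathbf{1} = \mathbf{1}$ and $\Pbf \Pbf^{\Tsf} = \Ibf$); combined with $\Pbf \cbf = \cbf$ this gives $(\Pbf \xbf - \cbf)^{\Tsf} \Rbf^{-1}(\Pbf \xbf - \cbf) = (\xbf - \cbf)^{\Tsf} \Rbf^{-1}(\xbf - \cbf)$, so $\Pbf \Emc = \Emc$.

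I expect the conceptual crux (and the only nontrivial step) to be arguing that a common eigenvalue must be shared across the entire hyperplane orthogonal to $\mathbf{1}$; the delicate point is to exploit non-orthogonality of pairs like $\ebf_1 - \ebf_2$ and $\ebf_1 - \ebf_3$ via Lem.\ \ref{lem:distinctEigenvalues} rather than trying to juggle orthogonal bases directly. The remaining items (fixing the center, identifying the $\mathbf{1}$-axis, and verifying the converse) are straightforward consequences.
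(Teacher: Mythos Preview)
Your proposal is correct and follows essentially the same approach as the paper: both arguments fix the center via permutation-invariance, invoke Lem.~\ref{lem:EllipsoidReflectingHyperplane} to make each $\ebf_i-\ebf_j$ an eigenvector, use Lem.~\ref{lem:distinctEigenvalues} on non-orthogonal pairs to force a common eigenvalue on the hyperplane $\mathbf{1}^\perp$, and then identify $\mathbf{1}$ as the remaining axis. The only cosmetic difference is ordering---the paper first isolates the $\mathbf{1}$-axis (Property~2) and then argues equal semi-axis lengths (Property~3), whereas you merge these---and you supply an explicit commutation argument for the converse where the paper simply calls it straightforward.
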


Lem.\ \ref{lem:invarianceofR} below gives an explicit construction for $\Rbf^{-1}$, and its Cor.\ \ref{cor:passingei} allows us to characterize the ellipsoid that passes through $\{\ebf_i\}_{i=1}^n$.  Finally, Lem.\ \ref{lem:commontangency} gives an expression that must be satisfied in order for $\partial \Emc$ and $\partial \Lambda$ to share a common tangent point.

\begin{lemma} \label{lem:invarianceofR}
For any ellipsoid in the form of \eqref{eq:ellipsoidDef}, if $\qbf_1 = \frac{1}{\sqrt{n}} \mathbf{1}$ and $\Dbf = \mathrm{diag}(a_1^2,a_2^2,\ldots,a_2^2)$, then $\Rbf^{-1} = \zeta \mathbf{1}_{n \times n} + a_2^{-2} \mathbf{I}_{n \times n}$, where $\zeta \equiv \frac{1}{n} \left(a_1^{-2}-a_2^{-2} \right)$, $\mathbf{1}_{n \times n}$ is an $n \times n$ matrix with each element being $1$ and $\Ibf_{n \times n}$ is the $n \times n$ identity matrix.
\end{lemma}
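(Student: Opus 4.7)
The plan is to prove the lemma by direct computation from the spectral decomposition $\Rbf = \Qbf \Dbf \Qbf^{\Tsf}$, which implies $\Rbf^{-1} = \Qbf \Dbf^{-1} \Qbf^{\Tsf}$ since $\Qbf$ is orthonormal. The key observation is that $\Dbf^{-1} = \mathrm{diag}(a_1^{-2}, a_2^{-2}, \ldots, a_2^{-2})$ can be rewritten as a rank-one perturbation of a scaled identity, namely
\begin{equation}
\Dbf^{-1} = a_2^{-2} \Ibf_{n \times n} + (a_1^{-2} - a_2^{-2})\, \ebf_1 \ebf_1^{\Tsf},
\end{equation}
where $\ebf_1$ is the first standard basis vector. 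This rewriting isolates the ``special'' direction (associated with $a_1$) from the $(n-1)$-fold degenerate directions (associated with $a_2$).

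Next, I would conjugate this expression by $\Qbf$ and simplify term by term. Orthonormality gives $\Qbf \Qbf^{\Tsf} = \Ibf_{n \times n}$ for the first term. For the second term, the hypothesis $\qbf_1 = \frac{1}{\sqrt{n}} \mathbf{1}$ combined with the identity $\Qbf \ebf_1 = \qbf_1$ yields
\begin{equation}
\Qbf\, \ebf_1 \ebf_1^{\Tsf}\, \Qbf^{\Tsf} = \qbf_1 \qbf_1^{\Tsf} = \frac{1}{n}\, \mathbf{1} \mathbf{1}^{\Tsf} = \frac{1}{n}\, \mathbf{1}_{n \times n}.
\end{equation}
Assembling the two pieces gives $\Rbf^{-1} = a_2^{-2} \Ibf_{n \times n} + \tfrac{1}{n}(a_1^{-2} - a_2^{-2})\, \mathbf{1}_{n \times n}$, which is precisely the claimed form with $\zeta = \tfrac{1}{n}(a_1^{-2} - a_2^{-2})$.

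Since the argument is a two-line consequence of the spectral decomposition together with the degeneracy of the eigenvalue $a_2^{-2}$, I do not anticipate any substantive obstacle. The only point deserving emphasis is that the remaining $n-1$ columns of $\Qbf$ need not be specified: the degenerate block of $\Dbf^{-1}$ is invariant under any orthogonal rotation within the $(n-1)$-dimensional eigen-subspace orthogonal to $\mathbf{1}$, so the resulting $\Rbf^{-1}$ depends only on $\qbf_1$, $a_1$, and $a_2$. This matches the geometric picture that the ellipsoid is a surface of revolution about the all-rates-equal ray.
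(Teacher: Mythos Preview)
Your proof is correct. The paper states that the proof of this lemma is ``straightforward and omitted,'' and your argument via the rank-one perturbation $\Dbf^{-1} = a_2^{-2}\Ibf_{n\times n} + (a_1^{-2}-a_2^{-2})\ebf_1\ebf_1^{\Tsf}$ together with $\Qbf\ebf_1 = \qbf_1 = \tfrac{1}{\sqrt{n}}\mathbf{1}$ is exactly the kind of direct computation the authors have in mind.
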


\begin{corollary} \label{cor:passingei}
For any $c > 1/n$ and $a_1 > \sqrt{n}(c-1/n)$, setting $a_2^2$ as below ensures $\ebf_i \in \partial \Emc$ for $i \in [n]$
\begin{equation}
a_2^2 = \frac{(n-1) a_1^2}{n a_1^2  - (nc-1)^2}.
\end{equation}
\end{corollary}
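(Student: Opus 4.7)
The plan is to verify the membership $\ebf_i \in \partial \Emc$ by directly evaluating the quadratic form $(\ebf_i - \cbf)^{\Tsf} \Rbf^{-1}(\ebf_i - \cbf)$ using the explicit form of $\Rbf^{-1}$ from Lem.\ \ref{lem:invarianceofR}, and then solving the resulting scalar equation for $a_{2}^{2}$. First I would observe that by Prop.\ \ref{prop:InvariantEllipsoid} the ellipsoid is permutation invariant, and the set $\{\ebf_i\}_{i=1}^n$ is likewise permutation invariant, so it suffices to establish the boundary condition for a single index, say $i=1$; equivalently, the value of $(\ebf_i - c\mathbf{1})^{\Tsf}\Rbf^{-1}(\ebf_i - c\mathbf{1})$ does not depend on $i$ and must be forced to equal $1$.

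Next I would substitute $\Rbf^{-1} = \zeta \mathbf{1}_{n\times n} + a_2^{-2}\Ibf_{n\times n}$ with $\zeta = \frac{1}{n}(a_1^{-2} - a_2^{-2})$, and use the identity $\vbf^{\Tsf}\mathbf{1}_{n\times n}\vbf = (\vbf^{\Tsf}\mathbf{1})^2$ for $\vbf = \ebf_i - c\mathbf{1}$. The two basic scalar quantities needed are $(\ebf_i - c\mathbf{1})^{\Tsf}\mathbf{1} = 1 - nc$ and $\|\ebf_i - c\mathbf{1}\|^2 = (1-c)^2 + (n-1)c^2$. Setting the resulting expression equal to $1$ gives
\begin{equation}
\frac{(nc-1)^2}{n\, a_1^2} + \frac{1}{a_2^2}\left[ (1-c)^2 + (n-1)c^2 - \frac{(nc-1)^2}{n}\right] = 1.
\end{equation}

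The key simplification I would carry out is the algebraic identity
\begin{equation}
(1-c)^2 + (n-1)c^2 - \frac{(nc-1)^2}{n} = \frac{n-1}{n},
\end{equation}
which reduces the displayed equation to $\frac{(nc-1)^2}{n\,a_1^2} + \frac{n-1}{n\,a_2^2} = 1$. Solving for $a_2^2$ yields the stated formula. Finally I would check that the stated hypotheses ensure $a_2^2 > 0$ and the ellipsoid is well-defined: the assumption $c > 1/n$ makes $nc-1$ positive (so the lower bound on $a_1$ is nontrivial), and $a_1 > \sqrt{n}(c - 1/n)$ is equivalent to $n\,a_1^2 > (nc-1)^2$, giving a strictly positive denominator. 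There is no real obstacle here beyond executing the symmetric-matrix computation cleanly; the whole argument is a direct substitution once the form of $\Rbf^{-1}$ from Lem.\ \ref{lem:invarianceofR} is used and the cancellation above is observed.
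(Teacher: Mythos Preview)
Your proposal is correct and is exactly the intended derivation: the paper states this result as an immediate corollary of Lem.~\ref{lem:invarianceofR} without further proof, and your direct substitution of $\Rbf^{-1} = \zeta \mathbf{1}_{n\times n} + a_2^{-2}\Ibf_{n\times n}$ followed by the simplification $(1-c)^2+(n-1)c^2-\tfrac{(nc-1)^2}{n}=\tfrac{n-1}{n}$ is precisely what makes it a corollary. The positivity check for $a_2^2$ is also handled correctly.
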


\begin{lemma} 
\label{lem:commontangency}
Define $\bar{\pbf}_{t} \equiv \mathbf{1} - \pbf_{t}$.  Then $\partial \Emc$ and $\partial \Lambda$ share a point of tangency at $\xbf_t = \xbf(\pbf_t)$ if for each $i \in [n-1]$:
\begin{equation}
\left( \frac{a_{2}}{a_{1}}\right)^{2} = \frac{(\bar{p}_{t,i}-\bar{p}_{t,n}) \Sigma_{j=1}^{n} x_{t,j} + n (\bar{p}_{t,n} x_{t,i} - \bar{p}_{t,i} x_{t,n})}{(\bar{p}_{t,i}-\bar{p}_{t,n})(\Sigma_{j=1}^{n} x_{t,j} - nc)}.
\end{equation}
\end{lemma}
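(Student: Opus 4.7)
The plan is to translate the geometric condition ``$\partial\Emc$ and $\partial\Lambda$ share a point of tangency at $\xbf_t$'' into an algebraic parallelism condition on the two surface normals at $\xbf_t$, and then to eliminate the proportionality constant to obtain the stated ratio $\left( a_2/a_1 \right)^2$.

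First I would identify the normal vector to $\partial \Lambda$ at $\xbf_t = \xbf(\pbf_t)$. Post's characterization of the tangent hyperplane (already invoked in the proof of Prop.\ \ref{prop:Lambdapi}) tells us that the supporting hyperplane to the convex body $\Lambda^c \cap \Rbb_+^n$ at $\xbf(\pbf)$ has normal $\mathbf{1} - \pbf = \bar{\pbf}$.  Hence the outward normal to $\partial\Lambda$ at $\xbf_t$ is (a positive multiple of) $\bar{\pbf}_t$.

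Next I would compute the outward normal to $\partial\Emc$ at the same point by differentiating $(\xbf - \cbf)^{\Tsf} \Rbf^{-1} (\xbf - \cbf) = 1$, yielding $\nabla = 2 \Rbf^{-1}(\xbf_t - \cbf)$. Invoking Lem.\ \ref{lem:invarianceofR} with $\Rbf^{-1} = \zeta \mathbf{1}_{n\times n} + a_2^{-2} \Ibf_{n \times n}$ and $\cbf = c\mathbf{1}$, the $i$-th coordinate of this normal simplifies to
\begin{equation}
\bigl[\Rbf^{-1}(\xbf_t - \cbf)\bigr]_i = \zeta \Bigl(\Sigma_{j=1}^{n} x_{t,j} - nc\Bigr) + a_2^{-2}(x_{t,i} - c).
\end{equation}

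Tangency at a common point is equivalent to the existence of a scalar $\kappa$ such that $\Rbf^{-1}(\xbf_t - \cbf) = \kappa \bar{\pbf}_t$. I would take the $i$-th and $n$-th components of this vector equation and subtract them; the $\zeta$-term cancels, producing
\begin{equation}
a_2^{-2}(x_{t,i} - x_{t,n}) = \kappa (\bar{p}_{t,i} - \bar{p}_{t,n}), \quad i \in [n-1],
\end{equation}
which determines $\kappa$. Substituting this $\kappa$ back into the $n$-th component equation gives a formula for $\zeta$ in terms of $a_2^{-2}$ and the data at $\xbf_t$. Using $\zeta = \frac{1}{n}(a_1^{-2} - a_2^{-2})$ from Lem.\ \ref{lem:invarianceofR}, I would solve for $(a_2/a_1)^2$ and clear denominators. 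The main (and only real) obstacle is the subsequent algebraic simplification: combining the terms over a common denominator $(\bar{p}_{t,i}-\bar{p}_{t,n})(\Sigma_j x_{t,j} - nc)$ and cancelling the $-nc$ contribution against $+nc$ in the bracket $[\Sigma_j x_{t,j} - nc - n x_{t,n} + nc]$ yields the compact numerator $(\bar{p}_{t,i}-\bar{p}_{t,n}) \Sigma_j x_{t,j} + n(\bar{p}_{t,n} x_{t,i} - \bar{p}_{t,i} x_{t,n})$, which is exactly the displayed expression in the lemma. Since the derivation produces one equation for each $i \in [n-1]$ by pairing with index $n$, the result holds for each such $i$, as claimed.
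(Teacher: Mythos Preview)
Your proposal is correct and follows essentially the same approach as the paper: both identify the normal to $\partial\Lambda$ at $\xbf_t$ via Post's tangent hyperplane $(\mathbf{1}-\pbf_t)$, compute the normal to $\partial\Emc$ as $\Rbf^{-1}(\xbf_t-\cbf)$, and equate the two directions to extract the ratio $(a_2/a_1)^2$. The paper is terser (it simply states ``equating these two tangent hyperplane equations yields the lemma''), whereas you spell out the use of Lem.~\ref{lem:invarianceofR} and the subtraction-of-components trick to eliminate $\kappa$; but the underlying argument is the same.
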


Because of the symmetries present in $\Emc(c,a_{1},a_{2})$, if we use a hyperplane with normal vector $\mathbf{1}$ to ``slice'' $\Emc(c,a_{1},a_{2})$ we get an $(n-1)$-dimensional ball. This is particularly intuitive in light of the following Prop.\ \ref{prop:LambdaSymInRotatedSystem} for which we need to introduce the concept of rigid rotation of coordinate system. 

\begin{definition}[rigid rotation of coordinate system]
\label{def:rigidrotation}
Denote the original coordinate system as $\Xbf$. A \textit{rigid rotation} of the coordinate system about the origin is specified by two vectors $\vbf_{s}$ and $\vbf_{t}$ such that after the rotation $\vbf_{s}$ overlaps with $\vbf_{t}$ while during the rotation the relative position of $\vbf_{s}$ w.r.t.\ the coordinate axes remains unchanged. Denote this rotated coordinate system as $\Ubf$, in which $\vbf_{s}$ was denoted (before rotation) as $\vbf_{t}$ in the original system $\Xbf$. Alternatively, a rigid rotation is specified by a \textit{rotation matrix} $\Mbf_{\rm rot}$ (that can be determined by the starting and terminating vectors $\vbf_{s}, \vbf_{t}$ \cite{Mor2001}) satisfying $\vbf_{t} = \Mbf_{\rm rot} \vbf_{s}$. All rotation matrices are orthonormal.
\end{definition}

According to this definition, we perform a rigid rotation of the original coordinate system $\Xbf$ such that $\ebf_{1}$ overlaps with $ \mathbf{1} / \sqrt{n}$ (i.e., $\vbf_{s} = \ebf_{1}$, $\vbf_{t} = \mathbf{1} / \sqrt{n}$), we then shift the origin to $\cbf = c \mathbf{1}$. Denote this rotated and translated system as $\Ubf$, then the two systems are related by $\Xbf = \Qbf \Ubf + \cbf$ where $\Qbf$ is the associated rotation matrix. 

\begin{proposition} \label{prop:LambdaSymInRotatedSystem}
In the above rotated and translated coordinate system $\Ubf$, $\Lambda$ has permutation symmetry among coordinates $\{2,\ldots,n\}$.  If we let $\Tbf_{u}(\cdot)$ denote this transformation from $\Xbf$ to $\Ubf$, namely $\ubf = \Tbf_{u}(\xbf_{}) = \Qbf^{-1}(\xbf_{} - \cbf)$, then:
\begin{IEEEeqnarray}{rCl}
\IEEEeqnarraymulticol{3}{l}
{\ubf =  (u_{1}, u_{2}, \ldots, u_{n})^{\Tsf} \in \Tbf_{u}(\Lambda) \iff 
}\nonumber \\* ~~
& & \sigma(\ubf) = (u_{1}, u_{\sigma(2)}, \ldots, u_{\sigma(n)})^{\Tsf}  \in \Tbf_{u}(\Lambda), \forall \sigma \in S_{[n] \setminus \{1\}},\IEEEeqnarraynumspace
\label{eq:GoalLambdaSymInRotatedSystem}
\end{IEEEeqnarray}
where $S_{[n]\setminus\{1\}}$ is the permutation group whose members permute coordinate indices from $\{2,\ldots,n\}$ arbitrarily.
\end{proposition}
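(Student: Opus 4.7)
The plan is to transport the full coordinate-permutation symmetry that $\Lambda$ already enjoys in $\Xbf$ to $\Ubf$ via $\Qbf$, by showing that any coordinate permutation of $\Xbf$ that fixes index $1$ commutes with $\Qbf$ and is therefore realized in $\Ubf$ as the identically-indexed coordinate permutation of the last $n-1$ entries.

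First I would record two elementary invariance facts. For $\tau \in S_n$, let $\mathbf{P}_\tau$ denote the associated permutation matrix, defined by $(\mathbf{P}_\tau \xbf)_i = x_{\tau(i)}$. (i) $\Lambda$ is coordinate-permutation invariant: given $\xbf \in \Lambda$ with witness $\pbf$, the permuted vector $\mathbf{P}_\tau \xbf$ is witnessed by $\mathbf{P}_\tau \pbf$, since the defining inequalities $x_i \leq p_i \prod_{j \neq i}(1-p_j)$ are symmetric under a simultaneous relabelling of $\xbf$ and $\pbf$. (ii) The translation point $\cbf = c\mathbf{1}$ is fixed by every $\mathbf{P}_\tau$, since $\mathbf{P}_\tau \mathbf{1} = \mathbf{1}$.

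Second, fix $\sigma \in S_{[n]\setminus\{1\}}$. Because $\sigma(1)=1$ we have $\mathbf{P}_\sigma \ebf_1 = \ebf_1$ and also $\mathbf{P}_\sigma \mathbf{1} = \mathbf{1}$, so $\mathbf{P}_\sigma$ acts as the identity on the two-plane $P = \mathrm{span}(\ebf_1,\mathbf{1}) = \mathrm{span}(\vbf_s,\vbf_t)$. By Def.\ \ref{def:rigidrotation}, the rigid rotation $\Qbf$ is a planar rotation inside $P$ that fixes $P^\perp$ pointwise. Both $\mathbf{P}_\sigma$ and $\Qbf$ therefore preserve the orthogonal decomposition $\Rbb^n = P \oplus P^\perp$, and on each summand at least one of the two maps is the identity. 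Consequently $\mathbf{P}_\sigma$ and $\Qbf$ commute, i.e., $\Qbf^{-1}\mathbf{P}_\sigma \Qbf = \mathbf{P}_\sigma$.

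Third, I would convert this commutation into the stated symmetry in $\Ubf$. Given $\xbf \in \Lambda$ with $\ubf = \Tbf_u(\xbf) = \Qbf^{-1}(\xbf - \cbf)$, form $\xbf' = \mathbf{P}_\sigma \xbf \in \Lambda$ by (i). Using (ii) and the commutation identity,
\[
\Tbf_u(\xbf') \;=\; \Qbf^{-1}(\mathbf{P}_\sigma \xbf - \cbf) \;=\; \Qbf^{-1}\mathbf{P}_\sigma(\xbf - \cbf) \;=\; \Qbf^{-1}\mathbf{P}_\sigma \Qbf\,\ubf \;=\; \mathbf{P}_\sigma \ubf \;=\; (u_1,u_{\sigma(2)},\ldots,u_{\sigma(n)})^{\Tsf},
\]
which is exactly the $\sigma(\ubf)$ appearing in \eqref{eq:GoalLambdaSymInRotatedSystem}. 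Hence $\ubf \in \Tbf_u(\Lambda) \Rightarrow \sigma(\ubf) \in \Tbf_u(\Lambda)$, and the reverse implication follows by applying $\sigma^{-1}$.

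There is no serious obstacle; the argument is essentially a change-of-coordinates computation. The one subtlety worth flagging is that the commutation step genuinely requires $\Qbf$ to be the \emph{rigid} rotation of Def.\ \ref{def:rigidrotation}---a rotation inside $P$ that is the identity on $P^\perp$. Any other orthogonal map sending $\vbf_s$ to $\vbf_t$ would in general act non-trivially on $P^\perp$ and could fail to commute with $\mathbf{P}_\sigma|_{P^\perp}$, in which case the resulting symmetry of $\Tbf_u(\Lambda)$ would not be a clean coordinate permutation of $(u_2,\ldots,u_n)$.
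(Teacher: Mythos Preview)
Your argument is correct and is in the same spirit as the paper's---both ultimately establish the commutation identity $\Tbf_u(\sigma(\xbf)) = \sigma(\Tbf_u(\xbf))$ for $\sigma \in S_{[n]\setminus\{1\}}$---but the route is genuinely different. The paper writes down the explicit rotation matrix $\Qbf$ from Mortari's formula (Eq.~\eqref{eq:rotationMatrix}), computes each coordinate $u_j(\xbf)$ in closed form, and then verifies the symmetry conditions $u_1(\sigma(\xbf))=u_1(\xbf)$ and $u_j(\sigma(\xbf))=u_{\sigma(j)}(\xbf)$ by direct inspection of the resulting expressions. Your approach is purely structural: you observe that $\mathbf{P}_\sigma$ acts as the identity on the rotation plane $P=\mathrm{span}(\ebf_1,\mathbf{1})$ while the rigid rotation $\Qbf$ acts as the identity on $P^\perp$, so the two operators commute, and the result drops out in one line. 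What your approach buys is cleanliness and robustness---no entries of $\Qbf$ are ever written down, and the proof would go through verbatim for any other pair $(\vbf_s,\vbf_t)$ that is pointwise fixed by $\mathbf{P}_\sigma$. What the paper's approach buys is that the explicit coordinate formulas for $u_j(\xbf)$ are available for later use, and the reader does not need to independently verify that the Mortari rotation is truly the planar rotation fixing $P^\perp$ (you correctly flag this as the one point requiring care; for the record, it does hold for the matrix in Eq.~\eqref{eq:rotationMatrix}).
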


The previous proposition seems to suggest some loss of permutation symmetry. This is answered in the negative by the following proposition.

\begin{proposition}[conservation of symmetry]
\label{prop:conservationOfsymmetry}
The permutation symmetry of the set $\Lambda$ is preserved under rotation of the coordinate system.
\end{proposition}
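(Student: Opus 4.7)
The plan is to recognize that a rigid rotation of the coordinate system is a passive transformation: it cannot alter the set $\Lambda$ (as an abstract subset of Euclidean space, up to identification via $\Tbf_u$) nor its isometric symmetry group; only the \emph{representation} of individual symmetries can change. My approach will be to conjugate the permutation representation of $S_{[n]}$ from the original $\xbf$-frame by the rotation matrix $\Qbf$ and verify that the resulting orthogonal action on $\ubf$-space is still a faithful action of $S_{[n]}$ that preserves $\Tbf_u(\Lambda)$.

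First, I would let $\mathbf{P}_\sigma$ denote the $n \times n$ permutation matrix representing $\sigma \in S_{[n]}$ acting by coordinate permutation in the $\xbf$-frame, so that $\mathbf{P}_\sigma \Lambda = \Lambda$ for every $\sigma$. Under $\ubf = \Tbf_u(\xbf) = \Qbf^{\Tsf}(\xbf - \cbf)$ with $\cbf = c\mathbf{1}$, the action $\xbf \mapsto \mathbf{P}_\sigma \xbf$ pulls back to $\ubf \mapsto \widetilde{\mathbf{P}}_\sigma \ubf + \Qbf^{\Tsf}(\mathbf{P}_\sigma - \Ibf)\cbf$ on $\ubf$-space, where $\widetilde{\mathbf{P}}_\sigma \equiv \Qbf^{\Tsf} \mathbf{P}_\sigma \Qbf$. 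Because every permutation matrix satisfies $\mathbf{P}_\sigma \mathbf{1} = \mathbf{1}$, the translational part $\Qbf^{\Tsf}(\mathbf{P}_\sigma - \Ibf) c \mathbf{1}$ vanishes, leaving the purely linear orthogonal action $\ubf \mapsto \widetilde{\mathbf{P}}_\sigma \ubf$. Since conjugation by $\Qbf \in O(n)$ is an injective group automorphism of $O(n)$, the map $\sigma \mapsto \widetilde{\mathbf{P}}_\sigma$ is a faithful orthogonal representation of $S_{[n]}$, and each $\widetilde{\mathbf{P}}_\sigma$ is an isometric symmetry of $\Tbf_u(\Lambda)$.

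Next, I would interpret the decomposition of this representation geometrically. Because $\Qbf \ebf_1 = \mathbf{1}/\sqrt{n}$ by the construction underlying \eqref{eq:rotationMatrix}, every $\widetilde{\mathbf{P}}_\sigma$ fixes the $u_1$-axis pointwise and therefore descends to an orthogonal transformation of the $(n-1)$-dimensional hyperplane $\{u_1 = 0\}$. Among these $n!$ transformations, the $(n-1)!$ that reduce to coordinate permutations of $\{u_2, \ldots, u_n\}$ are exactly those captured by Prop.\ \ref{prop:LambdaSymInRotatedSystem}; the remaining $n! - (n-1)!$ act as non-trivial rotations or reflections of the transverse hyperplane, corresponding to those symmetries of the embedded $(n-1)$-simplex $\Tbf_u(\Smc)$ that move the image of $\ebf_1$ to another vertex.

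Thus the entire symmetry group $S_{[n]}$ of $\Lambda$ is retained as a group of isometries of $\Tbf_u(\Lambda)$; only the geometric realization of certain elements changes, from coordinate permutations in the $\xbf$-frame to more general orthogonal transformations in the $\ubf$-frame. The main subtlety I expect is semantic rather than computational: the statement requires interpreting ``symmetry'' as membership in the isometry group of the set, not as the property of being a coordinate permutation in a particular frame. Once that is made precise, the result reduces to the observation that conjugation by the orthogonal matrix $\Qbf$ is an injective group homomorphism and therefore preserves the cardinality and group structure of any subgroup of $O(n)$.
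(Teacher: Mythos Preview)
Your argument is correct but proceeds along a different route from the paper. You work at the level of the full group $S_{[n]}$: conjugating the permutation representation $\sigma \mapsto \mathbf{P}_\sigma$ by $\Qbf$, observing that the translational part vanishes because $\mathbf{P}_\sigma \mathbf{1} = \mathbf{1}$, and concluding that $\sigma \mapsto \Qbf^{\Tsf}\mathbf{P}_\sigma\Qbf$ is a faithful orthogonal representation of $S_{[n]}$ acting on $\Tbf_u(\Lambda)$. The paper instead works at the level of generators: it identifies each transposition symmetry of $\Lambda$ with a reflecting hyperplane, notes that $\Tbf_u(\Lambda)$ inherits $\binom{n-1}{2}$ such reflections from the coordinate permutations of Prop.~\ref{prop:LambdaSymInRotatedSystem} plus $n-1$ more from the images $\Tbf_u(\Hmc(\ebf_1-\ebf_j,0))$, and checks the count $\binom{n-1}{2}+(n-1)=\binom{n}{2}$. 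Your approach is cleaner and immediately gives the full group structure; the paper's hyperplane count is more pedestrian but has the advantage of making explicit exactly which of the $\binom{n}{2}$ reflections remain coordinate transpositions in the $\ubf$-frame and which become ``tilted'' reflections, tying the result concretely back to Prop.~\ref{prop:LambdaSymInRotatedSystem}.
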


The following two propositions comprise the second major contribution in this subsection. They justify why, for both $\Emc_{\rm out}$ (Prop.\ \ref{prop:outerboundReducedtoBoundaryChecking}) and $\Emc_{\rm in}$ (Prop.\ \ref{prop:innerboundReducedtoBoundaryChecking}), it suffices (necessity is clear) to only consider $\partial\Lambda$ in verifying an ellipsoid induces a valid inner or outer bound on $\Lambda$. This observation reduces the set of points that must be checked from $\xbf \in \Lambda$ to $\xbf \in \partial\Lambda$.  Furthermore, the characterization of $\partial \Lambda$ is amenable to analysis using majorization inequalities, which enables us to provide an alternative proof of the proposed ellipsoid outer bound in \S \ref{ssec:alternateProofEllipsoidalOuterBound}.

\begin{proposition} 
\label{prop:outerboundReducedtoBoundaryChecking}
Fix $c > \frac{1}{n}\left(1-\frac{1}{n}\right)^{n-1}$.  Then \\ $\Lambda \subseteq \Leo \iff \partial \Lambda \cap \Emc_{\rm out} = \emptyset$.
\end{proposition}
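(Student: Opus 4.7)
The plan is to prove the two directions separately, with the forward ($\Rightarrow$) implication being essentially immediate and the backward ($\Leftarrow$) implication requiring a short connectedness argument.

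The forward direction follows from $\partial\Lambda \subseteq \Lambda$, which is transparent from the definitions: any witness $\pbf \in \partial\Smc \subseteq [0,1]^n$ of membership in $\partial\Lambda$ satisfies the equalities in \eqref{eq:partialLambda}, hence the inequalities in \eqref{eq:Lambda}. Consequently $\partial\Lambda \cap \Emc_{\rm out} \subseteq \Lambda \cap \Emc_{\rm out} \subseteq \Leo \cap \Emc_{\rm out} = (\Smc \setminus \Emc_{\rm out}) \cap \Emc_{\rm out} = \emptyset$.

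For the backward direction I would argue by contrapositive. Assume there exists $\xbf_0 \in \Lambda \cap \Emc_{\rm out}$; the goal is to exhibit a point in $\partial\Lambda \cap \Emc_{\rm out}$. The three ingredients I would assemble are: (i) the center $c \mathbf{1}$ of $\Emc_{\rm out}$ lies in $\Emc_{\rm out}$ trivially, and lies in $\Lambda^c$ because the hypothesis $c > m$ places $c\mathbf{1}$ strictly past the all-rates-equal boundary point $\mbf = m\mathbf{1}$; (ii) the open ellipsoid $\Emc_{\rm out}$ is convex, so the entire closed segment $\gamma(t) = (1-t) c\mathbf{1} + t \xbf_0$ for $t \in [0,1]$ lies in $\Emc_{\rm out}$; (iii) the relative interior of $\gamma$ is contained in the strictly positive orthant $\Rbb_{++}^n$, since $c > 0$ and $\xbf_0 \geq \mathbf{0}$. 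Because $\gamma(0) \in \Lambda^c$ and $\gamma(1) \in \Lambda$, continuity (the intermediate value theorem applied to a continuous separating function, or equivalently connectedness) forces some $t^\star \in (0,1)$ with $\gamma(t^\star)$ on the topological boundary of $\Lambda$; the containment $\gamma(t^\star) \in \Rbb_{++}^n$ together with Post's convexity of $\overline{\Lambda^c} \cap \Rbb_+^n$ \cite{Pos1985} and the bijection of Cor.\ \ref{cor:bijectionbetweenSandLambda}(i) identify this topological boundary inside $\Rbb_{++}^n$ with the analytic set $\partial\Lambda$ from Def.\ \ref{def:equivalentFormsLambda}. Thus $\gamma(t^\star) \in \partial\Lambda \cap \Emc_{\rm out}$, contradicting the hypothesis.

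The main obstacle I anticipate is the auxiliary fact in step (i) that $c\mathbf{1} \notin \Lambda$ under the sole hypothesis $c > m$. I expect a brief application of the polynomial membership test (Prop.\ \ref{prop:Lambdarootconditions}) to the univariate polynomial $f(\delta, c\mathbf{1}) = (1+c\delta)^n - \delta$ to suffice: one verifies by elementary calculus on $\delta \mapsto (1+c\delta)^n$ that this polynomial possesses a positive root iff $c \leq m$, with tangency at $\delta = 1/(1-1/n)^{n-1}$ exactly when $c = m$. Everything else in the argument is routine set-topological manipulation, and neither the specific shape of $\Rbf$ nor the additional $(a_1,a_2)$ parameters of $\Emc_{\rm out}$ enter the proof — only convexity of $\Emc_{\rm out}$ and the location of its center are used.
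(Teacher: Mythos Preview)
Your proposal is correct and follows essentially the same route as the paper: the forward direction is immediate from $\partial\Lambda \subseteq \Lambda$, and the backward direction is the contrapositive argument that the segment from $\xbf_0 \in \Lambda \cap \Emc_{\rm out}$ to the center $\cbf = c\mathbf{1} \notin \Lambda$ stays in $\Emc_{\rm out}$ by convexity and must cross $\partial\Lambda$. The paper simply asserts $c\mathbf{1}\notin\Lambda$ from the hypothesis $c>m$ without further comment, whereas you flag it and propose verifying it via the polynomial root test; you are also slightly more careful than the paper about identifying the crossing point with the analytic set $\partial\Lambda$ of Def.~\ref{def:equivalentFormsLambda}.
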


\begin{proposition} 
\label{prop:innerboundReducedtoBoundaryChecking}
Fix $c > \frac{1}{n}\left(1-\frac{1}{n}\right)^{n-1}$.  Then \\ $\Lei \subseteq \Lambda \iff \partial \Lambda \subseteq \overline{\Emc_{\rm in}}$.
\end{proposition}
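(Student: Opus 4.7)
The plan is to prove each direction with a different combination of Post's convexity of $\overline{\Lambda^{c}} \cap \Rbb_{+}^{n}$ and the strict convexity of the quadratic form $g(\xbf) \equiv (\xbf - \cbf)^{\Tsf}\Rbf^{-1}(\xbf - \cbf)$ that cuts out $\Emc_{\rm in}$. For $(\Rightarrow)$ I argue by contrapositive: given $\xbf_{0} \in \partial\Lambda$ with $\xbf_{0} \notin \overline{\Emc_{\rm in}}$, I exhibit $\ybf \in \Lei \setminus \Lambda$. Because $\overline{\Emc_{\rm in}}$ is closed, some open neighborhood $U$ of $\xbf_{0}$ is disjoint from $\overline{\Emc_{\rm in}}$, so it suffices to produce $\ybf \in U \cap \Smc \cap \Lambda^{c}$. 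If $\xbf_{0} \neq \ebf_{i}$ for every $i$, then $\sum_{i}(x_{0})_{i} < 1$ (the identity $\sum_{i} p_{i}\prod_{j \neq i}(1-p_{j}) = \Pr(\text{exactly one success})$ is strict unless $\pbf = \ebf_{i}$), so $\xbf_{0} \in \mathrm{int}(\Smc)$, and perturbing outward along Post's tangent normal $\mathbf{1}-\pbf_{0}$ at $\xbf_{0}$ gives $\ybf \equiv \xbf_{0} + t(\mathbf{1}-\pbf_{0})$ in $U \cap \Smc \cap \Lambda^{c}$ for all sufficiently small $t>0$. For the corner case $\xbf_{0} = \ebf_{i}$, the probe $\ybf_{t} \equiv (1-t)\ebf_{i} + t\ebf_{j}$ with $j \neq i$ satisfies $f(\delta, \ybf_{t}) = 1 + t(1-t)\delta^{2} > 0$ by the root-testing Prop.~\ref{prop:Lambdarootconditions}, so $\ybf_{t} \notin \Lambda$ while $\ybf_{t} \in \partial\Smc$ and $\ybf_{t} \to \ebf_{i}$.

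For $(\Leftarrow)$ I work directly to show $\Lambda^{c} \cap \Smc \subseteq \Emc_{\rm in}$, which is equivalent to $\Lei \subseteq \Lambda$. Let $K \equiv \overline{\Lambda^{c} \cap \Smc}$, a compact convex set by Post's convexity result, and note that $g$ is strictly convex since $\Rbf^{-1}$ is positive definite. The key claim is that every extreme point of $K$ lies in $\partial\Lambda$. Granting this, Bauer's maximum principle yields $\max_{K} g \leq \max_{\partial\Lambda} g \leq 1$. Moreover, for any $\xbf \in K$ attaining $g(\xbf) = 1$, writing $\xbf = \tfrac{1}{2}(\xbf_{1}+\xbf_{2})$ with distinct $\xbf_{1},\xbf_{2} \in K$ would give $g(\xbf) < 1$ by strict convexity, so $\xbf$ must itself be extreme, hence in $\partial\Lambda \subseteq \Lambda$. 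Consequently no point of $\Lambda^{c} \cap \Smc$ can attain $g = 1$, which forces $g(\xbf) < 1$ for every $\xbf \in \Lambda^{c} \cap \Smc$, as required.

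The main obstacle is verifying the extreme-point claim. I would dispatch it by perturbation, handling each possibility for where an extreme $\xbf \in K \setminus \partial\Lambda$ could sit. Interior points of $K$ in $\Rbb^{n}$ are manifestly non-extreme; a point $\xbf \in \Lambda^{c} \cap \partial\Smc$ distinct from every $\ebf_{i}$ has $|\mathrm{supp}(\xbf)| \geq 2$, so any nonzero direction $\vbf$ supported on $\mathrm{supp}(\xbf)$ with $\sum_{i} v_{i} = 0$ keeps $\xbf \pm t\vbf$ in $\partial\Smc \cap \Lambda^{c} \subseteq K$ for small $t>0$ (using openness of $\Lambda^{c}$) and exhibits $\xbf$ as a non-trivial midpoint; and a point of $K$ with some $x_{j} = 0$ reduces to the same claim in dimension $n-1$ upon restricting to the face $\{x_{j}=0\}$. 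Here I use the complementary fact that the only elements of $\Lambda \cap \partial\Smc$ are the vertices $\{\ebf_{i}\}$, which is precisely the same strict-inequality observation invoked above. Thus every extreme point of $K$ must lie on $\partial\Lambda$, closing the argument.
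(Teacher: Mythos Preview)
Your approach is essentially correct and is genuinely different from the paper's.

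For $(\Rightarrow)$ the paper is much shorter: taking complements in $\Smc$, the inclusion $\Lei \subseteq \Lambda$ gives $\Smc\setminus\Lambda \subseteq \Emc_{\rm in}\cap\Smc$, hence $\overline{\Smc\setminus\Lambda} \subseteq \overline{\Emc_{\rm in}}$, and one only needs $\partial\Lambda \subseteq \overline{\Smc\setminus\Lambda}$. Your explicit witness construction also works, but two wrinkles deserve tightening. First, the assertion $\xbf_0 \in \mathrm{int}(\Smc)$ fails when $\pbf_0$ has a zero component (then $\xbf_0$ sits on a coordinate face); this is harmless because $\mathbf{1}-\pbf_0 > \mathbf{0}$ componentwise whenever $\pbf_0 \neq \ebf_i$, so your perturbed $\ybf$ lands in $\mathrm{int}(\Smc)$ anyway. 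Second, ``moving along Post's outward normal'' is not by itself a reason that $\ybf \notin \Lambda$: the tangent hyperplane supports the convex set $\overline{\Lambda^c}\cap\Rbb_+^n$, not $\Lambda$, so crossing it does not force exit from $\Lambda$. The correct justification is that $\ybf > \xbf_0$ componentwise, so if $\ybf \in \Lambda$ its Massey--Mathys dominating point $\hat\ybf \in \partial\Lambda$ would satisfy $\hat\ybf \geq \ybf > \xbf_0$, contradicting uniqueness of the dominating point of $\xbf_0 \in \partial\Lambda$.

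For $(\Leftarrow)$ your route via Bauer's maximum principle and the strict convexity of $g$ is more conceptual than the paper's chain of set-theoretic lemmas; the paper instead proves $\overline{\Lambda^c}\cap\Smc = \mathrm{conv}(\partial\Lambda)$ (its Lem.~\ref{lem:convpartialLambda}) and then argues, through several auxiliary facts about interiors, boundaries, and extreme points of $\overline{\Emc_{\rm in}}$, that $\mathrm{conv}(\partial\Lambda)\setminus\partial\Lambda \subseteq \Emc_{\rm in}$. Both arguments rest on the same structural fact---that the extreme points of $K=\overline{\Smc\setminus\Lambda}$ lie in $\partial\Lambda$---and this is exactly the content of $K=\mathrm{conv}(\partial\Lambda)$. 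Your direct case analysis for this claim is nearly complete, but the recursion in Case~3 quietly assumes $K\cap\{x_j=0\}$ equals the $(n-1)$-dimensional $K$. That identity is true (it follows from the supporting-hyperplane identity $\mathrm{conv}(A)\cap H = \mathrm{conv}(A\cap H)$ together with $\partial\Lambda_n\cap\{x_j=0\}=\partial\Lambda_{n-1}$), but it is not automatic from your setup; it is precisely Lem.~\ref{lem:convpartialLambda} restricted to a face. Borrowing that lemma---or proving $K=\mathrm{conv}(\partial\Lambda)$ once and for all, after which $\mathrm{ext}(K)\subseteq\partial\Lambda$ is immediate---would close the gap and make your Bauer/strict-convexity finish airtight.
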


The following proposition concludes this subsection; it further reduces the search space of points that must be checked to establish the correctness of a bound on $\Lambda$ induced by an ellipsoid $\Emc(c,a_1,a_2)$.  Recall $\Vmc(\pbf) = \{ a \in (0,1] : \exists i \in [n] : p_i = a\}$ as the set of non-zero values taken by a $\pbf \in \partial\Smc$.  Further define $\Pmc^{(2)} = \{ \pbf \in \partial\Smc : |\Vmc(\pbf)| \in \{1,2\}\}$ as those $\pbf$ taking at most two distinct non-zero values, and $\Pmc^{(1)} = \{ \pbf \in \partial\Smc : |\Vmc(\pbf)| = 1\}$ as those $\pbf$ taking exactly one non-zero value; we refer to $\Pmc^{(1)}$ as the set of quasi-uniform (QU) vectors.  The following proposition reduces the search space from $\partial\Lambda$ to $\{\xbf(\pbf) : \pbf \in \Pmc^{(2)}\}$.  

\begin{proposition} 
\label{prop:atMostTwoComponentValues}
Fix an ellipsoid $\Emc(c,a_{1},a_{2})$.  A potential extremizer of the maximization or minimization problem  
\begin{equation}
\label{eq:essec1objf}
\max_{\pbf \in \partial \Smc} ~ (\text{or} \min_{\pbf \in \partial \Smc})~ f (\xbf(\pbf)) \equiv (\xbf(\pbf) - \cbf)^{\Tsf} \Rbf^{-1} (\xbf(\pbf) - \cbf)
\end{equation}
can have \textit{at most two} distinct values among all its non-zero component(s), i.e., $\pbf^{*} \in \Pmc^{(2)}$.  
\end{proposition}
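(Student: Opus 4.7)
My approach follows Step 2 of the proof of Prop.\ \ref{prop:Lambdasi} (the spherical inner bound), now adapted to the quadratic form induced by a general invariant $\Rbf^{-1}$. I would first form the Lagrangian
$\Lmc(\pbf, \mu, \boldsymbol\lambda) = f(\xbf(\pbf)) + \mu\Bigl(\sum_i p_i - 1\Bigr) - \sum_i \lambda_i p_i$,
verify the LICQ regularity condition on $\partial\Smc$, and invoke the first-order KKT conditions. For any two indices $k, l$ with $p_k, p_l > 0$, complementary slackness forces $\lambda_k = \lambda_l = 0$, so stationarity reduces to the equality $\partial f/\partial p_k = \partial f/\partial p_l$ between two partial derivatives of the objective.

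Next, using $\Rbf^{-1} = \zeta \mathbf{1}_{n \times n} + a_2^{-2}\mathbf{I}_{n \times n}$ from Lem.\ \ref{lem:invarianceofR}, the objective decomposes as $f(\xbf) = \zeta(S - nc)^2 + a_2^{-2}\|\xbf - \cbf\|^2$ where $S \equiv \sum_i x_i$. I would then compute the partial derivatives via the chain rule using $\partial x_k/\partial p_k = \pi_k$ and $\partial x_i/\partial p_k = -x_i/(1-p_k)$ for $i \neq k$. Multiplying the stationarity identity through by $(1-p_k)(1-p_l)$ and factoring out the common factor $(p_k - p_l)$ (nonzero because $p_k \neq p_l$), and then applying the identity $\pi_{kl}(2 - p_k - p_l) = \pi_k + \pi_l$ with $\pi_{kl} \equiv \pi(\pbf)/[(1-p_k)(1-p_l)]$, the equation should collapse to a pairwise relation of the form
$A(\pbf)\,(\pi_k + \pi_l) + B\,\pi_{kl}^2\,q(p_k, p_l) = C(\pbf)$,
where $A(\pbf) = \zeta(S-nc) - a_2^{-2}c$ and $B = a_2^{-2}$ depend only on $\pbf$, $q$ is an explicit symmetric polynomial in $(p_k, p_l)$, and the right-hand side $C(\pbf) = A(\pbf)S + B\|\xbf\|^2$ is independent of the chosen pair.

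To conclude, I would argue by contradiction: assume a potential extremizer has three distinct non-zero components $0 < p_j < p_k < p_l$. The displayed pairwise relation must then hold simultaneously for each of the three pairs $(j,k), (j,l), (k,l)$, all with the common right-hand side $C(\pbf)$. Pairwise subtraction eliminates $C(\pbf)$ and yields two algebraic constraints linking $\{p_j, p_k, p_l\}$; combined with the strict monotonicities of $\pi_k$ and $x_k$ in $p_k$ supplied by Lem.\ \ref{lem:monoLso} and the fact that every $p_i < 1$, these constraints will force equality among two of $p_j, p_k, p_l$, contradicting distinctness. Thus $|\Vmc(\pbf)| \leq 2$, i.e., $\pbf \in \Pmc^{(2)}$, as claimed.

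The main technical obstacle is the final contradiction step. In the spherical case the analogous relation \eqref{eq:condIfTwoDistinctNonzeroValues} is a single symmetric expression that, when applied to any triple of distinct values, immediately produces a forbidden consequence such as $p = 1$. Here the rank-one $\zeta\mathbf{1}_{n \times n}$ perturbation of the spherical $\Rbf^{-1}$ contributes the extra $\pi_k + \pi_l$ summand, so elimination of $C(\pbf)$ and extraction of the contradiction from the resulting three-pair system require a more careful combined algebraic and monotonicity argument.
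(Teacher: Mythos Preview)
Your setup is correct and matches the paper: form the Lagrangian, use complementary slackness on nonzero components, and reduce stationarity to $\partial f/\partial p_k=\partial f/\partial p_l$ for any two indices with $p_k,p_l>0$. The decomposition of $f$ via Lem.~\ref{lem:invarianceofR} is also the right starting point.

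The gap is in what happens next. You propose to factor out $(p_k-p_l)$, massage the result into a symmetric pairwise relation with a pair-independent right-hand side, and then extract a contradiction from three simultaneous pairs via subtraction and the monotonicities of Lem.~\ref{lem:monoLso}. You yourself flag this last step as the ``main technical obstacle,'' and indeed the claimed form $A(\pbf)(\pi_k+\pi_l)+B\,\pi_{kl}^2\,q(p_k,p_l)=C(\pbf)$ is asserted rather than derived, and the monotonicities in Lem.~\ref{lem:monoLso} concern $x_i,\pi_i$ individually, not the mixed combinations that survive your subtraction. So as written the argument is incomplete.

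The device you are missing---and it is exactly the one used in Step~2 of Prop.~\ref{prop:Lambdasi} that you cite---is the \emph{expectation} structure of the stationarity equation. The paper computes $\tfrac12(1-p_i)\,\partial f/\partial p_i=\alpha+\pi_i\beta_i$ with $\beta_i$ affine in $x_i$ and, crucially, $\alpha=\sum_i p_i(-\pi_i\beta_i)=\Ebb_{\pbf}[-\pi_i\beta_i]$. The stationarity condition then reads
\[
(1-p_l)\bigl(\alpha-(-\pi_k\beta_k)\bigr)=(1-p_k)\bigl(\alpha-(-\pi_l\beta_l)\bigr),
\]
and since $\alpha$ is the $\pbf$-mean of the values $-\pi_i\beta_i$, if these values were not all equal one could choose $k',l'$ so that $\alpha$ lies strictly between $-\pi_{k'}\beta_{k'}$ and $-\pi_{l'}\beta_{l'}$, making the two sides have opposite signs. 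Hence all $\pi_i\beta_i$ with $p_i>0$ are equal. From $\pi_i\beta_i=\pi_j\beta_j$ one obtains (after a short computation using $\beta_i-\beta_j=a_2^{-2}(x_i-x_j)$) that for any pair with $p_i\neq p_j$ the ratio $\pi_i/\pi_j=\beta_j/\beta_i$ equals a fixed constant independent of the pair; applying this to the three pairs from three distinct nonzero values immediately forces two of them to coincide. No monotonicity lemma or delicate subtraction is needed.

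In short: your KKT reduction is right, but you should not try to factor $(p_k-p_l)$ and chase a three-pair elimination. Recognize $\alpha$ as an expectation over $\pbf$ of the quantities $-\pi_i\beta_i$, conclude those quantities are constant across nonzero components, and then the ``at most two values'' conclusion drops out from one clean pairwise identity.
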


\subsection{Explicit ellipsoid induced bounds}
\label{ssec:derivationAndproofEllipsoidalBounds}
The key contributions of this subsection are Props.\ \ref{prop:Ei} and \ref{prop:Eo}, which leverage the results from the previous subsection to provide an explicit construction for ellipsoid induced inner and outer bounds on $\Lambda$.  As with the proofs of the spherical inner and outer bounds, $\Lsi$ and $\Lso$, the key proof technique we use is by exploiting the implications of Karush-Kuhn-Tucker first order necessary conditions.  In the next subsection (\S \ref{ssec:alternateProofEllipsoidalOuterBound}) we establish the ellipsoid outer bound using a different proof technique.

\begin{proposition}[ellipsoid outer bound]
\label{prop:Eo}
The ellipsoid $\Emc_{\rm out}(c, a_{1,\mathrm{out}}(c),a_{2,\mathrm{out}}(c))$ (in the class of $\Emc(c, a_1,a_2)$ ellipsoids) with 
\begin{IEEEeqnarray}{rCl}
& & a_{1,\mathrm{out}}(c) = \sqrt{(nc-1) c} \nonumber \\
& & a_{2,\mathrm{out}}(c) = \sqrt{\frac{(n-1)}{n a_{1,\mathrm{out}}^{2} - (nc-1)^{2}}} a_{1,\mathrm{out}} = \sqrt{(n-1)c}\IEEEeqnarraynumspace
\label{eqn:Eo}
\end{IEEEeqnarray}
induces an outer bound $\Leo = \Smc \setminus \Emc_{\rm out}$ on $\Lambda$, for all $c > 1/n$. 
\end{proposition}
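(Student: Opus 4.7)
The plan is to apply Proposition \ref{prop:outerboundReducedtoBoundaryChecking} to reduce the outer bound verification $\Lambda \subseteq \Leo$ to the statement $\partial\Lambda \cap \Emc_{\rm out} = \emptyset$, equivalently, $f(\xbf(\pbf)) \geq 1$ for all $\pbf \in \partial\Smc$, where $f$ is the quadratic form in \eqref{eq:dfffafsadfdsaffd}. First I would verify that the prescribed parameters are precisely those of Cor.\ \ref{cor:passingei}, so the ellipsoid passes through each $\ebf_{i}$ and hence $f(\ebf_{i})=1$; thus any outer bound in this family must at best achieve its minimum of $f$ at the vertices of the simplex. Our task is therefore to show that $\ebf_{i}$ (equivalently $\pbf=\ebf_{i}$) are global minimizers of $f(\xbf(\pbf))$ over $\pbf \in \partial \Smc$.

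Next I would invoke Prop.\ \ref{prop:atMostTwoComponentValues}, which (after its KKT analysis) guarantees that any candidate local extremizer $\pbf^{\star}$ satisfies $\pbf^{\star} \in \Pmc^{(2)}$, i.e., its non-zero components take at most two distinct values. Partition $\Pmc^{(2)}$ into the quasi-uniform case $\Pmc^{(1)}$ (a single non-zero value shared among $k \in [n]$ components) and the genuinely two-valued case (with some zero components allowed). For the quasi-uniform subfamily, parameterize $\pbf$ by $k$ and the common value $t\in (0,1]$, plug the resulting $\xbf(\pbf)$ into the closed-form \eqref{eq:dfffafsadfdsaffd} for $f$, and show by direct differentiation in $t$ (and then in $k$) that $f \geq 1$, with equality attained at $t=1,k=1$ (i.e., $\pbf = \ebf_{i}$) and also at the all-rates-equal point $\mbf$ only in the degenerate limit. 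For the two-valued subfamily, the necessary condition \eqref{eq:criticalEquality} from the proof of Prop.\ \ref{prop:atMostTwoComponentValues} together with the corresponding reduced objective \eqref{eq:objEiEoAssumingTwoDistinctNonzeroComponentValues} yields an expression for $f$ in terms of $(p_{s}, p_{l}, k)$ subject to $k p_{s} + (n-k) p_{l} = 1$; I would then argue that this expression is non-decreasing in the ``separation'' between $p_{s}$ and $p_{l}$ (mirroring the monotonicity argument used for $\Lsi$ in Step 3 of the proof of Prop.\ \ref{prop:Lambdasi}), so the infimum over this subfamily is approached only in the boundary limit where one cluster collapses to zero, recovering the quasi-uniform case already handled.

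Finally I would combine the three cases: since the only KKT-admissible candidates are the quasi-uniform vectors and the degenerate limits of two-valued vectors, and since each of these yields $f \geq 1$ with equality on $\{\ebf_{i}\}$, we conclude $f(\xbf(\pbf)) \geq 1$ over $\partial\Smc$, hence $\partial\Lambda$ lies outside $\Emc_{\rm out}$, establishing $\Lambda \subseteq \Leo$. The main obstacle I anticipate is the two-valued subfamily: rather than solve the constrained system \eqref{eq:criticalEquality} explicitly in $(p_{s},p_{l},k)$, the cleanest route is to show that the reduced objective, after substituting the constraint $kp_{s} + (n-k)p_{l} = 1$, is monotone in each free variable on its feasible range, thereby pushing the infimum to the boundary and reducing everything to already-handled cases; this is the same algebraic pattern exploited successfully in \S\ref{sec:siAndso}, and I expect it to work here with a similar but more involved calculation on account of the additional parameter $c$.
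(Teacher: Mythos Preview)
Your overall strategy---reduce to $\partial\Lambda$ via Prop.\ \ref{prop:outerboundReducedtoBoundaryChecking}, then invoke Prop.\ \ref{prop:atMostTwoComponentValues} to restrict attention to $\Pmc^{(2)}$---matches the paper's primary (KKT) proof. But you miss the key simplification for the two-valued subfamily. Rather than bounding the reduced objective \eqref{eq:objEiEoAssumingTwoDistinctNonzeroComponentValues} by a monotonicity argument, the paper simply specializes the necessary condition \eqref{eq:criticalEquality} to the $\Emc_{\rm out}$ parameters and obtains, after algebra using $kp_{s}+(n'-k)p_{l}=1$,
\[
\frac{c\,(1-p_{s})(1-p_{l})}{\pi(\pbf)} \;=\; c - p_{s}p_{l},
\]
which is impossible because the left side exceeds $c$ while the right is below $c$. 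Hence no two-valued $\pbf$ can satisfy the KKT conditions at all, and that subfamily is dismissed in one stroke---no monotonicity needed. Your proposed route (optimize the reduced objective over an enlarged feasible set, as in Step~3 of the $\Lsi$ proof) is precisely the harder calculation the paper reserves for $\Emc_{\rm in}$, where \eqref{eq:criticalEquality} \emph{is} satisfiable for some two-valued $\pbf$.

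A minor correction for the quasi-uniform case: since $\pbf\in\partial\Smc$, the common value is forced to $t=1/k$, so there is only one free parameter $k$, not two. The paper shows the scaled objective \eqref{eq:fobjQUscaled} is monotone increasing in $k$ on $[2,n]$ (using the Taylor bound $h(k)<-\tfrac{1}{k^{2}}(\tfrac12+\tfrac{1}{3k})$ for $h(k)=\tfrac{1}{k}+\log(1-\tfrac{1}{k})$), then checks $k=1$ and $k=2$ separately; $k=1$ is tight by construction and $k=2$ is verified directly.

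You should also be aware that the paper supplies a second, cleaner proof in \S\ref{ssec:alternateProofEllipsoidalOuterBound}: after an affine map sending $\Emc_{\rm out}$ to the unit ball (Lem.\ \ref{lem:EO_necessaryAndsufficientTestingCdnsUsingTransformations}, Prop.\ \ref{prop:necessaryAndsufficientTestingCdnsUsingTransformations}), the outer-bound condition becomes $\tilde f(\pbf)\le c(n-1)$ on $\partial\Smc$, and the Schur--Ostrowski criterion together with the one-line estimate \eqref{eqn:EOSchurConvexity} shows $\tilde f$ is Schur-convex, so its maximum on $\partial\Smc$ is attained at each $\ebf_{i}$ where equality holds. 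This bypasses the KKT case analysis entirely.
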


\begin{proposition}[ellipsoid inner bound]
\label{prop:Ei}
The ellipsoid $\Emc_{\rm in}(c, a_{1,\mathrm{in}}(c),a_{2,\mathrm{in}}(c))$ (in the class of $\Emc(c, a_1,a_2)$ ellipsoids) with 
\begin{IEEEeqnarray}{rCl}
& & a_{1,\mathrm{in}}(c) = \sqrt{n} (c-m) \nonumber \\
& & a_{2,\mathrm{in}}(c) = \sqrt{\frac{(n-1)}{n a_{1,\mathrm{in}}^{2} - (nc-1)^{2}}} a_{1,\mathrm{in}}\IEEEeqnarraynumspace
\label{eqn:Ei}
\end{IEEEeqnarray}
induces an inner bound $\Lei = \Smc \setminus \Emc_{\rm in}$ on $\Lambda$, for all $c > 1/n$. 
\end{proposition}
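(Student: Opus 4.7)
The strategy is to invoke Prop.\ \ref{prop:innerboundReducedtoBoundaryChecking}, which reduces $\Lei \subseteq \Lambda$ to the surface condition $\partial\Lambda \subseteq \overline{\Emc_{\rm in}}$, i.e., $\max_{\pbf \in \partial\Smc} f(\xbf(\pbf)) \leq 1$ for the quadratic form $f$ in \eqref{eq:essec1objf}. First I would verify that the specified ellipsoid passes through the ``extremal'' boundary points: by Prop.\ \ref{prop:InvariantEllipsoid} the $\mathbf{1}$ direction is an axis of semi-axis length $a_1$, so $a_{1,\mathrm{in}}(c) = \sqrt{n}(c-m) = d(\cbf, \mbf)$ forces $\mbf \in \partial\Emc_{\rm in}$, while the prescribed $a_{2,\mathrm{in}}(c)$ in \eqref{eqn:Ei} is precisely the formula from Cor.\ \ref{cor:passingei} that places every $\ebf_i$ on $\partial\Emc_{\rm in}$. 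Both sets of extremal points thus attain $f = 1$, establishing the target upper bound at these candidates.

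Next, by Prop.\ \ref{prop:atMostTwoComponentValues}, any remaining candidate maximizer $\pbf \in \partial\Smc$ lies in $\Pmc^{(2)}$, and I would partition $\Pmc^{(2)}$ into $\Pmc^{(2)}_{a,1}, \Pmc^{(2)}_{a,2}, \Pmc^{(2)}_{b,1}, \Pmc^{(2)}_{b,2}$ exactly as in the proof of Prop.\ \ref{prop:Lambdasi}. The trivial subsets $\Pmc^{(2)}_{a,1} = \{\mathbf{1}/n\}$ and $\Pmc^{(2)}_{b,1} = \{\ebf_i\}_{i=1}^n$ map to $\mbf$ and the $\ebf_i$ respectively and are already handled by the previous step. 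For $\pbf \in \Pmc^{(2)}_{a,2}$ I parametrize by $(p_s, k)$ with $0 < p_s < 1/n$, $k \in [n-1]$, and $p_l = (1 - k p_s)/(n-k)$; combining the KKT identity \eqref{eq:criticalEquality} with the simplified objective \eqref{eq:objEiEoAssumingTwoDistinctNonzeroComponentValues} produces an explicit univariate-per-$k$ function $\tilde{f}(p_s, k)$ that I would show satisfies $\tilde{f} \leq 1$ throughout the open parameter region, with equality attained only in the degenerate limits $p_s \to 0$ (recovering $\ebf_i$) or $p_s \to 1/n$ (recovering $\mbf$). For $\pbf \in \Pmc^{(2)}_{b,2}$ with $s < n$ non-zero coordinates, I would decompose $f$ into a fixed contribution from the $n-s$ zero entries and a reduced-dimensional contribution from the non-zero block, which falls into a lower-dimensional instance of $\Pmc^{(2)}_{a,1}$ or $\Pmc^{(2)}_{a,2}$; a direct comparison with the full-dimension extremizers then closes this case, mirroring Step 4 of the $\Lsi$ proof.

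The principal obstacle is controlling $\tilde{f}$ on $\Pmc^{(2)}_{a,2}$: unlike the spherical case, where a clean $(n-1)c^2$ upper bound drops out of a short monotonicity argument, here the two distinct semi-axes $a_1, a_2$ are themselves nontrivially coupled to $c$ via \eqref{eqn:Ei}, so the algebra is considerably heavier. I expect the proof to hinge on exploiting the algebraic identity $n a_{1,\mathrm{in}}^2 - (nc-1)^2 = (n-1) a_{1,\mathrm{in}}^2 / a_{2,\mathrm{in}}^2$ implicit in \eqref{eqn:Ei} in order to collapse the coupled $(p_s, k)$ dependence in \eqref{eq:objEiEoAssumingTwoDistinctNonzeroComponentValues} into a single-variable optimization, and then running a monotonicity-in-$k$ argument to reduce to $k = n-1$, at which point the supremum over $p_s \in (0, 1/n)$ is attained in the limit $p_s \to 0$ with limiting value exactly $1$ by the boundary-passing construction of the first step.
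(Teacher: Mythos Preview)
Your high-level scaffolding matches the paper: reduce to $\max_{\pbf \in \partial\Smc} f \leq 1$ via Prop.~\ref{prop:innerboundReducedtoBoundaryChecking}, invoke Prop.~\ref{prop:atMostTwoComponentValues} to restrict to $\Pmc^{(2)}$, and observe that the choice of $(a_{1,\mathrm{in}}, a_{2,\mathrm{in}})$ forces $f(\mbf)=f(\ebf_i)=1$. But the two concrete mechanisms you import from the $\Lsi$ proof do not survive the passage from spheres to ellipsoids.

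First, the dimension-reduction step for $\Pmc^{(2)}_{b,2}$ breaks down. For a sphere the objective $\sum_i (c-x_i)^2$ splits cleanly into zero and nonzero blocks, with the nonzero block being an $s$-dimensional instance of the same problem. The ellipsoid objective (via Lem.~\ref{lem:invarianceofR}) is $f = \tfrac{1}{na_1^2}(nc-\Sigma)^2 + \tfrac{1}{a_2^2}(S-\tfrac{1}{n}\Sigma^2)$, and the ambient $n$ persists in the coefficients even when some $x_i$ vanish; the nonzero block is \emph{not} an $s$-dimensional $\Emc(c,a_1,a_2)$ problem. The paper accordingly keeps $n$ fixed throughout and introduces $n'\leq n$, the number of nonzero components, as a separate parameter inside the same $n$-dimensional objective.

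Second, for $|\Vmc(\pbf)|=2$ the paper does not establish monotonicity of a $\tilde f(p_s,k)$ over an enlarged $\Pmc^{(2)}_{a,2}$. Its mechanism is the reverse: it shows that the KKT equality \eqref{eq:criticalEquality}, when specialized to $\Emc_{\rm in}$, is satisfiable only at $k=n'-1$ (by proving the left side is monotone in $k$ and already exceeds an upper bound on the right side at $k=n'-2$). Only after $k$ is pinned to $n'-1$ does the paper pass, via an AM-GM relaxation, to a one-variable inequality $f_r(p_s)>0$, whose minimum over $p_s\in(0,1/n')$ is attained as $p_s\to 1/n'$ with value $0$. Note both endpoints $p_s\to 0$ and $p_s\to 1/n'$ correspond to $f\to 1$ (they collapse to $\ebf_i$ and $\mbf$ respectively), so your expectation of a monotone-in-$p_s$ profile with the supremum only at $p_s\to 0$ cannot be quite right here. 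Finally, the paper handles all quasi-uniform $\pbf$ (any number of zeros, one nonzero value) as a separate step, showing $f$ is monotone increasing in the count $k$ of nonzero components on $[2,n]$; this piece is not cleanly captured by your $\Pmc^{(2)}_{b,2}$ reduction either.
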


\begin{remark} \label{remark:LeiLeoRecoverLsiLso}
The ellipsoid inner bound $\Lei$ recovers the optimal spherical inner bound $\Lsi^{*}$ by setting $c$ to be the critical $c^{*}_{\rm in} = (1 - n m^{2})/(2(1 - n m)$ (note $c^{*}_{\rm in} > 1/n$ for all $n \geq 2$); the ellipsoid outer bound $\Leo$ recovers the optimal spherical outer bound $\Lso^{*}$ by setting $c = c^{*}_{\rm out} = 1$. Furthermore, as a consequence of the tightness monotonicity Prop.\ \ref{prop:tightnessLeiLeo}, any $\Lei$ with $c > c^{*}_{\rm in}$ is better than $\Lsi^{*}$ and any $\Leo$ with $c > 1$ is better than $\Lso^{*}$.
\end{remark}

In words, $\Emc_{\rm in}$ is such that its boundary $\partial \Emc_{\rm in}$ passes through $\ebf_1,\ldots,\ebf_n$ and the all-rates-equal point $\mbf$.  It can be shown that $\partial \Emc_{\rm in}$ is tangent at $\mbf$ with $\partial \Lambda$ (recall Lem.\ \ref{lem:commontangency}). $\Emc_{\rm out}$ is such that its boundary $\partial \Emc_{\rm out}$ passes through each $\ebf_i$ and is also tangent with $\partial \Lambda$ at each $\ebf_i$ (recall Lem.\ \ref{lem:commontangency}).

The ellipsoid bounds $\Lei$, $\Leo$ together with $\partial \Lambda$ are shown in Fig.\ \ref{fig:LeiLeo-Combined-Marked} for $n=2$ and $3$, where the center is chosen to be $c=2$. Improvements can be seen by comparing this figure with the optimal spherical bounds shown in Fig.\ \ref{fig:LsiLso-Combined-Marked-ver20140802} (\S\ref{sec:siAndso}). Furthermore, the quality of the ellipsoid bounds are improved by increasing $c$, as stated in the following proposition (illustrated in Fig.\ \ref{fig:LeiLeo-Combined-Marked-TightnessMonotonictyWRTc-Neq2}).

\begin{figure}[!ht]
\centering
\includegraphics[width=0.4\textwidth]{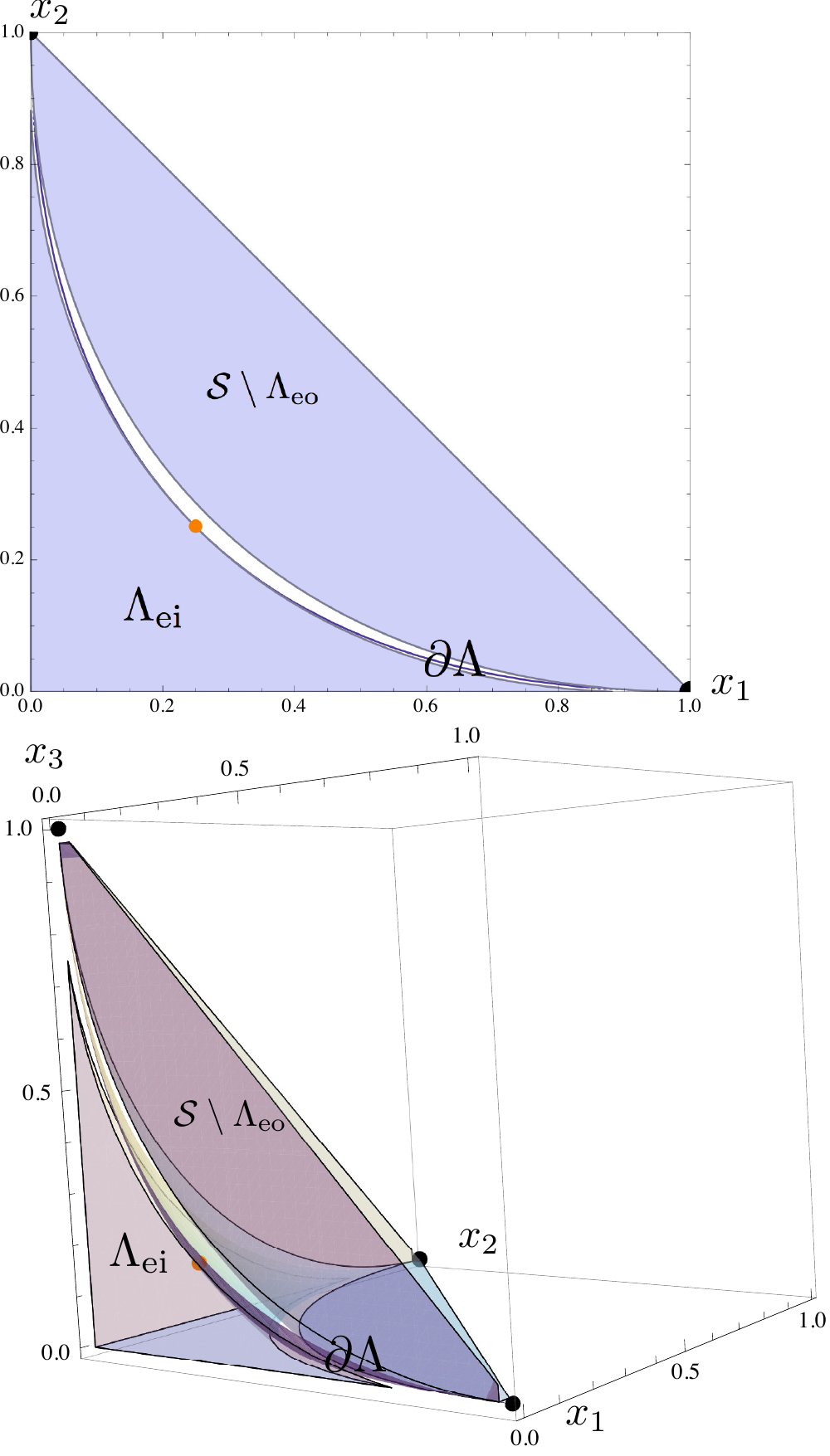}
\caption{Ellipsoid bounds for $n=2$ (top) and $3$ (bottom) when $c=2$: inner bound and the complement (w.r.t.\ $\Smc$) of outer bound are shown in solid; in between sits $\partial \Lambda$. Also shown are the all-rates-equal point $\mbf$ (orange) and corner points $\ebf_{i}$'s (black).}
\label{fig:LeiLeo-Combined-Marked}
\end{figure} 

\begin{proposition} \label{prop:tightnessLeiLeo}
For all $c > 1/n$, the tightness of both $\Lei$ and $\Leo$ increases in $c$.
\end{proposition}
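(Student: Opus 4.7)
The plan is to prove the two monotonicities pointwise: for each fixed $\xbf \in \Smc$, I show that $\{c > 1/n : \xbf \in \Emc_{\rm in}(c)\}$ and $\{c > 1/n : \xbf \in \Emc_{\rm out}(c)\}$ are each either empty or a right half-line. This implies that $\Emc_{\rm in}(c) \cap \Smc$ is nonincreasing and $\Emc_{\rm out}(c) \cap \Smc$ is nondecreasing in $c$; equivalently, $\Lei(c)$ grows and $\Leo(c)$ shrinks. By Lem.~\ref{lem:invarianceofR} the ellipsoid quadratic form evaluates to
\[
F(\xbf,c) \,=\, \frac{(nc-\Sigma)^{2}}{n a_{1}^{2}} + \frac{A}{n a_{2}^{2}},
\qquad \Sigma \equiv \sum_{i} x_{i},\ \ A \equiv n\sum_{i} x_{i}^{2} - \Sigma^{2},
\]
with $0 \leq A \leq n{-}1$ for $\xbf \in \Smc$. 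The task therefore reduces to tracking the sign of $F(\xbf,c)-1$ as $c$ varies.

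For $\Leo$, substituting $a_{1,\mathrm{out}}^{2} = c(nc{-}1)$ and $a_{2,\mathrm{out}}^{2} = (n{-}1)c$ from \eqref{eqn:Eo} and making the change of variable $u = nc-1 > 0$, direct algebra yields
\[
F_{\rm out}(\xbf,c) - 1 \,=\, \frac{u\,\alpha + \beta}{(n-1)\,u\,(u+1)},
\quad \alpha \equiv A + (n{-}1)(1 - 2\Sigma),\ \ \beta \equiv (n{-}1)(1-\Sigma)^{2} \geq 0.
\]
The denominator is strictly positive, so the condition $F_{\rm out} < 1$ is equivalent to the linear inequality $u\alpha + \beta < 0$. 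Since $\beta \geq 0$, its solution set in $u > 0$ is either empty (when $\alpha \geq 0$) or the right half-line $(-\beta/\alpha, \infty)$ (when $\alpha < 0$), giving the outer bound half of the claim.

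For $\Lei$, the change of variable $u = n(c-m) > 0$ (so $c > 1/n \iff u > B \equiv 1 - nm$) combined with the semi-axes in \eqref{eqn:Ei} yields
\[
F_{\rm in}(\xbf,c) - 1 \,=\, \frac{2 u P + Q}{u^{2}},
\quad P \equiv \tfrac{AB}{n-1} - y,\ \ Q \equiv y^{2} - \tfrac{AB^{2}}{n-1},\ \ y \equiv \Sigma - nm.
\]
A case analysis on the sign of $P$ shows that the subset of $u > 0$ on which $2uP + Q \geq 0$ is either all of $(0,\infty)$, empty, a right half-line, or---and this is the delicate case---a bounded left interval $(0, u^{*}]$ with $u^{*} = -Q/(2P)$, which occurs exactly when $P < 0$ and $Q > 0$. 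All cases except the last are immediately compatible with the required monotonicity.

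The crux is ruling out the left-interval case within the admissible range $u > B$ whenever $\xbf \in \Smc$. Specifically, I will show that if $\xbf \in \Smc$ and $(P,Q)$ satisfy $P < 0$ and $Q > 0$, then $u^{*} \leq B$, so $(0, u^{*}]$ is disjoint from $(B,\infty)$ and $F_{\rm in}(\xbf,c) < 1$ for every admissible $c$ (meaning $\xbf \in \Emc_{\rm in}(c)$ permanently and never contributes to $\Lei(c)$). The inequality $u^{*} \leq B$ rearranges to $Q + 2PB \leq 0$, which expands, using $y - B = \Sigma - 1$, to $(1-\Sigma)^{2} + B^{2}(\tfrac{A}{n-1} - 1) \leq 0$. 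From $P < 0$ one gets $y > AB/(n-1) \geq 0$, so $\Sigma > nm$ and $0 \leq 1-\Sigma < B$; factoring $B^{2} - y^{2} = (1-\Sigma)(1 + \Sigma - 2nm)$ and using $\Sigma \geq nm$ gives $(1-\Sigma)^{2} \leq B^{2} - y^{2}$; finally $Q > 0$ gives $B^{2} - y^{2} < B^{2}(1 - A/(n-1))$. Chaining the last two inequalities delivers the required strict bound and completes the proof.
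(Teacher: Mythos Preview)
Your argument is correct in substance, though there is a slip in the opening plan: claiming that $\{c>1/n:\xbf\in\Emc_{\rm in}(c)\}$ is a right half-line would force $\Emc_{\rm in}(c)\cap\Smc$ to be \emph{nondecreasing}, the opposite of what you need. What you actually establish for the inner bound is that $\{c>1/n:F_{\rm in}(\xbf,c)\geq 1\}=\{c>1/n:\xbf\notin\Emc_{\rm in}(c)\}$ is a right half-line (or all, or empty), and that does give the correct direction. The execution---the algebra for $F_{\rm out}-1$ and $F_{\rm in}-1$, the case split on $(P,Q)$, and the ``delicate case'' bound $u^{*}\leq B$ via the chain $(1-\Sigma)^{2}\leq B^{2}-y^{2}<B^{2}(1-A/(n-1))$---checks out.

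Your route differs from the paper's. The paper rotates coordinates so that the first axis aligns with $\mathbf{1}$, observes that each cross-section of $\Emc(c,a_{1},a_{2})$ by a hyperplane normal to $\mathbf{1}$ is an $(n{-}1)$-ball, writes the squared radius $r_{2}^{2}$ explicitly, and then differentiates in $c$. You instead stay in the original coordinates, use Lem.~\ref{lem:invarianceofR} to write $F(\xbf,c)$ directly in terms of $(\Sigma,A)$, and make the substitutions $u=nc-1$ (outer) and $u=n(c-m)$ (inner) so that the numerator of $F-1$ becomes \emph{linear} in $u$; monotonicity is then a sign analysis of a single affine function rather than a derivative computation. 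The two are close cousins---your pair $(\Sigma,A)$ corresponds to the paper's cross-section height and distance-to-axis---but yours avoids both the coordinate rotation and the calculus. One thing the paper's organization makes visually transparent is that for $\Emc_{\rm out}$ the nesting holds globally while for $\Emc_{\rm in}$ it requires the restriction to $\Smc$; in your formulation this same asymmetry shows up as needing $\Sigma\leq 1$ only in the inner-bound delicate case.
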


\begin{figure}[!h]
\centering
\includegraphics[width=0.47\textwidth]{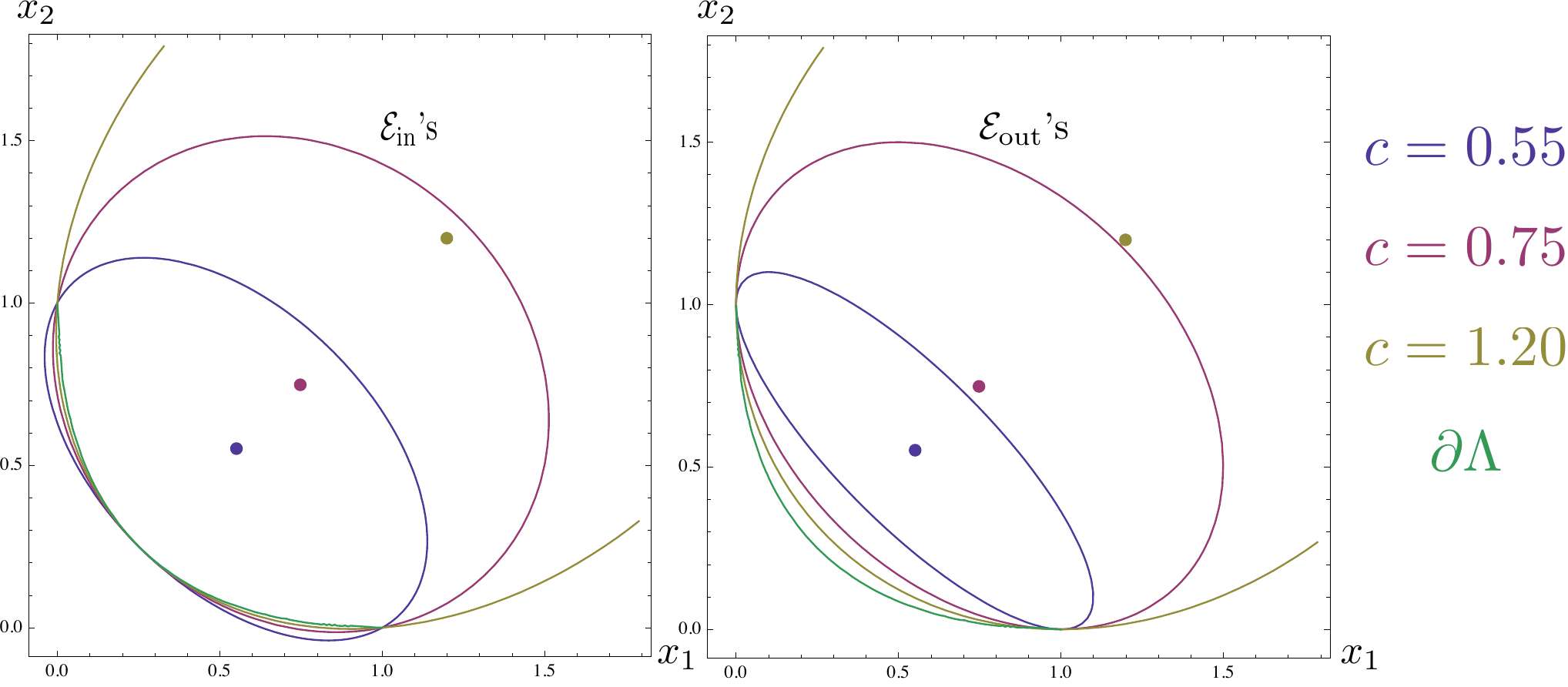}
\caption{Illustration (for $n=2$) of the fact that tightness of both $\Lei$ (left) and $\Leo$ (right) increases in $c$ for $c > 1/n$. For $\Lei$ the set inclusion relationship is reversed once one goes beyond $\Smc$, whereas for $\Leo$ there is a complete set inclusion relationship among this family of ellipsoids. Also shown are $\partial \Lambda$ and the center of each ellipsoid.}
\label{fig:LeiLeo-Combined-Marked-TightnessMonotonictyWRTc-Neq2}
\end{figure}

Thus $\partial \Lambda$ is increasingly tightly ``sandwiched'' between part of $\partial \Emc_{\rm in}$ and $\partial \Emc_{\rm out}$ as $c \rightarrow \infty$.  This sandwiching is asymptotically tight at the all-rates-equal point $\mbf$ (and by construction always tight at each $\ebf_i$).  Specifically, the ellipsoids $\Emc_{\rm in},\Emc_{\rm out}$ viewed in the limit as $c \to \infty$ have axes ratios given by 
\begin{IEEEeqnarray}{rCl}
\lim_{c \to \infty} \frac{a_{1,\mathrm{in}}(n,c)}{a_{1,\mathrm{out}}(n,c)} & = & 1 \nonumber \\
\lim_{c \to \infty} \frac{a_{2,\mathrm{in}}(n,c)}{a_{2,\mathrm{out}}(n,c)} & = & \left(2\left(1-\left(1-\frac{1}{n}\right)^{n-1}\right)\right)^{-\frac{1}{2}}\IEEEeqnarraynumspace
\end{IEEEeqnarray}
for each $n$, where the $a_2$ axes ratio is a monotone decreasing function of $n$, starting  with value $1$ at $n=2$.  Thus when $n=2$, the $\Emc_{\rm in},\Emc_{\rm out}$ are asymptotically equal as $c \to \infty$.  It follows that the asymptotic (in $n$ and $c$) $a_2$ axes ratio is
\begin{equation}
\lim_{n \to \infty} \lim_{c \to \infty} \frac{a_{2,\mathrm{in}}(n,c)}{a_{2,\mathrm{out}}(n,c)} = \sqrt{\frac{\erm}{2(\erm-1)}} \approx 0.8894.
\end{equation}

\subsection{An alternative proof of the outer bound}
\label{ssec:alternateProofEllipsoidalOuterBound}
This subsection supplies an alternative approach to proving the outer bound in Prop.\ \ref{prop:Eo}. It originates from this observation: membership testing for a ball is even simpler than that for an ellipsoid in that the Euclidean distance to the center of the ball is the sole indicator of set membership. Also, recall our ellipsoid parameterized by $(c,a_{1},a_{2})$ is already spherical in the subspace spanned by its $2^{\rm nd}, \ldots, n^{\rm th}$ axes, so the required transformation converting the ellipsoid to a ball is expected to be simple, too. Of course, ultimately we care about bounding $\Lambda$ rather than membership testing for ellipsoid. The gap is filled in by the following lemma, which uses Prop.\ \ref{prop:outerboundReducedtoBoundaryChecking}.

Let $\Tbf_{\mathrm{out}}$ be an affine transformation that transforms $\Emc_{\rm out}$ (with center $\cbf = c \mathbf{1}$ and $c > \frac{1}{n}\left(1-\frac{1}{n}\right)^{n-1}$) to the unit ball at the origin, i.e., $\Tbf_{\mathrm{out}} \left( \Emc_{\rm out} \right) = \Bmc \left( \obf, 1 \right)$.  For any $\xbf_{t} \in \partial \Lambda$ denote by $\Hmc_{t} = \Hmc_t(\xbf_t)$ the hyperplane tangent to $\Lambda$ at $\xbf_t$.  In the transformed space, this hyperplane is denoted $\hat{\Hmc}_{t} = \Tbf_{\mathrm{out}} (\Hmc_{t})$. 
\begin{lemma}
\label{lem:EO_necessaryAndsufficientTestingCdnsUsingTransformations}
Under the above transformation,  
\begin{equation} 
\label{eq:lemTransformApproachForwardPart}
d(\hat{\Hmc}_{t}, \obf) \geq 1 \text{ holds for \textit{all} }\xbf_{t} \in \partial \Lambda ~ \Rightarrow ~  \Lambda \subseteq \Leo.
\end{equation}
If $\Emc_{\rm out}$ is confined within $\Rbb_{+}^{n}$, then we also have a converse, namely
\begin{IEEEeqnarray}{rCl}
\IEEEeqnarraymulticol{3}{l}
{\text{If } \Emc_{\rm out} \subseteq \Rbb_{+}^{n}, \text{ then } 
}\nonumber \\* \quad
& & \Lambda \subseteq \Leo ~ \Rightarrow ~ d(\hat{\Hmc}_{t}, \obf) \geq 1 \text{ holds for \textit{all} }\xbf_{t} \in \partial \Lambda.\IEEEeqnarraynumspace
\label{eq:lemTransformApproachConversePart}
\end{IEEEeqnarray}
\end{lemma}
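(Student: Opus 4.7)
The plan is to reduce both directions of the lemma to a single geometric fact: the condition $d(\hat{\Hmc}_{t}, \obf) \geq 1$ is equivalent to $\hat{\Hmc}_{t} \cap \Bmc(\obf, 1) = \emptyset$, which, because $\Tbf_{\mathrm{out}}$ is an affine bijection sending $\Emc_{\rm out}$ to the open unit ball, is in turn equivalent to $\Hmc_{t} \cap \Emc_{\rm out} = \emptyset$ in the original space. Combined with Prop.\ \ref{prop:outerboundReducedtoBoundaryChecking}, which recasts $\Lambda \subseteq \Leo$ as $\partial \Lambda \cap \Emc_{\rm out} = \emptyset$, this shifts the entire argument to comparing the tangent hyperplane $\Hmc_{t}$ with $\Emc_{\rm out}$ directly. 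Throughout, the convexity of $\Lambda^{c} \cap \Rbb_{+}^{n}$ due to Post \cite{Pos1985}, together with the fact that $\Hmc_{t}$ is a supporting hyperplane of this convex set at $\xbf_{t}$ (as already invoked in the proof of Prop.\ \ref{prop:Lambdapi}), will supply the geometric separation we need.

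For the forward implication \eqref{eq:lemTransformApproachForwardPart}, I would argue as follows. Fix an arbitrary $\xbf_{t} \in \partial \Lambda$; by the equivalence above, the hypothesis $d(\hat{\Hmc}_{t}, \obf) \geq 1$ yields $\Hmc_{t} \cap \Emc_{\rm out} = \emptyset$. Since $\xbf_{t} \in \Hmc_{t}$, we immediately conclude $\xbf_{t} \notin \Emc_{\rm out}$. Varying $\xbf_{t}$ over $\partial \Lambda$ gives $\partial \Lambda \cap \Emc_{\rm out} = \emptyset$, and Prop.\ \ref{prop:outerboundReducedtoBoundaryChecking} then delivers $\Lambda \subseteq \Leo$.

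The converse \eqref{eq:lemTransformApproachConversePart} requires slightly more care. Assuming $\Emc_{\rm out} \subseteq \Rbb_{+}^{n}$ and $\Lambda \subseteq \Leo$ (equivalently $\Emc_{\rm out} \cap \Lambda = \emptyset$), we obtain $\Emc_{\rm out} \subseteq \Lambda^{c} \cap \Rbb_{+}^{n}$. Fix $\xbf_{t} \in \partial \Lambda$; because $\Hmc_{t}$ supports the convex set $\Lambda^{c} \cap \Rbb_{+}^{n}$ at $\xbf_{t}$, this convex set sits in a closed halfspace $\Hmc_{t}^{+}$, hence $\Emc_{\rm out} \subseteq \Hmc_{t}^{+}$. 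The delicate topological step is to upgrade this to the open halfspace: since $\Emc_{\rm out}$ is open, any point $\ybf \in \Emc_{\rm out} \cap \Hmc_{t}$ would have a neighborhood inside $\Emc_{\rm out}$ that straddles $\Hmc_{t}$, contradicting $\Emc_{\rm out} \subseteq \Hmc_{t}^{+}$. Hence $\Hmc_{t} \cap \Emc_{\rm out} = \emptyset$, which after transformation yields $d(\hat{\Hmc}_{t}, \obf) \geq 1$. I expect this open-halfspace upgrade to be the main obstacle to articulate cleanly, and I anticipate the hypothesis $\Emc_{\rm out} \subseteq \Rbb_{+}^{n}$ is indispensable precisely because $\Lambda^{c}$ itself is not convex in $\Rbb^{n}$---only its restriction to the non-negative orthant is---so any portion of $\Emc_{\rm out}$ escaping $\Rbb_{+}^{n}$ could not be controlled by Post's supporting-hyperplane structure.
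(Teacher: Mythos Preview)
Your proposal is correct and follows essentially the same approach as the paper: both directions hinge on the equivalence $d(\hat{\Hmc}_{t}, \obf) \geq 1 \iff \Hmc_{t} \cap \Emc_{\rm out} = \emptyset$ via the affine bijection, the forward part uses $\xbf_{t} \in \Hmc_{t}$ together with Prop.~\ref{prop:outerboundReducedtoBoundaryChecking}, and the converse uses Post's convexity of $\Lambda^{c} \cap \Rbb_{+}^{n}$ with $\Hmc_{t}$ as a supporting hyperplane. Your explicit ``open-halfspace upgrade'' is a slightly more careful articulation of a step the paper compresses into the phrase ``$d(\Hmc_{t}, \overline{\Emc_{\rm out}}) > 0$ or $\Hmc_{t}$ is tangent with $\overline{\Emc_{\rm out}}$,'' but the content is identical.
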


\begin{remark}
The intuition behind Lem.\ \ref{lem:EO_necessaryAndsufficientTestingCdnsUsingTransformations} is that the intersection between any tangent hyperplane of $\partial \Lambda$ and $\mathbb{R}_{+}^{n}$ is included in $\Lambda$ (proof similar to part of the proof of Prop.\ \ref{prop:Lambdapi} in \S\ref{sec:piAndpo}), hence the intersection is disjoint from (or at most tangent with) any $\overline{\Emc_{\rm out}}$. If further $\Emc_{\rm out}$ is entirely confined within $\mathbb{R}_{+}^{n}$, then for any $\xbf_t \in \partial\Lambda$, any part of the associated tangent hyperplane $\Hmc_t(\xbf_t)$ beyond $\mathbb{R}_{+}^{n}$ cannot touch $\Emc_{\rm out}$. Consequently, the entire tangent hyperplane is disjoint from (or at most tangent with) such $\overline{\Emc_{\rm out}}$.
\end{remark}

The following proposition (Prop.\ \ref{prop:necessaryAndsufficientTestingCdnsUsingTransformations}) will be used in our alternative proof of the $\Leo$ bound.  It is essentially an application of Lem.\ \ref{lem:EO_necessaryAndsufficientTestingCdnsUsingTransformations}; one additional element is the conversion between two forms of an ellipsoid namely the affine transformation of a unit ball centered at the origin $\Emc = \left\{\xbf:  \left. \xbf = \cbf + \Abf \vbf ~ \right| ~ \| \vbf \| < 1  \right\}$ and the (quadratic) form given in \eqref{eq:ellipsoidDef}. For notational convenience, we define the following functions, assuming $(c, a_{1}, a_{2})$ is given.
\begin{definition}
\label{def:gAndfEoAlternateProof}
\begin{IEEEeqnarray}{rCl}
g(\pbf) & \equiv & \frac{(n-1)^{2}}{n}a_{1}^{2} + a_{2}^{2}\sum_{i=2}^{n}\left( \frac{1}{\sqrt{n}} - p_{i} -\frac{1}{1+\sqrt{n}} (1-p_{1}) \right)^{2}, \nonumber \\
f(\pbf)  & \equiv & \sqrt{g(\pbf)} + \pi (\pbf); \nonumber \\
\tilde{g}(\pbf) & \equiv & \frac{(n-1)^{2}}{n}a_{1}^{2} + \left(-\frac{1}{n} + \sum_{i=1}^{n} p_{i}^{2} \right)a_{2}^{2}, \nonumber \\
\tilde{f}(\pbf)  & \equiv & \sqrt{\tilde{g}(\pbf)} + \pi (\pbf).  
\end{IEEEeqnarray}
\end{definition}
It can be verified that
\begin{equation} \label{eq:g2gtilde}
g(\pbf) =  \tilde{g}(\pbf), ~\text{ if } ~ \sum_{i=1}^{n} p_{i} =1.
\end{equation}

\begin{proposition}
\label{prop:necessaryAndsufficientTestingCdnsUsingTransformations}
Fix an ellipsoid $\Emc(c,a_{1},a_{2})$. Define $\Lambda_{\rm e} \equiv \Smc \setminus \Emc$. Assume $c > \frac{1}{n}(1-\frac{1}{n})^{n-1}$. We have
\begin{equation} 
f (\pbf) \leq c(n-1) \text{ for \textit{all} } \pbf \in \partial \Smc ~ \Rightarrow ~  \Lambda \subseteq \Lambda_{\rm e}.
\end{equation}
Conversely,
\begin{IEEEeqnarray}{rCl}
\IEEEeqnarraymulticol{3}{l}
{\text{If } \Emc \subseteq \Rbb_{+}^{n}, \text{ then }
}\nonumber \\* \quad
& &  \Lambda \subseteq \Lambda_{\rm e}  \Rightarrow f (\pbf) \leq c(n-1) \text{ for \textit{all} } \pbf \in \partial \Smc.\IEEEeqnarraynumspace
\end{IEEEeqnarray}
\end{proposition}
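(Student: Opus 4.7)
The plan is to derive an explicit formula for the distance $d(\hat{\Hmc}_t,\obf)$ that appears in Lem.\ \ref{lem:EO_necessaryAndsufficientTestingCdnsUsingTransformations} and show it equals $(c(n-1)-\pi(\pbf))/\sqrt{g(\pbf)}$, where $\pbf \in \partial\Smc$ is in bijection with $\xbf_t = \xbf(\pbf) \in \partial\Lambda$ (Cor.\ \ref{cor:bijectionbetweenSandLambda}). Once this identity is in hand, $d(\hat{\Hmc}_t,\obf)\geq 1$ rearranges at once to $f(\pbf) = \sqrt{g(\pbf)}+\pi(\pbf)\leq c(n-1)$, and the two implications of the proposition follow directly from the two parts of Lem.\ \ref{lem:EO_necessaryAndsufficientTestingCdnsUsingTransformations} applied to the ellipsoid $\Emc$ in place of $\Emc_{\rm out}$.

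First I would fix an explicit realization of the transformation carrying $\Emc$ to the unit ball at the origin, namely $\Tbf_{\mathrm{out}}(\xbf) \equiv \Rbf^{-1/2}(\xbf-\cbf)$, with $\Rbf = \Qbf\Dbf\Qbf^\Tsf$ the spectral decomposition having $\qbf_1 = \mathbf{1}/\sqrt{n}$ and $\Dbf = \mathrm{diag}(a_1^2,a_2^2,\ldots,a_2^2)$ guaranteed by Prop.\ \ref{prop:InvariantEllipsoid} and Lem.\ \ref{lem:invarianceofR}. Post's supporting hyperplane \cite{Pos1985} (already used in Prop.\ \ref{prop:Lambdapi}) gives $\Hmc_t = \{\xbf : (\mathbf{1}-\pbf)^\Tsf \xbf = \pi(\pbf)\}$ at $\xbf_t = \xbf(\pbf)$. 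Under the affine change of variable $\ybf = \Rbf^{-1/2}(\xbf-\cbf)$, any hyperplane $\{\xbf : \abf^\Tsf \xbf = b\}$ becomes $\{\ybf : (\Rbf^{1/2}\abf)^\Tsf \ybf = b - \abf^\Tsf \cbf\}$, whose distance from the origin is $|b-\abf^\Tsf\cbf|/\|\Rbf^{1/2}\abf\|$. Specializing to $\abf=\mathbf{1}-\pbf$ and $b=\pi(\pbf)$, and using $\sum_i p_i = 1$ to evaluate $(\mathbf{1}-\pbf)^\Tsf \cbf = c(n-1)$, the numerator becomes $|\pi(\pbf)-c(n-1)|$.

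The core computation is the denominator $\|\Rbf^{1/2}(\mathbf{1}-\pbf)\|^2 = \sum_j a_j^2 (\qbf_j^\Tsf(\mathbf{1}-\pbf))^2$. Decomposing $\mathbf{1}-\pbf$ parallel and perpendicular to $\qbf_1$, the parallel part contributes $\tfrac{(n-1)^2}{n}a_1^2$ (using $\sum p_i = 1$), while the perpendicular part has squared norm $\|\mathbf{1}-\pbf\|^2 - \tfrac{(n-1)^2}{n} = \sum_i p_i^2 - \tfrac{1}{n}$ and contributes $a_2^2(\sum_i p_i^2 - 1/n)$. The total is precisely $\tilde g(\pbf)$, which coincides with $g(\pbf)$ on $\partial\Smc$ by \eqref{eq:g2gtilde}. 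Finally, the hypothesis $c > m \equiv \tfrac{1}{n}(1-1/n)^{n-1}$ together with the AM--GM bound $\pi(\pbf) \leq (1-1/n)^n = m(n-1)$ on $\partial\Smc$ forces $c(n-1)-\pi(\pbf) > 0$, so the absolute value can be dropped and
\begin{equation}
d(\hat{\Hmc}_t,\obf) = \frac{c(n-1)-\pi(\pbf)}{\sqrt{g(\pbf)}} \geq 1 \iff f(\pbf) \leq c(n-1),
\end{equation}
closing the argument. I expect the main obstacle to be bookkeeping rather than deep analysis: one must thread the change of variables and the $\sum p_i = 1$ simplification consistently so that the coordinate-dependent definition of $g$ in Def.\ \ref{def:gAndfEoAlternateProof} drops out naturally from the rotation-invariant expression $\tilde g$, and then verify $c(n-1) > \pi(\pbf)$ with only the given hypothesis $c > m$.
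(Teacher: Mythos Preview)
Your proposal is correct and follows essentially the same route as the paper: both reduce the proposition to Lem.\ \ref{lem:EO_necessaryAndsufficientTestingCdnsUsingTransformations} by computing the distance from the origin to the transformed tangent hyperplane and showing $d(\hat{\Hmc}_t,\obf)\geq 1 \iff f(\pbf)\leq c(n-1)$. The only difference is cosmetic: the paper works with the explicit rotation matrix $\Qbf$ of \eqref{eq:rotationMatrix}, writes out the coordinates $(\nbf_v)_i$ of the transformed normal, and recognizes $\|\nbf_v\|^2=g(\pbf)$ directly from Def.\ \ref{def:gAndfEoAlternateProof}; you instead compute $\|\Rbf^{1/2}(\mathbf{1}-\pbf)\|^2$ in a coordinate-free way via the parallel/perpendicular decomposition relative to $\mathbf{1}$, land on $\tilde g(\pbf)$, and then invoke \eqref{eq:g2gtilde} to identify it with $g(\pbf)$ on $\partial\Smc$. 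Your AM--GM argument for $c(n-1)>\pi(\pbf)$ is exactly the sign determination the paper alludes to.
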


Prop.\ \ref{prop:necessaryAndsufficientTestingCdnsUsingTransformations} asks us to check the inequality $f(\pbf) \leq c (n-1)$ holds for all $\pbf \in \partial \Smc$. As before, we formulate this as a constrained optimization problem. Yet Schur-convexity now makes our job a lot easier. For this we need to state an equivalent condition for verifying Schur-convexity.
\begin{proposition}[\cite{MarOlk2011}, Ch.\ 3, Thm.\ A.4] \label{prop:SchurCondition}
Let $\Dmc \subseteq \Rbb$ be an open interval and let a symmetric function $f: \Dmc^{n} \to \Rbb$ be continuously differentiable. Then $f$ is Schur-convex on $\Dmc^{n}$ if and only if for all distinct indices $k, l \in [n]$:
\begin{equation} \label{eq:SchurCondition}
(x_{k} - x_{l})\left( \frac{\partial f}{\partial x_{k}} - \frac{\partial f}{\partial x_{l}} \right) \geq 0, ~\forall x \in \Dmc^{n}.
\end{equation}
\end{proposition}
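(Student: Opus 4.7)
The plan is to prove both directions via the classical Hardy--Littlewood--P\'olya (HLP) machinery for majorization: namely, that $x \succ y$ in $\Dmc^n$ if and only if $y$ can be obtained from $x$ via a finite sequence of $T$-transforms (possibly composed with permutations), where a $T$-transform replaces coordinates $(x_i,x_j)$ by $((1-\lambda)x_i + \lambda x_j, \lambda x_i + (1-\lambda)x_j)$ for some $\lambda \in [0,1/2]$. Because $f$ is symmetric, permutations are harmless, so verifying Schur-convexity reduces to checking that $f$ does not increase under a single $T$-transform, and conversely the Schur condition need only be extracted from such infinitesimal $T$-transforms.

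For the forward direction ($\Rightarrow$), assume $f$ is Schur-convex. Fix distinct indices $k,l$ and any $x \in \Dmc^n$; by swapping the roles of $k$ and $l$ if needed, I would take $x_k \ge x_l$ without loss of generality (note the condition \eqref{eq:SchurCondition} is invariant under this swap). Consider the path $x(t) = x + t(\ebf_l - \ebf_k)$ for $t \in [0,(x_k-x_l)/2]$. Each $x(t)$ is obtained from $x$ by a Robin-Hood transfer, so $x(t) \prec x$, and Schur-convexity yields $f(x(t)) \le f(x)$. Hence $g(t) := f(x(t))$ is maximized on the interval at $t=0$, forcing the one-sided derivative $g'(0^+) = \partial f/\partial x_l - \partial f/\partial x_k \le 0$. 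Multiplying by $(x_k - x_l) \ge 0$ gives \eqref{eq:SchurCondition}.

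For the backward direction ($\Leftarrow$), assume \eqref{eq:SchurCondition}. Given $x \succ y$, invoke HLP to write $y = T_N \cdots T_1(x)$ (up to permutation), so it suffices to prove $f$ does not increase under one $T$-transform acting on coordinates $i,j$ with $x_i \ge x_j$. Parameterize the transform as $z(t) = x + t(\ebf_j - \ebf_i)$ for $t \in [0, \lambda(x_i - x_j)]$, so $z(0) = x$ and $z$ at the endpoint is the post-transform vector. Along this path $z_i(t) \ge z_j(t)$, so hypothesis \eqref{eq:SchurCondition} at each $z(t)$ gives $\partial f/\partial x_i - \partial f/\partial x_j \ge 0$, whence
\begin{equation}
\frac{d}{dt} f(z(t)) = -\frac{\partial f}{\partial x_i}(z(t)) + \frac{\partial f}{\partial x_j}(z(t)) \le 0.
\end{equation}
Integrating shows $f$ does not increase along the transform, and iterating over $T_1,\ldots,T_N$ (with symmetry absorbing intermediate permutations) gives $f(y) \le f(x)$.

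The main obstacle is that the proof is not really self-contained: it rests squarely on the HLP theorem characterizing majorization by $T$-transforms, which itself is the substantive ingredient of Marshall--Olkin's treatment. Because the excerpt cites \cite{MarOlk2011} Ch.\ 3 Thm.\ A.4 directly, the cleanest treatment is to attribute the result and only sketch the gradient-transfer argument above, rather than reproduce HLP in full. A minor technical point is to confirm the line segment $x(t)$ (or $z(t)$) stays inside the open set $\Dmc^n$ throughout, which is immediate because each intermediate point is a convex combination of two coordinates of the starting vector and $\Dmc$ is an interval.
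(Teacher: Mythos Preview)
The paper does not supply a proof of this proposition at all; it is quoted verbatim from \cite{MarOlk2011}, Ch.\ 3, Thm.\ A.4 and used as a black box in the alternate proof of the ellipsoid outer bound. Your sketch via $T$-transforms and the Hardy--Littlewood--P\'olya decomposition is the standard argument underlying that reference, so there is nothing to compare against in the paper itself.
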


We now apply Prop.\ \ref{prop:necessaryAndsufficientTestingCdnsUsingTransformations}, meaning we need to solve
\begin{equation} \label{eq:OptimizationFormulationTildef}
\max_{\pbf \in \partial \Smc} ~ \tilde{f}(\pbf) = \sqrt{\tilde{g}(\pbf)} + \pi (\pbf),
\end{equation}
where $g(\pbf)$ ($f(\pbf)$) is replaced by $\tilde{g}(\pbf)$ ($\tilde{f}(\pbf)$) due to \eqref{eq:g2gtilde}. Note $\tilde{f}(\pbf)$ is symmetric.

Algebra gives:
\begin{IEEEeqnarray}{rCl}
\IEEEeqnarraymulticol{3}{l}
{(p_{k} - p_{l}) \left( \frac{\partial \tilde{f}}{\partial p_{k}} - \frac{\partial \tilde{f}}{\partial p_{l}}\right)
}\nonumber \\* \quad
& = &  (p_{k} - p_{l})^{2} \left( \frac{a_{2}^{2}}{\sqrt{\tilde{g}}}  - \frac{\pi(\pbf)}{(1-p_{k})(1-p_{l})} \right).\IEEEeqnarraynumspace
\label{eqn:SchurConditionSpecializedToEo}
\end{IEEEeqnarray}
In order to show the RHS of \eqref{eqn:SchurConditionSpecializedToEo} is non-negative, it suffices to show $\frac{a_{2}^{2}}{\sqrt{\tilde{g}}} \geq 1$, for which we specialize with the expressions for $\Emc_{\rm out}$ given in \eqref{eqn:Eo} and get:
\begin{IEEEeqnarray}{rCl}
\left( \frac{a_{2,\mathrm{out}}^{2}}{\sqrt{\tilde{g}}} \right)^{2}  & = & \frac{(n-1)c}{nc - 1 - c + \sum_{i=1}^{n} p_{i}^{2}} \nonumber \\
& \geq & \frac{(n-1)c}{nc - 1 - c + \sum_{i=1}^{n} p_{i}} = 1.
\label{eqn:EOSchurConvexity}
\end{IEEEeqnarray}
Therefore, Prop.\ \ref{prop:SchurCondition} tells $\tilde{f}$ in \eqref{eq:OptimizationFormulationTildef} specialized to our $\Emc_{\rm out}$ is Schur-convex. It then follows from the definition of Schur-convexity and the fact $\ebf_{i}$ majorizes every other point in the feasible set $\partial \Smc$ that the \textit{global} maximum of $\tilde{f}$ is attained at each $\ebf_{i}$. Finally, evaluating $\tilde{f}$ at $\ebf_{i}$ gives $c(n-1)$, the desired global maximum (according to Prop.\ \ref{prop:necessaryAndsufficientTestingCdnsUsingTransformations}). This then completes our alternative proof that our proposed ellipsoid outer bound is valid.

\begin{remark}
The Schur-convexity approach would not be directly applicable to the proposed ellipsoid inner bound in Prop.\ \ref{prop:Ei}. Because even if we could have a parallel result to Prop.\ \ref{prop:necessaryAndsufficientTestingCdnsUsingTransformations} for inner bounding ellipsoids, the fact that our inner bounding ellipsoid in \eqref{eqn:Ei} is tight at $\mbf$ as well as at $\ebf_{i}$ precludes Schur-convexity, since $\mbf$ is strictly majorized by every other point and $\ebf_{i}$ majorizes every other point.
\end{remark}

\section{Generalized convexity properties} 
\label{sec:genconv}

Given an arrival rate vector $\xbf = (x_{1}, \ldots, x_{n})$, define the set of stabilizing controls $\Pmc(\xbf)$ assuming the worst-case service rate:
\begin{equation}
\label{eq:t}
\Pmc(\xbf) \equiv \left\{ \pbf \in [0,1]^n : x_{i} \leq p_i \prod_{j \neq i} (1-p_j), ~ i \in [n] \right\}.
\end{equation}
This set is related to $\Lambda$ in that $\Pmc(\xbf) = \emptyset$ if and only if $\xbf \not \in \Lambda$.  Whereas $\Lambda$ is an important inner bound on the Aloha stability region $\Lambda_{\rm A}$, the set $\Pmc(\xbf)$ can be viewed as the set of control options given a desired arrival rate vector $\xbf$, which is important since knowing the system is stablizable is useful only if one also knows how it may be stabilized. To proceed, we now view $\xbf$ as parameters (instead of $\pbf$, as was done previously) and define an ``excess rate'' (since the term $p_i \prod_{j \neq i} (1-p_j)$ is the worst-case service rate for user $i$) function for each $i \in [n]$:
\begin{equation} 
\label{eq:excessRateFunction}
f_i(\pbf) \equiv x_i - p_i \prod_{j \neq i} (1-p_j), ~ i \in [n].
\end{equation}
The excess rate functions are closely related to the set of stabilizing controls.  To see this, define the $\alpha$-sublevel sets of each excess rate function 
\begin{equation}
S_{i,\alpha} \equiv \{ \pbf  \in [0,1]^n : f_i(\pbf) \leq \alpha \}, ~ \alpha \in \Rbb.
\end{equation}
Denote by $S_{\alpha} = \cap_{i=1}^n S_{i,\alpha}$, then $\Pmc(\xbf) = S_0 = \cap_{i=1}^n S_{i,0}$. In words, the set of stabilizing controls associated with $\Lambda$ is the intersection of $0$-sublevel sets of these $n$ excess rate functions.

One motivation for studying the generalized convexity properties of the excess rate functions is that, given $\xbf$, it is natural to seek the stabilizing control to maximize some norm of the difference between $\xbf$ and the worst-case service rate achieved by $\pbf$, over all controls $\pbf \in \Pmc(\xbf)$.  By analogy, the expected delay in an $M/M/1$ queue with arrival rate $\lambda$ and service rate $\mu$ is $1/(\mu-\lambda)$, and thus delay is minimized for a given $\lambda$ by maximizing $\mu$.  The convexity properties of the excess rate function may be applied to this optimization problem.

For simplicity, throughout this section we will work with the open convex domain $(0,1)^{n}$. The following propositions show these excess rate functions are not convex (Prop.\ \ref{prop:excRateFuncNotConv}), but are quasiconvex (Prop.\  \ref{prop:excRateFuncQuasiConv}), pseudoconvex (Prop.\ \ref{prop:excRateFuncPseudoConv}), and invex (Prop.\ \ref{prop:excRateFuncInvex}). 

There are important implications to be observed. For example, since the set $\Pmc(\xbf)$ is convex (Cor.\ \ref{cor:convexityofStabilizingControls}), we then know $i)$ testing membership in $\Lambda$ of a given rate vector $\xbf$ can be posed as a convex program, $ii)$ any convex combination of (timesharing between) two stabilizing controls is itself a stabilizing control, and $iii)$ the problem of finding a ``nearest'' control under a change in rate vector is a convex program.  That is, suppose the arrival rate $\xbf$ is stabilized by $\pbf \in \Pmc(\xbf)$, and a change in arrival rate to $\xbf'$ necessitates a change in control to some $\pbf'$.  To minimize the magnitude of the change in the control we may seek to minimize $\|\pbf - \pbf'\|$ over $\pbf' \in \Pmc(\xbf')$; the convexity of $\Pmc(\xbf)$ ensures this problem is the convex program of projecting a point $\pbf$ onto a set $\Pmc(\xbf')$ (see e.g., \cite[\S 8.1]{BoyVan2004}).   

\begin{proposition} 
\label{prop:excRateFuncNotConv}
For all $i \in [n]$, the excess rate function $f_i(\pbf)$ in \eqref{eq:excessRateFunction} is not convex on $(0,1)^{n}$.
\end{proposition}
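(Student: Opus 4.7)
\begin{proofsketch}
The plan is to show that the Hessian of $f_i$ fails to be positive semidefinite at every point of $(0,1)^n$; since $f_i$ is $C^2$ on the open convex domain $(0,1)^n$, this rules out convexity.

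First I would compute the relevant partial derivatives. Observe that $f_i(\pbf) = x_i - p_i \prod_{j\neq i}(1-p_j)$ is \emph{linear} in the variable $p_i$, so
\begin{equation}
\frac{\partial^2 f_i}{\partial p_i^2} = 0.
\end{equation}
Next, for any $k \neq i$, a direct differentiation yields
\begin{equation}
\frac{\partial^2 f_i}{\partial p_i \, \partial p_k} = \prod_{j \neq i, \, j \neq k} (1-p_j),
\end{equation}
which is strictly positive on $(0,1)^n$ (with the empty-product convention $=1$ if $n=2$).

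The key observation is then purely linear-algebraic: the $2\times 2$ principal submatrix of the Hessian $H(\pbf)$ indexed by $\{i,k\}$ takes the form $\bigl(\begin{smallmatrix} 0 & c \\ c & 0 \end{smallmatrix}\bigr)$ with $c > 0$, whose determinant is $-c^2 < 0$. This principal submatrix is therefore indefinite, and since every principal submatrix of a positive semidefinite matrix must itself be positive semidefinite, $H(\pbf)$ is not PSD at any $\pbf \in (0,1)^n$. For a $C^2$ function on an open convex set, PSD Hessian everywhere is equivalent to convexity, so $f_i$ cannot be convex on $(0,1)^n$.

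I do not anticipate a serious obstacle; the argument is essentially a one-line Hessian computation combined with the standard PSD principal-submatrix criterion. If a more explicit witness is preferred, one could alternatively exhibit a line segment and midpoint violating the convexity inequality (for instance, choosing two points that agree on coordinates $\{2,\dots,n\}$ except swap $p_i$ with $p_k$ for some $k\neq i$), but the Hessian route is both cleaner and works uniformly for all $n \geq 2$ and all $i \in [n]$.
\end{proofsketch}
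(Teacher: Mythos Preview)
Your argument is correct and considerably cleaner than the paper's own proof. The paper instead exhibits an explicit counterexample to the convexity inequality: it restricts to the simplex face $\partial\Smc$, picks the two points $\pbf_{1,\epsilon} = (1-\epsilon,\epsilon/(n-1),\ldots,\epsilon/(n-1))$ and $\pbf_2 = (1/n)\mathbf{1}$, sets $\theta = 1/2$, and then carries out a somewhat lengthy estimate (via AM--GM and monotonicity of $(1-1/n)^{n-1}$) to produce an $\epsilon$ for which $f_1$ at the midpoint strictly exceeds the average of the endpoint values.

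Both routes are valid, but they buy different things. The paper's approach has the incidental virtue of showing non-convexity even along the constraint set $\partial\Smc$, at the cost of a page of inequalities. Your Hessian argument is essentially immediate: since $f_i$ is separately affine in $p_i$ and in each $p_k$, all diagonal Hessian entries vanish, while the $(i,k)$ off-diagonal entry $\prod_{j\neq i,k}(1-p_j)$ is strictly positive on $(0,1)^n$; the resulting indefinite $2\times 2$ principal minor kills PSD at every point, not just at a carefully chosen one. This is both shorter and yields the stronger conclusion that the Hessian is nowhere PSD.
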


\begin{IEEEproof}
In fact we can work with an additional constraint $\sum_{i=1}^{n} p_{i} = 1$, i.e., with $\partial \Smc$ being the domain. W.l.o.g.\ let us show this for $f_{1}(\pbf) = x_1- p_1 \prod_{j \neq 1} (1-p_j)$. Let $\epsilon \in (0,1)$ be determined later. Denote $\pbf_{1,\epsilon} = (1-\epsilon, \epsilon/(n-1), \ldots, \epsilon/(n-1))$, $\pbf_{2} = (1/n) \mathbf{1}$. Form the convex combination $\pbf_{\theta, \epsilon} = \theta \pbf_{1, \epsilon} + (1-\theta)\pbf_{2}$ for $\theta \in (0,1)$. We shall show for all $n$ the existence of $\epsilon$ and $\theta$ in order for the following inequality to hold:
\begin{equation}
f_{1}(\pbf_{\theta, \epsilon}) > \theta f_{1}(\pbf_{1, \epsilon}) + (1 - \theta) f_{1}(\pbf_{2}).
\end{equation}
After substituting the definitions and setting $\theta = 1/2$, the above inequality becomes:
\begin{IEEEeqnarray}{rCl}
\IEEEeqnarraymulticol{3}{l}
{(1 - \epsilon + \frac{1}{n}) \left( 1 - \frac{n\left(\epsilon + 1 \right) - 1}{2 n \left(n - 1\right)}\right)^{n-1}
}\nonumber \\* \quad
& < & \left( 1 - \epsilon \right) \left( 1 - \frac{\epsilon}{n-1} \right)^{n-1} + \frac{1}{n} \left( 1 - \frac{1}{n} \right)^{n-1}. \IEEEeqnarraynumspace
\end{IEEEeqnarray}
Manipulation of the above equation gives the equivalent form
\begin{equation}
\sqrt[n-1]{\frac{1 - \epsilon + \frac{1}{n}}{\left( 1 - \epsilon \right) \left( 1 - \frac{\epsilon}{n-1} \right)^{n-1} + \frac{1}{n} \left( 1 - \frac{1}{n} \right)^{n-1}}} < \frac{1}{1 - \frac{n\left(\epsilon + 1 \right) - 1}{2 n \left(n - 1\right)}}.
\end{equation}
Applying the AM-GM inequality to the above LHS:
\begin{IEEEeqnarray}{rCl}
\IEEEeqnarraymulticol{3}{l}
{\sqrt[n-1]{\frac{1 - \epsilon + \frac{1}{n}}{\left( 1 - \epsilon \right) \left( 1 - \frac{\epsilon}{n-1} \right)^{n-1} + \frac{1}{n} \left( 1 - \frac{1}{n} \right)^{n-1}} \cdot \underbrace{1 \cdots 1}_{\#: n-2}}
}\nonumber \\* \quad \quad \quad \quad  \quad
& \leq & \frac{\frac{1 - \epsilon + \frac{1}{n}}{\left( 1 - \epsilon \right) \left( 1 - \frac{\epsilon}{n-1} \right)^{n-1} + \frac{1}{n} \left( 1 - \frac{1}{n} \right)^{n-1}}  + n-2}{n-1}.\IEEEeqnarraynumspace
\end{IEEEeqnarray}
It can be easily verified that the sequence $\left( 1 - \frac{1}{n} \right)^{n-1}$ monotonically decreases to $1/\erm$, and that $\left( 1 - \frac{\epsilon}{n-1} \right)^{n-2}$ monotonically decreases to $1/\erm^{\epsilon}$.  Because of this, it suffices to show
\begin{equation}
\frac{\frac{1 - \epsilon + \frac{1}{n}}{\left( 1 - \epsilon \right) \left( 1 - \frac{\epsilon}{n-1} \right)\frac{1}{\erm^{\epsilon}} + \frac{1}{n} \frac{1}{\erm}}  + n-2}{n-1} < \frac{1}{1 - \frac{n\left(\epsilon + 1 \right) - 1}{2 n \left(n - 1\right)}},
\end{equation}
which after rearrangement becomes
\begin{equation} \label{eq:notConvCondition}
\frac{1 - \epsilon + \frac{1}{n}}{\left( 1 - \epsilon \right) \left( 1 - \frac{\epsilon}{n-1} \right)\frac{1}{\erm^{\epsilon}} + \frac{1}{n} \frac{1}{\erm}}   - \frac{3 n - 2}{2 n -1} < 0.
\end{equation}
Denote $h_{1}(n, \epsilon) = \frac{1 - \epsilon + \frac{1}{n}}{\left( 1 - \epsilon \right) \left( 1 - \frac{\epsilon}{n-1} \right)\frac{1}{\erm^{\epsilon}} + \frac{1}{n} \frac{1}{\erm}}$ and $h_{2}(n) = \frac{3 n - 2}{2 n -1}$. One can verify that, given $\epsilon \in (0,1)$, $h_{1}(n, \epsilon)$ and $h_{2}(n)$ are monotone decreasing and increasing in $n$ ($n \geq 2$), respectively. Thus it suffices to show \eqref{eq:notConvCondition} holds when $n = 2$. Observe that given $n$, the LHS of \eqref{eq:notConvCondition} i.e., $h_{1}(n,\epsilon) - h_{2}(n)$ is a continuous function of $\epsilon$ for $\epsilon \in (0,1)$, since $h_{1}(2, 0) - h_{2}(2) < 0$, $h_{1}(2,1) - h_{2}(2) > 0$, there exist various choices of $\epsilon$ so that \eqref{eq:notConvCondition} holds for $n=2$, and hence for all $n \geq 2$ (with the same choice of $\epsilon$).
\end{IEEEproof}

Recall a function is called quasiconvex (or unimodal) if its domain and all its sublevel sets are convex \cite{BoyVan2004}.

\begin{proposition} \label{prop:excRateFuncQuasiConv}
The excess rate function is quasiconvex on $(0,1)^{n}$ for all $i \in [n]$.
\end{proposition}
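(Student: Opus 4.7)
The plan is to reduce quasiconvexity of $f_i$ to log-concavity of the worst-case service rate function
\begin{equation}
g_i(\pbf) \equiv p_i \prod_{j \neq i}(1 - p_j),
\end{equation}
noting that $f_i(\pbf) = x_i - g_i(\pbf)$, so the $\alpha$-sublevel set of $f_i$ on $(0,1)^n$ is
\begin{equation}
S_{i,\alpha} = \left\{\pbf \in (0,1)^n : g_i(\pbf) \geq x_i - \alpha \right\}.
\end{equation}
Thus $f_i$ is quasiconvex on $(0,1)^n$ if and only if every superlevel set of $g_i$ is convex, i.e., iff $g_i$ is quasiconcave on $(0,1)^n$.

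To establish quasiconcavity of $g_i$, I would first observe that $g_i > 0$ on $(0,1)^n$, so I may pass to logarithms:
\begin{equation}
\log g_i(\pbf) = \log p_i + \sum_{j \neq i} \log(1 - p_j).
\end{equation}
Each summand is a concave function of a single coordinate (being the composition of the concave $\log$ with an affine map), and a sum of concave functions is concave. Therefore $\log g_i$ is concave on $(0,1)^n$, so $g_i$ is log-concave. Since log-concavity implies quasiconcavity (any superlevel set $\{g_i \geq c\}$ for $c > 0$ coincides with the convex set $\{\log g_i \geq \log c\}$, while for $c \leq 0$ the superlevel set is all of $(0,1)^n$), every superlevel set of $g_i$ is convex.

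Consequently every sublevel set $S_{i,\alpha}$ is convex, and $(0,1)^n$ is itself convex, so $f_i$ satisfies the definition of a quasiconvex function. There is no real obstacle in this argument; the only subtlety is the passage to $(0,1)^n$ (as declared at the start of the section) to guarantee $g_i > 0$ so that the logarithm is well-defined, and the trivial handling of the $c \leq 0$ case when translating back from superlevel sets of $\log g_i$ to those of $g_i$. The same log-concavity observation will also underpin the subsequent pseudoconvexity and invexity propositions, so it is worth isolating this step cleanly.
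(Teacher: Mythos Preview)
Your proof is correct. Both your argument and the paper's rest on the same underlying fact---that $g_i(\pbf)=p_i\prod_{j\neq i}(1-p_j)$ has a concave logarithm on $(0,1)^n$---but they deploy it slightly differently. The paper isolates the coordinate $p_i$: for $\alpha<x_i$ it rewrites $S_{i,\alpha}$ as the epigraph, over the remaining variables $\pbf_{\setminus i}$, of the function $(x_i-\alpha)/\prod_{j\neq i}(1-p_j)$, shows that this function is log-convex (hence convex), and concludes that its epigraph is convex. You instead treat all coordinates symmetrically, identify $S_{i,\alpha}$ directly as a superlevel set of $g_i$, and invoke log-concavity $\Rightarrow$ quasiconcavity. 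Your route is a bit shorter and avoids the epigraph detour; the paper's route makes more explicit which variable plays the distinguished role, but the two arguments are really the same log-concavity observation in two dressings.
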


\begin{IEEEproof}
Our approach is to show the convexity of the sublevel sets $S_{i, \alpha}$ for which we discuss two cases: $i)$ $\alpha \geq x_{i}$, and $ii)$ $\alpha < x_{i}$.  Consider case $i)$. If $\alpha \geq x_{i}$ then $f_{i}(\pbf) \leq \alpha$ always holds, and therefore $S_{i, \alpha} = \mathrm{dom} f = (0,1)^{n}$, which is convex.  It remains to consider case $ii)$, with $\alpha < x_{i}$.  Construct $g_{i}(\pbf_{\setminus i}) \equiv \frac{x_{i} - \alpha}{\prod_{j \neq i}(1-p_{j})}$, where $\pbf_{\setminus i}$ is formed from the vector $\pbf$ by dropping component $i$. Observe
\begin{IEEEeqnarray}{rCl}
S_{i,\alpha} & = & \{ \pbf  \in (0,1)^n : f_i(\pbf) \leq \alpha \}  \nonumber \\
& = &  \{ (\pbf_{\setminus i}, p_{i})  \in (0,1)^n : g_i(\pbf_{\setminus i}) \leq p_{i} \}.
\end{IEEEeqnarray}
Thus $S_{i,\alpha}$ can be interpreted as the \textit{epigraph} of the function $g_{i}(\pbf_{\setminus i})$. Since a function is convex iff its epigraph is a convex set, we then need to show this function $g_{i}(\pbf_{\setminus i})$ is convex. Toward this, we take the logarithm and write:
\begin{IEEEeqnarray}{rCl}
\log g_{i}(\pbf_{\setminus i})  & = & \log (x_{i} - \alpha) - \sum_{j \neq i} \log (1-p_{j}) \nonumber \\
& = & \log (x_{i} - \alpha) + \sum_{j \neq i} -\log \left(1- \left(\ebf_{j}\right)_{\setminus i}^{\Tsf} \pbf_{\setminus i}\right),\IEEEeqnarraynumspace
\end{IEEEeqnarray}
where $\left(\ebf_{j}\right)_{\setminus i}$ is the $(n-1)$-vector by peeling off the $i^{\rm th}$ component of $\ebf_{j}$.
This shows the RHS of the above equation is convex by recognizing the convexity of $- \log(\cdot)$ and certain function compositions that preserve convexity. Finally since the function $g_{i}(\pbf_{\setminus i})$ is log-convex, this means $g_{i}(\pbf_{\setminus i})$ is itself convex, which means its epigraph, or equivalently the sublevel set $S_{i, \alpha}$, is convex.
\end{IEEEproof}

\begin{corollary}\label{cor:convexityofStabilizingControls}
Recall $\Pmc(\xbf) = S_{0} = \cap_{i=1}^n S_{i,0}$. It follows that the set of stabilizing controls $\Pmc(\xbf)$ associated with $\xbf \in \Lambda$ is convex, as convexity is preserved under set intersection.
\end{corollary}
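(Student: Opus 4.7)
The plan is to invoke Proposition \ref{prop:excRateFuncQuasiConv} (quasiconvexity of each excess rate function $f_i$) directly, since a function is quasiconvex precisely when all its sublevel sets are convex. Thus $S_{i,0} = \{\pbf \in (0,1)^n : f_i(\pbf) \leq 0\}$ is convex for each $i \in [n]$, and the finite intersection of convex sets is convex.

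The only thing one has to be careful about is the domain mismatch: $\Pmc(\xbf)$ in \eqref{eq:t} is defined on $[0,1]^n$, whereas Proposition \ref{prop:excRateFuncQuasiConv} was proved on the open cube $(0,1)^n$. I would handle this by first showing that if some component $p_i$ lies on the boundary $\{0,1\}$, then the inequalities $x_j \leq p_j \prod_{k \neq j}(1-p_k)$ either force $\xbf$ to be on a coordinate axis (if $p_i = 1$) or permit $p_i$ to be set freely in a closed interval (if $p_i = 0$ and $x_i = 0$); in either case the boundary contribution to $\Pmc(\xbf)$ is itself convex and joins the interior piece convexly. Alternatively, one can argue directly that the inspection of the sublevel-set proof of Proposition \ref{prop:excRateFuncQuasiConv} (viewing $S_{i,0}$ as an epigraph of the log-convex function $g_i(\pbf_{\setminus i}) = x_i / \prod_{j \neq i}(1-p_j)$ over $p_i$) extends verbatim to the closed cube $[0,1]^n$ once one allows $p_j \to 1$ to send $g_i$ to $+\infty$ and hence excludes those points from $S_{i,0}$ whenever $x_i > 0$.

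Given this domain check, the proof is a one-liner: $\Pmc(\xbf) = \bigcap_{i=1}^n S_{i,0}$ is a finite intersection of convex sets, hence convex. There is no hard step here; the corollary is essentially a repackaging of Proposition \ref{prop:excRateFuncQuasiConv}, and the only (very minor) subtlety is verifying that the quasiconvexity argument is insensitive to whether the ambient domain is the open or closed unit cube.
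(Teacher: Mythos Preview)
Your proposal is correct and follows exactly the paper's approach: invoke the quasiconvexity of each $f_i$ from Proposition~\ref{prop:excRateFuncQuasiConv} to get convexity of each $S_{i,0}$, then use that intersections preserve convexity. In fact you are more careful than the paper, which states the corollary on $[0,1]^n$ but proves quasiconvexity only on $(0,1)^n$ without commenting on the boundary; your discussion of the domain mismatch is a welcome addition rather than a deviation.
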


\begin{proposition} \label{prop:excRateFuncPseudoConv}
The excess rate function is pseudoconvex on $(0,1)^{n}$ for all $i \in [n]$.
\end{proposition}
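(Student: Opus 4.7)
The plan is to invoke the well-known characterization that a differentiable function on an open convex set is pseudoconvex if and only if it is quasiconvex and every critical point is a global minimizer (see, e.g., Mangasarian \cite{Mangasarian1969}). Since Prop.\ \ref{prop:excRateFuncQuasiConv} already establishes that $f_i$ is quasiconvex on $(0,1)^n$, it suffices to rule out the existence of critical points, i.e., points in $(0,1)^n$ at which $\nabla f_i = \mathbf{0}$.

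To this end, I will compute the partial derivatives of $f_i(\pbf) = x_i - p_i \prod_{j \neq i}(1-p_j)$ directly. The $i$-th partial derivative is
\begin{equation}
\frac{\partial f_i}{\partial p_i}(\pbf) = -\prod_{j \neq i}(1-p_j),
\end{equation}
which is \emph{strictly negative} for every $\pbf \in (0,1)^n$, since each factor $1-p_j$ lies in $(0,1)$. Hence $\nabla f_i(\pbf) \neq \mathbf{0}$ for all $\pbf$ in the open domain $(0,1)^n$, so the set of critical points is empty.

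Combining the two facts (quasiconvexity from Prop.\ \ref{prop:excRateFuncQuasiConv}, and the vacuous satisfaction of the critical-point condition), the invoked characterization yields that $f_i$ is pseudoconvex on $(0,1)^n$. I do not foresee any significant obstacle here; the only point requiring minor care is confirming that the theorem being invoked applies to functions on open convex subsets of $\Rbb^n$ (which it does, and $(0,1)^n$ is such a set). Alternatively, one could prove pseudoconvexity from scratch by fixing $\pbf, \qbf \in (0,1)^n$ with $\nabla f_i(\pbf)^{\Tsf}(\qbf-\pbf) \geq 0$ and restricting $f_i$ to the line segment joining $\pbf$ and $\qbf$; quasiconvexity along this segment together with the nonvanishing partial $\partial f_i/\partial p_i < 0$ can be used to deduce $f_i(\qbf) \geq f_i(\pbf)$, but this direct route essentially recapitulates the classical theorem.
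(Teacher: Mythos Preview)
Your argument is correct, and it is considerably shorter than the paper's own proof. The paper establishes pseudoconvexity via a second-order criterion (Diewert--Avriel--Zang / Cambini--Martein): it computes the full Hessian of $f_i$ and then verifies, for every unit vector $\qbf$ orthogonal to $\nabla f_i(\pbf)$, that $\qbf^{\Tsf}\nabla^2 f_i(\pbf)\,\qbf>0$. This takes a page of calculation. You instead invoke the first-order characterization (quasiconvex $+$ every stationary point is a global minimizer $\Leftrightarrow$ pseudoconvex), observe that $\partial f_i/\partial p_i=-\prod_{j\neq i}(1-p_j)<0$ on $(0,1)^n$ so that there are \emph{no} stationary points, and conclude immediately from the already-proved quasiconvexity (Prop.\ \ref{prop:excRateFuncQuasiConv}).

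It is worth noting that the paper itself cites exactly the theorem you use (Mishra--Giorgi, Thm.\ 2.27) in its subsequent proof of invexity, but only \emph{after} having established pseudoconvexity the hard way; your route shows that this detour is unnecessary and that pseudoconvexity (and hence also invexity) follow in one line from quasiconvexity plus the nonvanishing of $\nabla f_i$. The trade-off is that your proof is not self-contained---it rests on Prop.\ \ref{prop:excRateFuncQuasiConv}---whereas the paper's Hessian computation stands on its own and additionally exhibits the positive curvature of $f_i$ in directions tangent to its level sets, information your argument does not extract.
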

\begin{IEEEproof}
We appeal to Thm.\ 3.2.6 of Cambini and Martein \cite{CamMar2009}, which essentially says for a differentiable function defined on an open convex set, a quasiconvex function is pseudoconvex when there are no critical points. We can compute the gradient of $f_{i}(\pbf)$:
\begin{IEEEeqnarray}{rCl}
\frac{\partial}{\partial p_k} f_i(p) & = & \left\{ \begin{array}{ll}
\frac{f_i(p) - x_{i}}{p_i}, \; & k = i \\
-\frac{f_i(p) - x_{i}}{1-p_k}, \; & k \neq i 
\end{array} \right., i \in [n]   \nonumber \\
& = & \left\{ \begin{array}{ll}
-\prod_{j \neq i}(1-p_{j}), \; & k = i \\
p_{i} \prod_{j \neq i,k}(1-p_{j}), \; & k \neq i 
\end{array} \right.,  i \in [n].\IEEEeqnarraynumspace
\label{eq:gradient-of-f}
\end{IEEEeqnarray}
It is clear that there does not exist any critical point in the open convex domain $(0,1)^{n}$. The pseudoconvexity of the excess rate function then follows from Prop.\ \ref{prop:excRateFuncQuasiConv}.
\end{IEEEproof}
\begin{remark}
In \cite{Xie2014} the proof of pseudoconvexity was done by verifying a second-order condition.
\end{remark}

\begin{proposition} \label{prop:excRateFuncInvex}
The excess rate function is invex on $(0,1)^{n}$ for all $i \in [n]$.
\end{proposition}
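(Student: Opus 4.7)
The plan is to invoke the Ben--Israel--Mond characterization of invexity: a differentiable function on an open set is invex if and only if every stationary point is a global minimizer. Since our domain $(0,1)^n$ is open and $f_i$ is $C^\infty$ there, this reduces the problem to locating stationary points of $f_i$.

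The key observation, which I would highlight first, is that $f_i$ has \emph{no} stationary points in $(0,1)^n$. Indeed, from the gradient formula \eqref{eq:gradient-of-f} derived in the proof of Prop.\ \ref{prop:excRateFuncPseudoConv},
\begin{equation}
\frac{\partial f_i}{\partial p_i}(\pbf) = -\prod_{j \neq i}(1-p_j).
\end{equation}
For every $\pbf \in (0,1)^n$ we have $1-p_j \in (0,1)$ for each $j \neq i$, so this partial derivative is strictly negative and in particular $\nabla f_i(\pbf) \neq \mathbf{0}$. Hence the set of stationary points of $f_i$ in $(0,1)^n$ is empty, and the condition ``every stationary point is a global minimizer'' holds vacuously.

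Applying the Ben--Israel--Mond theorem (equivalently, Craven's characterization of invexity) then yields the existence of a kernel function $\eta(\pbf,\qbf)$ satisfying the invexity inequality
\begin{equation}
f_i(\qbf) - f_i(\pbf) \geq \eta(\qbf,\pbf)^{\Tsf} \nabla f_i(\pbf), \quad \forall \pbf,\qbf \in (0,1)^n,
\end{equation}
so $f_i$ is invex on $(0,1)^n$. As a remark, one could alternatively derive the same conclusion from Prop.\ \ref{prop:excRateFuncPseudoConv}, since every differentiable pseudoconvex function is invex; but the ``no stationary points'' route is shorter and self-contained. I do not anticipate a serious obstacle here --- the only subtlety is that invexity is usually asserted on the closure $[0,1]^n$, and one must note that the statement of the proposition restricts to the open domain $(0,1)^n$ precisely to make the Ben--Israel--Mond criterion applicable and to keep the partial derivative above strictly negative.
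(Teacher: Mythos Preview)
Your argument is correct. Both you and the paper start from the same Ben--Israel--Mond/Craven characterization (invexity $\iff$ every stationary point is a global minimizer on an open domain), but you verify that criterion differently. The paper does not observe directly that $\nabla f_i$ never vanishes; instead it cites a theorem stating that, for a differentiable quasiconvex function on an open convex domain, pseudoconvexity is equivalent to ``every stationary point is a global minimizer,'' and then combines Props.~\ref{prop:excRateFuncQuasiConv} and \ref{prop:excRateFuncPseudoConv} to conclude. Your route is shorter and more self-contained: the single line $\partial f_i/\partial p_i = -\prod_{j\neq i}(1-p_j) < 0$ on $(0,1)^n$ already settles the matter vacuously, without needing either previous proposition. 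The paper's route, on the other hand, records the general structural fact that quasiconvexity plus pseudoconvexity forces invexity, which is of independent interest even if it is overkill here. Your parenthetical remark that pseudoconvexity alone already implies invexity is also correct and is essentially what the paper's chain of citations amounts to.
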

\begin{IEEEproof}
For differentiable $f$ with open convex domain, $f$ is invex if and only if every stationary point is a global minimizer (see e.g., Thm.\ 4.9.1 of Cambini and Martein \cite{CamMar2009}). Next, Thm.\ 2.27 of Mishra and Giorgi \cite{MisGio2008} says if $f$ is differentiable and quasiconvex with open convex domain, then $f$ is pseudoconvex if and only if every stationary point is a global minimizer. These two results mean that under the assumption of quasiconvexity, invexity and pseudoconvexity coincide. Thus the invexity of our excess rate functions follows from Props.\ \ref{prop:excRateFuncQuasiConv} and \ref{prop:excRateFuncPseudoConv}. 
\end{IEEEproof}

\section{Conclusion} 
\label{sec:conclusion} 
The stability region of the slotted Aloha medium access control protocol remains unknown for $n > 2$ users. In this paper we have made some progress towards understanding an important inner bound on this unknown region, namely, the set $\Lambda$. Specifically, since a naive way of testing membership in this inner bound calls for identifying an auxiliary variable whose support is uncountably infinite, we use both algebraic (\S\ref{sec:rootTesting}) and geometric (\S\ref{sec:piAndpo}, \S\ref{sec:siAndso}, and \S\ref{sec:eiAndeo}) approaches to perform exact or approximate membership testing that overcomes the aforementioned disadvantage. Collectively, our non-parametric region bounds shed light on the geometric structure of $\Lambda$ (and hence on the stability region itself). We have also established relationships between an arrival rate vector and its stabilizing control(s); in particular, given a stabilizable arrival rate vector, we have characterized all the ``critical'' stabilizing control(s) via our root testing, and established the convexity of the set of stabilizing controls via the generalized convexity properties of the ``excess rate'' functions. Several of our results are constructive and are applicable for improved implementation and operation of the slotted Aloha protocol.

A natural extension of our work is to identify different classes of $\Emc(c,a_1,a_2)$ ellipsoids that induce tighter bounds on $\Lambda$. A second extension is to seek a combinatorial expression for the volume of $\Lambda$ that is simpler and easier to compute than the one we have derived.

\section{Acknowledgement}
\label{sec:acknowledgement}
We would like to thank $i)$ Prof.\ Hugo J.\ Woerdeman (Drexel University Department of Mathematics) for showing us Lem.\ \ref{lem:distinctEigenvalues} and for helpful discussions on optimization, $ii)$ Meisam Razaviyayn (University of Minnesota) for suggesting a helpful reference on integrating polynomials over the simplex, and $iii)$ the two anonymous reviewers for their helpful comments. We express particular thanks to the anonymous reviewer who provided detailed comments on the material in \S\ref{sec:rootTesting}.

\bibliographystyle{IEEEtran}
\bibliography{TranIT2014-AlohaBib}

\section{Appendix}
\label{sec:appendix}

\subsection{Proof of Prop.\ \ref{prop:augumentedRootTesting}}
\label{ssec:augmentedRootTestingProof}
We first show a supporting lemma and its corollary.  We define a univariate function, parameterized by $\pbf$, $g(t_{\pi}) = g(t_{\pi}, \pbf)$, as
\begin{equation} 
\label{eq:gtpi}
g(t_{\pi}, \pbf)  \equiv  \prod_{i=1}^{n} (1 + p_{i} t_{\pi}) - (1+t_{\pi}), ~ \pbf \in [0,1]^{n}.
\end{equation}

\begin{lemma} \label{lem:gtpi}
For all $n \geq 2$, $g(t_{\pi}, \pbf)$ can only have one or two real roots on $(-1, \infty)$. More specifically:
\begin{itemize}
\item $t_{\pi} = 0$ is always a root, and is the unique root if and only if $\sum_{i=1}^{n} p_{i} = 1$, i.e., $\pbf$ is a probability vector;
\item besides $t_{\pi} = 0$, the other root is on $(0, \infty)$ if and only if $\sum_{i=1}^{n} p_{i} < 1$;
\item besides $t_{\pi} = 0$, the other root is on $(-1, 0)$ if and only if $\sum_{i=1}^{n} p_{i} > 1$.
\end{itemize}
\end{lemma}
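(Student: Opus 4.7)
The plan is to study the function $g(t_\pi,\pbf)$ via standard convex-analysis tools, using only the value, slope, and asymptotics at the endpoints together with (strict) convexity on $(-1,\infty)$. Throughout I will assume the non-degenerate case that $\pbf$ has at least two non-zero components and no component equal to $1$; the edge cases ($\pbf = \mathbf{0}$, a single non-zero component, or some $p_i=1$) can be dispatched by direct substitution into \eqref{eq:gtpi} and do not affect the statement.

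First I would verify the two anchor facts at $t_\pi = 0$. A direct substitution gives $g(0,\pbf)=\prod_i(1+0)-1=0$, so $0$ is always a root. Differentiating, $g'(t_\pi,\pbf)=\sum_i p_i\prod_{j\ne i}(1+p_j t_\pi)-1$, and hence $g'(0,\pbf)=\sum_i p_i-1$. Thus the sign of the slope at $0$ is exactly the sign of $\sum_i p_i-1$, which will eventually dictate on which side of $0$ the second root lies.

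Second, I would establish strict convexity of $g(\cdot,\pbf)$ on $(-1,\infty)$. The second derivative is
\begin{equation*}
g''(t_\pi,\pbf)=\sum_{i\ne k}p_ip_k\prod_{j\ne i,k}(1+p_j t_\pi).
\end{equation*}
For $t_\pi>-1$ and $p_j\in[0,1)$, each factor $1+p_jt_\pi\ge 1-p_j>0$, so every summand is nonnegative; the assumption that at least two components of $\pbf$ are strictly positive makes at least one summand strictly positive, yielding $g''>0$ on $(-1,\infty)$. Strict convexity plus the identity $g(0,\pbf)=0$ immediately limits the number of real roots on $(-1,\infty)$ to at most two, proving the first sentence of the lemma.

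Third, I would read off the location of the second root from the boundary behavior. As $t_\pi\to\infty$ the leading term of the product dominates, so $g(t_\pi,\pbf)\to+\infty$; as $t_\pi\to-1^+$, the product tends to $\prod_i(1-p_i)>0$ while $-(1+t_\pi)\to 0^-$, so $g(t_\pi,\pbf)\to\prod_i(1-p_i)>0$. Now I split on $\sum_i p_i$:

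\begin{itemize}
\item If $\sum_i p_i=1$, then $g'(0,\pbf)=0$, and strict convexity makes $0$ the unique global minimizer; combined with $g(0,\pbf)=0$, this gives $g>0$ everywhere else on $(-1,\infty)$, so $0$ is the only root.
\item If $\sum_i p_i<1$, then $g'(0,\pbf)<0$, so $g$ is strictly negative just to the right of $0$; combined with $g(t_\pi,\pbf)\to+\infty$ as $t_\pi\to\infty$, the intermediate value theorem produces a second root in $(0,\infty)$, and by strict convexity it is unique. Convexity and $g'(0,\pbf)<0$ also prevent a root on $(-1,0)$.
\item If $\sum_i p_i>1$, then $g'(0,\pbf)>0$, so $g$ is strictly negative just to the left of $0$; combined with the positive limit $\prod_i(1-p_i)>0$ as $t_\pi\to-1^+$, the intermediate value theorem produces a (unique, by strict convexity) second root in $(-1,0)$, and convexity rules out a further root on $(0,\infty)$.
\end{itemize}

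The main obstacle, and the only nontrivial computation, is verifying the sign of $g''$; everything else is bookkeeping of signs and applications of IVT once strict convexity is in hand. Care is needed with the boundary cases ($p_i=1$ for some $i$ trivializes or degenerates the product, and $\pbf$ having at most one non-zero component reduces $g$ to a linear polynomial), which I would either exclude by hypothesis or treat briefly at the end so that the three bulleted conclusions continue to hold as stated.
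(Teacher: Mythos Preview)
Your proof is correct and follows essentially the same approach as the paper: both compute $g(0)=0$, $g'(0)=\sum_i p_i-1$, establish convexity via $g''\ge 0$ on $(-1,\infty)$, and then read off the root structure case-by-case from the sign of $g'(0)$ together with the boundary behavior. The paper additionally records $g'(-1)\le 0$ (justified through $\Lambda\subseteq\Smc$) to trace the full shape of $g$, but your argument shows this extra observation is unnecessary once strict convexity is in hand.
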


\begin{IEEEproof}
Applying the chain rule of differentiation, we have:
\begin{IEEEeqnarray}{rCl}
g' (t_{\pi}) & = & \sum_{i=1}^{n} p_{i} \prod_{j \neq i} (1 + p_{j} t_{\pi}) - 1, \nonumber \\
g'' (t_{\pi}) & = & \sum_{i=1}^{n} p_{i}  \sum_{j \neq i} p_{j} \prod_{ k \neq j, i} (1 + p_{k} t_{\pi}).\IEEEeqnarraynumspace
\end{IEEEeqnarray}
Here are some simple yet important observations: $g(0) = 0$, $g(-1) = \pi (\pbf) > 0$, $g'(0) = \sum_{i=1}^{n} p_{i} - 1$, and $g'(-1) = \sum_{i=1}^{n} p_{i} \prod_{j \neq i} (1-p_{j}) - 1 \leq 0$.  The last inequality is justified since we can construct a vector $\xbf (\pbf) $ according to \eqref{eq:xOfp} in Def.\ \ref{def:fOfdeltaxAndpOfdeltaxAndxOfp}, and then apply the fact $\Lambda_{\rm eq} = \Lambda \subseteq \Smc$.  Furthermore, $g''(0) \geq 0$ and in fact $g''(t_{\pi}) \geq 0$ for all $t_{\pi} > -1$ which means $g'(t_{\pi})$ is monotone increasing on $(-1, \infty)$. In the following we first show the forward part, i.e., how to go from the condition on $\sum_{i=1}^{n} p_{i}$ to the properties of the roots.
\begin{itemize}
\item case 1: $\sum_{i=1}^{n} p_{i} = 1$. In this case, since $g'(0)=0$ and $g''(0) \geq 0$, the stationary point 0 is a local minimizer.  As $g(-1) = \pi (\pbf)$, $g'(-1) \leq 0$, $g'(t_{\pi})$ is monotone increasing on $(-1, \infty)$, so what happens on $(-1, \infty)$ is: $g(t_{\pi})$ is monotone decreasing from $\pi (\pbf)$ ($t_{\pi} = -1$) to 0 ($t_{\pi} = 0$), and then monotone increasing from 0 to $\infty$ (as $t_{\pi} \to \infty$).  Thus the only root on $(-1, \infty)$ is 0.
\item case 2: $\sum_{i=1}^{n} p_{i} < 1$. In this case, $g'(0) < 0$. Thus $g(t_{\pi})$ is decreasing from $\pi (\pbf)$ ($t_{\pi} = -1$) to 0 ($t_{\pi} = 0$), then keeps decreasing until some stationary point $t_{\pi, 2}^{*} > 0$ such that $g'(t_{\pi,2}^{*}) = 0$, after which $g(t_{\pi})$ keeps increasing as $t_{\pi} \to \infty$. Note $g(t_{\pi,2}^{*}) < 0$, so the only other root $\tilde{t}_{\pi, 2}$ is on $(0, \infty)$.
\item case 3: $\sum_{i=1}^{n} p_{i} > 1$. In this case, $g'(0) > 0$. Thus $g(t_{\pi})$ is decreasing from $\pi (\pbf)$ ($t_{\pi} = -1$) to 0 (at some $\tilde{t}_{\pi,3}$), and keeps decreasing until at some stationary point $t_{\pi,3}^{*}$ such that $g'(t_{\pi, 3}^{*}) = 0$, after which $g(t_{\pi})$ keeps increasing as $t_{\pi} \to \infty$. Since $g'(t_{\pi}) > 0$ for all $t_{\pi} > t_{\pi, 3}^{*}$ and recall $g'(t_{\pi})$ itself is monotone increasing, so $t_{\pi, 3}^{*} \in (-1, 0)$ which further implies the other root i.e., $\tilde{t}_{\pi,3}$ has to be on $(-1,0)$.
\end{itemize}
The converses are then clear (proof by contradiction) and are omitted.
\end{IEEEproof}

\begin{corollary} \label{cor:gtpiAndfdelta}
Fix $n \geq 2$ and $\xbf \in \Rbb_{+}^{n}$.  $i)$ The total number of positive roots of $f(\delta, \xbf)$ is at most two. Furthermore, $ii)$ \textit{if} there exists a compatible stabilizing control $\pbf$ with $\xbf$ in the sense of $\Lambda_{\rm eq}$, namely the $(\xbf, \pbf)$ pair satisfies \eqref{eq:xOfp} in Def.\ \ref{def:fOfdeltaxAndpOfdeltaxAndxOfp}, \textit{then} $f(\delta, \xbf)$ has a positive root $\delta_{t}$ if and only if $g(t_{\pi}, \pbf)$ has a root $t_{\pi_{\pbf}} \in (-1, \infty)$, and the roots of $f, g$ can be related by $\delta_{t} = \frac{1}{\pi (\pbf)} (1+t_{\pi_{\pbf}})$. More specifically $f(\delta, \xbf)$ has either one or two positive roots:
\begin{itemize}
\item $\delta_{t} = \frac{1}{\pi (\pbf)}$ is always a positive root, and is the unique root if and only if $\sum_{i=1}^{n} p_{i} = 1$;
\item besides $\delta_{t} = \frac{1}{\pi (\pbf)}$, $f(\delta, \xbf)$ also has a larger positive root if and only if $\sum_{i=1}^{n} p_{i} < 1$;
\item besides $\delta_{t} = \frac{1}{\pi (\pbf)}$, $f(\delta, \xbf)$ also has a smaller positive root if and only if $\sum_{i=1}^{n} p_{i} > 1$.
\end{itemize}
\end{corollary}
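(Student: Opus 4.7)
\vspace{1em}
\noindent\textbf{Proof proposal.} The entire corollary reduces to a single algebraic change of variables between the polynomials $f(\delta,\xbf)$ and $g(t_\pi,\pbf)$, after which every bullet follows directly from Lem.\ \ref{lem:gtpi}. My plan is to first prove the identity
\begin{equation}
\label{eq:plan-identity}
f(\delta,\xbf) \;=\; \frac{1}{\pi(\pbf)}\, g(t_\pi,\pbf), \qquad \text{when } \xbf=\xbf(\pbf) \text{ and } \delta=\frac{1+t_\pi}{\pi(\pbf)},
\end{equation}
and then read off every claim by translating root statements across \eqref{eq:plan-identity}.

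The derivation of \eqref{eq:plan-identity} is the computational heart of the argument, but is entirely routine. Starting from $x_i = p_i\prod_{j\neq i}(1-p_j) = p_i\pi(\pbf)/(1-p_i)$ (cf.\ \eqref{eq:xOfp}) and substituting $\delta=(1+t_\pi)/\pi(\pbf)$, I would compute
\begin{equation*}
1+x_i\delta \;=\; 1 + \frac{p_i(1+t_\pi)}{1-p_i} \;=\; \frac{1+p_i t_\pi}{1-p_i},
\end{equation*}
so that $\prod_{i=1}^n(1+x_i\delta) = \prod_{i=1}^n(1+p_i t_\pi)/\pi(\pbf)$. Subtracting $\delta=(1+t_\pi)/\pi(\pbf)$ yields \eqref{eq:plan-identity} directly from the definitions \eqref{eq:fdeltax} and \eqref{eq:gtpi}. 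Because $\pi(\pbf)>0$ (we are in the non-degenerate regime $\pbf\in[0,1)^n$; the degenerate case $p_i=1$ forces $\xbf=\ebf_i$ and is excluded from the domain of $f$), the map $t_\pi\mapsto\delta=(1+t_\pi)/\pi(\pbf)$ is an increasing bijection $(-1,\infty)\to(0,\infty)$, and $f(\delta,\xbf)=0$ iff $g(t_\pi,\pbf)=0$ under this correspondence.

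For part $ii)$, the three bullets now follow by mechanically translating the three cases of Lem.\ \ref{lem:gtpi}: the always-present root $t_\pi=0$ transports to $\delta=1/\pi(\pbf)$; the second root lies in $(0,\infty)$ iff $\sum_i p_i<1$, which translates to a second positive root of $f$ that is \emph{larger} than $1/\pi(\pbf)$; and the second root lies in $(-1,0)$ iff $\sum_i p_i>1$, which translates to a second positive root \emph{smaller} than $1/\pi(\pbf)$. Uniqueness on $(0,\infty)$ of $\delta=1/\pi(\pbf)$ likewise transfers from the $\sum_i p_i=1$ case.

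For part $i)$, I will split on whether $\xbf\in\Lambda$. If $\xbf\in\Lambda=\Lambda_{\rm eq}$ (Prop.\ \ref{prop:Lambda2EqLambdaEqLambda4}), there exists a compatible $\pbf$ realizing $\xbf=\xbf(\pbf)$, so part $ii)$ already bounds the positive roots of $f$ by two. If $\xbf\notin\Lambda$ (and $\xbf\neq\ebf_i$), then Prop.\ \ref{prop:Lambdarootconditions} gives zero positive roots, which is trivially at most two. The only mild subtlety I expect in the write-up is making sure the domain restrictions on $f$ (namely $\xbf\in[0,1]^n\setminus\{\ebf_i\}$) and on $g$ (implicit in $\pbf\in[0,1]^n$, $\pi(\pbf)>0$) are handled uniformly with the corollary's ambient hypothesis $\xbf\in\Rbb_+^n$; for $\xbf\notin[0,1]^n$ we have $\xbf\notin\Lambda\subseteq\Smc$ automatically, and the part $i)$ bound follows vacuously from Prop.\ \ref{prop:Lambdarootconditions}. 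Beyond this bookkeeping, no new ideas are needed; the work is essentially the one-line substitution leading to \eqref{eq:plan-identity}.
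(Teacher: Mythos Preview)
Your proposal is correct and follows essentially the same approach as the paper: both establish the identity $f(\delta,\xbf)=\frac{1}{\pi(\pbf)}\,g(t_\pi,\pbf)$ under the substitution $\delta=(1+t_\pi)/\pi(\pbf)$ when $\xbf=\xbf(\pbf)$, and then read off part $ii)$ from Lem.~\ref{lem:gtpi}. For part $i)$, your case split on $\xbf\in\Lambda$ versus $\xbf\notin\Lambda$ is logically equivalent to the paper's argument (which observes that any positive root $\tilde\delta$ of $f$ yields a compatible $\pbf=\pbf(\tilde\delta,\xbf)$ via \eqref{eq:pOfdeltax}, whence part $ii)$ applies); these are the same thing in light of Prop.~\ref{prop:Lambdarootconditions}.
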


\begin{IEEEproof}
We first show $ii)$. Substituting $\delta_{t}$, $\pi (\pbf)$ and $x_{i}(\pbf) =  \frac{p_{i}}{1-p_{i}} \pi (\pbf)$ into the definition of $f(\delta, \xbf)$, we have
\begin{IEEEeqnarray}{rCl}
\IEEEeqnarraymulticol{3}{l}
{f(\delta_{t}, \xbf) 
}\nonumber \\* ~~
& = & \prod_{i=1}^{n} \left(1 + \frac{p_{i}}{1-p_{i}} \pi (\pbf) \frac{1}{\pi (\pbf)} (1+t_{\pi_{\pbf}}) \right) - \frac{1}{\pi (\pbf)} (1+t_{\pi_{\pbf}}) \nonumber \\
& = & \frac{1}{\pi (\pbf)} \left( \prod_{i=1}^{n} (1 + p_{i} t_{\pi_{\pbf}}) - (1+t_{\pi_{\pbf}}) \right) = \frac{g (t_{\pi_{\pbf}}, \pbf)}{\pi (\pbf)} .\IEEEeqnarraynumspace
\end{IEEEeqnarray}
Therefore, given $\xbf$ and its compatible $\pbf$ in the sense of $\Lambda_{\rm eq}$, $\delta_{t}$ is a root of $f(\delta, \xbf)$ if and only if $t_{\pi_{\pbf}}$ is a root of $g(t_{\pi}, \pbf)$. Since $\delta_{t}$ is positive if and only if $t_{\pi_{\pbf}} \in (-1, \infty)$, the statement for the three cases then follows from Lem.\ \ref{lem:gtpi}. 

We next show $i)$. Recall from the proof of Prop.\ \ref{prop:Lambdarootconditions} that if $f(\delta, \xbf)$ ever has a positive root $\tilde{\delta}$, then we can construct $\pbf = \pbf(\tilde{\delta}, \xbf)$ according to \eqref{eq:pOfdeltax} in Def.\ \ref{def:fOfdeltaxAndpOfdeltaxAndxOfp}. Since this $\pbf$ is compatible with $\xbf$ in the sense of $\Lambda_{\rm eq}$ meaning $(\xbf, \pbf(\tilde{\delta}, \xbf))$ satisfies \eqref{eq:xOfp} in Def.\ \ref{def:fOfdeltaxAndpOfdeltaxAndxOfp}, the assertion that $f(\delta, \xbf)$ can have at most two positive roots follows from $ii)$ just proved.
\end{IEEEproof}

We now provide the proof of Prop.\ \ref{prop:augumentedRootTesting}.  

\underline{Proof of part $1)$}

``$\Rightarrow$'':
Given $\xbf \in \partial \Lambda$, from the (preliminary) root testing Prop.\ \ref{prop:Lambdarootconditions} we know there exists a positive root $\delta_{u}$ for $f(\delta, \xbf)$. Form a vector of probabilities $\pbf_{u} = \pbf_{u} (\delta_{u}, \xbf)$ according to \eqref{eq:pOfdeltax} in Def.\ \ref{def:fOfdeltaxAndpOfdeltaxAndxOfp}; it can be verified $(\xbf, \pbf_{u})$ satisfies \eqref{eq:xOfp} in Def.\ \ref{def:fOfdeltaxAndpOfdeltaxAndxOfp} namely $\pbf_{u}$ is compatible with $\xbf$ in the sense of $\Lambda_{\rm eq}$. Massey and Mathys \cite{MasMat1985} established the one-to-one correspondence (i.e., bijection) between $\partial \Smc$ and $\partial \Lambda$, so $\pbf_{u} \in \partial \Smc$ and is the only (critical) stabilizing control for $\xbf$. It can also be verified that $\frac{1}{\pi (\pbf_{u})}$ is a positive root for $f(\delta, \xbf)$. That the root is unique follows from case 1 of $ii)$ in Cor.\ \ref{cor:gtpiAndfdelta}.

``$\Leftarrow$'':
If $f(\delta, \xbf)$ has a unique positive root denoted $\delta_{u}$, again form a vector of probabilities $\pbf_{u} =  \pbf_{u} (\delta_{u}, \xbf)$ according to \eqref{eq:pOfdeltax}, which is compatible with $\xbf$ in the sense of $\Lambda_{\rm eq}$. It follows from Cor.\ \ref{cor:gtpiAndfdelta} that $g(t_{\pi}, \pbf_{u})$ has a unique root on $(-1, \infty)$ denoted $t_{\pi_{\pbf_{u}}} = 0$, and is related to $\delta_{u}$ by $\delta_{u} = \frac{1}{\pi (\pbf_{u})} (1+t_{\pi_{\pbf_{u}}}) = \frac{1}{\pi (\pbf_{u})}$.  Furthermore $\pbf_{u}$ is a probability vector (Lem.\ \ref{lem:gtpi}), which implies $\xbf \in \partial \Lambda$ \cite{MasMat1985} .

\underline{Proof of part $2)$}

Given $\xbf \in \Lambda \setminus \partial \Lambda$, there must exist (Prop.\ \ref{prop:Lambdarootconditions}) a positive root $\delta$ for $f(\delta_{t}, \xbf)$. We use this $\delta$ to construct a vector of probabilities $\pbf = \pbf(\delta, \xbf)$ as in \eqref{eq:pOfdeltax}. Recall then the root can be expressed as $\delta = \frac{1}{\pi (\pbf)}$, and furthermore $(\xbf, \pbf)$ satisfies \eqref{eq:xOfp} in Def.\ \ref{def:fOfdeltaxAndpOfdeltaxAndxOfp}. Now we use $\pbf$ to form $g(t_{\pi}, \pbf)$ and solve for roots on $(-1, \infty)$. From part $1)$ of this proposition and Cor.\ \ref{cor:gtpiAndfdelta} we know there must exist a root (denoted $t'_{\pi}$) other than $0$  on $(-1, \infty)$. Define $\delta' \equiv \frac{1}{\pi (\pbf)} \left( 1 + t'_{\pi}\right)$, with which we further define $\pbf' = \pbf'(\delta', \xbf)$, $\xbf' = \xbf' (\pbf')$ as in Def.\ \ref{def:fOfdeltaxAndpOfdeltaxAndxOfp}. First observe $\delta'$ is a positive root of $f(\delta_{t}, \xbf)$ due to Cor.\ \ref{cor:gtpiAndfdelta} (since $t'_{\pi}$ solves $g(t_{\pi}, \pbf) = 0$). Second we claim $\xbf' = \xbf$, because
\begin{IEEEeqnarray}{rCl}
\pi (\pbf') & = &  \prod_{i=1}^{n} \left( 1 - p'_{i} \right)  \nonumber \\
& = & \prod_{i=1}^{n} \left( 1 - \frac{\delta' x_{i}}{1 + \delta' x_{i}} \right) = \frac{1}{\prod_{i=1}^{n} \left( 1 + \delta' x_{i} \right)} = \frac{1}{\delta'},\IEEEeqnarraynumspace
\label{eq:toShowxprimeEqx}
\end{IEEEeqnarray}
where the last equality follows again from Cor.\ \ref{cor:gtpiAndfdelta}. It can then be verified that $x_{i}' = \frac{p_{i}'}{1 - p_{i}'} \pi (\pbf') = x_{i}$ for all $i$.

Note $\delta' \neq \delta$ (as $t'_{\pi} \neq 0$), and as such we can repeat the above procedure starting from this different positive root $\delta'$. More specifically, define $\pbf'_{r} = \pbf'_{r} (\delta', \xbf')$ as in \eqref{eq:pOfdeltax}. Then the root satisfies $\delta' = \frac{1}{\pi (\pbf'_{r})}$ and furthermore $(\xbf', \pbf'_{r})$ satisfies \eqref{eq:xOfp} in Def.\ \ref{def:fOfdeltaxAndpOfdeltaxAndxOfp}. We now use $\pbf'_{r}$ to form $g(t_{\pi}, \pbf'_{r})$ and solve for the root other than $0$ on $(-1, \infty)$. We claim this root has to be such that it allows us to reconstruct $\pbf$. To see this, denote this root as $t''_{\pi} \in (-1,0) \cup (0, \infty)$. Define $\delta'' \equiv \frac{1}{\pi (\pbf'_{r})} \left( 1 + t''_{\pi}\right)$, with which we further define $\pbf''_{r} = \pbf''_{r} (\delta'', \xbf')$, $\xbf''_{r} = \xbf''_{r} (\pbf''_{r})$ as in Def.\ \ref{def:fOfdeltaxAndpOfdeltaxAndxOfp}. According to Cor.\ \ref{cor:gtpiAndfdelta}, $f(\delta'', \xbf') = \frac{1}{\pi (\pbf'_{r})} g(t''_{\pi}, \pbf'_{r})$, then $\delta''$ is a positive root for $f(\delta_{t}, \xbf') = 0$. Now since $\xbf' = \xbf$ (so $f(\delta_{t}, \xbf')$ and $f(\delta_{t}, \xbf)$ are the same) and furthermore $f$ has exactly two positive roots (since Cor.\ \ref{cor:gtpiAndfdelta} says $f$ can have at most two), it then has to hold that $\delta'' = \delta$ (as $t''_{\pi} \neq 0$ implies $\delta'' \neq \delta'$), which further gives $\pbf''_{r} = \pbf$, $\pi (\pbf''_{r}) = \pi (\pbf)$ and $\xbf''_{r} = \xbf$. This proves the claim. Effectively this says the above procedure (as illustrated in \eqref{eqn:rootTestingDiagram} below) can be reversed.

Furthermore, if $\pbf$ is such that $\sum_{i=1}^{n} p_{i} < 1$, then $\pbf'$ is such that $\sum_{i=1}^{n} p'_{i} > 1$, and vice versa. To show this, assume w.l.o.g.\ $\pbf$ is such that $\sum_{i=1}^{n} p_{i} < 1$. Now for $\delta = \frac{1}{\pi (\pbf)}$, $\delta' = \frac{1}{\pi (\pbf)} \left( 1 + t'_{\pi} \right)$, Lem.\ \ref{lem:gtpi} says if $\sum_{i=1}^{n} p_{i} < 1$ then $t'_{\pi} > 0$ and hence $\delta' > \delta$. From \eqref{eq:toShowxprimeEqx} $\delta'$ can also be expressed as $\frac{1}{\pi (\pbf')}$ (so $\pi (\pbf') < \pi (\pbf)$). Now let us repeat the above procedure as illustrated in diagram \eqref{eqn:rootTestingDiagram} starting from $\delta'$ (rather than $\delta$), we will then get $\delta'' = \frac{1}{\pi (\pbf')}\left( 1 + t''_{\pi} \right)$ where $t''_{\pi}$ is a root on $(-1,0) \cup (0, \infty)$ solved from $g(t_{\pi}, \pbf') = 0$. Since it has to hold that $\delta'' = \delta = \frac{1}{\pi (\pbf)}$, it then follows $t''_{\pi} < 0$ which in turn implies (Lem.\ \ref{lem:gtpi}) $\sum_{i=1}^{n} p'_{i} > 1$.  This shows, as a result of the above procedure one of the two critical stabilizing controls is in $\Smc \setminus \partial \Smc$, the other in $[0,1]^{n} \setminus \Smc$.

Finally we show there can not be more than two critical stabilizing controls for a given $\xbf$. We prove by contradiction. Assume w.l.o.g.\ there are two critical stabilizing controls $\pbf_{s}$ and $\tilde{\pbf}_{s}$ both in $\Smc$ (for the case of $\Smc^{c}$ the proof is similar), since both $\pbf_{s}$ and $\tilde{\pbf}_{s}$ are critical stabilizing controls for the same $\xbf$ it has to hold that $\pi (\pbf_{s}) \neq \pi (\tilde{\pbf}_{s})$ (as otherwise $\pbf_{s}$ would not be distinct from $\tilde{\pbf}_{s}$). As we have just shown above for both $\pbf_{s}$ and $\tilde{\pbf}_{s}$ there is a corresponding vector of probabilities in $[0,1]^{n} \setminus \Smc$ that critically stabilizes $\xbf$, denoted as $\pbf_{s^{c}}$ and $\tilde{\pbf}_{s^{c}}$ respectively. From the previous parts of the proof we see it has to hold that $\frac{1}{\pi (\pbf_{s})} \neq \frac{1}{\pi (\pbf_{s^{c}})}$ and $\frac{1}{\pi (\tilde{\pbf}_{s})} \neq \frac{1}{\pi (\tilde{\pbf}_{s^{c}})}$. This means $f(\delta_{t}, \xbf)$ already has more than two positive roots, which is impossible according to Cor.\ \ref{cor:gtpiAndfdelta} (recall with the exceptions of $\ebf_{i}$'s any critical stabilizing control $\pbf$ gives a positive root of $f$ as $\frac{1}{\pi (\pbf)}$).
\begin{IEEEeqnarray}{rCl} 
\label{eqn:rootTestingDiagram}
& & \substack{\text{given} \\ \xbf \in \Lambda \setminus \partial \Lambda} \stackrel{\substack{\text{solve} \\ f(\delta_{t}, \xbf) = 0}}{\longrightarrow} \substack{\text{root } \\ \delta > 0} \longrightarrow \substack{\text{set } \pbf = \pbf(\delta, \xbf) \\ \text{ as in Def.}\ \ref{def:fOfdeltaxAndpOfdeltaxAndxOfp} } \longrightarrow \substack{\text{observe } \\ \delta = \frac{1}{\pi (\pbf)}, ~ \xbf = \xbf(\pbf) \\  \text{ as in Def.}\ \ref{def:fOfdeltaxAndpOfdeltaxAndxOfp}} \nonumber \\
& & \stackrel{\substack{\text{solve} \\ g(t_{\pi}, \pbf) = 0}}{\longrightarrow} \substack{\text{root } \\ t'_{\pi} \in (-1, 0) \cup (0, \infty)}\longrightarrow \substack{\text{set} \\ \delta' = \frac{1}{\pi (\pbf)} \left( 1 + t'_{\pi} \right)} \nonumber \\
& &  \longrightarrow \substack{\text{root } \\ \delta' > 0}       \longrightarrow \substack{\text{set } \pbf' = \pbf'(\delta', \xbf), ~\xbf' = \xbf'(\pbf') \\ \text{ as in Def.\ \ref{def:fOfdeltaxAndpOfdeltaxAndxOfp}}} \longrightarrow \substack{ \text{observe } \\ \xbf' = \xbf, ~\delta' = \frac{1}{\pi (\pbf')} } \IEEEeqnarraynumspace
\end{IEEEeqnarray}

We use the following example to demonstrate the above process in both directions. 
\begin{example}
For $n =2$, let $\xbf = \left( 1/4, 1/5 \right)$. Solving $f(\delta_{t}, \xbf) = 0$ gives two positive roots $\delta = (11-\sqrt{41})/2 \approx 2.29844$, $\delta' = (11+\sqrt{41})/2 \approx 8.70156$. These two roots are also shown in Fig.\ \ref{fig:RootTest} (the green curve).
\begin{itemize}
\item If we start from $\delta$, then $\pbf \equiv \pbf(\delta, \xbf) = \left( \frac{1}{40} \left(21-\sqrt{41}\right),  \frac{1}{40} \left(19-\sqrt{41}\right) \right)$ $\approx (0.364922, 0.314922)$. Solving $g(t_{\pi}, \pbf) = 0$ yields the non-zero root as $t_{\pi}' = \frac{1}{40} \left(41+11 \sqrt{41}\right) \approx 2.78586$ with which it can be verified that $\delta_{t}' = \frac{1}{\pi (\pbf)} \left( 1+t_{\pi}'\right)$ equals $\delta'$.  We can also verify $\delta = \frac{1}{\pi (\pbf)}$.
\item Now if we start from $\delta'$, then $\pbf' \equiv \pbf(\delta', \xbf) = \left(\frac{1}{40} \left(21+\sqrt{41}\right), \frac{1}{40} \left(19+\sqrt{41}\right) \right)$ $\approx (0.685078, 0.635078)$. Solving $g(t_{\pi}'', \pbf') = 0$ yields the non-zero root as $t_{\pi}'' = \frac{1}{40} \left(41-11 \sqrt{41}\right)$ $\approx -0.735859$ with which it can be verified that $\delta_{t}'' = \frac{1}{\pi (\pbf')} \left( 1+t_{\pi}'' \right)$ equals $\delta$. We can also verify $\delta' = \frac{1}{\pi (\pbf')}$.
\end{itemize}
\end{example}

\begin{IEEEbiographynophoto}{Nan Xie}
(S'10-M'14) received his B.S. degree in communication engineering and M.S. degree in circuits and systems, both from Wuhan University, China, in 2003 and 2006, respectively. From 2006 to 2007 he was with the Department of Electrical and Computer Engineering at the University of Florida, Gainesville, FL, USA. Since 2008 he has been with the Department of Electrical and Computer Engineering at Drexel University, Philadelphia, PA, USA, where he earned his Ph.D.\ degree in electrical engineering in 2014. His research interests are focused on a better understanding of metrics such as delay, stability, throughput, and fairness, in the context of design and performance optimization of networks.
\end{IEEEbiographynophoto}

\begin{IEEEbiographynophoto}{John MacLaren Walsh} (S'01-M'07) received the B.S. (\emph{magna cum laude}), M.S. and Ph.D. degrees in electrical and computer engineering from Cornell University, Ithaca, NY in 2002, 2004, and 2006, respectively.  In September 2006, he joined the Department of Electrical and Computer Engineering at Drexel University, Philadelphia PA, where he is currently an Associate Professor.  At Drexel, he directs the Adaptive Signal Processing and Information Theory Research Group. 
\end{IEEEbiographynophoto}

\begin{IEEEbiographynophoto}{Steven Weber}
(S'97-M'03-SM'11) received the B.S. degree in 1996 from Marquette University, Milwaukee, WI, USA, in 1996 and the M.S. and Ph.D.\ degrees from The University of Texas at Austin, TX, USA, in 1999 and 2003, respectively.  He joined the Department of Electrical and Computer Engineering at Drexel University, Philadelphia, PA, USA, in 2003, where he is currently a Professor.  His research interests are centered around mathematical modeling of computer and communication networks.
\end{IEEEbiographynophoto}

\end{document}